\documentclass[11pt,halfline,a4paper]{ouparticle}
\usepackage{amssymb}
\usepackage{amsthm, color}
\usepackage{amsmath}
\usepackage{multicol}
\usepackage{multirow}
\usepackage{algorithm} 
\usepackage{amsmath}
\usepackage{algpseudocode} 
\usepackage{subfigure}
\usepackage[authoryear,round]{natbib}
\begin{document}
\setstretch{1.15}
\title{
Anomaly detection in autoregressive networks}

\author{%
\name{Xianghe Zhu}
\address{Department of Statistics, London School of Economics and Political \\
Science, London, WC2A 2AE, United Kingdom
}
\email{X.Zhu58@lse.ac.uk}
}

\abstract{
We study anomaly detection in temporally dependent network sequences. Methods based only on adjacency matrices, which are widely used for static networks, can miss changes in the way edges evolve. We instead represent each pair of consecutive networks by separate formation and dissolution event matrices, and embed the resulting matrix sequences using unfolded adjacency spectral embedding. Comparing these embeddings against a stationary baseline yields vertex- and network-level statistics that detect an anomalous transition and identify whether it involves formation, dissolution, or both. For autoregressive random dot product graphs, we establish uniform rowwise consistency of the transition-event embeddings and derive high-probability guarantees for vertex- and network-level detection and exact recovery of the anomalous vertex set. The theory separates a vertex's own displacement from interference caused by other vertices and from autoregressive memory. We quantify the memory left by an earlier anomaly, show that it decays geometrically after the transition mechanism returns to baseline, and give conditions under which it is negligible. A degree-corrected stochastic block model extension gives exact community recovery and provides community-reassignment, centre-shift, split, and merge anomaly statistics with high-probability detection and recovery guarantees. Simulations and applications to an international trade dataset and a primary-school contact network illustrate the performance of the proposed method, revealing a dissolution-driven trade decline followed by formation-driven recovery during COVID-19 and temporary merge-type mixing between school classes.
}


\keywords{Anomaly detection; Autoregressive process; Degree-corrected stochastic block model; Random dot product graph; Spectral embedding.}

\maketitle

\section{Introduction}

Sequences of networks occur in communication, trade, transport, and social-contact systems. Their evolution is often stable for long periods, but anomalies may occur. At a given time, an adjacency matrix records which edges are present. A dynamic network, however, may evolve through the formation of new edges and the dissolution of existing edges, and an anomaly may affect one or both mechanisms. A useful method should therefore not only detect when an anomaly occurs and identify the affected vertices, but also determine whether formation, dissolution, or both have changed and assess when the network returns to its usual evolution.

Network anomaly detection has been studied from several perspectives. Existing approaches include Bayesian count-process models \citep{10.1214/10-AOAS329}, scan statistics for locally unusual subgraphs \citep{priebe_enron_2005,neil_scan_2013,wang2013locality}, and dissimilarity methods for network-valued data \citep{josephs2023bayesian}. Spectral methods compare low-dimensional representations of adjacency matrices. In particular, \citet{chen_multiple_2025} compare consecutive embeddings to find anomalous networks and vertices. Their theoretical guarantee is network-level, based on a Frobenius-norm comparison of the embedding matrices, whereas their vertex-level procedure is algorithmic and has no corresponding vertex-level guarantee. A consecutive comparison can also produce a large statistic both when an anomaly begins and when the process returns to normal, and their population model does not describe temporal dependence. Embedding-based change-point methods allow temporal dependence \citep{madrid_padilla_2022,JMLR:v24:22-0845,athreya2025euclidean}. However, they typically consider a change in the network mechanism that continues over time, rather than a temporary departure from a  known baseline.

In this paper, we develop a transition-event spectral method for anomaly detection in autoregressive networks \citep{JMLR:v24:22-0845}. We treat each pair of consecutive networks as a single transition, represent it by separate formation and dissolution event matrices, and embed them using unfolded adjacency spectral embedding (UASE) \citep{jones_multilayer_2021}. This transition-level representation is able to indicate whether an anomaly occurs in formation, dissolution, or both. Comparing each target with a candidate stationary baseline allows us to determine whether the transition mechanism is anomalous at that time or has returned to baseline. The method also applies in the case where the formation and dissolution mechanisms change in such a way that the marginal connection probabilities remain unchanged. This property shows that the transition-event representation retains dynamic information that is absent from marginal connection probabilities.

For autoregressive random dot product graphs, we establish uniform rowwise consistency of the transition-event right embeddings. This yields vertex- and network-level anomaly detection guarantees and exact recovery of anomalous vertices.
The vertex result separates the vertex’s own latent displacement from cross-vertex interference and autoregressive memory. When the set of anomalous vertices is sufficiently sparse, the interference is asymptotically negligible.  We also show how an earlier anomaly can still affect later statistics after the mechanism returns to baseline, and give a burn-in time after which this effect is no larger than spectral estimation error. Empirical \(p\)-values are computed by comparing each target statistic with statistics from all pairs of baseline transitions.

We further extend the framework to autoregressive degree-corrected stochastic block models. A joint embedding formed by concatenating the separately row-normalised formation and dissolution right embeddings gives exact community recovery and supports exact recovery of anomalous vertex reassignments. We also construct centre-shift, split, and merge statistics with high-probability anomaly detection guarantees.

The new contributions of this paper include the following. First, the transition-event representation provides detailed information about network anomalies, including when an anomaly occurs, which vertices are affected, whether formation, dissolution, or both have changed, and when the transition mechanism returns to baseline. Second, the random dot product graph theory links uniform rowwise consistency to vertex- and network-level anomaly detection and exact recovery of anomalous vertices, while making explicit the roles of cross-vertex interference and autoregressive memory. Third, for the degree-corrected stochastic block model, we establish exact-recovery guarantees for community partitions and the community reassignments of anomalous vertices, together with detection guarantees for community-level anomalies including centre shifts, splits, and merges.
Simulations examine anomaly detection at the vertex, network, and community levels, as well as community recovery. The trade application identifies whether anomalies arise from formation or dissolution and which countries are flagged as anomalous, while the primary-school application detects temporary mixing between pairs of classes.

The rest of the paper is organised as follows. Section~2 introduces the model and transition-event embedding, establishing identifiability and uniform rowwise consistency. Section~3 develops the vertex- and
network-level anomaly statistics and analyses memory under the isolated anomaly process. Section~4
introduces the autoregressive degree-corrected stochastic block model, community recovery
and the four community anomaly statistics. Section~5 describes how empirical \(p\)-values are computed and how
baseline stationarity is checked. Sections 6 and 7 report the simulation studies and the two real-data analyses, respectively. The
Supplementary Material contains complete proofs, algorithms, additional
simulations, and robustness analyses.

	\section{Transition-event embedding}
	
	\subsection{Autoregressive random dot product graph and identifiability}
	
	Let $\mathbf X^{(t)}\in\{0,1\}^{p\times p}$ denote the adjacency matrix at time
	$t \in \{0,\ldots,n\}$ of an undirected network without self-loops, so that $X_{ij}^{(t)}=X_{ji}^{(t)}, X^{(t)}_{ii}=0$ for $i,j\in[p]$. We first specify the edgewise transition mechanism using the autoregressive network model
	\citep{JMLR:v24:22-0845}. 
	
	\begin{definition}[AR(1) Network]\label{def:ar1-network}
		The sequence $\{\mathbf X^{(t)}:t\in\{0,\ldots,n\}\}$ is an autoregressive network
	if, for every $i<j$ and $t\in[n]$, $$
	X_{ij}^{(t)}=X_{ij}^{(t-1)} I\left(\varepsilon_{ij}^{(t)}=0\right)+I\left(\varepsilon_{ij}^{(t)}=1\right), \quad t \geq 1,
	$$
	where the innovations
	$\{\varepsilon_{ij}^{(t)}:i<j\}$ are independent, and
	$$
	\mathbb P\left(\varepsilon_{ij}^{(t)}=1\right)=\alpha_{ij}^{(t)},\quad
	\mathbb P\left(\varepsilon_{ij}^{(t)}=-1\right)=\beta_{ij}^{(t)},\quad
	\mathbb P\left(\varepsilon_{ij}^{(t)}=0\right)=1-\alpha_{ij}^{(t)}-\beta_{ij}^{(t)}.
	$$
	Here, \(\alpha_{ij}^{(t)}\) and \(\beta_{ij}^{(t)}\) are the transition
probabilities governing edge formation and dissolution, respectively, and satisfy
	$\alpha_{ij}^{(t)},\beta_{ij}^{(t)}\geq0$ and
$\alpha_{ij}^{(t)}+\beta_{ij}^{(t)}\leq1$.  The initial edge states
	$\{X_{ij}^{(0)}:i<j\}$ are independent across dyads and
	independent of the innovation sequences.
	\end{definition}

	For all $i<j$, if
	$\alpha_{ij}^{(t)} \equiv \alpha_{ij}$ and
	$\beta_{ij}^{(t)} \equiv \beta_{ij}$ for all $t \geq 1$, and the initial connection probabilities satisfy 	$\pi_{ij}=\alpha_{ij}/(\alpha_{ij}+\beta_{ij})$  with
	\(\alpha_{ij}+\beta_{ij}>0\), then the process is strictly stationary with stationary connection probabilities $\pi_{ij}$. 
    
	 The transition probabilities can be parameterised by inner
	products of latent position vectors and by global sparsity factors.  This gives
	the following AR(1) random dot product graph model.  Background on random dot
	product graphs and adjacency spectral embedding may be found in
	\citet{young2007random}, \citet{sussman2012consistent}, and
	\citet{athreya2018statistical}.
	
	\begin{definition}[AR(1)-RDPG]\label{def:ar1-trdpg}
		Fix a dimension $d\ge 1$ and sparsity factors $\rho_Y,\rho_Z\in(0,1]$.
		Let $\mathcal L_Y^{(1)},\ldots,\mathcal L_Y^{(n)}\subset\mathbb R^d$ and
		$\mathcal L_Z^{(1)},\ldots,\mathcal L_Z^{(n)}\subset\mathbb R^d$ be bounded sets such that for some finite constant $\ell>0$, 
		$\|y\|_{\infty},\|z\|_{\infty} \leq \ell, 
		y^\top y'\in[0,1], z^\top z'\in[0,1]$ and $\rho_Yy^\top y'+\rho_Zz^\top z'\in[0,1]$ for any $y,y'\in\mathcal L_Y^{(t)}, z,z'\in\mathcal L_Z^{(t)}, t\in[n].$
		Fix latent position vectors $\boldsymbol\ell_{Y,i}^{(t)}\in\mathcal L_Y^{(t)}$ and
		$\boldsymbol\ell_{Z,i}^{(t)}\in\mathcal L_Z^{(t)}$. They are collected in
		the rows of
		$\mathbf L_Y^{(t)}
		= [\boldsymbol\ell_{Y,1}^{(t)}|\boldsymbol\ell_{Y,2}^{(t)}|\ldots|
		\boldsymbol\ell_{Y,p}^{(t)}]^{\top}\in \mathbb{R}^{p\times d}$ and
		$\mathbf L_Z^{(t)}
		= [\boldsymbol\ell_{Z,1}^{(t)}|\boldsymbol\ell_{Z,2}^{(t)}|\ldots|
		\boldsymbol\ell_{Z,p}^{(t)}]^{\top}\in \mathbb{R}^{p\times d}$.
		Then the transition probabilities in Definition~\ref{def:ar1-network} are parameterised by
		\begin{equation}\label{eq:alpha-beta-ip}
			\alpha_{ij}^{(t)}=\rho_Y(\boldsymbol\ell_{Y,i}^{(t)})^\top\boldsymbol\ell_{Y,j}^{(t)},
			\qquad
			\beta_{ij}^{(t)}=\rho_Z(\boldsymbol\ell_{Z,i}^{(t)})^\top\boldsymbol\ell_{Z,j}^{(t)}.
		\end{equation}
        \end{definition}
	Under the parameterisation in \eqref{eq:alpha-beta-ip}, if the latent positions remain unchanged over time, the formation and
dissolution transition probabilities are time-invariant. If the initial
network follows the corresponding stationary distribution, the
AR(1)-RDPG process is stationary. The converse may not hold, since
the same transition-probability matrices can have different
latent-position representations.
	
The adjacency matrix \(\mathbf X^{(t)}\) records the current edge
states but not the edge formations or dissolutions since \(t-1\).
We therefore define the formation and
dissolution event matrices
\(\mathbf Y^{(t)}=(Y_{ij}^{(t)})\) and
\(\mathbf Z^{(t)}=(Z_{ij}^{(t)})\), where
\[
Y_{ij}^{(t)}
=
(1-X_{ij}^{(t-1)})X_{ij}^{(t)},
\qquad
Z_{ij}^{(t)}
=
X_{ij}^{(t-1)}(1-X_{ij}^{(t)}).
\]
Thus, \(Y_{ij}^{(t)}=1\) records an edge formation, whereas
\(Z_{ij}^{(t)}=1\) records an edge dissolution. Conditional on
\(X_{ij}^{(t-1)}=0\), \(Y_{ij}^{(t)}\) is Bernoulli with parameter
\(\alpha_{ij}^{(t)}\); conditional on \(X_{ij}^{(t-1)}=1\),
\(Z_{ij}^{(t)}\) is Bernoulli with parameter
\(\beta_{ij}^{(t)}\). 
Thus, the \(Y\) and \(Z\) channels provide separate information about
the formation and dissolution probabilities and can distinguish
transition mechanisms with the same stationary marginal connection
probabilities.

    Under this stationary model, the unconditional formation and
dissolution probabilities are
\[
{\mathbb E}(Y_{ij}^{(t)})
=
\frac{\alpha_{ij}\beta_{ij}}
{\alpha_{ij}+\beta_{ij}},
\qquad
{\mathbb E}(Z_{ij}^{(t)})
=
\frac{\alpha_{ij}\beta_{ij}}
{\alpha_{ij}+\beta_{ij}}.
\]
The following example illustrates the
non-identifiability. Suppose the process is stationary up to time \(t-1\).
At time \(t\), replace
\((\alpha_{ij},\beta_{ij})\) by
\((c_{ij}\alpha_{ij},c_{ij}\beta_{ij})\), where
\(
c_{ij}>0,
c_{ij}\neq1,
c_{ij}(\alpha_{ij}+\beta_{ij})\leq1.
\)
This changes the transition mechanism but leaves the marginal
connection probability at time \(t\) equal to \(\pi_{ij}\). The
marginal adjacency distribution at time \(t\) is therefore unchanged
from the baseline, so the anomaly cannot be
identified from the marginal connection probabilities alone. By
contrast, both formation and dissolution event probabilities are
multiplied by \(c_{ij}\). The event populations can therefore
distinguish the two mechanisms.

	\subsection{Estimation of right embeddings}
	We work with a fixed number of transitions $n$ and let $p\to\infty$.  The next
	condition fixes the sparsity scaling used by the UASE consistency proof.
	
	\begin{condition}\label{ass:transition-sparsity}
		The sparsity factors either satisfy \(\inf_p\rho_Y,\inf_p\rho_Z>0\) in the dense regime or  converge to zero in the sparse
		regime.  In the sparse regime, we assume that as $p\to\infty$,
		$\sqrt p\,\rho_Y/\log p, \;\sqrt p\,\rho_Z/\log p\to\infty$.    
	\end{condition}
	The next condition fixes the initial constant-rank connection probability matrix 
    used by the UASE consistency proof.
	
	\begin{condition}\label{ass:initial-finite-rank}
		Assume that the initial connection probability matrix $\mathbf\Pi^{(0)}$ 
        has rank $r_0$, where $r_0$ is constant as $p\rightarrow\infty$. 
		Furthermore, it admits the decomposition
		$\mathbf\Pi^{(0)} = \sum_{r=1}^{r_{0}} \mathbf{c}_{r}^{(0)}
		\mathbf{d}_{r}^{(0) \top}$ with
		$\|\mathbf{c}_{r}^{(0)}\|_{\infty},
		\|\mathbf{d}_{r}^{(0)}\|_{\infty} \leq C_{0}$ for some finite constant $C_0$.
	\end{condition}
	Before stating the consistency result, we fix the population targets used for
	UASE. Stack the transition-event matrices as
$\mathbf Y=[\mathbf Y^{(1)}|\cdots|\mathbf Y^{(n)}]$ and
$\mathbf Z=[\mathbf Z^{(1)}|\cdots|\mathbf Z^{(n)}]$. For $i\ne j$, write
$\Pi_{ij}^{(t)}={\mathbb P}\{X_{ij}^{(t)}=1\}$ and let $\mathbf\Pi^{(t)}$ collect
these off-diagonal probabilities. Conditioning on the previous edge state gives
\begin{equation}
\begin{split}
\Pi_{ij}^{(t)}
=\alpha_{ij}^{(t)}+
 \{1-\alpha_{ij}^{(t)}-\beta_{ij}^{(t)}\}\Pi_{ij}^{(t-1)},
 \qquad i\ne j.
\end{split}
\label{eq:mean-recursion}
\end{equation}
For the population spectral analysis we complete the diagonal by the same
recursion,
\[
\Pi_{ii}^{(t)}=\rho_Y\|\boldsymbol\ell_{Y,i}^{(t)}\|_2^2+
\{1-\rho_Y\|\boldsymbol\ell_{Y,i}^{(t)}\|_2^2
-\rho_Z\|\boldsymbol\ell_{Z,i}^{(t)}\|_2^2\}\Pi_{ii}^{(t-1)}
\]
starting from \(\Pi^{(0)}_{ii}\) in
Condition~\ref{ass:initial-finite-rank}. 
	Define the transition-event population matrices
\[
\mathbf S_Y^{(t)}=\rho_Y\mathbf L_Y^{(t)}\mathbf L_Y^{(t)\top}
 \circ(\mathbf 1\mathbf 1^\top-\mathbf\Pi^{(t-1)}),\qquad
\mathbf S_Z^{(t)}=\rho_Z\mathbf L_Z^{(t)}\mathbf L_Z^{(t)\top}
 \circ\mathbf\Pi^{(t-1)},
\]
where $\circ$ denotes the Hadamard product. Their off-diagonal entries equal the unconditional means of
$\mathbf Y^{(t)}$ and $\mathbf Z^{(t)}$. Let \(\mathcal H=\{Y,Z\}\) index the formation and dissolution
channels, respectively, and write
$\mathbf S_h=[\mathbf S_h^{(1)}|\cdots|\mathbf S_h^{(n)}]$ for
$h\in\mathcal H$.
	
	The transition-event population matrices need not have rank $d$, as the previous-time connection
probabilities enter through the Hadamard products and can introduce additional
directions. Nevertheless, the rank remains bounded when the number of time points $n$ is fixed. 
	
	\begin{proposition}\label{prop:u1u2}
		In the AR(1)-RDPG model, under Condition~\ref{ass:initial-finite-rank}, there exist matrices
		$\overline{\mathbf U}_{h,1}\in\mathbb R^{p\times \overline K_h},\overline{\mathbf U}_{h,2}\in\mathbb R^{np\times \overline K_h}
		$
		such that
		$
		\mathbf S_h
		=
		\overline{\mathbf U}_{h,1}
		\overline{\mathbf U}_{h,2}^{\top}
		$
		for
$h\in\mathcal H$, where $\overline K_h$ has an explicit expression depending
		only on $n$, \(d\), and \(r_0\).  
	\end{proposition}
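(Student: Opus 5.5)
The plan is to show that each $\mathbf\Pi^{(t)}$ has bounded rank with an explicit low-rank factorisation, and then to use the fact that a Hadamard product of rank-$a$ and rank-$b$ matrices has rank at most $ab$. Write $\mathbf P_Y^{(t)}=\rho_Y\mathbf L_Y^{(t)}\mathbf L_Y^{(t)\top}$ and $\mathbf P_Z^{(t)}=\rho_Z\mathbf L_Z^{(t)}\mathbf L_Z^{(t)\top}$. Each has rank at most $d$, and the diagonal completion was chosen precisely so that the recursion
\[
\mathbf\Pi^{(t)}=\mathbf P_Y^{(t)}+(\mathbf 1\mathbf 1^\top-\mathbf P_Y^{(t)}-\mathbf P_Z^{(t)})\circ\mathbf\Pi^{(t-1)}
\]
holds entrywise, including on the diagonal. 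The first step is therefore to record this matrix identity.

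The second step is an induction on the rank $r_t$ of $\mathbf\Pi^{(t)}$, using the Khatri--Rao construction. If $\mathbf A=\sum_a \mathbf a_a\mathbf b_a^\top$ and $\mathbf B=\sum_b \mathbf c_b\mathbf e_b^\top$, then
\[
\mathbf A\circ\mathbf B=\sum_{a,b}(\mathbf a_a\circ\mathbf c_b)(\mathbf b_a\circ\mathbf e_b)^\top .
\]
The factor $\mathbf 1\mathbf 1^\top-\mathbf P_Y^{(t)}-\mathbf P_Z^{(t)}$ has rank at most $2d+1$. This gives
\[
r_t\le d+(2d+1)r_{t-1},
\]
starting from $r_0$ in Condition~\ref{ass:initial-finite-rank}. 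Solving the recursion gives
\[
r_t\le (2d+1)^t r_0+d\,\frac{(2d+1)^t-1}{2d},
\]
which depends only on $t,d,r_0$. The induction also carries explicit factors $\mathbf\Pi^{(t)}=\mathbf C^{(t)}\mathbf D^{(t)\top}$, built from the vectors $\mathbf c_r^{(0)},\mathbf d_r^{(0)}$, the columns of $\sqrt{\rho_Y}\mathbf L_Y^{(t)}$ and $\sqrt{\rho_Z}\mathbf L_Z^{(t)}$, and $\mathbf 1$.

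The third step bounds the rank of each block. Using the factorisation of $\mathbf\Pi^{(t-1)}$,
\[
\mathbf S_Y^{(t)}=\mathbf P_Y^{(t)}\circ(\mathbf 1\mathbf 1^\top-\mathbf\Pi^{(t-1)})
\]
has rank at most $d(1+r_{t-1})$, and $\mathbf S_Z^{(t)}$ has rank at most $d\,r_{t-1}$. Each comes with explicit factors $\mathbf S_h^{(t)}=\mathbf F_h^{(t)}\mathbf G_h^{(t)\top}$.

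The fourth step assembles the horizontal concatenation. Put $\overline{\mathbf U}_{h,1}=[\mathbf F_h^{(1)}|\cdots|\mathbf F_h^{(n)}]$ and let $\overline{\mathbf U}_{h,2}$ be the block-diagonal stacking of the $\mathbf G_h^{(t)}$, so that $\mathbf S_h=\overline{\mathbf U}_{h,1}\overline{\mathbf U}_{h,2}^\top$. This yields
\[
\overline K_Y=\sum_{t=1}^n d(1+r_{t-1}),\qquad \overline K_Z=\sum_{t=1}^n d\,r_{t-1},
\]
which are explicit in $n,d,r_0$.

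The main obstacle is keeping this bookkeeping correct, in particular checking that the diagonal completion makes the Hadamard identity exact. One could sharpen $\overline K_h$ by noting that the columns of $\mathbf F_h^{(t)}$ may span a common space across $t$, but that is not needed for the statement. If boundedness of the factor entries is wanted later for the consistency proof, the same induction shows that the entries of $\mathbf C^{(t)},\mathbf D^{(t)}$ are bounded by constants depending on $\ell,C_0,t$.
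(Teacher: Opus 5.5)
Your proposal is correct and follows the paper's proof essentially step for step: the same diagonal-completed recursion, the same Hadamard-of-outer-products expansion giving $\bar r_t=d+(1+2d)\bar r_{t-1}$, the same blockwise factorisations with $d(1+\bar r_{t-1})$ and $d\,\bar r_{t-1}$ columns, and the same concatenation with a block-diagonal right factor, yielding $\overline K_Y=d\sum_{t=1}^n(1+\bar r_{t-1})$ and $\overline K_Z=d\sum_{t=1}^n\bar r_{t-1}$. The only point to tidy is that $\overline K_h$ should be defined through the number of terms $\bar r_{t-1}$ in your explicit factors (your closed-form bound), not through the actual rank of $\mathbf\Pi^{(t-1)}$, so that it depends only on $n$, $d$, and $r_0$.
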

	
	Proposition~\ref{prop:u1u2} gives a worst-case 
bound on the ranks $K_h=\operatorname{rank}(\mathbf S_h)\leq \overline K_h$.
		The proof repeatedly expands \eqref{eq:mean-recursion} into sums of bounded
rank-one terms and uses that the Hadamard product of two finite-rank matrices
has rank at most the product of their ranks. 
The following condition states directly the singular-value scale  needed for the UASE consistency result.

	\begin{condition}\label{ass:3}
For
$h\in\mathcal H$, there exist constants
\(0<c_{h,S}\le C_{h,S}<\infty\) independent of \(p\), such that
\[
c_{h,S}p\rho_h
\le
\sigma_{K_h}(\mathbf S_h)
\le
\sigma_1(\mathbf S_h)
\le
C_{h,S}p\rho_h.
\]
\end{condition}

	Let the top-$K_Y$ singular value decompositions (SVDs) be
	$$
	\mathbf Y
	=
	\mathbf U_Y\boldsymbol\Sigma_Y\mathbf V_Y^\top
	+
	\mathbf U_{Y,\perp}\boldsymbol\Sigma_{Y,\perp}\mathbf V_{Y,\perp}^\top,
	\quad
	\mathbf S_Y
	=
	\mathbf U_{S_Y}\boldsymbol\Sigma_{S_Y}\mathbf V_{S_Y}^\top.
	$$
	The columns of
\(
\mathbf U_Y,\mathbf U_{S_Y}\in\mathbb R^{p\times K_Y}
\text{ and }
\mathbf V_Y,\mathbf V_{S_Y}\in\mathbb R^{np\times K_Y}
\)
are the left and right singular vectors corresponding to the \(K_Y\)
largest singular values of \(\mathbf Y\) and \(\mathbf S_Y\),
respectively. These singular values are contained in the diagonal
matrices
\(
\boldsymbol\Sigma_Y,\boldsymbol\Sigma_{S_Y}
\in\mathbb R^{K_Y\times K_Y}.
\)
The matrices
\(
\mathbf U_{Y,\perp}\in\mathbb R^{p\times(p-K_Y)}
\text{ and }
\mathbf V_{Y,\perp}\in\mathbb R^{np\times(p-K_Y)}
\)
contain the remaining left and right singular vectors of \(\mathbf Y\),
and their corresponding singular values are contained in
\(
\boldsymbol\Sigma_{Y,\perp}
\in\mathbb R^{(p-K_Y)\times(p-K_Y)}.
\)
	Similarly, let the top-$K_Z$ SVDs be
	$$
	\mathbf Z
	=
	\mathbf U_Z\boldsymbol\Sigma_Z\mathbf V_Z^\top
	+
	\mathbf U_{Z,\perp}\boldsymbol\Sigma_{Z,\perp}\mathbf V_{Z,\perp}^\top,
	\quad
	\mathbf S_Z
	=
	\mathbf U_{S_Z}\boldsymbol\Sigma_{S_Z}\mathbf V_{S_Z}^\top,
	$$
	where the corresponding singular vector matrices have $K_Z$ columns.

	For
$h\in\mathcal H$, define the right embeddings
	$$
	\mathbf R_h
	=
	\mathbf V_h\boldsymbol\Sigma_h^{1/2}
	=
	(\mathbf R_h^{(1)};\ldots;\mathbf R_h^{(n)}),
	\qquad
	\mathbf R_{S_h}
	=
	\mathbf V_{S_h}\boldsymbol\Sigma_{S_h}^{1/2}
	=
	(\mathbf R_{S_h}^{(1)};\ldots;\mathbf R_{S_h}^{(n)}),
	$$
	where each block belongs to $\mathbb R^{p\times K_h}$. 
	
	For a matrix $\mathbf A$, write
$\|\mathbf A\|_{2\to\infty}=\max_i\|\mathbf e_i^\top \mathbf A\|_2$. We then have the following consistency result.
	\begin{theorem}\label{thm1}
		Suppose Conditions
		\ref{ass:transition-sparsity}, \ref{ass:initial-finite-rank}, and
		\ref{ass:3} hold.  Then, for every fixed \(c_0>0\) and $h\in \mathcal{H}$, there are constants
		\(C_{h,c_0},C_{c_0}<\infty\) and orthogonal matrices
		$\mathbf W_{hS}\in\mathbb O(K_h)$ such that, with probability at least
		\(1-C_{c_0}p^{-c_0}\),
		\[
		\max_{1\le t\le n}
		\left\|
		\mathbf R_h^{(t)}
		-
		\mathbf R_{S_h}^{(t)}
		\mathbf W_{hS}
		\right\|_{2\to\infty}
		\le
		C_{h,c_0}\sqrt{\frac{\log p}{p}}.
		\]

		The alignment matrices may be taken from the singular value decompositions
		\[
		\mathbf{U}_{S_h}^{\top} \mathbf{U}_{h}
		+
		\mathbf{V}_{S_h}^{\top} \mathbf{V}_{h}
		=
		\mathbf{W}_{h,1} \boldsymbol{\Lambda}_h\mathbf{W}_{h,2}^{\top},
		\qquad
		\mathbf{W}_{hS}=\mathbf{W}_{h,1} \mathbf{W}_{h,2}^{\top}.
		\]
	\end{theorem}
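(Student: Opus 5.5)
The plan is to adapt the two-to-infinity perturbation argument for unfolded adjacency spectral embedding \citep{jones_multilayer_2021}. The adaptation has to handle three features of the present setting: the noise matrix is not entrywise independent, since the entries are dependent across time through the autoregressive chain; its mean has rank \(K_h\le\overline K_h\) by Proposition~\ref{prop:u1u2}; and it is scaled by \(\rho_h\). Write \(\mathbf E_h=\mathbf h-\mathbf S_h\) for \(h\in\mathcal H\), where \(\mathbf h\) denotes \(\mathbf Y\) or \(\mathbf Z\). The diagonal of \(\mathbf S_h\) differs from the observed zero diagonal by an \(O(\rho_h)\) term, which contributes at most \(O(\rho_h\sqrt n)\) in spectral norm and is negligible.

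\emph{Step 1: spectral-norm and concentration bounds for \(\mathbf E_h\).} Across dyads \(i<j\), the entries \(\{Y_{ij}^{(t)}\}_{t\le n}\) are independent, because the initial states and the innovations are independent across dyads. Within a dyad they are dependent over the fixed horizon \(n\). I would therefore split \(\mathbf E_h=[\mathbf E_h^{(1)}|\cdots|\mathbf E_h^{(n)}]\) by time. Each block is a symmetric matrix with independent centred entries above the diagonal, bounded by \(1\), with variance \(\le C\rho_h\). By the matrix Bernstein inequality, or Bandeira--van Handel, and using \(p\rho_h\gg\log^2 p\) from Condition~\ref{ass:transition-sparsity}, we get \(\|\mathbf E_h^{(t)}\|\le C\sqrt{p\rho_h}\). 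Taking a union over the \(n\) blocks then gives \(\|\mathbf E_h\|\le C\sqrt{np\rho_h}\) with probability \(1-Cp^{-c_0}\). Combined with Condition~\ref{ass:3} and Weyl's inequality, this gives \(\sigma_{K_h}(\mathbf h)\asymp p\rho_h\) and \(\sigma_{K_h+1}(\mathbf h)=O(\sqrt{p\rho_h})\). Wedin's theorem then yields \(\|\mathbf U_h\mathbf U_h^\top-\mathbf U_{S_h}\mathbf U_{S_h}^\top\|,\ \|\mathbf V_h\mathbf V_h^\top-\mathbf V_{S_h}\mathbf V_{S_h}^\top\|=O((p\rho_h)^{-1/2})\). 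We also need row-wise concentration: \(\|\mathbf E_h\mathbf V_{S_h}\|_{2\to\infty}\) and \(\|\mathbf E_h^\top\mathbf U_{S_h}\|_{2\to\infty}\) are \(O(\sqrt{\rho_h\log p})\). Each row is a sum over \(j\) of independent, dyad-indexed vectors; within a dyad, summing over \(t\) preserves independence across dyads. Bernstein's inequality applies because \(\|\mathbf U_{S_h}\|_{2\to\infty},\|\mathbf V_{S_h}\|_{2\to\infty}=O(p^{-1/2})\). That incoherence follows from the factorisation in Proposition~\ref{prop:u1u2}, whose factors have bounded entries, together with the singular-value scale in Condition~\ref{ass:3}.

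\emph{Step 2: the decomposition.} Set \(\mathbf W=\mathbf W_{hS}\) as in the statement. I would use the standard expansion
\[
\mathbf R_h-\mathbf R_{S_h}\mathbf W
=\mathbf E_h^\top\mathbf U_{S_h}\boldsymbol\Sigma_{S_h}^{-1/2}\mathbf W+\mathbf{M}_h ,
\]
where \(\mathbf M_h\) collects several terms: \((\mathbf I-\mathbf V_{S_h}\mathbf V_{S_h}^\top)\mathbf E_h^\top(\mathbf U_h-\mathbf U_{S_h}\mathbf W)\boldsymbol\Sigma_h^{-1/2}\); terms involving \(\mathbf W\boldsymbol\Sigma_h^{1/2}-\boldsymbol\Sigma_{S_h}^{1/2}\mathbf W\); and terms involving \(\mathbf U_{S_h}^\top\mathbf U_h-\mathbf W\), which are \(O((p\rho_h)^{-1})\) by the standard \(\sin\Theta\) argument. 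The leading term is \(O(\sqrt{\rho_h\log p}\,(p\rho_h)^{-1/2})=O(\sqrt{\log p/p})\) in \(2\to\infty\) norm, by Step 1. The commutator term is \(O(1)\) in spectral norm after multiplying by \(\|\mathbf V_{S_h}\|_{2\to\infty}=O(p^{-1/2})\), so it is \(O(p^{-1/2})\). Because the \(2\to\infty\) norm of the stacked matrix is the maximum over its time blocks, the bound on \(\max_t\) follows immediately.

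\emph{Main obstacle.} The hard term is \(\|\mathbf E_h^\top(\mathbf U_h-\mathbf U_{S_h}\mathbf W)\|_{2\to\infty}\), because \(\mathbf U_h\) depends on \(\mathbf E_h\). My first attempt would be a crude bound: \(\|\mathbf E_h\|\cdot\|\mathbf U_h-\mathbf U_{S_h}\mathbf W\|\cdot\sigma_{K_h}^{-1/2}=O(\sqrt{p\rho_h}(p\rho_h)^{-1/2}(p\rho_h)^{-1/2})=O((p\rho_h)^{-1/2})\). This is \(o(\sqrt{\log p/p})\) precisely when \(\rho_h\) is not too small. Under Condition~\ref{ass:transition-sparsity} the crude bound is insufficient in general, since it gives \((p\rho_h)^{-1/2}\) rather than \(\sqrt{\log p/p}\). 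I would therefore apply a leave-one-out argument. For each vertex \(i\), build \(\mathbf h^{[i]}\) by replacing every dyad involving \(i\), across all \(n\) times, with its mean; this is natural because the dependence is confined within dyads. Its singular subspace \(\mathbf U_h^{[i]}\) is then independent of row \(i\) of \(\mathbf E_h^\top\). Bernstein's inequality gives \(\|\mathbf e_i^\top\mathbf E_h^\top\mathbf U_h^{[i]}\|\lesssim\sqrt{\rho_h\log p}\,\|\mathbf U_h^{[i]}\|_{2\to\infty}+\log p\,\|\mathbf U_h^{[i]}\|_{2\to\infty}\), and Davis--Kahan controls \(\mathbf U_h-\mathbf U_h^{[i]}\mathbf H_i\). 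A self-bounding recursion then gives \(\|\mathbf U_h\|_{2\to\infty}=O(p^{-1/2})\). Completing the argument produces the \(C_{h,c_0}\sqrt{\log p/p}\) bound with the stated probability.
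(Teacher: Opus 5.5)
Your proposal is correct and follows essentially the paper's proof. Both use the same expansion of $\mathbf R_h-\mathbf R_{S_h}\mathbf W_{hS}$ into the leading term $\mathbf E_h^\top\mathbf U_{S_h}\boldsymbol\Sigma_{S_h}^{-1/2}\mathbf W_{hS}$ (bounded rowwise by Bernstein using $\|\mathbf U_{S_h}\|_{2\to\infty}=O(p^{-1/2})$) plus lower-order remainders, and both handle the hard term $\mathbf E_h^\top(\mathbf U_h-\mathbf U_{S_h}\mathbf W_{hS})\boldsymbol\Sigma_h^{-1/2}$ with the same leave-one-vertex-out matrix that replaces every dyad incident to the vertex, in all $n$ blocks, by its mean.

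The remaining differences are minor. The paper proves spectral concentration by treating each dyad's whole trajectory as one matrix-Bernstein summand, rather than splitting by time block. It also needs no self-bounding recursion: the crude incoherence bound $\|\mathbf U_h^{(-r)}\|_{2\to\infty}=O\{(\log p/(p\rho_h))^{1/2}\}$ already keeps all remainders at $O\{\log^{3/2}p/(p\rho_h)\}=o\{(\log p/p)^{1/2}\}$ under Condition~\ref{ass:transition-sparsity}. One small slip: the variance factor in your conditional Bernstein bound should be $\|\mathbf U_h^{[i]}\|_F$, not $\|\mathbf U_h^{[i]}\|_{2\to\infty}$, which does not change the final rate.
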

	Since the network layers are temporally dependent, concentration
results for independent layers do not apply directly. The proof uses independence across dyads,
together with spectral perturbation and a leave-one-vertex-out
argument.  After a common orthogonal
alignment, differences between estimated rows approximate their
population counterparts with error
\(O(\sqrt{\frac{\log p}{p}})\) uniformly over times and vertices.
Complete
proofs are given in the Supplementary Material.

	\section{Theory for anomaly detection}
	 Consider a stationary baseline
	\(B=\{t_0,\ldots,t^*-1\}\), over which the network process is stationary
	and the formation and dissolution latent positions remain unchanged.  For the theoretical results, a target time \(t\ge t^\ast\) is compared with a single reference point in the baseline, chosen to be the endpoint \(t^\ast-1\). The one-step comparison is the
	special case \(t=t^*\). Let $\epsilon_p=\max_{h\in\mathcal H}C_{h,c_0}(\log p/p)^{1/2}$ denote the
rowwise error term in
Theorem~\ref{thm1}.

	\subsection{Vertex-level anomaly detection}

	For $h\in\mathcal H$, define the vertex-level channel-specific and joint test statistics
	\begin{equation}\label{eq:vertex-statistic}
	    T_{h,i}^{(t^*-1,t)}
		=
		\left\|\mathbf e_i^\top\bigl(\mathbf R_{h}^{(t)}-\mathbf R_h^{(t^*-1)}\bigr)\right\|_2,
	T_{i}^{(t^*-1,t)}=\max_hT_{h,i}^{(t^*-1,t)},
	\qquad i\in[p].
	\end{equation}
	For vertex \(i\), the joint statistic tests
\[
H_{0,i}^{(t^*-1,t)}:
\boldsymbol\ell_{Y,i}^{(t)}
=
\boldsymbol\ell_{Y,i}^{(t^*-1)}
\ \text{and}\
\boldsymbol\ell_{Z,i}^{(t)}
=
\boldsymbol\ell_{Z,i}^{(t^*-1)}
\qquad\text{against}\qquad
H_{1,i}^{(t^*-1,t)}:
\boldsymbol\ell_{Y,i}^{(t)}
\neq
\boldsymbol\ell_{Y,i}^{(t^*-1)}
\ \text{or}\
\boldsymbol\ell_{Z,i}^{(t)}
\neq
\boldsymbol\ell_{Z,i}^{(t^*-1)}.
\]
The statistics \(T_{h,i}^{(t^*-1,t)}\) in (\ref{eq:vertex-statistic}) test whether vertex \(i\) is anomalous in channel \(h\), that is, whether its latent position has changed.
	Define the vertex-level latent displacement  between time $t^*-1$ and $t$ as
	$
\delta_{h,i}^{(t^*-1,t)}
=
\left\|
\boldsymbol\ell_{h,i}^{(t)}
-
\boldsymbol\ell_{h,i}^{(t^*-1)}
\right\|_2.
$
	We further define its channel-specific anomalous-vertex set and
size by
\[
\mathcal I_h^{(t^*-1,t)}
=\{j\in[p]:\delta_{h,j}^{(t^*-1,t)}>0\},
\qquad
m_h^{(t^*-1,t)}=|\mathcal I_h^{(t^*-1,t)}|.
\]

	Besides the latent displacement \(\delta_{h,i}^{(t^*-1,t)}\) of vertex $i$, additional
	quantities affect performance for anomaly detection. First, each transition probability depends on
the latent positions of both vertices, giving the interference term
\begin{equation*}
\xi_{h,i}^{(t^*-1,t)}=\sqrt{\frac{\rho_h}{p}}
\|\{\mathbf L_h^{(t)}-\mathbf L_h^{(t^*-1)}\}_{-i}
  \boldsymbol\ell_{h,i}^{(t)}\|_2.
\end{equation*}
Thus a non-anomalous vertex can have a nonzero population row displacement when
many other vertices move. The following bound identifies a sparse regime in which this interference is smaller than the estimation error.

\begin{lemma}
\label{lem:sparse-xi-main}
Uniformly in $i$, we have 
\[
\xi_{h,i}^{(t^*-1,t)}\leq
2d\ell^2\left(\frac{\rho_hm_h^{(t^*-1,t)}}{p}\right)^{1/2}.
\]
In particular, if $\rho_hm_h^{(t^*-1,t)}=o(\log p)$, then
$\xi_{h,i}^{(t^*-1,t)}=o(\epsilon_p)$ for
$\epsilon_p\asymp(\log p/p)^{1/2}$.
\end{lemma}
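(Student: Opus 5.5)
The bound follows from a direct entrywise computation; no probability is involved, since $\xi_{h,i}^{(t^*-1,t)}$ is a deterministic function of the latent positions. The plan is to write the vector inside the norm coordinatewise and exploit the sparsity of the displacement matrix $\mathbf L_h^{(t)}-\mathbf L_h^{(t^*-1)}$.

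\textbf{Step 1: sparsity of the displacement.} Row $j$ of $\mathbf L_h^{(t)}-\mathbf L_h^{(t^*-1)}$ is $\boldsymbol\ell_{h,j}^{(t)}-\boldsymbol\ell_{h,j}^{(t^*-1)}$. This row vanishes unless $j\in\mathcal I_h^{(t^*-1,t)}$. After removing row $i$, the vector $\{\mathbf L_h^{(t)}-\mathbf L_h^{(t^*-1)}\}_{-i}\boldsymbol\ell_{h,i}^{(t)}$ therefore has at most $m_h^{(t^*-1,t)}$ nonzero coordinates. Its $j$th coordinate is
\[
(\boldsymbol\ell_{h,j}^{(t)}-\boldsymbol\ell_{h,j}^{(t^*-1)})^\top\boldsymbol\ell_{h,i}^{(t)}.
\]

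\textbf{Step 2: bound each coordinate uniformly.} By Definition~\ref{def:ar1-trdpg}, every latent position satisfies $\|y\|_\infty\le\ell$. Hence
\[
|\boldsymbol\ell_{h,j}^{(t)\top}\boldsymbol\ell_{h,i}^{(t)}|\le d\ell^2
\qquad\text{and}\qquad
|\boldsymbol\ell_{h,j}^{(t^*-1)\top}\boldsymbol\ell_{h,i}^{(t)}|\le d\ell^2,
\]
using $|x^\top y|\le\sum_k|x_k||y_k|\le d\ell^2$. The triangle inequality then bounds each coordinate by $2d\ell^2$. The inner-product constraints in $[0,1]$ would give a sharper constant, but they apply only within a single $\mathcal L^{(t)}$. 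To stay safe across different times, I use only the sup-norm bound, which matches the stated constant.

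\textbf{Step 3: combine.} The squared Euclidean norm is at most $m_h^{(t^*-1,t)}(2d\ell^2)^2$. Multiplying the norm by $\sqrt{\rho_h/p}$ gives
\[
\xi_{h,i}^{(t^*-1,t)}\le 2d\ell^2\,(\rho_h m_h^{(t^*-1,t)}/p)^{1/2}.
\]
Nothing depends on $i$, so the bound is uniform in $i$.

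\textbf{Consequence.} Dividing by $\epsilon_p\asymp(\log p/p)^{1/2}$ gives
\[
\xi_{h,i}^{(t^*-1,t)}/\epsilon_p\lesssim(\rho_h m_h^{(t^*-1,t)}/\log p)^{1/2},
\]
which tends to $0$ when $\rho_h m_h^{(t^*-1,t)}=o(\log p)$.

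There is no real obstacle. The only point needing care is to use the sup-norm bound rather than the inner-product constraints, because the two positions being compared may lie in different sets $\mathcal L_h^{(t)}$ and $\mathcal L_h^{(t^*-1)}$.
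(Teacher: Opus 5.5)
Your proposal is correct and follows essentially the same argument as the paper: the displacement vector has at most $m_h^{(t^*-1,t)}$ nonzero coordinates, each bounded by $2d\ell^2$, and the $o(\epsilon_p)$ consequence follows by dividing by $\epsilon_p$. The only cosmetic difference is how the coordinate bound is obtained. The paper applies Cauchy--Schwarz with $\|\boldsymbol\ell_{h,j}^{(t)}-\boldsymbol\ell_{h,j}^{(t^*-1)}\|_2\le 2\sqrt d\,\ell$ and $\|\boldsymbol\ell_{h,i}^{(t)}\|_2\le\sqrt d\,\ell$, whereas you split into two inner products each bounded by $d\ell^2$; both give the same constant.
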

        
	Second, an anomaly at vertex \(i\) may be difficult to detect
if it changes its inner products with the latent positions of other
vertices only slightly. Define
\(w_{Y,ij}^{(t^*)}
=
1-\Pi_{ij}^{(t^*-1)},
w_{Z,ij}^{(t^*)}
=
\Pi_{ij}^{(t^*-1)}.
\)

\begin{condition}
\label{ass:column-visibility}
For every \(h\in\mathcal H\), vertex \(i\), and target pair
\((t^*-1,t)\), there is a constant \(\mu_{h,i}>0\) such that
\[
\left[
\frac{1}{p}
\sum_{j\ne i}
\left\{
w_{h,ji}^{(t^*)}
(\boldsymbol\ell_{h,j}^{(t^*-1)})^\top
\left(
\boldsymbol\ell_{h,i}^{(t)}
-
\boldsymbol\ell_{h,i}^{(t^*-1)}
\right)
\right\}^{2}
\right]^{1/2}
\ge
\mu_{h,i}\delta_{h,i}^{(t^*-1,t)}.
\]
\end{condition}

The left-hand side measures the average weighted effect of the anomaly
on the inner products between vertex \(i\) and the other vertices. For
formation, the weight is the baseline probability that an edge is
absent and can form; for dissolution, it is the baseline probability
that an edge is present and can dissolve. The condition requires this
effect to be at least a fixed proportion of the anomaly magnitude
\(\delta_{h,i}^{(t^*-1,t)}\).
	
Finally, because transition events depend on the previous edge state,
an earlier anomaly may affect a later comparison even after the
transition probabilities have returned to their baseline values. For
\(h\in\mathcal H\), define
\begin{equation*}
M_{h,i}^{(t^*-1,t)}
=
\left[
\frac{1}{p}
\sum_{j\ne i}
\left\{
\frac{a_{h,ji}^{(t)}}{\sqrt{\rho_h}}
\left(
\Pi_{ji}^{(t-1)}-\Pi_{ji}^{(t^*-1)}
\right)
\right\}^2
\right]^{1/2},
\end{equation*}
where \(a_{Y,ji}^{(t)}=\alpha_{ji}^{(t)}\) and
\(a_{Z,ji}^{(t)}=\beta_{ji}^{(t)}\). This term measures the remaining effect of the earlier anomaly on the marginal adjacency distribution. It is zero for the one-step
comparison \(t=t^*\), and whenever
\(\Pi_{ji}^{(t-1)}=\Pi_{ji}^{(t^*-1)}\) for all \(j\ne i\), including
anomalies that preserve the marginal connection probabilities.
	
Using these quantities, the following theorem gives type I error and
power guarantees for the vertex-level test with threshold
\(\tau_p=\gamma\epsilon_p\), where \(\gamma>2\) is fixed.
	\begin{theorem}\label{thm:2v0}
		Suppose Conditions~\ref{ass:transition-sparsity}, \ref{ass:initial-finite-rank},~\ref{ass:3}, and~\ref{ass:column-visibility} hold.  Fix integers
		\(2\le t^*\le t\le n\), and define the test
		$\phi_{i}^{(t^*-1,t)}=\mathbf{1}\{T_{i}^{(t^*-1,t)}>\tau_p\}, i\in[p]$.
		For all sufficiently large \(p\) and every \(i\in[p]\):
		\begin{enumerate}
			\item 
			Under \(H_{0,i}^{(t^*-1,t)}\), suppose that the null condition
			\begin{align}
				\max_{h\in\mathcal H}c_{h,S}^{-1/2}
\left\{\xi_{h,i}^{(t^*-1,t)}
+M_{h,i}^{(t^*-1,t)}+(\rho_h/p)^{1/2}\right\}
<(\gamma-2)\epsilon_p,
				\label{eq:null_drift_control}
			\end{align}
			holds. Then
			\(
			\mathbb P_{H_{0,i}^{(t^*-1,t)}}
			\left(
			\phi_i^{(t^*-1,t)}=1
			\right)
			\le
			C_{c_0}p^{-c_0}.
			\)
			
			\item 
			Under $H_{1,i}^{(t^*-1,t)}$, suppose that the separation condition
			\begin{align}
				\max_{h\in\{Y,Z\}}C_{h,S}^{-1/2}
\left[\sqrt{\rho_h}
\mu_{h,i}\delta_{h,i}^{(t^*-1,t)}-\xi_{h,i}^{(t^*-1,t)}
-M_{h,i}^{(t^*-1,t)}\right]
>(\gamma+2)\epsilon_p,
				\label{eq:vertex_sep}
			\end{align}
			holds. Then
			$\mathbb{P}_{H_{1,i}^{(t^*-1,t)}}(\phi_{i}^{(t^*-1,t)}=1)\ge 1-C_{c_0}p^{-c_0}$.
		\end{enumerate}
	\end{theorem}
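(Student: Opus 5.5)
The argument is a triangle-inequality reduction to Theorem~\ref{thm1}, followed by a population-level analysis of the row difference $\mathbf e_i^\top(\mathbf R_{S_h}^{(t)}-\mathbf R_{S_h}^{(t^*-1)})$. On the event of Theorem~\ref{thm1}, which has probability at least $1-C_{c_0}p^{-c_0}$, a common alignment $\mathbf W_{hS}$ applies to all time blocks. Since $\mathbf W_{hS}$ is orthogonal, for each $h$
\[
\Bigl|T_{h,i}^{(t^*-1,t)}-\bigl\|\mathbf e_i^\top(\mathbf R_{S_h}^{(t)}-\mathbf R_{S_h}^{(t^*-1)})\bigr\|_2\Bigr|\le 2\epsilon_p .
\]
Hence it suffices to show two population bounds. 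Under the null we need $\|\mathbf e_i^\top(\mathbf R_{S_h}^{(t)}-\mathbf R_{S_h}^{(t^*-1)})\|_2<(\gamma-2)\epsilon_p$ for both $h$. Under the alternative we need the same quantity to exceed $(\gamma+2)\epsilon_p$ for at least one $h$. The rest is deterministic.

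To control the population row difference, I use the identity $\mathbf R_{S_h}=\mathbf V_{S_h}\boldsymbol\Sigma_{S_h}^{1/2}=\mathbf S_h^\top\mathbf U_{S_h}\boldsymbol\Sigma_{S_h}^{-1/2}$. This gives $\mathbf R_{S_h}^{(t)}=\mathbf S_h^{(t)\top}\mathbf U_{S_h}\boldsymbol\Sigma_{S_h}^{-1/2}$. Because the left factor $\mathbf U_{S_h}$ is shared across blocks, the row difference equals $\mathbf D_{h,i}^\top\mathbf U_{S_h}\boldsymbol\Sigma_{S_h}^{-1/2}$, where $\mathbf D_{h,i}$ is the $i$th column of $\mathbf S_h^{(t)}-\mathbf S_h^{(t^*-1)}$. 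The $i$th column also lies in the column space of $\mathbf S_h$ (indeed $\mathbf S_h^{(t)}$ is symmetric, so its $i$th column is a column of $\mathbf S_h$). The projection $\mathbf U_{S_h}\mathbf U_{S_h}^\top$ therefore acts as the identity on it. By Condition~\ref{ass:3}, the norm is then sandwiched:
\[
(C_{h,S}p\rho_h)^{-1/2}\|\mathbf D_{h,i}\|_2\le\|\cdot\|_2\le(c_{h,S}p\rho_h)^{-1/2}\|\mathbf D_{h,i}\|_2 .
\]

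The next step decomposes $\mathbf D_{h,i}$ entrywise. Take $h=Y$ and $j\ne i$, with the diagonal contributing at most $O(\rho_h)$. Add and subtract $\alpha^{(t)}_{ji}(1-\Pi^{(t^*-1)}_{ji})$ to get
\[
D_{ji}=w_{Y,ji}^{(t^*)}\bigl(\alpha_{ji}^{(t)}-\alpha_{ji}^{(t^*-1)}\bigr)-\alpha^{(t)}_{ji}\bigl(\Pi^{(t-1)}_{ji}-\Pi^{(t^*-1)}_{ji}\bigr),
\]
and then split the first term further:
\[
\alpha_{ji}^{(t)}-\alpha_{ji}^{(t^*-1)}=\rho_Y\bigl[\boldsymbol\ell_{j}^{(t^*-1)\top}(\boldsymbol\ell_i^{(t)}-\boldsymbol\ell_i^{(t^*-1)})+(\boldsymbol\ell_j^{(t)}-\boldsymbol\ell_j^{(t^*-1)})^\top\boldsymbol\ell_i^{(t)}\bigr].
\]
Dividing $\|\mathbf D_{h,i}\|_2$ by $\sqrt{p\rho_h}$ turns the three pieces into three familiar quantities. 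The own-displacement piece becomes $\sqrt{\rho_h}$ times the left side of Condition~\ref{ass:column-visibility}, and it vanishes under $H_{0,i}$. The cross term is bounded by $\xi_{h,i}^{(t^*-1,t)}$, using $w\in[0,1]$. The memory piece is exactly $M_{h,i}^{(t^*-1,t)}$, and the diagonal gives $(\rho_h/p)^{1/2}$. The case $h=Z$ is identical with weight $\Pi^{(t^*-1)}$ and $\beta$.

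The null bound then follows from the upper sandwich and the triangle inequality via \eqref{eq:null_drift_control}. For the alternative, I use the lower sandwich with the reverse triangle inequality, lower-bounding the own piece by $\sqrt{\rho_h}\mu_{h,i}\delta_{h,i}$ through Condition~\ref{ass:column-visibility}. The diagonal term is absorbed because it is $o(\epsilon_p)$. Choosing the channel $h$ attaining the maximum in \eqref{eq:vertex_sep} gives $T_i\ge T_{h,i}>\tau_p$. The main obstacle is the column-space step. The claim $\mathbf U_{S_h}\mathbf U_{S_h}^\top\mathbf D_{h,i}=\mathbf D_{h,i}$ needs $K_h=\operatorname{rank}(\mathbf S_h)$ exactly, with symmetric blocks. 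I also have to handle carefully the completed diagonal, whose entries differ from the off-diagonal event means, so that the $(\rho_h/p)^{1/2}$ term is justified uniformly in $i$.
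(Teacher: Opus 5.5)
Your proposal follows the paper's proof. It uses the $2\epsilon_p$ transfer on the event of Theorem~\ref{thm1}, the entrywise split of $(\mathbf S_h^{(t)}-\mathbf S_h^{(t^*-1)})\mathbf e_i$ into own-displacement, interference and memory pieces (Proposition~\ref{prop:column-wise-SY-bound}), and the singular-value sandwich between this column and the population row difference (Proposition~\ref{prop:pointwise-S-to-R}).

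One step fails as written. In part 2 you cannot absorb the diagonal contribution as $o(\epsilon_p)$. The separation condition \eqref{eq:vertex_sep} is a strict inequality with no guaranteed margin. After subtracting $C_{h,S}^{-1/2}(\rho_h/p)^{1/2}$ you only get $T_{h,i}^{(t^*-1,t)}>\tau_p-C_{h,S}^{-1/2}(\rho_h/p)^{1/2}$, which does not give rejection. The fix, which is what the paper does, is to drop the $i$th coordinate before applying the reverse triangle inequality, using $\|\mathbf D_{h,i}\|_2\ge\|(\mathbf I-\mathbf e_i\mathbf e_i^\top)\mathbf D_{h,i}\|_2$. The diagonal then enters only the upper bound. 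That is why $(\rho_h/p)^{1/2}$ appears in \eqref{eq:null_drift_control} but not in \eqref{eq:vertex_sep}.

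Two smaller points. First, the column-space step you single out as the main obstacle is automatic. $K_h$ is defined as $\operatorname{rank}(\mathbf S_h)$, so $\mathbf S_h=\mathbf U_{S_h}\boldsymbol\Sigma_{S_h}\mathbf V_{S_h}^\top$ exactly, and each block satisfies $\mathbf S_h^{(t)}=\mathbf U_{S_h}\boldsymbol\Sigma_{S_h}^{1/2}\mathbf R_{S_h}^{(t)\top}$. Any column of any block is a column of the unfolded $\mathbf S_h$, so symmetry plays no role.

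Second, your add-and-subtract identity treats $(\mathbf S_Y^{(t^*-1)})_{ji}$ as $\alpha_{ji}^{(t^*-1)}(1-\Pi_{ji}^{(t^*-1)})$. By definition its weight is $1-\Pi_{ji}^{(t^*-2)}$. The identity therefore needs $\Pi_{ji}^{(t^*-2)}=\Pi_{ji}^{(t^*-1)}$, which comes from baseline stationarity. The paper also uses this silently; you should state it.
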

	
	The conditions simplify when
\(\xi_{h,i}^{(t^*-1,t)}=o(\epsilon_p)\), as established by
Lemma~\ref{lem:sparse-xi-main}, and
\(M_{h,i}^{(t^*-1,t)}=o(\epsilon_p)\) for both channels. The latter holds when the memory term is zero, as in the cases
described above, or when it becomes negligible after a sufficient
burn-in, as shown in the next subsection. In this case,
\eqref{eq:null_drift_control} holds for non-anomalous vertices for all
sufficiently large \(p\), while \eqref{eq:vertex_sep} requires
\(
\sqrt{\rho_h}\,
\mu_{h,i}\delta_{h,i}^{(t^*-1,t)}
\)
to exceed a constant multiple of \(\epsilon_p\) in at least one
channel \(h\in\mathcal H\).

The next corollary gives exact recovery of the anomalous vertex set.
	\begin{corollary}\label{cor:perfect-v}
		Fix integers \(2\le t^*\le t\le n\) and define the true anomalous vertex set
		$\mathcal{A}^{(t^*-1,t)} = \{ i \in [p] : H^{(t^*-1,t)}_{1,i}\text{ holds} \}$ and
		the detected set
		$\widehat{\mathcal{A}}^{(t^*-1,t)}=\{i\in[p]:\phi_{i}^{(t^*-1,t)}=1\}$.
		Assume Conditions~\ref{ass:transition-sparsity}, \ref{ass:initial-finite-rank},
		\ref{ass:3}, and
		\ref{ass:column-visibility} hold.  If \eqref{eq:null_drift_control} holds uniformly over
$i\notin\mathcal A^{(t^*-1,t)}$ and \eqref{eq:vertex_sep} holds uniformly over
$i\in\mathcal A^{(t^*-1,t)}$, then, for all sufficiently large $p$,
		$$\mathbb P\{\widehat{\mathcal{A}}^{(t^*-1,t)} = \mathcal{A}^{(t^*-1,t)}\}
		\ge 1 - C_{c_0}p^{-c_0}.$$
        The same statement applied separately to $T_{Y,i}$ and $T_{Z,i}$ gives exact
recovery of the formation- and dissolution-anomaly sets and hence exact
channel attribution.
	\end{corollary}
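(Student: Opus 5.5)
The corollary should follow from Theorem~\ref{thm:2v0} by intersecting the high-probability events over all vertices. One point needs care first: simply union-bounding the $p$ per-vertex failure probabilities would give $pC_{c_0}p^{-c_0}$, which loses a factor of $p$. I would avoid that by not applying Theorem~\ref{thm:2v0} vertex by vertex. Instead, I would go back to the single event on which Theorem~\ref{thm1} holds and prove both parts for every $i$ on that one event.

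Let $\mathcal E$ be the event that
\[
\max_{t}\bigl\|\mathbf R_h^{(t)}-\mathbf R_{S_h}^{(t)}\mathbf W_{hS}\bigr\|_{2\to\infty}\le\epsilon_p
\quad\text{for both } h\in\mathcal H .
\]
By Theorem~\ref{thm1} and a union bound over the two channels, $\mathbb P(\mathcal E)\ge1-C_{c_0}p^{-c_0}$, with the constant enlarged if necessary. On $\mathcal E$, the triangle inequality gives, simultaneously for all $i$ and both $h$,
\[
\bigl|T_{h,i}^{(t^*-1,t)}-\bigl\|\mathbf e_i^\top(\mathbf R_{S_h}^{(t)}-\mathbf R_{S_h}^{(t^*-1)})\bigr\|_2\bigr|\le2\epsilon_p .
\]
Here I use that $\mathbf W_{hS}$ is common to all time blocks and orthogonal, so it does not change row norms.

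The deterministic part of the proof of Theorem~\ref{thm:2v0} bounds the population row displacement
\[
D_{h,i}=\bigl\|\mathbf e_i^\top(\mathbf R_{S_h}^{(t)}-\mathbf R_{S_h}^{(t^*-1)})\bigr\|_2 .
\]
The upper bound is $c_{h,S}^{-1/2}\{\xi_{h,i}+M_{h,i}+(\rho_h/p)^{1/2}\}$. The lower bound is $C_{h,S}^{-1/2}[\sqrt{\rho_h}\mu_{h,i}\delta_{h,i}-\xi_{h,i}-M_{h,i}]$. Both come from writing $\mathbf R_{S_h}=\mathbf S_h^\top\mathbf U_{S_h}\boldsymbol\Sigma_{S_h}^{-1/2}$, so that $D_{h,i}$ is comparable, through Condition~\ref{ass:3}, to the norm of the column difference of $\mathbf S_h^{(t)}-\mathbf S_h^{(t^*-1)}$ at vertex $i$. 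That difference then splits into the vertex's own displacement, interference from other vertices ($\xi$), memory ($M$), and the diagonal term. These bounds are deterministic, so I would take them as given in the form used in that proof.

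Now fix the realisation in $\mathcal E$. For $i\notin\mathcal A^{(t^*-1,t)}$, the uniform null condition \eqref{eq:null_drift_control} gives $D_{h,i}<(\gamma-2)\epsilon_p$ for both $h$. Hence $T_{h,i}<\gamma\epsilon_p=\tau_p$, so $T_i\le\tau_p$ and $\phi_i=0$. For $i\in\mathcal A^{(t^*-1,t)}$, the uniform separation condition \eqref{eq:vertex_sep} gives some $h$ with $D_{h,i}>(\gamma+2)\epsilon_p$. Then $T_{h,i}>\gamma\epsilon_p$ and $\phi_i=1$. So on $\mathcal E$ we have $\widehat{\mathcal A}^{(t^*-1,t)}=\mathcal A^{(t^*-1,t)}$, which gives the stated probability.

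For channel attribution, I would repeat the argument with the single-channel tests $\mathbf 1\{T_{h,i}>\tau_p\}$. The target sets are $\mathcal I_h^{(t^*-1,t)}$, and the channel-specific versions of \eqref{eq:null_drift_control} and \eqref{eq:vertex_sep} are imposed for that $h$. This works on the same event $\mathcal E$.

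The main obstacle is the uniformity issue. I must check that the deterministic population bounds in Theorem~\ref{thm:2v0} do not use any vertex-specific probabilistic event, so that only $\mathcal E$ is needed. If they did, I would fall back on a union bound and absorb the extra factor $p$ by applying Theorem~\ref{thm1} with $c_0+1$ in place of $c_0$.
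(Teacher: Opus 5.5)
Your proposal is correct and matches the paper's proof: both work on the single common embedding event from Theorem~\ref{thm1}. On that event the deterministic population bounds from the proof of Theorem~\ref{thm:2v0}, together with the $2\epsilon_p$ rowwise error, classify every vertex correctly at once, so no union bound over the $p$ vertices is needed, and the channel-specific attribution follows in the same way.
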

	
	\subsection{Memory in the isolated anomaly process}
	\label{sec:vertex-memory-main-insert}
	
	Consider the isolated anomaly process where an isolated anomaly at $t^*$ is followed by an immediate return to the
baseline formation and dissolution latent positions. Although the transition mechanism has returned to baseline, the connection probabilities at time \(t-1\) may still differ from their stationary values because of the anomaly at \(t^*\), giving rise to the memory term.  This subsection quantifies
	this AR(1) memory in the vertex statistic and traces its decay into the
	right-embedding coordinates.
	
Using the anomalous-vertex sets defined in
Section 3.1, write
\[
\mathcal I^{(t^*)}
=\mathcal I_Y^{(t^*-1,t^*)}\cup\mathcal I_Z^{(t^*-1,t^*)},
\qquad
m^{(t^*)}=|\mathcal I^{(t^*)}|
\leq m_Y^{(t^*-1,t^*)}+m_Z^{(t^*-1,t^*)}.
\]
Thus $m^{(t^*)}$ counts vertices anomalous in either channel at the
anomaly time. Let
	\[
		\bar r
		=
		\sup_{i<j}|1-\alpha_{ij}-\beta_{ij}|
		\le1, \qquad \Delta_i^{\rm mean}
			=
			\left[
			\frac1p\sum_{j\ne i}
			\left(
			\Pi_{ij}^{(t^*)}-\Pi_{ij}^{(t^*-1)}
			\right)^2
			\right]^{1/2}.
	\]
	
	The following proposition shows that when \(\bar r<1\), the effect of the anomaly-time edge
probabilities on later transition events decreases geometrically.
	
	\begin{proposition}
		\label{prop:memory-geometric-decay}
		Under the isolated anomaly process above, for \(t>t^*\) and $h\in\mathcal H$,
		\begin{equation}\label{eq:memory-decay}
		    M^{(t^*-1,t)}_{h,i}\le \sqrt{\rho_h}\,\bar r^{\,t-t^*-1}\Delta_i^{\rm mean}.
		\end{equation}
		Moreover, for every $i\notin\mathcal I^{(t^*)}$,
\[
M_{h,i}^{(t^*-1,t)}\leq
\left(\frac{\rho_h m^{(t^*)}}{p}\right)^{1/2}
\bar r^{\,t-t^*-1}.
\]
Hence, for a fixed channel \(h\),
$\rho_hm^{(t^*)}=o(\log p)$ places the memory of every non-anomalous
vertex at $o(\epsilon_p)$ at all later times. The conclusion holds in both
channels if $\max_{h\in\mathcal H}\rho_hm^{(t^*)}=o(\log p)$.
	\end{proposition}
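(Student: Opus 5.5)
The plan is to solve the mean recursion \eqref{eq:mean-recursion} explicitly after the anomaly, and then bound the two factors in the definition of $M_{h,i}^{(t^*-1,t)}$ separately. In the isolated anomaly process, the baseline is stationary up to $t^*-1$. Hence $\Pi_{ij}^{(t^*-1)}=\pi_{ij}=\alpha_{ij}/(\alpha_{ij}+\beta_{ij})$, where $(\alpha_{ij},\beta_{ij})$ are the baseline transition probabilities. For every $s>t^*$ the transition probabilities return to $(\alpha_{ij},\beta_{ij})$.

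\textbf{Step 1: geometric decay of the mean.} Since $\pi_{ij}$ is a fixed point of the map $x\mapsto\alpha_{ij}+(1-\alpha_{ij}-\beta_{ij})x$, subtracting it from \eqref{eq:mean-recursion} gives, for $s>t^*$,
\[
\Pi_{ij}^{(s)}-\pi_{ij}=(1-\alpha_{ij}-\beta_{ij})\bigl(\Pi_{ij}^{(s-1)}-\pi_{ij}\bigr).
\]
Iterating from $s=t^*+1$ up to $s=t-1$ yields
\[
\Pi_{ij}^{(t-1)}-\Pi_{ij}^{(t^*-1)}=(1-\alpha_{ij}-\beta_{ij})^{t-1-t^*}\bigl(\Pi_{ij}^{(t^*)}-\Pi_{ij}^{(t^*-1)}\bigr).
\]
The absolute value of the right-hand side is at most $\bar r^{\,t-t^*-1}\,|\Pi_{ij}^{(t^*)}-\Pi_{ij}^{(t^*-1)}|$. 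When $t=t^*+1$ the exponent is $0$ and the identity is trivial.

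\textbf{Step 2: bounding the weights.} For $t>t^*$ the latent positions are at baseline, so
\[
a_{h,ji}^{(t)}/\sqrt{\rho_h}=\sqrt{\rho_h}\,(\boldsymbol\ell_{h,j})^\top\boldsymbol\ell_{h,i}\le\sqrt{\rho_h},
\]
because inner products lie in $[0,1]$ by Definition~\ref{def:ar1-trdpg}. Inserting Steps 1 and 2 into the definition of $M_{h,i}^{(t^*-1,t)}$ and pulling the constants out of the root gives \eqref{eq:memory-decay} directly, with the remaining sum being exactly $\Delta_i^{\rm mean}$.

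\textbf{Step 3: the non-anomalous bound.} This is the only step needing care. Expand the one-step recursion at time $t^*$ using $\Pi^{(t^*-1)}_{ij}=\pi_{ij}$:
\[
\Pi_{ij}^{(t^*)}-\Pi_{ij}^{(t^*-1)}
=\bigl(\alpha_{ij}^{(t^*)}-\alpha_{ij}\bigr)(1-\pi_{ij})-\bigl(\beta_{ij}^{(t^*)}-\beta_{ij}\bigr)\pi_{ij}.
\]
Take $i\notin\mathcal I^{(t^*)}$ and $j\notin\mathcal I^{(t^*)}$. Then both latent positions are unchanged in both channels, so $\alpha_{ij}^{(t^*)}=\alpha_{ij}$ and $\beta_{ij}^{(t^*)}=\beta_{ij}$ by \eqref{eq:alpha-beta-ip}, and the difference vanishes. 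Hence only $j\in\mathcal I^{(t^*)}$ contribute, and there are at most $m^{(t^*)}$ such $j$.

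For those $j$, the difference of two probabilities in $[0,1]$ is at most $1$ in absolute value. Therefore $\Delta_i^{\rm mean}\le (m^{(t^*)}/p)^{1/2}$, and combining with \eqref{eq:memory-decay} gives the second display.

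\textbf{Step 4: asymptotic conclusion.} Since $\bar r\le1$, we have
\[
M_{h,i}^{(t^*-1,t)}\le(\rho_h m^{(t^*)}/p)^{1/2}=o\bigl((\log p/p)^{1/2}\bigr)=o(\epsilon_p)
\]
whenever $\rho_h m^{(t^*)}=o(\log p)$, uniformly in $t>t^*$ and in $i\notin\mathcal I^{(t^*)}$. Taking the maximum over $h\in\mathcal H$ gives the two-channel statement.

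The main point to check is the cancellation in Step 3. It relies on the baseline at $t^*-1$ being exactly stationary, and on the anomaly at $t^*$ altering $\alpha_{ij}$ and $\beta_{ij}$ only through the latent positions of anomalous endpoints. The rest is routine.
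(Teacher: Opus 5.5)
Your proof is correct and follows the paper's argument essentially step for step. The paper also iterates the baseline recursion to obtain the factor $(1-\alpha_{ij}-\beta_{ij})^{t-t^*-1}$, bounds $a_{h,ji}^{(t)}\le\rho_h$, and notes that dyads with both endpoints outside $\mathcal I^{(t^*)}$ satisfy $\Pi_{ij}^{(t^*)}=\Pi_{ij}^{(t^*-1)}$, which leaves at most $m^{(t^*)}$ nonzero terms in $\Delta_i^{\rm mean}$; your one-step expansion at $t^*$ simply writes out the cancellation that the paper attributes to baseline stationarity.
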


	We next quantify how the decaying population memory appears directly in the observed vertex statistic.
\begin{theorem}
	\label{thm:memory-echo-main}
	Suppose Conditions~\ref{ass:transition-sparsity}, \ref{ass:initial-finite-rank}, and~\ref{ass:3}
	hold for the isolated anomaly process. For every fixed \(c_0>0\), there exist
	constants \(0<c<C<\infty\) such that, for all sufficiently large \(p\),
	with probability at least \(1-C_{c_0}p^{-c_0}\), simultaneously for
	\(h\in\mathcal H\), \(i\in[p]\), and \(t>t^*\),
	\[
	\max\left\{cM_{h,i}^{(t^*-1,t)}-2\epsilon_p,0\right\}
	\le
	T_{h,i}^{(t^*-1,t)}
	\le
	CM_{h,i}^{(t^*-1,t)}
	+C\sqrt{\frac{\rho_h}{p}}+2\epsilon_p.
	\]
	Consequently, if the right-hand side of \eqref{eq:memory-decay}
	is \(O(\epsilon_p)\), then
	\(T_{h,i}^{(t^*-1,t)}=O_{\mathbb P}(\epsilon_p)\).
\end{theorem}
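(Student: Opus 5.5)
The plan is to reduce the vertex statistic to a population row displacement using Theorem~\ref{thm1}, and then to bound that population displacement above and below by the memory term. On the event of Theorem~\ref{thm1}, which holds with probability at least \(1-C_{c_0}p^{-c_0}\) simultaneously for \(h\in\mathcal H\) and \(1\le t\le n\), the triangle inequality gives
\[
\bigl|T_{h,i}^{(t^*-1,t)}-\|\mathbf e_i^\top(\mathbf R_{S_h}^{(t)}-\mathbf R_{S_h}^{(t^*-1)})\mathbf W_{hS}\|_2\bigr|\le 2\epsilon_p .
\]
Since \(\mathbf W_{hS}\) is orthogonal, it suffices to show \(cM\le D_{h,i}:=\|\mathbf e_i^\top(\mathbf R_{S_h}^{(t)}-\mathbf R_{S_h}^{(t^*-1)})\|_2\le CM+C\sqrt{\rho_h/p}\) deterministically, uniformly in \(i\) and \(t>t^*\). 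The bound then holds on one common high-probability event, with no further union bound beyond the finite \(n\) and two channels.

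For the population step I use \(\mathbf R_{S_h}=\mathbf S_h^\top\mathbf U_{S_h}\boldsymbol\Sigma_{S_h}^{-1/2}\). Block \(t\) is therefore \(\mathbf S_h^{(t)\top}\mathbf U_{S_h}\boldsymbol\Sigma_{S_h}^{-1/2}\), and, using the symmetry of \(\mathbf S_h^{(t)}\),
\[
\mathbf e_i^\top(\mathbf R_{S_h}^{(t)}-\mathbf R_{S_h}^{(t^*-1)})=(\mathbf s^{(t)}_i-\mathbf s^{(t^*-1)}_i)^\top\mathbf U_{S_h}\boldsymbol\Sigma_{S_h}^{-1/2},
\]
where \(\mathbf s_i^{(t)}\) is column \(i\) of \(\mathbf S_h^{(t)}\). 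In the isolated anomaly process the latent positions at \(t\) and at \(t^*-1\) coincide, so \(a^{(t)}_{h,ji}\) equals its baseline value. The entrywise difference is then, up to sign, \(a_{h,ji}(\Pi^{(t-1)}_{ji}-\Pi^{(t^*-2)}_{ji})\). By stationarity on the baseline, \(\Pi^{(t^*-2)}=\Pi^{(t^*-1)}\), so for \(j\ne i\) the entries are exactly the summands defining \(M_{h,i}\) times \(\sqrt{\rho_h}\). The diagonal entry \(j=i\) is bounded by \(O(\rho_h)\) and contributes the \(C\sqrt{\rho_h/p}\) remainder.

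The upper bound follows from \(\|\mathbf U_{S_h}\|\le1\), \(\sigma_{K_h}(\mathbf S_h)\ge c_{h,S}p\rho_h\) (Condition~\ref{ass:3}) and \(\|\mathbf s_i^{(t)}-\mathbf s_i^{(t^*-1)}\|_2=\sqrt{p\rho_h}\,M_{h,i}+O(\rho_h)\). Together these give \(D_{h,i}\le c_{h,S}^{-1/2}(M_{h,i}+\sqrt{\rho_h/p})\).

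The lower bound is the main obstacle. It requires that the difference vector \(\mathbf s_i^{(t)}-\mathbf s_i^{(t^*-1)}\) lie in the column space of \(\mathbf U_{S_h}\), so that projecting onto it loses nothing. This holds because \(\mathbf s_i^{(t)}\) is a column of \(\mathbf S_h\) (as the \(i\)th row of the \(t\)th block of \(\mathbf S_h^\top\)), and every column of \(\mathbf S_h\) lies in \(\operatorname{col}(\mathbf U_{S_h})\). Hence
\[
\|\mathbf v^\top\mathbf U_{S_h}\boldsymbol\Sigma_{S_h}^{-1/2}\|_2\ge\sigma_1(\mathbf S_h)^{-1/2}\|\mathbf v\|_2\ge (C_{h,S}p\rho_h)^{-1/2}\|\mathbf v\|_2 .
\]
With \(\|\mathbf v\|_2\ge\sqrt{p\rho_h}M_{h,i}-O(\rho_h)\), this yields \(D_{h,i}\ge C_{h,S}^{-1/2}M_{h,i}-O(\sqrt{\rho_h/p})\).

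The leftover \(O(\sqrt{\rho_h/p})\) term is absorbed into \(2\epsilon_p\) after shrinking \(c\). This is where some care is needed: I would handle the diagonal completion so that the diagonal entries of the two blocks agree, since the diagonal recursion also returns to baseline. Alternatively, note that \(\sqrt{\rho_h/p}=o(\epsilon_p)\) and accept a constant slightly above \(2\) for large \(p\).

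For the final claim, if \(\sqrt{\rho_h}\,\bar r^{t-t^*-1}\Delta_i^{\rm mean}=O(\epsilon_p)\), then Proposition~\ref{prop:memory-geometric-decay} gives \(M_{h,i}=O(\epsilon_p)\). Since \(\sqrt{\rho_h/p}\le\epsilon_p\), the upper bound gives \(T=O_{\mathbb P}(\epsilon_p)\).
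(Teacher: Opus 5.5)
Your proposal is correct and follows the paper's proof essentially step for step. The shared steps are: the identity $\mathbf R_{S_h}=\mathbf S_h^\top\mathbf U_{S_h}\boldsymbol\Sigma_{S_h}^{-1/2}$ for the population row displacement; $\sigma_{K_h}(\mathbf S_h)$ for the upper bound; membership of the column difference in the column space of $\mathbf U_{S_h}$, together with $\sigma_1(\mathbf S_h)$, for the lower bound; and the $2\epsilon_p$ transfer on the event of Theorem~\ref{thm1}.

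The diagonal ``obstacle'' you flag is not real. Deleting the completed diagonal coordinate cannot increase a Euclidean norm, so $\|\mathbf s_i^{(t)}-\mathbf s_i^{(t^*-1)}\|_2\ge\sqrt{p\rho_h}\,M_{h,i}^{(t^*-1,t)}$ exactly. Hence $D_{h,i}\ge C_{h,S}^{-1/2}M_{h,i}^{(t^*-1,t)}$ with no remainder, which is what the paper uses.

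Your halving-$c$ absorption would also work: whenever $cM_{h,i}^{(t^*-1,t)}\le C\sqrt{\rho_h/p}=o(\epsilon_p)$, the lower bound is vacuous for large $p$. Your other two fixes do not work, however. The two diagonal entries need not agree for anomalous $i$, because the completed recursion for $\Pi_{ii}$ uses the anomalous latent positions at $t^*$, so $\Pi_{ii}^{(t-1)}$ carries memory too. Accepting a constant above $2$ would change the stated inequality.
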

Thus memory larger than the rowwise estimation error remains visible
in the observed statistic, whereas memory of order \(\epsilon_p\) is absorbed by
the rowwise estimation error. Combining the geometric envelope with this sample
bound gives a burn-in time.
	
		\begin{corollary}
		\label{cor:vertex-memory-burnin-main}
		Assume the conditions of Theorem~\ref{thm:memory-echo-main} hold, and let
		\(0<\bar r<1\).  For \(i\in[p]\), if \(t>t^*\) satisfies
		\[
			t-t^*-1
			\ge
			\left\lceil
			\frac{
			\log\left\{
			\max\left\{
			1,
			\sqrt{\max\{\rho_Y,\rho_Z\}}\,
			\Delta_i^{\rm mean}/\epsilon_p
			\right\}
			\right\}
			}
			{\log(1/\bar r)}
			\right\rceil,
		\]
		then
$\max_{h\in\mathcal H}M_{h,i}^{(t^*-1,t)}\leq\epsilon_p$ and
$\max_{h\in\mathcal H}T_{h,i}^{(t^*-1,t)}=O_{\mathbb P}(\epsilon_p)$.
	\end{corollary}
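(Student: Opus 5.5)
The corollary follows from two earlier results: the geometric envelope \eqref{eq:memory-decay} of Proposition~\ref{prop:memory-geometric-decay} bounds the population memory, and the sample sandwich of Theorem~\ref{thm:memory-echo-main} transfers that bound to the statistic. The argument has two steps. First we show that the burn-in condition forces $\max_h M_{h,i}^{(t^*-1,t)}\le\epsilon_p$. Second we plug this into the upper bound of Theorem~\ref{thm:memory-echo-main}.

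For the first step, write $k=t-t^*-1\ge 0$ and $\rho_{\max}=\max\{\rho_Y,\rho_Z\}$. By \eqref{eq:memory-decay}, for each $h\in\mathcal H$,
\[
M_{h,i}^{(t^*-1,t)}\le\sqrt{\rho_h}\,\bar r^{\,k}\Delta_i^{\rm mean}\le\sqrt{\rho_{\max}}\,\Delta_i^{\rm mean}\,\bar r^{\,k}.
\]
Set $A=\max\{1,\sqrt{\rho_{\max}}\Delta_i^{\rm mean}/\epsilon_p\}$. Since $0<\bar r<1$, we have $\log(1/\bar r)>0$. The hypothesis $k\ge\lceil\log A/\log(1/\bar r)\rceil$ then gives $\bar r^{\,k}\le A^{-1}$.

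We now split into two cases.
\begin{itemize}
\item If $A=\sqrt{\rho_{\max}}\Delta_i^{\rm mean}/\epsilon_p$, then $\sqrt{\rho_{\max}}\,\Delta_i^{\rm mean}\,\bar r^{\,k}\le\epsilon_p$ directly.
\item If $A=1$, then $\sqrt{\rho_{\max}}\Delta_i^{\rm mean}\le\epsilon_p$. Because $\bar r^{\,k}\le1$, the product is again at most $\epsilon_p$.
\end{itemize}
In both cases $\max_h M_{h,i}^{(t^*-1,t)}\le\epsilon_p$. The case $A=1$, where the ceiling is $0$ and $t=t^*+1$, is where care is needed. The only other check is that $\Delta_i^{\rm mean}$ is finite, which holds because probabilities are bounded; this ensures $A$ is well defined.

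For the second step, on the event of probability at least $1-C_{c_0}p^{-c_0}$ from Theorem~\ref{thm:memory-echo-main}, we have simultaneously over $h$, $i$ and $t>t^*$
\[
T_{h,i}^{(t^*-1,t)}\le CM_{h,i}^{(t^*-1,t)}+C\sqrt{\rho_h/p}+2\epsilon_p\le (C+2)\epsilon_p+C\sqrt{\rho_h/p}.
\]
Since $\rho_h\le1$ and $\epsilon_p\asymp(\log p/p)^{1/2}$, the term $\sqrt{\rho_h/p}\le p^{-1/2}=o(\epsilon_p)$. Hence $\max_h T_{h,i}^{(t^*-1,t)}\le C'\epsilon_p$ with probability tending to one, which is the claimed $O_{\mathbb P}(\epsilon_p)$.

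This second step is routine. The only subtlety is that the high-probability event in Theorem~\ref{thm:memory-echo-main} is uniform in $t$ and $i$. It therefore holds for the particular $t$ chosen by the burn-in, even though that $t$ depends on $i$ through $\Delta_i^{\rm mean}$.
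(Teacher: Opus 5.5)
Your proposal is correct and follows essentially the same route as the paper. The burn-in inequality gives $\bar r^{\,t-t^*-1}\le A^{-1}$, Proposition~\ref{prop:memory-geometric-decay} then yields $\max_{h}M_{h,i}^{(t^*-1,t)}\le\epsilon_p$, and the upper bound in Theorem~\ref{thm:memory-echo-main} gives $T_{h,i}^{(t^*-1,t)}=O_{\mathbb P}(\epsilon_p)$; your case split on $A$ is harmless but unnecessary, since $\sqrt{\max\{\rho_Y,\rho_Z\}}\,\Delta_i^{\rm mean}/A\le\epsilon_p$ holds in both cases at once, which is how the paper argues.
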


	The burn-in is logarithmic in the initial mean displacement and inversely
proportional to $\log(1/\bar r)$. When $\bar r$ is small, dyads quickly forget
an isolated anomaly; when $\bar r$ is close to one, its effect on the marginal adjacency distribution may remain visible for several transitions. The bounds show when this
effect is no larger than the embedding error.
	\subsection{Network-level anomaly detection}
	For network-level detection and $h\in\mathcal H$, let the latent displacement be
	\(
	\Delta_h^{(t^*-1,t)}
	=
	p^{-1/2}\|\mathbf L_h^{(t)}-\mathbf L_h^{(t^*-1)}\|_F.
	\)
	Then define the channel-specific and joint network-level test statistics
	\[
    T_{h,G}^{(t^*-1,t)}
		=
		p^{-1/2}\|\mathbf R_h^{(t)}-\mathbf R_h^{(t^*-1)}\|_F, \quad
	T_{G}^{(t^*-1,t)}=\max_hT_{h,G}^{(t^*-1,t)}.
	\]
	The joint statistic tests
\[
H_0^{(t^*-1,t)}:
\Delta_Y^{(t^*-1,t)}
=
\Delta_Z^{(t^*-1,t)}
=
0
\qquad\text{against}\qquad
H_1^{(t^*-1,t)}:
\Delta_Y^{(t^*-1,t)}>0
\ \text{or}\
\Delta_Z^{(t^*-1,t)}>0.
\]
For each \(h\in\mathcal H\), \(T_{h,G}^{(t^*-1,t)}\) tests whether the
network is anomalous in channel \(h\).
We now state the network-level visibility condition. 
\begin{condition}\label{ass:graph-visibility}
Let
$\mathbf W_Y=\mathbf 1\mathbf 1^\top-\mathbf\Pi^{(t^*-1)}$ and
$\mathbf W_Z=\mathbf\Pi^{(t^*-1)}$. For every \(h\in\mathcal H\)  and target pair
\((t^*-1,t)\), there is a constant \(\mu_{h}>0\) such that
$$p^{-1}\left\|
\{\mathbf L_h^{(t)}\mathbf L_h^{(t)\top}
-\mathbf L_h^{(t^*-1)}\mathbf L_h^{(t^*-1)\top}\}\circ\mathbf W_h
\right\|_F\geq\mu_h\Delta_h^{(t^*-1,t)},
\qquad h\in\mathcal H.$$
\end{condition}
The left-hand side of Condition 5 admits the same two-term decomposition as the vertex-level analysis, corresponding to Condition 4 and the interference term. At the network level, however, both terms contribute to network displacement and are therefore combined in a single Frobenius norm. 
The network-level memory terms can be written as
\[
M_{h,G}^{(t^*-1,t)}=
\left[p^{-2}\sum_{i\ne j}
\left\{\frac{a_{h,ij}^{(t)}}{\sqrt{\rho_h}}
(\Pi_{ij}^{(t-1)}-\Pi_{ij}^{(t^*-1)})\right\}^2\right]^{1/2},
\qquad h\in\mathcal H.
\]

	The network-level anomaly detection theorem is then as follows.
	
	\begin{theorem}\label{thm:2g}
		Suppose Conditions~\ref{ass:transition-sparsity}, \ref{ass:initial-finite-rank},~\ref{ass:3}, and
		\ref{ass:graph-visibility} hold.  Fix integers
		\(2\le t^*\le t\le n\), and define the network-level test
		$\phi^{(t^*-1,t)}=\mathbf 1\{T_G^{(t^*-1,t)}>\tau_p\}$.
		Then for all sufficiently large $p$:
		\begin{enumerate}
			\item 
			Under $H_0^{(t^*-1,t)}$, if
			\begin{align}
				\max_{h\in\mathcal H}
				\frac{1}{\sqrt{c_{h,S}}}
				\left(
				M_{h,G}^{(t^*-1,t)}
				+
				\sqrt{\frac{\rho_h}{p}}
				\right)
				<
				(\gamma-2)\epsilon_p
				\label{eq:graph_null_drift}
			\end{align}
			holds, we have
			$\mathbb P_{H_0^{(t^*-1,t)}}(\phi^{(t^*-1,t)}=1)\le C_{c_0}p^{-c_0}$.
			
			\item 
			Under $H_1^{(t^*-1,t)}$, if the separation condition
			\begin{align}
				\max_{h\in\mathcal H}
				\frac{1}{\sqrt{C_{h,S}}}
				\left[
				\sqrt{\rho_h}\mu_h\Delta_h^{(t^*-1,t)}
				-
				M_{h,G}^{(t^*-1,t)}
				-
				\sqrt{\frac{\rho_h}{p}}
				\right]
				>
				(\gamma+2)\epsilon_p
				\label{eq:graph_sep}
			\end{align}
			holds, then
			$\mathbb P_{H_1^{(t^*-1,t)}}(\phi^{(t^*-1,t)}=1)\ge1-C_{c_0}p^{-c_0}$.
		\end{enumerate}
        Applying the result separately to
	\(T_{Y,G}^{(t^*-1,t)}\) and \(T_{Z,G}^{(t^*-1,t)}\), under the corresponding
	channelwise versions of the null-drift condition~\eqref{eq:graph_null_drift}
and the separation condition~\eqref{eq:graph_sep}, identifies whether the anomaly
	affects formation, dissolution, or both with probability at least
	\(1-C_{c_0}p^{-c_0}\).
	\end{theorem}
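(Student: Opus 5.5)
We would prove Theorem~\ref{thm:2g} by a triangle-inequality sandwich between the sample statistic $T_{h,G}^{(t^*-1,t)}$ and its population analogue $p^{-1/2}\|\mathbf R_{S_h}^{(t)}-\mathbf R_{S_h}^{(t^*-1)}\|_F$, paralleling the vertex-level argument of Theorem~\ref{thm:2v0}.

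\textbf{Step 1: reduction to the population embedding.} On the event of Theorem~\ref{thm1}, which has probability at least $1-C_{c_0}p^{-c_0}$, the same $\mathbf W_{hS}$ aligns every block. Because the Frobenius norm is bounded by $\sqrt p$ times the $2\to\infty$ norm,
\[
\Bigl|T_{h,G}^{(t^*-1,t)}-p^{-1/2}\bigl\|(\mathbf R_{S_h}^{(t)}-\mathbf R_{S_h}^{(t^*-1)})\mathbf W_{hS}\bigr\|_F\Bigr|\le 2\epsilon_p .
\]
Orthogonal invariance removes $\mathbf W_{hS}$. This is where the thresholds $\gamma\mp2$ come from.

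\textbf{Step 2: population rows to population columns.} Since $\mathbf S_h=\mathbf U_{S_h}\boldsymbol\Sigma_{S_h}\mathbf R_{S_h}^\top\boldsymbol\Sigma_{S_h}^{-1/2}$, we have $\mathbf S_h^{(t)\top}=\mathbf R_{S_h}^{(t)}\boldsymbol\Sigma_{S_h}^{1/2}\mathbf U_{S_h}^\top$. It follows that $\mathbf R_{S_h}^{(t)}-\mathbf R_{S_h}^{(t^*-1)}=(\mathbf S_h^{(t)}-\mathbf S_h^{(t^*-1)})^\top\mathbf U_{S_h}\boldsymbol\Sigma_{S_h}^{-1/2}$. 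The columns of $\mathbf S_h^{(t)}$ lie in the column space of $\mathbf U_{S_h}$, so Condition~\ref{ass:3} gives
\[
(C_{h,S}p\rho_h)^{-1/2}\|\mathbf S_h^{(t)}-\mathbf S_h^{(t^*-1)}\|_F\le\|\mathbf R_{S_h}^{(t)}-\mathbf R_{S_h}^{(t^*-1)}\|_F\le(c_{h,S}p\rho_h)^{-1/2}\|\mathbf S_h^{(t)}-\mathbf S_h^{(t^*-1)}\|_F .
\]
After multiplying by $p^{-1/2}$, the problem reduces to bounding $D_h=(p\sqrt{\rho_h})^{-1}\|\mathbf S_h^{(t)}-\mathbf S_h^{(t^*-1)}\|_F$.

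\textbf{Step 3: decomposition of $D_h$.} For $h=Y$, write
\[
\mathbf S_Y^{(t)}-\mathbf S_Y^{(t^*-1)}=\rho_Y(\mathbf L^{(t)}\mathbf L^{(t)\top}-\mathbf L^{(t^*-1)}\mathbf L^{(t^*-1)\top})\circ\mathbf W_Y-\rho_Y\mathbf L^{(t)}\mathbf L^{(t)\top}\circ(\mathbf\Pi^{(t-1)}-\mathbf\Pi^{(t^*-1)})+\text{diag}.
\]
Stationarity over $B$ is needed here to identify $\mathbf\Pi^{(t^*-2)}=\mathbf\Pi^{(t^*-1)}$ inside $\mathbf S_h^{(t^*-1)}$. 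The first term, divided by $p\sqrt{\rho_h}$, is $\sqrt{\rho_h}\,p^{-1}\|\cdot\|_F$. Under $H_1$, Condition~\ref{ass:graph-visibility} bounds it below by $\sqrt{\rho_h}\mu_h\Delta_h$; under $H_0$ it vanishes. The off-diagonal part of the second term equals exactly $M_{h,G}^{(t^*-1,t)}$. The diagonal corrections consist of $p$ entries of size $O(\rho_h)$, so they contribute $O(\rho_h\sqrt p/(p\sqrt{\rho_h}))=O(\sqrt{\rho_h/p})$; the constants are absorbed by taking $p$ large.

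\textbf{Step 4: conclusion.} Under $H_0$, for each $h$,
\[
T_{h,G}\le c_{h,S}^{-1/2}\bigl(M_{h,G}+\sqrt{\rho_h/p}\bigr)+2\epsilon_p<\gamma\epsilon_p
\]
by \eqref{eq:graph_null_drift}, so the maximum over $h$ does not reject. Under $H_1$, for the channel attaining \eqref{eq:graph_sep},
\[
T_{h,G}\ge C_{h,S}^{-1/2}\bigl(\sqrt{\rho_h}\mu_h\Delta_h-M_{h,G}-\sqrt{\rho_h/p}\bigr)-2\epsilon_p>\gamma\epsilon_p .
\]
Channel attribution follows by applying the same argument to each $h$ separately; on the common event this costs no extra probability.

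\textbf{Main obstacle.} The hardest step is the exact bookkeeping of the diagonal completion and of the stationarity identification $\mathbf\Pi^{(t^*-2)}=\mathbf\Pi^{(t^*-1)}$, which is what makes the memory term come out exactly as $M_{h,G}$. This identification needs $t^*-1\ge t_0+1$; if that fails, we would instead compare to the baseline stationary $\mathbf\Pi$.
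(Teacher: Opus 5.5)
Your proposal is correct and follows the paper's proof essentially step for step. The steps are the $2\epsilon_p$ transfer on the event of Theorem~\ref{thm1}, the Frobenius sandwich from the population SVD (the paper's Proposition~\ref{prop:graph-S-to-R}), and the split of $\mathbf S_h^{(t)}-\mathbf S_h^{(t^*-1)}$ into a visibility term, a memory term and a completed-diagonal term (Proposition~\ref{prop:g0}); the paper uses the baseline identity $\mathbf\Pi^{(t^*-2)}=\mathbf\Pi^{(t^*-1)}$ only implicitly, whereas you state it.

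Two small fixes are needed. First, the SVD identity should read $\mathbf S_h=\mathbf U_{S_h}\boldsymbol\Sigma_{S_h}^{1/2}\mathbf R_{S_h}^\top$. Second, bound the diagonal contribution with constant exactly one: each entry is at most $\rho_h$, because $\alpha_{ii}^{(t)}\le\rho_h$ and $|\Pi_{ii}^{(t-1)}-\Pi_{ii}^{(t^*-1)}|\le1$. This matters because the strict conditions \eqref{eq:graph_null_drift} and \eqref{eq:graph_sep} leave no slack to absorb a larger constant by taking $p$ large.
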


In the one-step case,
suppose the $m_h^{(t^*-1,t^*)}$ vertices in
$\mathcal I_h^{(t^*-1,t^*)}$ have the same channel-$h$ displacement $\delta$
and the others are unchanged. Then
$\Delta_h^{(t^*-1,t^*)}=\delta(m_h^{(t^*-1,t^*)}/p)^{1/2}$ and the leading
separation requirement is
\begin{equation*}
m_h^{(t^*-1,t^*)}\delta^2
\gtrsim\frac{\log p}{\rho_h\mu_h^2}.
\end{equation*}
Thus, within the network-level test, smaller per-vertex displacements are detectable when they occur across more vertices.

		\section{Anomaly detection under the degree-corrected stochastic block model}
	
This section considers anomaly detection under a degree-corrected stochastic
	block model (DCSBM).  In contrast to the standard stochastic
	block model, vertices in the same
	community need not have identical latent positions, because vertex-specific degree
	heterogeneity is allowed.  	
	We then specialise Definition~\ref{def:ar1-trdpg} to a degree-corrected
	stochastic block model.

\begin{definition}[AR(1) DCSBM]
\label{def:ar1_dcsbm}
An AR(1)-RDPG is an AR(1) degree-corrected stochastic block model if its channel-specific latent positions have the
block form
\begin{equation}
\boldsymbol\ell_{h,i}^{(t)}=\psi_{h,i}^{(t)}
\mathbf\theta_{h,\nu^{(t)}(i)}^{(t)},\qquad h\in\mathcal H,
\label{eq:dcsbm}
\end{equation}
where $\nu^{(t)}:[p]\to[q_t]$ is the community assignment,
$\psi_{h,i}^{(t)}>0$ is a degree parameter and
$\mathbf\theta_{h,a}^{(t)}\in\mathbb R^d$ is a unit-norm community direction.
\end{definition}
	
 Because the degree scales in \eqref{eq:dcsbm} may differ between channels, we first
normalise each channel separately. For \(h\in\mathcal H\), define
\[
\widetilde{\mathbf R}_{h,i}^{(t)}
=
\frac{\mathbf e_i^\top\mathbf R_h^{(t)}}
{\|\mathbf e_i^\top\mathbf R_h^{(t)}\|_2},
\qquad
\widetilde{\mathbf R}_{S,h,i}^{(t)}
=
\frac{\mathbf e_i^\top\mathbf R_{S_h}^{(t)}\mathbf W_{hS}}
{\|\mathbf e_i^\top\mathbf R_{S_h}^{(t)}\|_2},
\]
where \(\mathbf W_{hS}\) is the alignment matrix from Section~2. The joint anomaly statistic in Section~3 takes the maximum over
the \(Y\) and \(Z\) channels because an anomaly in either channel is
sufficient for rejection. Community membership, however, is determined
jointly by the two channels. We therefore concatenate their embeddings
for clustering. Define
the sample and population joint embeddings by
\[
\widetilde{\mathbf J}_i^{(t)}
=
\frac{1}{\sqrt2}
\left[
\widetilde{\mathbf R}_{Y,i}^{(t)}
\mid
\widetilde{\mathbf R}_{Z,i}^{(t)}
\right],
\qquad
\widetilde{\mathbf J}_{S,i}^{(t)}
=
\frac{1}{\sqrt2}
\left[
\widetilde{\mathbf R}_{S,Y,i}^{(t)}
\mid
\widetilde{\mathbf R}_{S,Z,i}^{(t)}
\right].
\]
The factor \(1/\sqrt2\) gives each joint row unit norm when both channel
rows are nonzero. 

The normalised population rows of vertices in the same community need
not be identical because the transition-event populations also depend on the
connection probabilities at time \(t-1\). We therefore allow these
rows to vary around a common community centre and require the centres
of different communities to be sufficiently separated.
	To formalise this, let $C_1^{(t)},\cdots,C_{q_t}^{(t)}$ be the community partition at time $t$, and define the population centre of community $a$ and the population within-community radius by
	\[
	\widetilde{\boldsymbol{c}}_{S,a}^{(t)}
	=
	|C_a^{(t)}|^{-1}
	\sum_{i\in C_a^{(t)}}\widetilde{\mathbf J}_{S,i}^{(t)}
	\in\mathbb R^{K_Y+K_Z},\qquad
	\widetilde{\eta}_{S,t}
	=
	\max_{a\in[q_t]}
	\max_{i\in C_a^{(t)}}
	\left\|
	\widetilde{\mathbf J}_{S,i}^{(t)}
	-
	\widetilde{\boldsymbol c}_{S,a}^{(t)}
	\right\|_2.
	\]
	
	The following proposition bounds this radius using within-community
differences in previous-time connection probabilities.
\begin{proposition}\label{prop:eta}
		In the AR(1)-DCSBM model, let
$m_{h,S,t}=\min_i\|\mathbf e_i^\top\mathbf R_{S_h}^{(t)}\|_2$ and
$m_{\min,S,t}=\min_{h\in\mathcal H}m_{h,S,t}>0$. Suppose Condition 3 holds. Define
		\[
		\omega_{\Pi,S,u}^{(t)}
		=
		\max_a
		\max_{i,j\in C_a^{(t)}}
		\left[
		\frac{1}{p}
		\sum_{k\notin\{i,j\}}
		\left(
		\Pi_{ki}^{(u)}-\Pi_{kj}^{(u)}
		\right)^2
		\right]^{1/2}
		\]
		for any time \(u\). Then there exists a constant \(C>0\) such that
		\[
		\widetilde{\eta}_{S,t}
		\le
		\frac{C}{m_{\min,S,t}}
		\left(
		\omega_{\Pi,S,t-1}^{(t)}
		+
		p^{-1/2}
		\right).
		\]
		
	\end{proposition}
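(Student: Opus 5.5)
The plan is to express each population right-embedding row as a linear image of the corresponding column of \(\mathbf S_h^{(t)}\). Under the DCSBM form \eqref{eq:dcsbm}, the dependence of that column on the degree parameter \(\psi_{h,i}^{(t)}\) is a pure scalar factor. Row normalisation removes this factor, so the only thing that can separate two same-community vertices is the difference in their previous-time connection probabilities.

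\textbf{Step 1 (representing rows by columns).} From \(\mathbf S_h=\mathbf U_{S_h}\boldsymbol\Sigma_{S_h}\mathbf V_{S_h}^\top\) we get
\[
\mathbf R_{S_h}=\mathbf V_{S_h}\boldsymbol\Sigma_{S_h}^{1/2}=\mathbf S_h^\top\mathbf U_{S_h}\boldsymbol\Sigma_{S_h}^{-1/2}.
\]
Hence the \(i\)th row of block \(t\) is
\[
\mathbf e_i^\top\mathbf R_{S_h}^{(t)}=\sum_{k}(\mathbf S_h^{(t)})_{ki}\,\mathbf e_k^\top\mathbf U_{S_h}\boldsymbol\Sigma_{S_h}^{-1/2}.
\]
For \(h=Y\) and \(\nu^{(t)}(i)=a\), the entry is
\[
(\mathbf S_Y^{(t)})_{ki}=\rho_Y\psi_{Y,k}^{(t)}\psi_{Y,i}^{(t)}\,\theta_{Y,\nu(k)}^{(t)\top}\theta_{Y,a}^{(t)}\bigl(1-\Pi_{ki}^{(t-1)}\bigr).
\]
The case \(h=Z\) is identical with \(1-\Pi\) replaced by \(\Pi\). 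Define
\[
\mathbf x_{h,i}=\mathbf e_i^\top\mathbf R_{S_h}^{(t)}/\psi_{h,i}^{(t)}.
\]
Since \(\psi_{h,i}^{(t)}>0\) and \(\mathbf W_{hS}\) is orthogonal, the normalised row \(\widetilde{\mathbf R}_{S,h,i}^{(t)}\) equals \(\mathbf x_{h,i}\mathbf W_{hS}/\|\mathbf x_{h,i}\|_2\).

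\textbf{Step 2 (bounding \(\mathbf x_{h,i}-\mathbf x_{h,j}\) for \(i,j\in C_a^{(t)}\)).} For \(k\notin\{i,j\}\), the coefficient of \(\mathbf e_k^\top\mathbf U_{S_h}\boldsymbol\Sigma_{S_h}^{-1/2}\) in \(\mathbf x_{h,i}-\mathbf x_{h,j}\) is
\[
\pm\rho_h\psi_{h,k}^{(t)}\theta_{h,\nu(k)}^{(t)\top}\theta_{h,a}^{(t)}\bigl(\Pi_{kj}^{(t-1)}-\Pi_{ki}^{(t-1)}\bigr).
\]
Here \(\psi_{h,k}^{(t)}=\|\boldsymbol\ell_{h,k}^{(t)}\|_2\le\sqrt d\,\ell\) and \(|\theta^\top\theta|\le1\). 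Using \(\|\mathbf U_{S_h}\|\le1\) and Condition~\ref{ass:3}, which gives \(\|\boldsymbol\Sigma_{S_h}^{-1/2}\|\le(c_{h,S}p\rho_h)^{-1/2}\), this part has norm at most
\[
(c_{h,S}p\rho_h)^{-1/2}\rho_h\sqrt d\,\ell\Bigl\{\sum_{k\notin\{i,j\}}(\Pi_{ki}^{(t-1)}-\Pi_{kj}^{(t-1)})^2\Bigr\}^{1/2}\le C\sqrt{\rho_h}\,\omega^{(t)}_{\Pi,S,t-1}.
\]
The at most four terms with \(k\in\{i,j\}\) involve the completed diagonal entries and one off-diagonal entry each. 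These entries are bounded by \(C\rho_h\), so together they contribute at most \(C\rho_h(p\rho_h)^{-1/2}\le C p^{-1/2}\). Since \(\rho_h\le1\), we get
\[
\|\mathbf x_{h,i}-\mathbf x_{h,j}\|_2\le C\bigl(\omega^{(t)}_{\Pi,S,t-1}+p^{-1/2}\bigr).
\]

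\textbf{Step 3 (normalisation, concatenation, and passage to the centre).} We use the elementary inequality
\[
\bigl\|\mathbf x/\|\mathbf x\|_2-\mathbf y/\|\mathbf y\|_2\bigr\|_2\le 2\|\mathbf x-\mathbf y\|_2/\|\mathbf x\|_2 .
\]
We also have \(\|\mathbf x_{h,i}\|_2\ge m_{h,S,t}/\psi_{h,i}^{(t)}\ge m_{\min,S,t}/(\sqrt d\,\ell)\). Together these bound each channel's normalised difference by
\[
C\bigl(\omega^{(t)}_{\Pi,S,t-1}+p^{-1/2}\bigr)/m_{\min,S,t}.
\]
The joint row difference is \(2^{-1/2}\) times the Euclidean combination of the two channel differences, so it is bounded by the maximum over channels.

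Finally, \(\widetilde{\mathbf J}_{S,i}^{(t)}-\widetilde{\boldsymbol c}_{S,a}^{(t)}\) is an average of the differences \(\widetilde{\mathbf J}_{S,i}^{(t)}-\widetilde{\mathbf J}_{S,j}^{(t)}\) over \(j\in C_a^{(t)}\). By convexity its norm is at most the maximal pairwise distance. Maximising over \(i\) and \(a\) gives the claim.

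The main obstacle is Step 3's lower bound on \(\|\mathbf x_{h,i}\|_2\). Dividing by \(\psi_{h,i}^{(t)}\) could in principle inflate the bound, so the argument relies on \(\psi_{h,i}^{(t)}\) being bounded above by \(\sqrt d\,\ell\), which is where the boundedness of the latent sets in Definition~\ref{def:ar1-trdpg} enters. A secondary point is bookkeeping the diagonal completion terms so that they are only \(O(p^{-1/2})\) and not larger.
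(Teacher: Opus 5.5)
Your proposal is correct and follows essentially the same route as the paper. The paper's Lemma~S.3.1 likewise expresses the $\psi$-rescaled rows through $\mathbf S_h^{(t)}(\mathbf e_i/\psi_{h,i}^{(t)}-\mathbf e_j/\psi_{h,j}^{(t)})$ via $\mathbf U_{S_h}\boldsymbol\Sigma_{S_h}^{-1/2}$ and Condition~\ref{ass:3}, bounds the off-diagonal coordinates by $\omega_{\Pi,S,t-1}^{(t)}$ and the two exceptional coordinates by $O(p^{-1/2})$, then normalises, concatenates, and passes to the centre through the maximal pairwise distance. The only differences are cosmetic: you bound the degree parameter by $\psi\le\sqrt d\,\ell$, which is slightly more careful than the paper's $\psi\le\ell$, and the difference is absorbed into $C$.
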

	
		Proposition~\ref{prop:eta} shows that the within-community radius is controlled by differences in the previous-time connection probabilities; the \(p^{-1/2}\) term accounts for the two completed diagonal coordinates. The normalisation removes the vertex-specific degree scales before the two embeddings are concatenated.

	We consider two related tasks. The first is to recover the community
	partition at every time \(t\in[n]\).  The second  is to detect community anomalies relative to a baseline partition. We first detect vertex community reassignments relative to the baseline partition among the vertices flagged by the
	vertex-level procedure in Section~3.1. Then we consider test statistics for community-level anomaly detection, including 
	centre shifts, splits, and merges. The corresponding statistics
	describe departures in the embedding geometry but do not identify a unique mechanism that generated them.
	
	\subsection{Community detection}

 Assume the number of communities \(q_t\) is known. The community partition
	at time $t$ is estimated by applying a $q_t$-means algorithm to the rows
	of $\widetilde{\mathbf{J}}^{(t)}$.  Specifically, let
	\[
	(\widehat\nu^{(t)},\widehat{\boldsymbol m}_1^{(t)},\ldots,
		\widehat{\boldsymbol m}_{q_t}^{(t)})
	\in
	\arg\min_{\nu:[p]\to[q_t],\,
		\boldsymbol m_1,\ldots,\boldsymbol m_{q_t}\in\mathbb R^{K_Y+K_Z}}
	\sum_{i=1}^p
	\left\|
	\mathbf e_i^\top\widetilde{\mathbf J}^{(t)}
	-
		\boldsymbol m_{\nu(i)}
	\right\|_2^2 .
	\]
	We then define the minimum distance
	between distinct population community centres by
	\(
	\widetilde{\Delta}_{S,t}
	=
	\min_{a\ne b}
	\left\|
	\widetilde{\boldsymbol c}_{S,a}^{(t)}
	-
	\widetilde{\boldsymbol c}_{S,b}^{(t)}
	\right\|_2,
	\)
	and define the minimum community size by
	\(
	s_t^{\min}
	=
	\min_{a\in[q_t]}|C_a^{(t)}|.
	\)
	The following theorem gives a sufficient condition for exact community recovery.

\begin{theorem}\label{thm:dcsbm_community_exact_recovery}
			Suppose the AR(1)-DCSBM satisfies
			Conditions~\ref{ass:transition-sparsity},
			\ref{ass:initial-finite-rank}, and~\ref{ass:3}.  For every
			\(t\in[n]\), 	suppose that \(q_t\ge2\) and
		\(\min_{h\in\mathcal H}m_{h,S,t}>\epsilon_p\).
		For each \(t\in[n]\), define
		\(
		\widetilde{\epsilon}_t
		=
		\sqrt{2}\,\epsilon_p
		\left(
		m_{Y,S,t}^{-2}+m_{Z,S,t}^{-2}
		\right)^{1/2}.
		\)
			If, for every \(t\in[n]\),
			\begin{equation}
				\label{eq:dcsbm_community_sep_condition}
				\widetilde{\Delta}_{S,t}
				>
				6\bigl(
				\widetilde{\epsilon}_t+
				\widetilde{\eta}_{S,t}
				\bigr)
				\sqrt{\frac{p}{s_t^{\min}}},
			\end{equation}
			then $q_t$-means applied to the rows of $\widetilde{\mathbf J}^{(t)}$ recovers all
community partitions, up to label permutations, with probability at least
$1-C_{c_0}np^{-c_0}$.
	\end{theorem}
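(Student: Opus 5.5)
The proof works on the event of Theorem~\ref{thm1}, which holds with probability at least \(1-C_{c_0}p^{-c_0}\) for both channels simultaneously. On that event we use it to control the rows of the normalised joint embedding. The population and estimated rows are then each close to a community centre. Exact recovery follows from the standard argument that \(q_t\)-means cannot merge or split well-separated clusters. Since Theorem~\ref{thm1} is stated uniformly over \(t\), one application covers all times; the factor \(n\) in the stated probability only absorbs a union bound if the \(k\)-means step is handled time by time.

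\emph{Step 1: rowwise normalisation error.} Fix \(h\) and \(i\). Write \(x=\mathbf e_i^\top\mathbf R_h^{(t)}\) and \(y=\mathbf e_i^\top\mathbf R_{S_h}^{(t)}\mathbf W_{hS}\), so that \(\|y\|_2=\|\mathbf e_i^\top\mathbf R_{S_h}^{(t)}\|_2\ge m_{h,S,t}\) and \(\|x-y\|_2\le\epsilon_p\). Because \(m_{h,S,t}>\epsilon_p\), we have \(x\ne0\). The elementary inequality \(\|x/\|x\|-y/\|y\|\|_2\le 2\|x-y\|_2/\|y\|_2\) gives \(\|\widetilde{\mathbf R}_{h,i}^{(t)}-\widetilde{\mathbf R}_{S,h,i}^{(t)}\|_2\le 2\epsilon_p/m_{h,S,t}\). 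One could sharpen the constant to \(\sqrt2\) by using \(\|x/\|x\|-y/\|y\|\|\le\|x-y\|/\|y\|+|\|y\|-\|x\||/\|y\|\) together with the geometry of the unit sphere.

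Concatenating the two channels with the factor \(1/\sqrt2\) gives
\[
\|\widetilde{\mathbf J}_i^{(t)}-\widetilde{\mathbf J}_{S,i}^{(t)}\|_2\le \sqrt2\,\epsilon_p\bigl(m_{Y,S,t}^{-2}+m_{Z,S,t}^{-2}\bigr)^{1/2}=\widetilde\epsilon_t .
\]
Combining this with the definition of \(\widetilde\eta_{S,t}\), every row satisfies \(\|\widetilde{\mathbf J}_i^{(t)}-\widetilde{\boldsymbol c}_{S,\nu^{(t)}(i)}^{(t)}\|_2\le\widetilde\epsilon_t+\widetilde\eta_{S,t}=:r_t\).

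\emph{Step 2: deterministic \(k\)-means argument.} Let \(\widehat{\boldsymbol m}_{\widehat\nu(i)}\) denote the fitted centroid of row \(i\). Comparing the optimal objective with the objective at the population centres gives
\[
\sum_i\|\widetilde{\mathbf J}_i-\widehat{\boldsymbol m}_{\widehat\nu(i)}\|^2\le p\,r_t^2 .
\]
Hence \(\sum_i\|\widehat{\boldsymbol m}_{\widehat\nu(i)}-\widetilde{\boldsymbol c}_{S,\nu(i)}\|^2\le 4p\,r_t^2\).

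Call a community \(a\) \emph{bad} if no fitted centroid lies within \(\widetilde\Delta_{S,t}/3\) of \(\widetilde{\boldsymbol c}_{S,a}\). Every vertex of a bad community then contributes at least \((\widetilde\Delta_{S,t}/3)^2\) to the sum above, so
\[
s_t^{\min}\,\widetilde\Delta_{S,t}^2/9\le 4p\,r_t^2 ,
\]
which contradicts \eqref{eq:dcsbm_community_sep_condition}, since that condition gives \(\widetilde\Delta_{S,t}>6r_t\sqrt{p/s_t^{\min}}\). So every community has a fitted centroid within \(\widetilde\Delta_{S,t}/3\) of its centre. These \(q_t\) balls are disjoint, and there are exactly \(q_t\) centroids, so the centroids are matched bijectively to the communities.

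It remains to show each vertex is assigned to its own community's centroid. Vertex \(i\) lies within \(r_t\) of its own centre, hence within \(r_t+\widetilde\Delta_{S,t}/3\) of the matched centroid. It is at distance at least \(\widetilde\Delta_{S,t}-\widetilde\Delta_{S,t}/3-r_t\) from any other centroid. Because \(r_t<\widetilde\Delta_{S,t}/6\), the nearest-centroid rule, which any optimal \(k\)-means solution must satisfy, assigns \(i\) correctly.

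\emph{Main obstacle.} The delicate point is tuning the constants so that the ball radius \(\widetilde\Delta/3\) and the objective bound reproduce exactly the factor \(6\sqrt{p/s_t^{\min}}\). If the bad-community bound comes out slightly weaker, I would instead use the radius \(\widetilde\Delta/2\) in the counting step and a separate argument to get disjointness. A secondary issue is keeping the constant in the normalisation inequality of Step 1 equal to 1, so that it matches the definition of \(\widetilde\epsilon_t\); otherwise a constant has to be absorbed into the factor 6.
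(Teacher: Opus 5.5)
Your proposal is correct and follows essentially the same route as the paper's proof. Both proofs use the rowwise normalisation bound giving \(\widetilde\epsilon_t\), compare the \(q_t\)-means objective with the population centres, then apply a ball-counting and pigeonhole step followed by the nearest-centre argument. The only differences are your ball radius \(\widetilde\Delta_{S,t}/3\), where the paper uses \(2(\widetilde\epsilon_t+\widetilde\eta_{S,t})\sqrt{p/s_t^{\min}}\) and counts only nonempty fitted centres, and your use of the uniform-in-\(t\) event of Theorem~\ref{thm1} in place of a union bound over \(t\).

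Your closing worry about the normalisation constant is unfounded. The factor \(2\) combined with the \(1/\sqrt2\) concatenation already gives exactly \(\widetilde\epsilon_t\), as your own Step 1 computation shows.
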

	In the balanced stationary SBM, we have
\(s_t^{\min}=\Theta(p/q_t)\) and
\(\widetilde\eta_{S,t}=0\). If
\(m_{h,S,t}\gtrsim\sqrt{\rho_h}\) for \(h\in\mathcal H\), a sufficient
separation condition is
\[
\widetilde\Delta_{S,t}
\gtrsim
\left[
\frac{q_t\log p}{p}
\left(
\frac{1}{\rho_Y}+\frac{1}{\rho_Z}
\right)
\right]^{1/2}.
\]
The right-hand side tends to zero as $p\rightarrow\infty$ under
Condition~\ref{ass:transition-sparsity}.

		Suppose that, over an interval \(\mathcal T\), the process is
	stationary and the community assignments and latent positions remain
	unchanged. Let \(q_{\mathcal T}\) denote the common number of
	communities over this interval. Then define
	\[
	\overline{\widetilde{\mathbf J}}_{\mathcal T}
	=
	\frac{1}{|\mathcal T|}
	\sum_{t\in\mathcal T}\widetilde{\mathbf J}^{(t)},
	\qquad
	\overline{\widetilde{\mathbf J}}_{S,\mathcal T}
	=
	\frac{1}{|\mathcal T|}
	\sum_{t\in\mathcal T}\widetilde{\mathbf J}_S^{(t)}.
	\]
	We apply \(q_{\mathcal T}\)-means to the rows of
	\(\overline{\widetilde{\mathbf J}}_{\mathcal T}\). Define
	\(\widetilde\eta_{S,\mathcal T}\),
	\(\widetilde\Delta_{S,\mathcal T}\), and \(s_{\mathcal T}^{\min}\)
	from the interval-averaged population embedding as above.
	
	\begin{corollary}
		\label{cor:pooled_community_exact_recovery}
		Suppose Conditions~\ref{ass:transition-sparsity},
		\ref{ass:initial-finite-rank}, and~\ref{ass:3} hold over
		\(\mathcal T\). Suppose \(q_{\mathcal T}\ge2\) and, for every
		\(t\in\mathcal T\),
		\(\min_{h\in\mathcal H}m_{h,S,t}>\epsilon_p\). Let
		\(
		\widetilde\epsilon_{\mathcal T}
		=
		|\mathcal T|^{-1}
		\sum_{t\in\mathcal T}\widetilde\epsilon_t
		\).
		If
		\[
		\widetilde\Delta_{S,\mathcal T}
		>
		6\left(
		\widetilde\epsilon_{\mathcal T}
		+
		\widetilde\eta_{S,\mathcal T}
		\right)
		\sqrt{\frac{p}{s_{\mathcal T}^{\min}}},
		\]
		then for every fixed \(c_0>0\) and all sufficiently large \(p\), \(q_{\mathcal T}\)-means recovers the common
		community partition exactly, up to a permutation of labels, with
		probability at least \(1-C_{c_0}p^{-c_0}\).
	\end{corollary}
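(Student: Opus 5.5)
The plan is to reduce the corollary to the deterministic argument behind Theorem~\ref{thm:dcsbm_community_exact_recovery}, with the single-time embedding replaced by the interval average. We work on the event from Theorem~\ref{thm1} on which, simultaneously for all $t\in[n]$ and $h\in\mathcal H$,
\[
\|\mathbf R_h^{(t)}-\mathbf R_{S_h}^{(t)}\mathbf W_{hS}\|_{2\to\infty}\le\epsilon_p .
\]
This event has probability at least $1-C_{c_0}p^{-c_0}$. Because $n$ is fixed, $|\mathcal T|\le n$, so any factor of $n$ can be absorbed into $C_{c_0}$. It is also important that the alignment $\mathbf W_{hS}$ does not depend on $t$. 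The same orthogonal matrix is used for every block of the unfolded embedding, so the population rows at different times are expressed in a common frame and can be averaged meaningfully.

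\textbf{Step 1: per-time normalised row error.} On this event, for each $t\in\mathcal T$, each $i$ and each $h$, we use the standard bound for normalised vectors, $\|x/\|x\|-y/\|y\|\|\le 2\|x-y\|/\|y\|$. (The proof of Theorem~\ref{thm:dcsbm_community_exact_recovery} presumably uses the same inequality.) With $\|\mathbf e_i^\top\mathbf R_{S_h}^{(t)}\|_2\ge m_{h,S,t}>\epsilon_p$, this gives
\[
\|\widetilde{\mathbf R}_{h,i}^{(t)}-\widetilde{\mathbf R}_{S,h,i}^{(t)}\|_2\le 2\epsilon_p/m_{h,S,t}.
\]
Squaring, summing over the two channels and dividing by $2$ yields
\[
\|\widetilde{\mathbf J}_i^{(t)}-\widetilde{\mathbf J}_{S,i}^{(t)}\|_2\le\widetilde\epsilon_t .
\]
The constant convention has to match whatever is used in the theorem's proof; I would reuse that lemma verbatim.

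\textbf{Step 2: averaging.} By the triangle inequality, for every $i$,
\[
\|\mathbf e_i^\top(\overline{\widetilde{\mathbf J}}_{\mathcal T}-\overline{\widetilde{\mathbf J}}_{S,\mathcal T})\|_2
\le|\mathcal T|^{-1}\sum_{t\in\mathcal T}\widetilde\epsilon_t=\widetilde\epsilon_{\mathcal T}.
\]
So the averaged sample rows are uniformly within $\widetilde\epsilon_{\mathcal T}$ of the averaged population rows. Over $\mathcal T$ the partition is common by assumption, so the averaged population rows have within-community radius $\widetilde\eta_{S,\mathcal T}$ and centre separation $\widetilde\Delta_{S,\mathcal T}$, exactly as defined.

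\textbf{Step 3: exact recovery by $k$-means.} We now apply the deterministic $q$-means argument that underlies Theorem~\ref{thm:dcsbm_community_exact_recovery}. The input is a set of points, each within $\epsilon+\eta$ of its community centre $\widetilde{\boldsymbol c}_a$, with $\min_{a\ne b}\|\widetilde{\boldsymbol c}_a-\widetilde{\boldsymbol c}_b\|>6(\epsilon+\eta)\sqrt{p/s^{\min}}$.

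\emph{Cost bound.} The $k$-means objective at its optimum is at most $p(\epsilon+\eta)^2$, since the true centres are feasible.

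\emph{Each true centre is matched.} If some true centre $\widetilde{\boldsymbol c}_a$ had no estimated centre within $\Delta/2$, then all $s^{\min}$ points of community $a$ would each incur cost at least $(\Delta/2-(\epsilon+\eta))^2$. This contradicts the cost bound under the separation condition. Hence the estimated centres biject to the true ones, each within distance $\Delta/3$ or so.

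\emph{Every point is assigned correctly.} Each point is within $\epsilon+\eta<\Delta/6$ of its own centre, so its nearest estimated centre is the matched one.

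The main obstacle is this counting step. One must make sure that the factor $6$ and the $\sqrt{p/s^{\min}}$ scaling actually close the contradiction. If the theorem's proof is available as a lemma stated for generic point clouds, I would simply invoke it with $\epsilon=\widetilde\epsilon_{\mathcal T}$ and $\eta=\widetilde\eta_{S,\mathcal T}$. This gives exact recovery up to label permutation on the high-probability event, which proves the corollary.
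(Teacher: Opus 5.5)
Your proposal is correct and matches the paper's proof. The paper likewise works on the event of Theorem~\ref{thm1}, obtains the per-time bound $\widetilde\epsilon_t$ from Lemma~\ref{lem:normalized_joint_rowwise}, averages by the triangle inequality to get $\widetilde\epsilon_{\mathcal T}$, and then simply reruns the deterministic $q$-means argument of Theorem~\ref{thm:dcsbm_community_exact_recovery} on the averaged rows.

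If you want your self-contained version of that last step to stand, fix one point: your cost argument as written only places each fitted centre within $\Delta/2$ of its true centre, and that is not enough for the assignment step. Rerun the same argument with radius $\Delta/3$. It needs $\Delta>3(\epsilon+\eta)(1+\sqrt{p/s^{\min}})$, which the separation condition gives because $\sqrt{p/s^{\min}}\ge1$. Your $\epsilon+\eta<\Delta/6$ assignment step then closes.
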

\subsection{Anomaly detection for communities}
	
	Let a stationary baseline $B$ end at transition $t^*-1$. Throughout \(B\), the community memberships and latent positions remain
	unchanged. Set
$\mathcal T=B$ in Corollary~\ref{cor:pooled_community_exact_recovery}, and
denote the population and estimated baseline assignments by $\nu_{S,B}$ and
$\widehat\nu_B$, with community sets $C_{S,B,a}$ and $\widehat C_{B,a}$.
After relabelling, we hold this partition fixed and define, for
$u\in\{t^*-1,t\}$,
\begin{equation*}
\begin{aligned}
\widetilde{\mathbf c}_{B,a}^{(u)}
=|\widehat C_{B,a}|^{-1}
\sum_{i\in\widehat C_{B,a}}\widetilde{\mathbf J}_i^{(u)}, \qquad
\widetilde{\mathbf c}_{S,B,a}^{(u)}
=|C_{S,B,a}|^{-1}
\sum_{i\in C_{S,B,a}}\widetilde{\mathbf J}_{S,i}^{(u)}.
\end{aligned}
\end{equation*}
Set $\overline\epsilon_{t^*-1,t}=\widetilde\epsilon_{t^*-1}+
\widetilde\epsilon_t$.
Community assignments are estimated once from the baseline and then
held fixed. This allows the same communities to be compared over time without
label permutations.

First, we determine which vertices identified as anomalous by the
vertex-level procedure in Section~3.1 have also been reassigned to a
different baseline community. An anomalous vertex need not be
reassigned: its target row may move while its original baseline centre
remains the nearest centre. A reassignment occurs when another baseline
centre becomes nearest. We therefore estimate its target community
label by
\[
\widehat\tau_B^{(t^*-1,t)}(i)
=
\arg\min_{b\in[q_B]}
\left\|
\widetilde{\mathbf J}_i^{(t)}
-
\widetilde{\mathbf c}_{B,b}^{(t^*-1)}
\right\|_2.
\]
Define the population reassignment set and its estimator by
\begin{equation*}
\mathcal J_{S,B}^{(t^*-1,t)}
=
\left\{
i\in\mathcal A^{(t^*-1,t)}:
\nu^{(t)}(i)\ne\nu_{S,B}(i)
\right\},\quad
\widehat{\mathcal J}_B^{(t^*-1,t)}
=
\left\{
i\in\widehat{\mathcal A}^{(t^*-1,t)}:
\widehat\tau_B^{(t^*-1,t)}(i)\ne\widehat\nu_B(i)
\right\}.
\end{equation*}

\begin{corollary}
\label{thm:dcsbm_jump_set_exact_recovery}
Suppose the conditions of
Corollary~\ref{cor:pooled_community_exact_recovery} hold with
$\mathcal T=B$, and the conditions of Corollary~\ref{cor:perfect-v} hold. Suppose also
that $q_t=q_B$, that $\nu^{(t)}$ and $\nu_{S,B}$ use a common labelling, and
that $m_{h,S,t}>\epsilon_p$ for $h\in\mathcal H$. If, for some
$\widetilde\zeta_{S,B,t}\geq0$,
\begin{equation}
\max_{a\in[q_B]}
\|\widetilde{\mathbf c}_{S,a}^{(t)}-
\widetilde{\mathbf c}_{S,B,a}^{(t^*-1)}\|_2
\leq\widetilde\zeta_{S,B,t},
\label{eq:dcsbm_centre_drift}
\end{equation}
and
\begin{equation}
\min_{a\ne b}
\|\widetilde{\mathbf c}_{S,B,a}^{(t^*-1)}-
\widetilde{\mathbf c}_{S,B,b}^{(t^*-1)}\|_2
>2\{\widetilde\eta_{S,t}+\widetilde\zeta_{S,B,t}+
\overline\epsilon_{t^*-1,t}\},
\label{eq:dcsbm_jump_set_margin}
\end{equation}
then
\[
{\mathbb P}\{\widehat{\mathcal J}_B^{(t^*-1,t)}
=\mathcal J_{S,B}^{(t^*-1,t)}\}
\geq1-C_{c_0}(|B|+1)p^{-c_0}.
\]
\end{corollary}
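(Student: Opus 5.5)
We work on the intersection of three high-probability events and then argue deterministically. The first is the event $\mathcal E_1$ of Corollary~\ref{cor:pooled_community_exact_recovery} with $\mathcal T=B$: the baseline partition is recovered exactly, so after relabelling $\widehat C_{B,a}=C_{S,B,a}$ and $\widehat\nu_B=\nu_{S,B}$. The second is the event $\mathcal E_2$ of Corollary~\ref{cor:perfect-v}, on which $\widehat{\mathcal A}^{(t^*-1,t)}=\mathcal A^{(t^*-1,t)}$. The third is the event $\mathcal E_3$ of Theorem~\ref{thm1}, applied at the $|B|$ baseline times and at $t$, on which the rowwise bounds hold for each channel. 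A union bound gives probability at least $1-C_{c_0}(|B|+1)p^{-c_0}$ after adjusting constants. On $\mathcal E_1\cap\mathcal E_2$, the sets $\widehat{\mathcal J}_B^{(t^*-1,t)}$ and $\mathcal J_{S,B}^{(t^*-1,t)}$ range over the same vertices $i\in\mathcal A^{(t^*-1,t)}$. It therefore suffices to show that, for each such $i$, $\widehat\tau_B^{(t^*-1,t)}(i)=\nu^{(t)}(i)$.

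\textbf{Step 1: rowwise control of normalised rows.} For any unit-normalised rows we use $\|x/\|x\|-y/\|y\|\|\le 2\|x-y\|/\|y\|$. Combined with $m_{h,S,u}>\epsilon_p$ and Theorem~\ref{thm1}, this gives $\|\widetilde{\mathbf J}_i^{(u)}-\widetilde{\mathbf J}_{S,i}^{(u)}\|_2\le\widetilde\epsilon_u$ uniformly in $i$ for $u\in\{t^*-1,t\}$. This is the same per-row bound that underlies Theorem~\ref{thm:dcsbm_community_exact_recovery}, and we take its constant bookkeeping as given there. On $\mathcal E_1$ the sample and population baseline communities coincide, so averaging gives $\|\widetilde{\mathbf c}_{B,b}^{(t^*-1)}-\widetilde{\mathbf c}_{S,B,b}^{(t^*-1)}\|_2\le\widetilde\epsilon_{t^*-1}$ for every $b$.

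\textbf{Step 2: nearest-centre argument.} Fix $i$ and let $a=\nu^{(t)}(i)$. The triangle inequality gives
\[
\|\widetilde{\mathbf J}_i^{(t)}-\widetilde{\mathbf c}_{S,B,a}^{(t^*-1)}\|_2
\le \widetilde\epsilon_t+\widetilde\eta_{S,t}+\widetilde\zeta_{S,B,t}=:r.
\]
Here the three terms come from the sample-to-population error, the distance from $\widetilde{\mathbf J}_{S,i}^{(t)}$ to $\widetilde{\mathbf c}_{S,a}^{(t)}$ (controlled by the radius), and the centre drift \eqref{eq:dcsbm_centre_drift}. Hence $\|\widetilde{\mathbf J}_i^{(t)}-\widetilde{\mathbf c}_{B,a}^{(t^*-1)}\|_2\le r+\widetilde\epsilon_{t^*-1}$.

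For $b\ne a$, the reverse triangle inequality together with the margin \eqref{eq:dcsbm_jump_set_margin} gives
\[
\|\widetilde{\mathbf J}_i^{(t)}-\widetilde{\mathbf c}_{B,b}^{(t^*-1)}\|_2>2(r+\widetilde\epsilon_{t^*-1})-(r+\widetilde\epsilon_{t^*-1})= r+\widetilde\epsilon_{t^*-1},
\]
using $r+\widetilde\epsilon_{t^*-1}=\widetilde\eta_{S,t}+\widetilde\zeta_{S,B,t}+\overline\epsilon_{t^*-1,t}$. Hence the argmin is unique and equals $a$.

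It follows that $\widehat\tau_B(i)\ne\widehat\nu_B(i)$ if and only if $\nu^{(t)}(i)\ne\nu_{S,B}(i)$, where we use the common labelling assumption and $q_t=q_B$. This yields the exact set equality.

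\textbf{Main obstacle.} The delicate point is Step 2's margin accounting. The factor $2$ in \eqref{eq:dcsbm_jump_set_margin} leaves no slack, so the sample-centre error must be charged exactly once, which is what $\overline\epsilon_{t^*-1,t}$ does. A second difficulty is making sure Step 1 is valid at time $t$ without a separate community recovery at $t$: only the rowwise bound and the lower bound $m_{h,S,t}>\epsilon_p$ are needed there, which the hypotheses supply.
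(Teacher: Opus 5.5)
Your proposal is correct and follows essentially the same argument as the paper. Exact baseline recovery together with the rowwise normalised-embedding bound gives sample--population errors of $\widetilde\epsilon_{t^*-1}$ for the baseline centres and $\widetilde\epsilon_t$ for the target rows, and a nearest-centre margin argument then identifies $\widehat\tau_B^{(t^*-1,t)}(i)=\nu^{(t)}(i)$. The paper organises this step slightly differently: it first shows that the population nearest-centre label equals $\nu^{(t)}(i)$ with slack exceeding $2\overline\epsilon_{t^*-1,t}$, and then transfers this through a uniform perturbation bound on the distances. That consumes exactly the same budget as your direct chain of triangle inequalities.
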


Condition~\eqref{eq:dcsbm_centre_drift} allows each population
community centre to move by at most
\(\widetilde\zeta_{S,B,t}\) from its corresponding baseline centre;
exact centre stability is the special case
\(\widetilde\zeta_{S,B,t}=0\). Condition~\eqref{eq:dcsbm_jump_set_margin} ensures that the nearest
baseline centre identifies each target community correctly.
	
We now define test statistics for the three types of community-level anomaly. The sample and population \emph{centre-shift} statistics of
community $a$ are measured by
\[
\widetilde T_{B,a}^{\mathrm{cent},(t^*-1,t)}
=
\left\|
\widetilde{\boldsymbol c}_{B,a}^{(t)}
-
\widetilde{\boldsymbol c}_{B,a}^{(t^\ast-1)}
\right\|_2,
\qquad
\widetilde T_{S,B,a}^{\mathrm{cent},(t^*-1,t)}
=
\left\|
\widetilde{\boldsymbol c}_{S,B,a}^{(t)}
-
\widetilde{\boldsymbol c}_{S,B,a}^{(t^\ast-1)}
\right\|_2.
\]
These statistics measure how far the centre of community \(a\) moves
	from its position at \(t^\ast-1\) in the channelwise-normalised joint
	embedding.

To detect a \emph{split} of community $a$, set
\[
\widetilde{\boldsymbol d}_{B,i}^{(t^*-1,t)}
:=
\widetilde{\mathbf J}_i^{(t)}
-
\widetilde{\mathbf J}_i^{(t^\ast-1)},
\qquad
\widetilde{\boldsymbol d}_{S,B,i}^{(t^*-1,t)}
:=
\widetilde{\mathbf J}_{S,i}^{(t)}
-
\widetilde{\mathbf J}_{S,i}^{(t^\ast-1)}.
\]

Define the sample and population collections of candidate subsets by
\[
\widehat{\mathcal P}_{B,a}
:=
\left\{
\widehat A:
\widehat A\neq\varnothing,\ 
\widehat A\subsetneq\widehat C_{B,a}
\right\},
\qquad
\mathcal P_{S,B,a}
:=
\left\{
A:
A\neq\varnothing,\ 
A\subsetneq C_{S,B,a}
\right\}.
\]
Define their mean displacements by
\[
\widetilde{\boldsymbol d}_{B,a,\widehat A}^{(t^*-1,t)}
:=
\frac{1}{|\widehat A|}
\sum_{i\in\widehat A}\widetilde{\boldsymbol d}_{B,i}^{(t^*-1,t)},
\qquad
\widetilde{\boldsymbol d}_{S,B,a,A}^{(t^*-1,t)}
:=
\frac{1}{|A|}
\sum_{i\in A}\widetilde{\boldsymbol d}_{S,B,i}^{(t^*-1,t)},
\qquad
\widehat A\in\widehat{\mathcal P}_{B,a},
A\in\mathcal P_{S,B,a}.
\]
Then the sample and population split statistics are
\begin{align}
	\widetilde T_{B,a}^{\mathrm{split},(t^*-1,t)}
	&:=
	\max_{\widehat A\in\widehat{\mathcal P}_{B,a}}
	\left\{
	\frac{|\widehat A|\,|\widehat C_{B,a}\setminus\widehat A|}
	{|\widehat C_{B,a}|^2}
	\right\}^{1/2}
	\left\|
	\widetilde{\boldsymbol d}_{B,a,\widehat A}^{(t^*-1,t)}
	-
	\widetilde{\boldsymbol d}_{B,a,\widehat C_{B,a}\setminus\widehat A}^{(t^*-1,t)}
	\right\|_2,
	\label{eq:balanced-split-statistic}\\
	\widetilde T_{S,B,a}^{\mathrm{split},(t^*-1,t)}
	&:=
	\max_{A\in\mathcal P_{S,B,a}}
	\left\{
	\frac{|A|\,|C_{S,B,a}\setminus A|}
	{|C_{S,B,a}|^2}
	\right\}^{1/2}
	\left\|
	\widetilde{\boldsymbol d}_{S,B,a,A}^{(t^*-1,t)}
	-
	\widetilde{\boldsymbol d}_{S,B,a,C_{S,B,a}\setminus A}^{(t^*-1,t)}
	\right\|_2.
	\label{eq:balanced-split-population}
\end{align}
The weight in \eqref{eq:balanced-split-statistic} and \eqref{eq:balanced-split-population} reduces the contribution of highly unbalanced divisions
without excluding them. The theoretical statistic considers every
division of the community into two nonempty groups. In computation, we
apply \(k\)-means with \(k=2\) to the vertex displacement vectors and
evaluate the statistic using the resulting two groups.

Finally, a \emph{merge} is a reduction in the separation of a prespecified
group $G\subseteq[q_B]$, $|G|\ge2$. For \(u\in\{t^\ast-1,t\}\), define the sample and population
mean centres by
\[
\overline{\widetilde{\boldsymbol c}}_{B,G}^{(u)}
=
\frac{1}{|G|}
\sum_{a\in G}
\widetilde{\boldsymbol c}_{B,a}^{(u)},
\qquad
\overline{\widetilde{\boldsymbol c}}_{S,B,G}^{(u)}
=
\frac{1}{|G|}
\sum_{a\in G}
\widetilde{\boldsymbol c}_{S,B,a}^{(u)},
\]
and their dispersions by
\[
\widetilde D_{B,G}^{(u)}
=
\left[
\frac{1}{|G|}
\sum_{a\in G}
\left\|
\widetilde{\boldsymbol c}_{B,a}^{(u)}
-
\overline{\widetilde{\boldsymbol c}}_{B,G}^{(u)}
\right\|_2^2
\right]^{1/2},
\widetilde D_{S,B,G}^{(u)}
=
\left[
\frac{1}{|G|}
\sum_{a\in G}
\left\|
\widetilde{\boldsymbol c}_{S,B,a}^{(u)}
-
\overline{\widetilde{\boldsymbol c}}_{S,B,G}^{(u)}
\right\|_2^2
\right]^{1/2}.
\]
The sample and population merge
statistics are
\[
\widetilde T_{B,G}^{\mathrm{merge},(t^*-1,t)}
=
\widetilde D_{B,G}^{(t^\ast-1)}
-
\widetilde D_{B,G}^{(t)},
\qquad
\widetilde T_{S,B,G}^{\mathrm{merge},(t^*-1,t)}
=
\widetilde D_{S,B,G}^{(t^\ast-1)}
-
\widetilde D_{S,B,G}^{(t)}.
\]
A positive value means that the selected centres have moved closer together.

The following result gives a common detection guarantee for the three
tests. For any one of the three tests above, let
\(
T_C^{(t^*-1,t)}
\in
\left\{
\widetilde T_{B,a}^{\mathrm{cent},(t^*-1,t)},
\widetilde T_{B,a}^{\mathrm{split},(t^*-1,t)},
\widetilde T_{B,G}^{\mathrm{merge},(t^*-1,t)}
\right\}\), \(
T_{S,C}^{(t^*-1,t)}
\in
\left\{
\widetilde T_{S,B,a}^{\mathrm{cent},(t^*-1,t)},
\widetilde T_{S,B,a}^{\mathrm{split},(t^*-1,t)},
\widetilde T_{S,B,G}^{\mathrm{merge},(t^*-1,t)}
\right\}
\)
denote the corresponding sample and population statistics,
respectively. The same notation is used for any pair \((u,v)\). Both
\(T_C^{(u,v)}\) and \(T_{S,C}^{(u,v)}\) are obtained by replacing
the baseline endpoint \(t^\ast-1\) and target \(t\) by \(u\) and \(v\),
respectively.
For the centre-shift and split tests, the population null
hypothesis is \(T_{S,C}^{(t^*-1,t)}=0\), whereas for the merge test,
it is \(T_{S,C}^{(t^*-1,t)}\leq0\). In all three cases, the corresponding
alternative is \(T_{S,C}^{(t^*-1,t)}>0\). Let \(\tau>0\) denote the rejection threshold for the selected test.
\begin{theorem}
	\label{thm:community_score_detection}
	Suppose the conditions of
	Corollary~\ref{cor:pooled_community_exact_recovery} hold with
	\(\mathcal T=B\), and
	Conditions~\ref{ass:transition-sparsity}--\ref{ass:3} hold at
	time \(t\). Assume that \(m_{Y,S,t}>\epsilon_p\) and
		\(m_{Z,S,t}>\epsilon_p\). Fix a community \(a\) for the centre-shift or split
	test, or a prespecified group
	\(G\subseteq[q_B]\), \(|G|\ge2\), for the merge test. 
    
    Consider the
	corresponding test that rejects when
	\(
	T_C^{(t^*-1,t)}>\tau.
	\)
	For every fixed \(c_0>0\) and all sufficiently large \(p\), under
	the corresponding null hypothesis, if
	\(
	\overline\epsilon_{t^\ast-1,t}\le\tau,
	\)
	the test does not reject with probability at least
	\(
	1-C_{c_0}(|B|+1)p^{-c_0}.
	\)
	Under the corresponding alternative hypothesis, the test rejects
	with the same probability if
	\(
	T_{S,C}^{(t^*-1,t)}
	>
	\tau+\overline\epsilon_{t^\ast-1,t}.
	\)
\end{theorem}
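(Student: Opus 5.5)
The proof reduces all three tests to one deterministic inequality:
\[
\bigl|T_C^{(t^*-1,t)}-T_{S,C}^{(t^*-1,t)}\bigr|\le\overline\epsilon_{t^*-1,t},
\]
valid on a single high-probability event. Given this, the null case follows immediately. There \(T_{S,C}\le0\), and \(T_{S,C}=0\) for the centre-shift and split tests, so \(T_C\le\overline\epsilon_{t^*-1,t}\le\tau\) and the test does not reject. Under the alternative, \(T_C\ge T_{S,C}-\overline\epsilon_{t^*-1,t}>\tau\), so the test rejects.

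\textbf{The good event.} Let \(\mathcal E\) be the intersection of two events.
\begin{enumerate}
\item The event of Theorem~\ref{thm1} at times \(t^*-1\) and \(t\). Since Theorem~\ref{thm1} is uniform over \(t\) for a single embedding run, this costs \(C_{c_0}p^{-c_0}\).
\item The event of Corollary~\ref{cor:pooled_community_exact_recovery} with \(\mathcal T=B\), on which \(\widehat C_{B,a}=C_{S,B,a}\) after relabelling. Bounding via per-time events costs at most \(C_{c_0}|B|p^{-c_0}\).
\end{enumerate}
A union bound gives \(\mathbb P(\mathcal E)\ge1-C_{c_0}(|B|+1)p^{-c_0}\). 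On \(\mathcal E\), the sample and population statistics are built on the same partition, which removes all combinatorial error from the comparison.

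\textbf{Rowwise bound for normalised rows.} Write \(x=\mathbf e_i^\top\mathbf R_h^{(u)}\) and \(y=\mathbf e_i^\top\mathbf R_{S_h}^{(u)}\mathbf W_{hS}\). Theorem~\ref{thm1} gives \(\|x-y\|_2\le\epsilon_p\). The elementary inequality
\[
\|x/\|x\|-y/\|y\|\|\le 2\|x-y\|/\|y\|,
\]
together with \(\|y\|\ge m_{h,S,u}\), bounds \(\|\widetilde{\mathbf R}_{h,i}^{(u)}-\widetilde{\mathbf R}_{S,h,i}^{(u)}\|\) by \(2\epsilon_p/m_{h,S,u}\). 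The assumption \(m_{h,S,u}>\epsilon_p\) guarantees \(x\neq0\).

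This constant is the main obstacle. Combining channels with the factor \(1/\sqrt2\) yields \(\|\widetilde{\mathbf J}_i^{(u)}-\widetilde{\mathbf J}_{S,i}^{(u)}\|\le\sqrt2\,\epsilon_p(m_{Y,S,u}^{-2}+m_{Z,S,u}^{-2})^{1/2}=\widetilde\epsilon_u\) only if the per-channel constant is \(1\) rather than \(2\). I would first try the sharper inequality obtained by projecting onto the sphere. Failing that, I would take the bound exactly as it is used in the proof of Theorem~\ref{thm:dcsbm_community_exact_recovery}, where \(\widetilde\epsilon_t\) was already established as the rowwise joint error. Either route gives \(\max_i\|\widetilde{\mathbf J}_i^{(u)}-\widetilde{\mathbf J}_{S,i}^{(u)}\|_2\le\widetilde\epsilon_u\) for \(u\in\{t^*-1,t\}\).

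\textbf{Propagation to each statistic.} Averaging over a fixed set is \(1\)-Lipschitz in the sup-row norm. Hence each centre satisfies \(\|\widetilde{\mathbf c}_{B,a}^{(u)}-\widetilde{\mathbf c}_{S,B,a}^{(u)}\|\le\widetilde\epsilon_u\).

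\emph{Centre shift.} The reverse triangle inequality gives \(|T_C-T_{S,C}|\le\widetilde\epsilon_{t^*-1}+\widetilde\epsilon_t\).

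\emph{Split.} Each displacement satisfies \(\|\widetilde{\boldsymbol d}_{B,i}-\widetilde{\boldsymbol d}_{S,B,i}\|\le\overline\epsilon_{t^*-1,t}\), and so does every subset mean. The difference of the two subset means therefore errs by at most \(2\overline\epsilon_{t^*-1,t}\). The weight \(\{|A||C\setminus A|\}^{1/2}/|C|\le1/2\) multiplies this back down to \(\overline\epsilon_{t^*-1,t}\). Because the candidate collections coincide on \(\mathcal E\), the bound transfers through the maximum via \(|\max f-\max g|\le\max|f-g|\).

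\emph{Merge.} The dispersion \(D^{(u)}\) is the root-mean-square norm of the centred vectors \(\widetilde{\boldsymbol c}_a-\bar{\boldsymbol c}\), and centring is a contraction in the \(\ell_2\)-average sense. Hence \(|\widetilde D_{B,G}^{(u)}-\widetilde D_{S,B,G}^{(u)}|\le\widetilde\epsilon_u\), and subtracting the two times gives a total error of at most \(\overline\epsilon_{t^*-1,t}\).

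This establishes the displayed inequality for all three statistics on \(\mathcal E\), completing the argument.
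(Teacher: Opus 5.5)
Your proposal is correct and follows the paper's proof step for step. The paper first proves the deterministic bound $|T_C^{(t^*-1,t)}-T_{S,C}^{(t^*-1,t)}|\le\overline\epsilon_{t^\ast-1,t}$ for all three statistics on the intersection of the baseline exact-recovery event with the rowwise events (Lemma~\ref{thm:dcsbm_community_level_test}), using centre averaging, the weight bound $\gamma_A\le 1/2$ for the split, and the variance-below-second-moment contraction for the merge, and then concludes exactly as you do. The ``obstacle'' you flag is only an arithmetic slip: with per-channel constant $2$, the factor $1/\sqrt2$ gives $\frac{1}{\sqrt2}\cdot 2\epsilon_p(m_{Y,S,u}^{-2}+m_{Z,S,u}^{-2})^{1/2}=\widetilde\epsilon_u$ exactly, which is how Lemma~\ref{lem:normalized_joint_rowwise} obtains $\widetilde\epsilon_t$, so no sharper inequality is needed.
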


\section{Empirical \(p\)-values and baseline stationarity}
\label{sec:calibration}

The theoretical thresholds depend on unknown constants. In practice,
we use all pairs of transitions from a candidate stationary baseline
\(B\) to form an empirical reference distribution. The theory uses the
endpoint of \(B\) as its reference, while the empirical procedure
averages over all baseline transitions.

For a target \(t\notin B\), define the network-level statistic and
\(p\)-value by
\begin{equation}
\overline T_G^{(t;B)}
=
\frac{1}{|B|}\sum_{b\in B}T_G^{(b,t)},
\qquad
p_G(t)
=
\frac{
1+\displaystyle\sum_{\substack{a<b,a,b\in B}}
\mathbf 1\!\left\{
T_G^{(a,b)}\geq\overline T_G^{(t;B)}
\right\}
}{
1+\binom{|B|}{2}
}.
\label{eq:network-empirical-p}
\end{equation}

For vertex \(i\), define
\begin{equation}
\overline T_i^{(t;B)}
=
\frac{1}{|B|}\sum_{b\in B}T_i^{(b,t)},
\qquad
p_i(t)
=
\frac{
1+\displaystyle\sum_{\substack{a<b,a,b\in B}}\sum_{j=1}^p
\mathbf 1\!\left\{
T_j^{(a,b)}\geq\overline T_i^{(t;B)}
\right\}
}{
1+p\binom{|B|}{2}
}.
\label{eq:vertex-empirical-p}
\end{equation}
The baseline statistics are combined across vertices to give a larger
reference set when \(|B|\) is small.
 Under a stationary baseline with
fixed latent positions, each vertex's population right-embedding row is unchanged over \(B\), so all baseline-pair vertex statistics measure
sampling variation on the same scale. 

For any prespecified centre-shift, split, or merge statistic, define
\begin{equation}
\overline T_C^{(t;B)}
=
\frac{1}{|B|}\sum_{b\in B}T_C^{(b,t)},
\qquad
p_C(t)
=
\frac{
1+\displaystyle\sum_{\substack{a<b,a,b\in B}}
\mathbf 1\!\left\{
T_C^{(a,b)}\geq\overline T_C^{(t;B)}
\right\}
}{
1+\binom{|B|}{2}
}.
\label{eq:community-empirical-p}
\end{equation}

Before testing targets outside \(B\), we check baseline stationarity by treating each \(r\in B\) in turn as a target and using the remaining baseline indices \(B_{-r}=B\setminus\{r\}\) as its reference set. For every relevant statistic
\(T_\bullet\in\{T_G,T_i,T_C\}\), define
\(
\overline T_\bullet^{(r;B_{-r})}
=
\frac{1}{|B|-1}
\sum_{b\in B_{-r}}T_\bullet^{(b,r)}.
\)
The corresponding stationarity \(p\)-values are computed from
\eqref{eq:network-empirical-p}--\eqref{eq:community-empirical-p} with
\(t=r\) and \(B\) replaced by \(B_{-r}\). The baseline stationarity
check combines multiple \(p\)-values to make an overall decision about
\(B\). We therefore apply the Benjamini--Hochberg (BH) adjustment
\citep{benjamini1995controlling} separately to the network-level
family, the vertex-level family, and each selected community-level
family. The baseline passes if no adjusted \(p\)-value is below the
significance level. A baseline-only implementation would first fit UASE over \(B\) and
then refit it over the full analysis window after \(B\) passes the
stationarity check. For simplicity, the numerical analyses below
fit UASE once over the full window \(W\) and apply the stationarity
check to the fitted blocks indexed by \(B\). 

The complete workflow is as follows.
\begin{enumerate}
\item Construct the formation and dissolution matrices over \(W\),
    choose the working dimensions, and fit the two unfolded adjacency spectral embeddings
    over \(W\).

\item Apply the baseline stationarity check to the fitted blocks
    indexed by \(B\). For each \(r\in B\), treat \(r\) as a target and
    use pairs from \(B\setminus\{r\}\) as its reference. Apply the
    Benjamini--Hochberg adjustment separately to the network-level
    family, the vertex-level family, and each selected community-level
    family. Proceed only if the baseline passes.

    \item Using the full-time fitted embeddings, compute
    \eqref{eq:network-empirical-p}--\eqref{eq:community-empirical-p}
    for each external target. Compute channel-specific \(p\)-values
    using \(T_Y\) and \(T_Z\) separately.

    \item Report the target \(p\)-values pointwise. Each vertex
    \(p\)-value concerns only that vertex, and no simultaneous
    error-control claim across vertices is made.
\end{enumerate}

Complete pseudocode is provided in the Supplementary Material.

		\section{Simulation studies}
	\label{sec:simulations}	
	In this section, we compare UASE with multiple adjacency spectral embedding (MASE) \citep{chen_multiple_2025}, omnibus embedding (OMNI)
\citep{levin2017central}, and the scan statistic of
\citet{priebe_enron_2005}. Every reported detection rate is based on $100$ Monte
Carlo replications at level 0.05. The latent dimension used to generate a
network is denoted by $d$, whereas $D$ is the embedding dimension. 

	\subsection{Vertex-level transition anomalies}
	\label{subsec:sim-vertex}
		\begin{figure}[t]
		\centering
		\includegraphics[width=.96\textwidth]{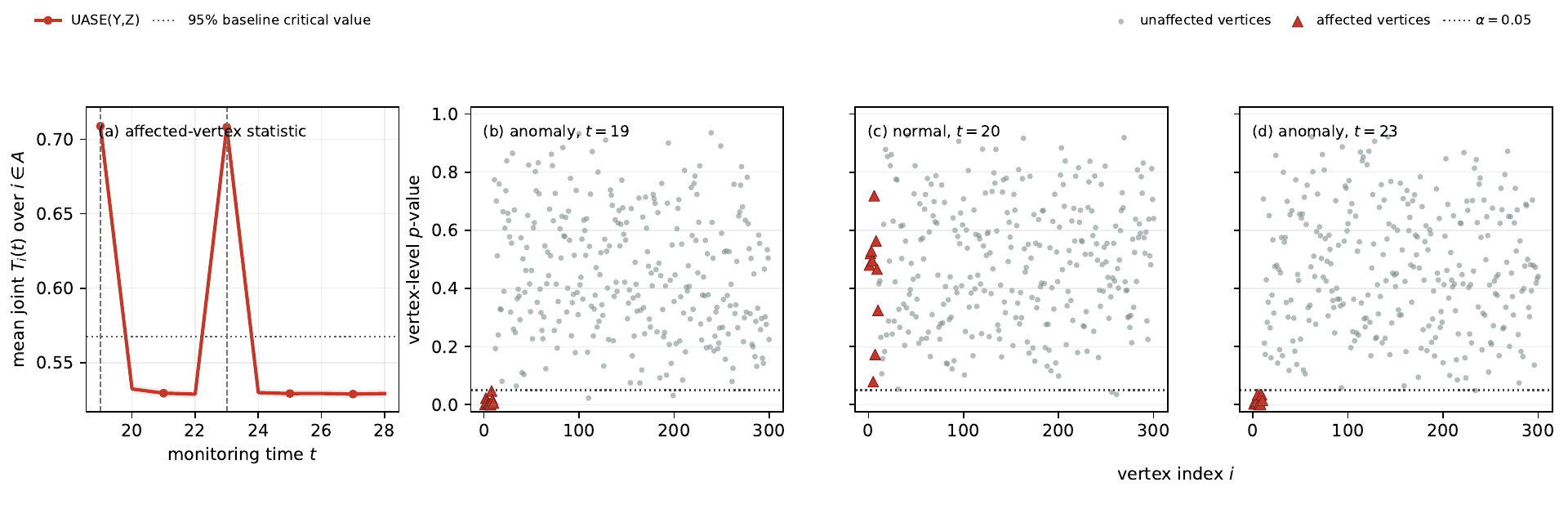}
		\vspace{0.1em}
		
		\includegraphics[width=.94\textwidth]{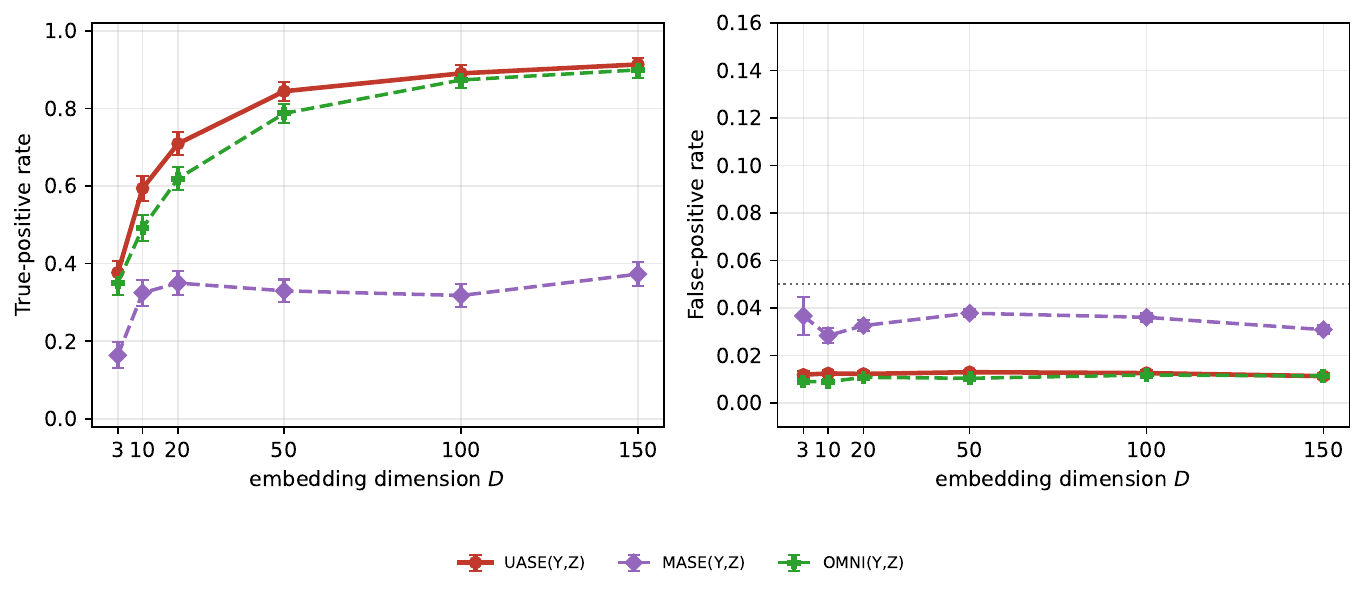}
		\caption{Setting 1: vertex-level anomaly experiments. The upper row shows the transition-event
statistic over time and pointwise vertex $p$-values at two anomaly times and an
intervening normal time. The lower row shows the true-positive rate (TPR) and false-positive rate (FPR)
against the embedding dimension $D$ when the transition is anomalous but the marginal connection
probabilities do not change. Error bars are 95\% Monte Carlo confidence intervals.}
		\label{fig:sim-vertex-diagnostics-d}
	\end{figure}

	Setting~1 considers vertex-level transition anomalies in an
AR(1)-RDPG with \(n=28\) and \(d=3\). Unless stated otherwise, we use
\(p=300\) vertices, of which \(m=10\) vertices are anomalous. The coordinates of
\(\boldsymbol\ell_{Y,i}^{0}\) and
\(\boldsymbol\ell_{Z,i}^{0}\) are generated independently from
\(\operatorname{Unif}(0.25,0.30)\) and
\(\operatorname{Unif}(0.20,0.30)\), respectively, with
\(\rho_Y=\rho_Z=0.4\). We use \(t=1,\ldots,18\) as the stationary
baseline. At \(t=19\) and \(t=23\), both latent positions of each
anomalous vertex are doubled:
\(
\boldsymbol\ell_{Y,i}^{(t)}
=
2\boldsymbol\ell_{Y,i}^{0},
\boldsymbol\ell_{Z,i}^{(t)}
=
2\boldsymbol\ell_{Z,i}^{0},
 i\in\mathcal A.
\)
At all other times, the latent positions equal their baseline values.
With \(D=50\), the first row of Figure~\ref{fig:sim-vertex-diagnostics-d} shows the
\(\mathrm{UASE}(Y,Z)\) statistics for the affected vertices over time
and representative vertex-level \(p\)-values at \(t=19,20,23\). Both
distinguish the anomaly times \(t=19,23\) from the normal time \(t=20\).

		\begin{figure}[t]
		\centering
		\includegraphics[width=.82\textwidth]{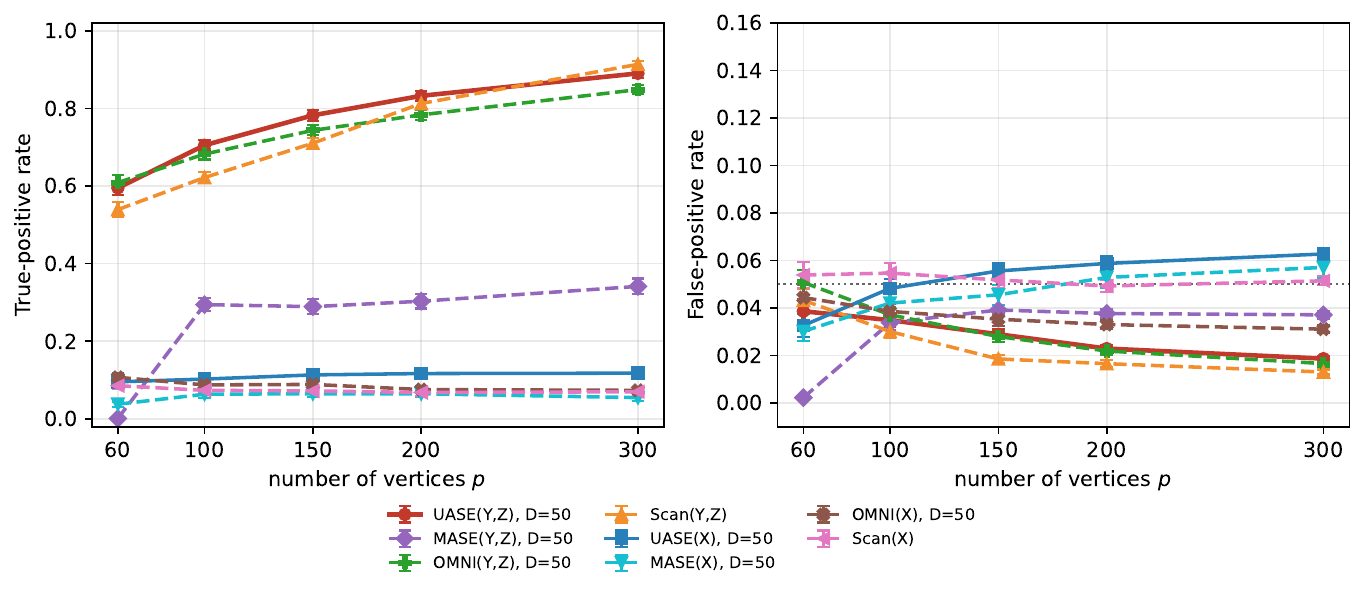}
		\vspace{0.2em}
		
		\includegraphics[width=.82\textwidth]{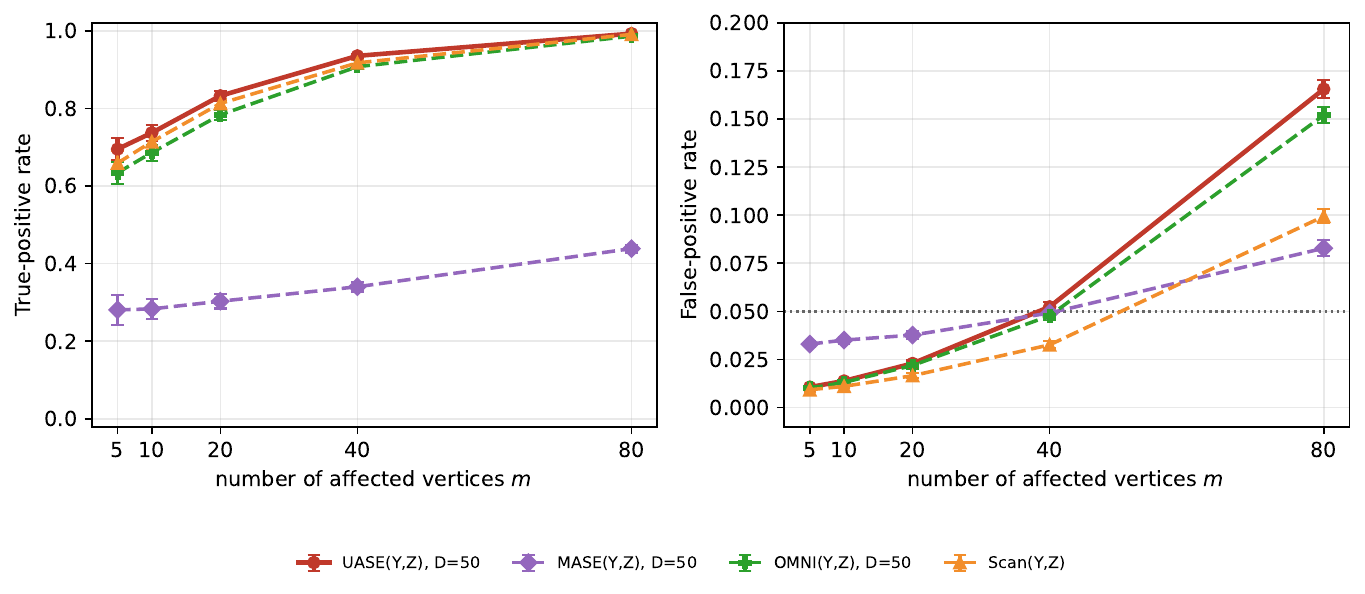}
		\caption{Scaling behaviour in Setting~1. Top: TPR and FPR as the number
			of vertices \(p\) increases, with \(m=20\) and \(D=50\); methods
			based on \(X\) and on \((Y,Z)\) are both included. Bottom: TPR and
			FPR as the number of anomalous vertices \(m\) increases, with
			\(p=200\) and \(D=50\); only the \((Y,Z)\)-based methods are shown.
			Error bars show \(95\%\) Monte Carlo confidence intervals over
			\(100\) replications.}
		\label{fig:sim-vertex-scaling}
	\end{figure}
    
	We next examine the effect of the embedding dimension \(D\), as shown
	in the second row of Figure~\ref{fig:sim-vertex-diagnostics-d}. The
	detection performance of UASE and OMNI improves steadily as more
	embedding dimensions are retained, whereas MASE exhibits a weaker
	improvement. Because the anomalies affect only a small subset of
	vertices, the anomaly signal may not align with the leading spectral directions
of the baseline. A larger embedding dimension retains additional spectral directions
that may contain this localised anomaly signal, thereby improving
vertex localisation. In
	contrast, the false-positive rates of the spectral methods remain low and
	comparatively stable across \(D\).

	We then fix \(m=20\) and \(D=50\) and vary the number of vertices, as
	shown in the first row of Figure~\ref{fig:sim-vertex-scaling}. We apply
	UASE, MASE, OMNI, and the scan statistic to both \(X\) and the transition
	decomposition \(Y,Z\). The anomaly multiplies
\(\alpha_{ij}\) and \(\beta_{ij}\) by the same factor, so
\(\pi_{ij}\) is unchanged. The methods based on \(X\) consequently
show little separation, whereas those based on \(Y,Z\) improve with
\(p\). Among them, \(\mathrm{UASE}(Y,Z)\) has the highest TPR for
\(p\leq200\); Scan is slightly higher at \(p=300\); OMNI is close
behind; and MASE has lower power.
			
	Finally, we fix \(p=200\) and \(D=50\) and vary the number of anomalous
	vertices, as shown in the second row of
	Figure~\ref{fig:sim-vertex-scaling}. The TPR
increases with \(m\) because the number of anomalous--anomalous dyads
increases, strengthening the anomaly signal. The FPR at the anomaly times also increases with \(m\),
consistent with the cross-vertex interference in Section~3.1 becoming
stronger as more vertices are anomalous.

	\subsection{Network-level anomalies and channel attribution}
	\label{subsec:sim-graph}
	\begin{figure}[t]
		\centering
		\includegraphics[width=\textwidth]
		{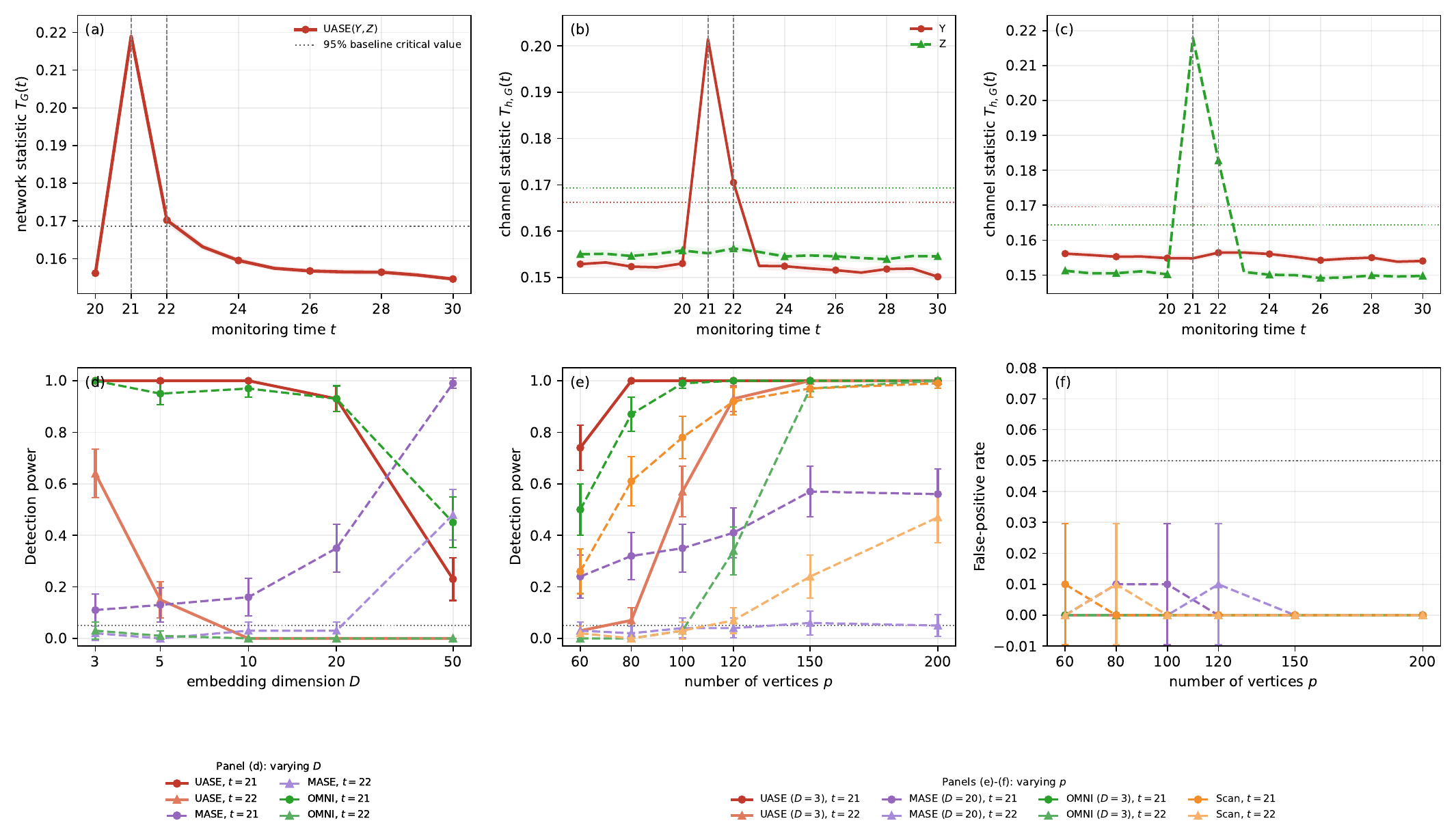}
		\caption{Setting~2: Network anomaly experiments. Panels (a)--(c) report the joint and
channel-specific statistics with anomalies involving both channels, formation only, and dissolution only. Panel (d) varies the fitted dimension; panels (e)
and (f) report power and false-positive rate against network size. Error bars
are 95\% Monte Carlo confidence intervals.}
		\label{fig:sim-graph}
	\end{figure}
	Setting~2 considers network-level transition anomalies in an
AR(1)-RDPG with \(n=30\), \(d=3\), and, unless stated otherwise,
\(p=100\). The coordinates of \(\boldsymbol\ell^0_{Y,i}\) and
\(\boldsymbol\ell^0_{Z,i}\) are generated independently from
\(\operatorname{Unif}(0.25,0.30)\) and
\(\operatorname{Unif}(0.20,0.30)\), respectively, with
\(\rho_Y=\rho_Z=0.6\). We use \(t=1,\ldots,19\) as the stationary
baseline.
For anomalies occurring in both channels, we randomly divide the vertices into two
equal groups indexed by \(g_i\in\{-1,1\}\). At \(t=21,22\), set
\[
\begin{aligned}
g_i=1:\quad&
\boldsymbol\ell^{(t)}_{Y,i}
=\boldsymbol\ell^0_{Y,i}+0.20\boldsymbol 1_d,
&\quad
\boldsymbol\ell^{(t)}_{Z,i}
=\boldsymbol\ell^0_{Z,i}-0.18\boldsymbol 1_d,\\
g_i=-1:\quad&
\boldsymbol\ell^{(t)}_{Y,i}
=\boldsymbol\ell^0_{Y,i}-0.18\boldsymbol 1_d,
&\quad
\boldsymbol\ell^{(t)}_{Z,i}
=\boldsymbol\ell^0_{Z,i}+0.20\boldsymbol 1_d .
\end{aligned}
\]
The latent positions return to baseline after \(t=22\). For channel
attribution, we repeat the experiment with only the \(Y\)-channel or
only the \(Z\)-channel shift above, keeping the other channel at
baseline.

	The first row of Figure~\ref{fig:sim-graph} shows the overall
\(\mathrm{UASE}(Y,Z)\) statistic for the two-channel anomaly and the
channel-specific statistics for the one-channel anomalies. Both
two-channel anomaly times are detected, although the signal at \(t=22\)
is weaker, consistent with the memory effect from the preceding
anomaly. In the one-channel experiments, the statistics correctly
identify the \(Y\) and \(Z\) channels,
	respectively. 
	The lower-left panel compares power across embedding dimensions at
\(t=21,22\). UASE performs best at small and moderate \(D\), while the
performance of UASE and OMNI decreases at larger \(D\), where additional
dimensions mainly add noise. MASE improves with \(D\) because at small $D$ its
low-dimensional common basis may omit the temporary anomaly direction.
The lower-middle and lower-right panels vary \(p\), using \(D=3\) for
UASE and OMNI and \(D=20\) for MASE. UASE has the highest power and
approaches one as \(p\) increases, followed by OMNI and Scan, while
MASE generally has the lowest power. All four methods maintain low
empirical false-positive rates.

	\subsection{Community-level structure and anomaly geometry}
	\label{subsec:sim-community}

		\begin{figure}[t]
		\centering
		\includegraphics[width=\textwidth]
{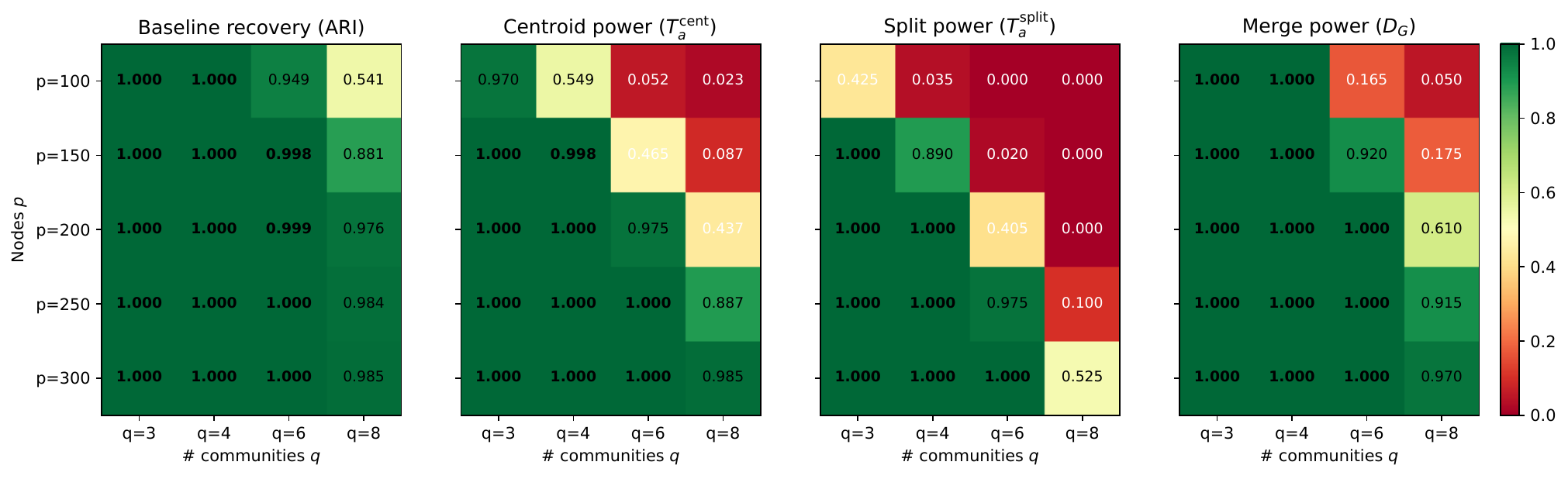}
		\caption{Setting 3: Community recovery and anomaly detection over the $(p,q)$ grid.
The first panel reports the adjusted Rand index (ARI) of the baseline
partition. The remaining panels report power for centre-shift, split and
merge alternatives. Each cell averages 100 paired Monte Carlo
replications.}
		\label{fig:sim-community-pq}
	\end{figure}
	
	Setting~3 studies three types of community anomaly with
	\(p\in\{100,150,200,250,300\}\) vertices and
	\(q\in\{3,4,6,8\}\) balanced communities.  For each cell we use
	\(d=q\), \(D=2q+3\), \(n=45\), and
	\(\rho_Y=\rho_Z=1\). The degree parameters are drawn from
	\(\operatorname{Unif}(0.22,0.60)\). The baseline ends at \(t=30\), and
	anomalies are introduced at \(t=35\) and \(t=40\).
For each \((p,q)\) cell, we use 100 paired Monte Carlo
	replications. Baseline
	memberships are estimated by \(k\)-means with \(k=q\), applied to the
	channelwise normalised rows averaged over \(t=1,\ldots,30\). The first panel
	of Figure~\ref{fig:sim-community-pq} reports the mean ARI, obtained by averaging over the 100 replications. Recovery 	improves with \(p\) and becomes more difficult as \(q\) increases, consistent
	with Theorem~\ref{thm:dcsbm_community_exact_recovery}.
	The two transition channels share the same latent geometry. The baseline
	direction of community \(k\in[q]\) is
	\(
	\boldsymbol\theta_k
	=
	\frac{\boldsymbol r_k}{\|\boldsymbol r_k\|_2},
	\boldsymbol r_k
	=
	\mathbf e_k
	+
	0.16\sum_{\ell\ne k}\mathbf e_\ell,
	\)
	where \(\mathbf e_k\) denotes the \(k\)th canonical basis vector. Vertex \(i\)
	has baseline latent position \(\psi_i\boldsymbol\theta_{\nu(i)}\). Each
	alternative changes this geometry only at the anomaly times. Under the split
	alternative, community \(1\) is divided into two equal groups whose directions
	are proportional to
	\(
	(0.05,10.0,0.05,\ldots,0.05)
	\text{ and }
	(0.05,0.05,10.0,0.05,\ldots,0.05).
	\)
	Under the merge alternative for \(G=\{1,2,3\}\), each community’s new direction is the normalised sum of \(0.4\) times its original direction and \(0.6\) times the common direction \(\boldsymbol r_1+\boldsymbol r_2+\boldsymbol r_3\).
	Under the global-shift alternative, the same vector
	\(0.22\mathbf 1_q/\sqrt q\) is added to every latent position.  The three
	community tests exhibit the same pattern as recovery: detection
	power generally increases with \(p\) and decreases with \(q\).

	\section{Real-data analyses}
	
	\subsection{Trade data}
	\label{subsec:trade}
	
\begin{figure}[t]
		\centering
		\includegraphics[width=0.93\textwidth]{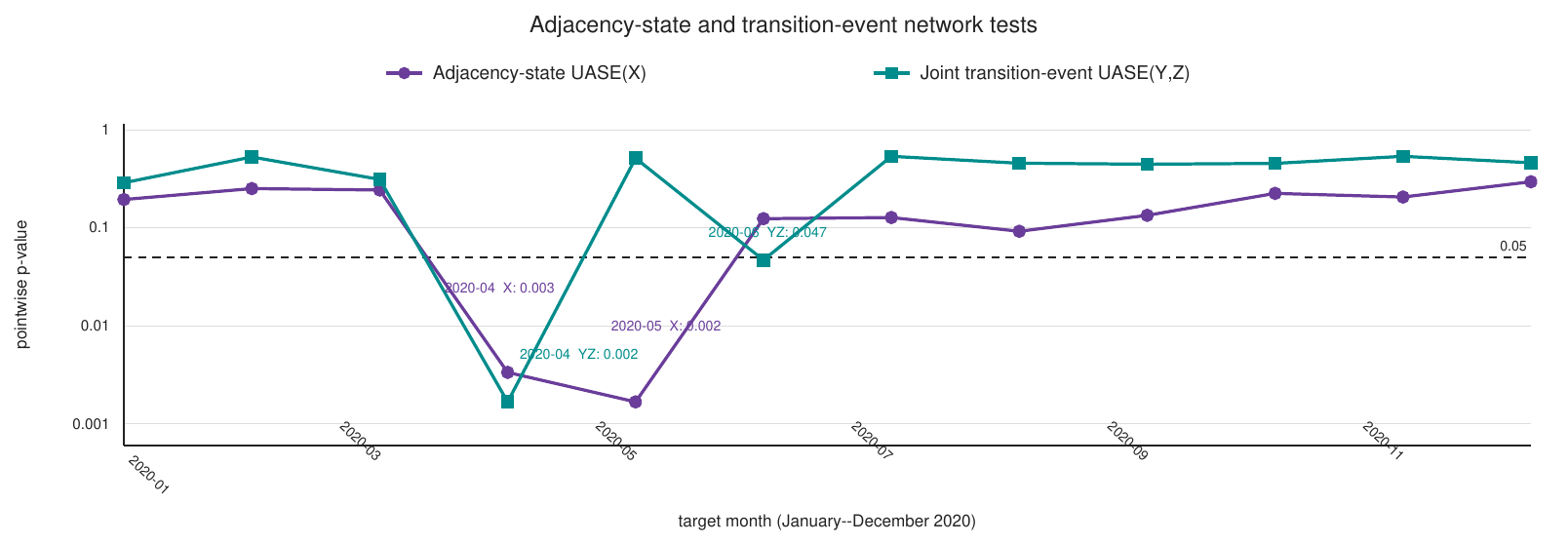}
		\vspace{0.8ex}
		\includegraphics[width=0.93\textwidth]{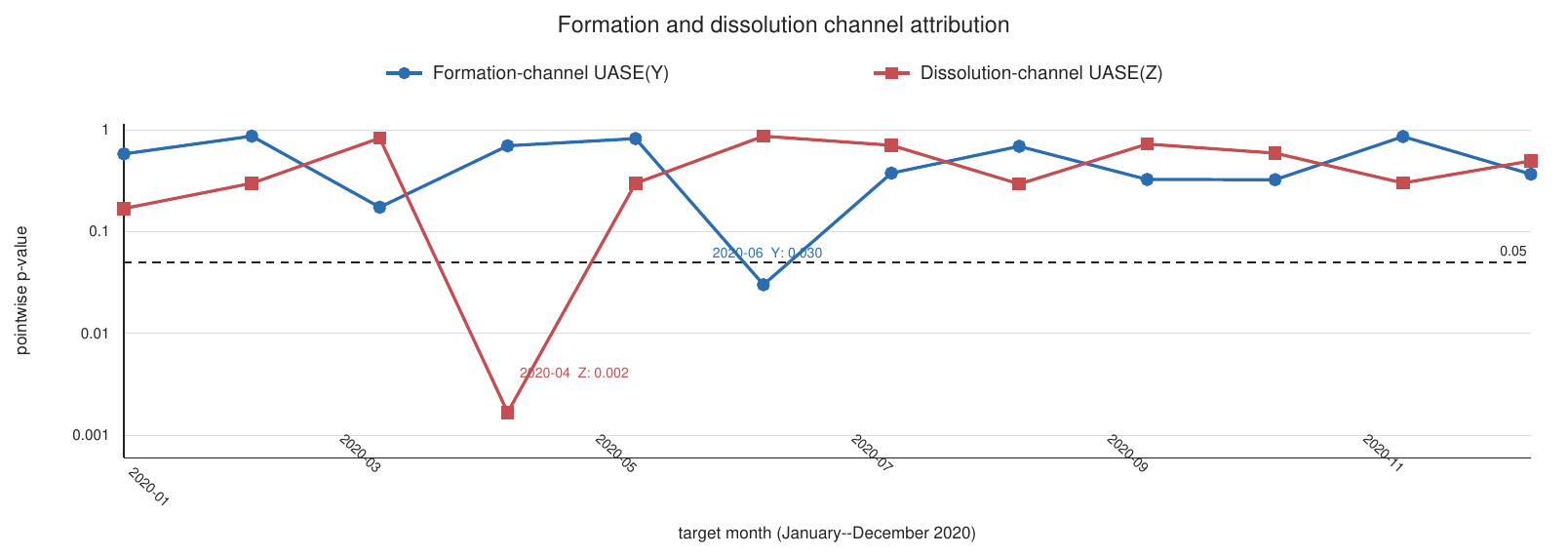}
		\caption{Four network-level trade tests.  The upper panel overlays adjacency-based UASE\((X)\)
			and joint transition-event UASE\((Y,Z)\); the lower panel overlays the
			formation UASE\((Y)\) and dissolution UASE\((Z)\) channels. }
		\label{fig:trade-uase-overlaid}
	\end{figure}
We analyse monthly bilateral merchandise trade from the United Nations
Comtrade Database \citep{uncomtrade2}, covering January 2017 to December 2020. From the 152 reporters observed at least once, we exclude the two aggregate reporters, the European Union and the Association of Southeast Asian Nations (ASEAN), together with any economy for which at least one month contains neither an export nor an import record. This leaves $p=104$ economies listed in Appendix S.6.1.  Let $E_{ij,m}$ be exports from economy $i$ to
$j$ in month $m$ reported by $i$, and let $I_{ji,m}$ be the corresponding
imports reported by $j$. Define \(D_{ij,m}\) as the average of the positive values among
\(E_{ij,m}\) and \(I_{ji,m}\). If only one is positive, we use that
value; if neither is positive, we set \(D_{ij,m}=0\). We then define the undirected bilateral
trade value $W_{ij,m}=D_{ij,m}+D_{ji,m}$ and place an edge when $W_{ij,m}$ is at
least US\$15 million. We use embedding dimension $D=8$. Using January 2017 only as the initial adjacency state, UASE jointly embeds the 47 monthly transition layers from February 2017 to December 2020, and the baseline contains all $|B|=35$ pre-COVID transitions from February 2017 to December 2019. We treat the twelve months of 2020 as the target months. No baseline month is
flagged by the network- or vertex-level stationarity checks, so the baseline passes
both stationarity checks.

The upper panel of Figure~\ref{fig:trade-uase-overlaid} compares the adjacency-based and joint transition-event tests. The adjacency-based test rejects in April and May, whereas the transition-event test rejects in April and June. The lower panel resolves this difference: dissolution rejects only in April and formation rejects only in June. In April, the dissolution channel is anomalous while the formation channel remains normal. The large number of  dissolutions moves the adjacency matrix \(X\) away from its baseline, so both the dissolution and adjacency-based tests reject. In May, both transition channels are normal, but \(X\) still retains the effect of the April anomaly; hence only the adjacency-based test rejects. In June, the formation channel is anomalous while the dissolution channel remains normal. The new formations move \(X\) back towards its baseline, so the formation test rejects but the adjacency-based test does not.

The channel-specific vertex tests identify the economies associated with the April and June network-level anomalies. Table~\ref{tab:trade-countries} reports all economies with a pointwise vertex \(p\)-value below \(0.05\). Eleven economies are flagged in the dissolution channel in April and six in the formation channel in June; Canada and France are flagged at both times.

	\begin{table}[H]
		\centering
		\footnotesize
		\setlength{\tabcolsep}{4pt}
		\begin{tabular}{@{}ll@{\hspace{0.8em}}p{0.66\textwidth}@{}}
			\hline
			Anomaly month & Channel (network \(p\)) & Countries flagged by the pointwise vertex test (vertex \(p\))\\
			\hline
			Apr. 2020 & \(Z\) (0.002) & CAN (0.014), COL (0.013), DEU (0.017), FRA (0.041), IDN (0.031), IND (\(<0.001\)), PHL (0.008), PRT (0.009), ROU (0.041), SVK (0.020), VNM (0.011)\\
			Jun. 2020 & \(Y\) (0.030) & CAN (0.012), CHL (0.027), FRA (0.021), ITA (0.049), MYS (0.027), THA (0.033)\\
			\hline
		\end{tabular}
		\caption{Pointwise country tests for the channel-specific
			trade-network anomaly months. }
		\label{tab:trade-countries}
	\end{table}

Sensitivity analyses for the edge threshold and embedding
dimension are reported in the Supplementary Material.

	\subsection{Primary-school contacts}
	The primary-school data can be downloaded from the SocioPatterns repository
\citep{sociopatterns_primary2}. They record face-to-face contacts among \(232\)
students in \(10\) classes over two school days at \(20\)-second resolution
\citep{stehle_primary_2011}. We form \(10\)-minute binary networks, placing an edge
between two students if they interact within the window. Transition
events are constructed only from consecutive windows within the same
day, excluding the overnight boundary.

	\begin{figure}[t]
		\centering
		\includegraphics[width=0.82\textwidth]{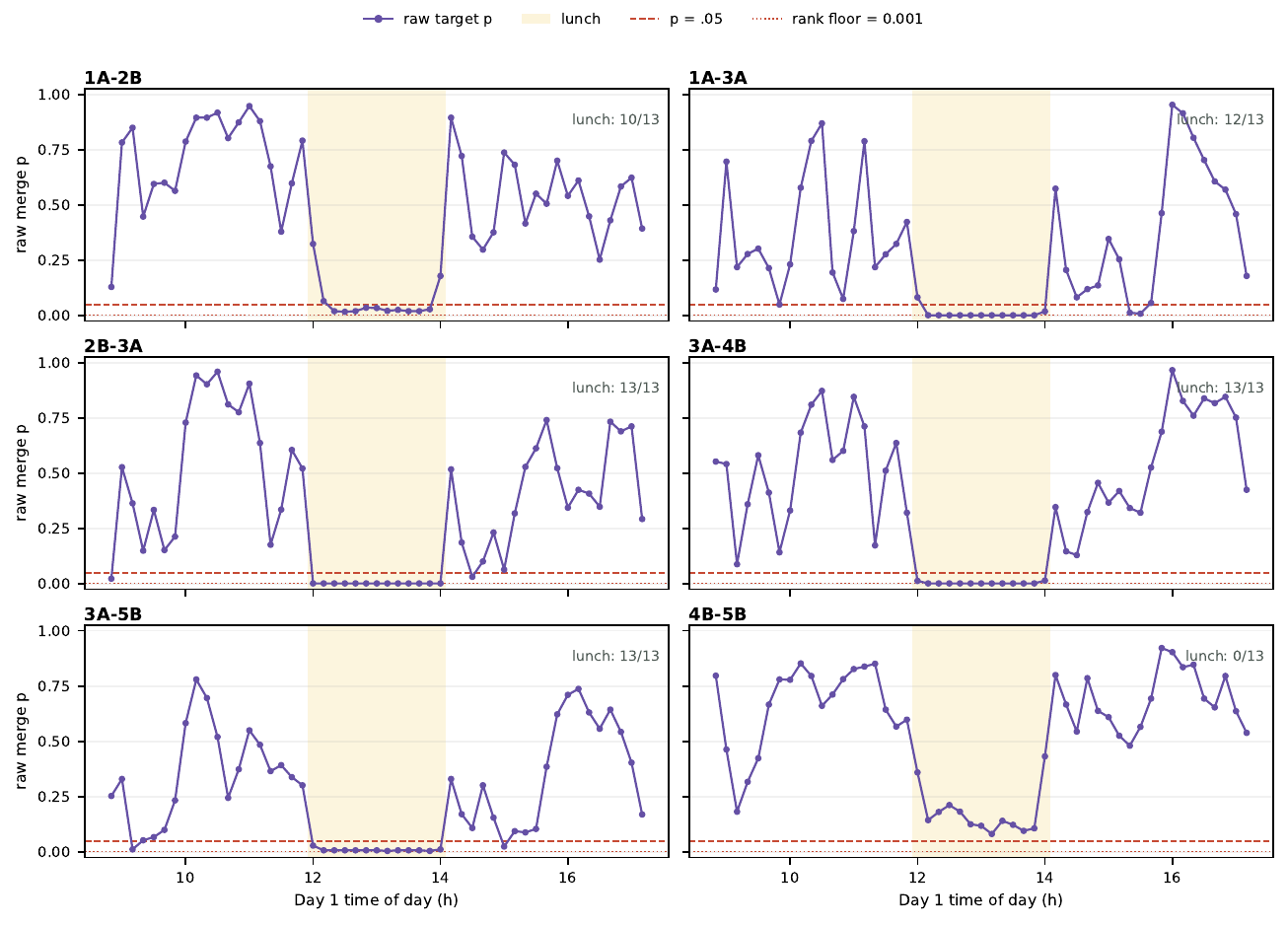}
		\caption{Primary-school pairwise community-merge tests on the first day. Each
panel reports raw pointwise target $p$-values for one class pair. Gold shading
marks lunch, the dashed line is 0.05, the dotted line is the minimum attainable empirical \(p\)-value based on the baseline pairs, and the annotation gives the number of rejected lunch transitions. }
		\label{fig:emp-primary-community-events}
	\end{figure}

Figure~10 of \citet{stehle_primary_2011} provides first-day
spatiotemporal trajectories for classes 1A, 2B, 3A, 4B, and 5B.
Together with Video~S1 of \citet{stehle_primary_2011}, these trajectories describe first-day contacts
and movements between classrooms, the courtyard, and the canteen,
providing external context for interpreting our detected anomalies. No
corresponding detailed description is available for the second day.
Because the two days involve the same students and school setting, and
the study reports similar contact summaries and class-level
distributions across them, we use the second day as the baseline and
the documented first day as the target.

Lunch runs from \(12{:}00\) to \(14{:}00\) in two consecutive turns,
although the study does not report the classes in each turn. The
contact matrix combining both days shows a broad separation between
grades 1--3 and grades 4--5, which the authors suggest may reflect the
lunch schedule. We therefore examine pairwise mixing among the five
classes documented in Figure~10.

	We use all \(38\) consecutive second-day transitions outside the lunch period
	to estimate the baseline and analyse all \(51\) consecutive first-day
	transitions as targets.  With \(q=10\) and embedding dimension \(D=10\) for each
	channel, the resulting partition has an ARI of \(0.960\)
	relative to the true labels.  The four misclassified students, whose true classes are 2A, 3B, 5A, and 5B, are all assigned
	to the estimated 4B community. 

The BH-adjusted baseline stationarity check passes at level \(0.05\). Figure~\ref{fig:emp-primary-community-events} displays six of the ten pairwise tests among these five classes. During lunch, merge signals are detected for 1A--2B, 1A--3A, and 2B--3A in \(10\), \(12\), and \(13\) of the \(13\) transitions, respectively.  Class 3A also shows merge signals with the upper-grade classes. The
merge nulls for 3A--4B and 3A--5B are rejected in all \(13\) lunch
transitions. By contrast, the merge nulls for 4B--5B and the four
omitted pairs (1A--4B, 1A--5B, 2B--4B, and 2B--5B) are not rejected
in any of the \(13\) lunch transitions.

These lunch-period merge patterns agree with the class-mixing structure reported in Figure~10 of \citet{stehle_primary_2011}. More specifically, Figure~10 shows 1A and 2B mainly in the cafeteria during 12:00--13:00 and in the playground during 13:00--14:00, with 4B and 5B following the reverse pattern; 3A is present in both locations during both lunch turns. The absence of a 4B–5B merge signal has a different interpretation. Although the two classes may share the same lunch space and interact directly, their class-level contact patterns do not become more similar than under the baseline.

\section{Discussion}

Transition events provide information not captured by individual adjacency matrices and therefore support anomaly tests for changes in network dynamics. Our vertex-, network- and community-level statistics test for departures from stationary baseline behaviour, while the separate formation and dissolution embeddings identify which component of the dynamics has changed. The rowwise theory provides high-probability guarantees for detecting and exactly localising anomalous vertices, explicitly separating interference from changes at other vertices and autoregressive memory from earlier anomalies. It gives conditions under which each contribution is negligible relative to the rowwise estimation error.

The current anomaly-detection theory keeps the number of observation times fixed as the network size grows. Deriving anomaly-detection guarantees when both dimensions diverge is an important extension. The transition-event representation could be generalised to detect anomalies in directed, weighted, and higher-order networks, including dynamic hypergraphs. The community-level tests could also be extended to estimate the number of communities rather than treating it as known.

\section*{Acknowledgement}
We gratefully acknowledge the support from EPSRC DASS Programme grant EP/Z531327/1. 

\section*{Supplementary material}

The supplementary material contains complete proofs of the theoretical results,
detailed algorithms, additional simulation studies, and, for the international
trade application, details of country coverage and robustness and sensitivity
analyses. 

\bibliographystyle{plainnat}
\bibliography{main}
\newpage
\section*{Supplementary material for “Anomaly detection in autoregressive networks” by Xianghe Zhu}
	
	\subsection*{S.0 Notation and preliminary matrix results}
	Throughout the paper, matrices and vectors are denoted by \emph{boldface} symbols
	(e.g., $\mathbf{M}$, $\mathbf{m}$), while scalars are written in nonbold font.
	For any matrix $\mathbf{M}$, its $(i,j)$-entry is denoted by $(\mathbf{M})_{ij}$.

	We write $\circ$ for the Hadamard (elementwise) product between matrices of the same dimension.
	For vectors $\mathbf{a},\mathbf{b}$ of the same length, $\mathbf{a}\odot \mathbf{b}$ denotes the elementwise product.
	Finally, $\oplus$ denotes the (block) direct sum: for matrices $\mathbf{A}_1,\ldots,\mathbf{A}_n$,
	$\mathbf{A}_1\oplus\cdots\oplus \mathbf{A}_n
	:=\operatorname{diag}(\mathbf{A}_1,\ldots,\mathbf{A}_n),$
	i.e., the block-diagonal matrix with diagonal blocks $\mathbf{A}_1,\ldots,\mathbf{A}_n$.
	
	We first record a simple algebraic lemma used repeatedly below.
	
	\begin{lemma}\label{lem:diag-hadamard}
		(i) Let $\mathbf{a}, \mathbf{c} \in \mathbb{R}^m$ and $\mathbf{b}, \mathbf{d} \in \mathbb{R}^n$ be column vectors. Then
		
		$$
		\left(\mathbf{a} \mathbf{b}^{\top}\right) \circ\left(\mathbf{c} \mathbf{d}^{\top}\right)=(\mathbf{a} \odot \mathbf{c})(\mathbf{b} \odot \mathbf{d})^{\top}.
		$$
		
		(ii) Let $\mathbf{X}\in\mathbb{R}^{m\times n}$ and let $\mathbf{a}\in\mathbb{R}^m$, $\mathbf{b}\in\mathbb{R}^n$ be column vectors.
		Then
		\begin{align}\label{eq:diag-hadamard}
			\operatorname{diag}(\mathbf{a})\,\mathbf{X}\,\operatorname{diag}(\mathbf{b})
			\;=\;
			\mathbf{X}\circ (\mathbf{a}\mathbf{b}^\top).
		\end{align}
		In particular, for any $\mathbf{L}\in\mathbb{R}^{p\times d}$,
		\begin{align*}
			(\mathbf{LL}^\top)\circ (\mathbf{a}\mathbf{b}^\top)
			\;=\;
			(\operatorname{diag}(\mathbf{a})\mathbf{L})\,(\operatorname{diag}(\mathbf{b})\mathbf{L})^\top.
		\end{align*}
	\end{lemma}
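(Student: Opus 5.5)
Both identities are verified entrywise, since every object involved is defined coordinatewise.

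For part (i), fix indices $i\in[m]$ and $j\in[n]$. By the definition of the outer product, $(\mathbf a\mathbf b^\top)_{ij}=a_ib_j$ and $(\mathbf c\mathbf d^\top)_{ij}=c_id_j$. The Hadamard product therefore has $(i,j)$-entry $a_ib_jc_id_j$. Commutativity of real multiplication rewrites this as $(a_ic_i)(b_jd_j)=(\mathbf a\odot\mathbf c)_i(\mathbf b\odot\mathbf d)_j$, which is the $(i,j)$-entry of $(\mathbf a\odot\mathbf c)(\mathbf b\odot\mathbf d)^\top$.

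For part (ii), left multiplication by $\operatorname{diag}(\mathbf a)$ scales row $i$ by $a_i$, and right multiplication by $\operatorname{diag}(\mathbf b)$ scales column $j$ by $b_j$. Hence
\[
\{\operatorname{diag}(\mathbf a)\mathbf X\operatorname{diag}(\mathbf b)\}_{ij}=a_iX_{ij}b_j=X_{ij}(\mathbf a\mathbf b^\top)_{ij},
\]
which is exactly \eqref{eq:diag-hadamard}.

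For the displayed consequence, apply \eqref{eq:diag-hadamard} with $\mathbf X=\mathbf L\mathbf L^\top$ and $m=n=p$. This gives
\[
(\mathbf L\mathbf L^\top)\circ(\mathbf a\mathbf b^\top)=\operatorname{diag}(\mathbf a)\mathbf L\,\mathbf L^\top\operatorname{diag}(\mathbf b).
\]
Since $\operatorname{diag}(\mathbf b)$ is symmetric, $\mathbf L^\top\operatorname{diag}(\mathbf b)=(\operatorname{diag}(\mathbf b)\mathbf L)^\top$, which yields the stated factorisation.

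None of these steps poses any difficulty. The only point that needs care is keeping the dimensions consistent: in the special case, both $\mathbf a$ and $\mathbf b$ must lie in $\mathbb R^p$.
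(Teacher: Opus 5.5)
Your proposal is correct and follows the paper's proof essentially step for step. Both verify (i) and (ii) entrywise, then obtain the special case by setting $\mathbf X=\mathbf L\mathbf L^\top$ and using $\mathbf L^\top\operatorname{diag}(\mathbf b)=(\operatorname{diag}(\mathbf b)\mathbf L)^\top$.
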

	
	\begin{proof}
		(i) Consider the $(i, j)$-entry on both sides. For the left-hand side,
		
		$$
		\left[\left(\mathbf{a} \mathbf{b}^{\top}\right) \circ\left(\mathbf{c} \mathbf{d}^{\top}\right)\right]_{i j}=\left(a_i b_j\right)\left(c_i d_j\right) .
		$$

		For the right-hand side,
		
		$$
		\left[(\mathbf{a} \odot \mathbf{c})(\mathbf{b} \odot \mathbf{d})^{\top}\right]_{i j}=\left(a_i c_i\right)\left(b_j d_j\right).
		$$
		
		Since $\left(a_i b_j\right)\left(c_i d_j\right)=\left(a_i c_i\right)\left(b_j d_j\right)$ for all $i=1, \ldots, m, j=1, \ldots, n$, the two matrices have identical entries. Hence they are equal.
		
		(ii) Compare the $(i,j)$-entries of both sides.
		On the left-hand side,
		$$
		\bigl[\operatorname{diag}(\mathbf{a})\,\mathbf{X}\,\operatorname{diag}(\mathbf{b})\bigr]_{ij}
		=
		a_i\,X_{ij}\,b_j.
		$$
		On the right-hand side,
		$$
		\bigl[\mathbf{X}\circ(\mathbf{a}\mathbf{b}^\top)\bigr]_{ij}
		=
		X_{ij}\,(\mathbf{a}\mathbf{b}^\top)_{ij}
		=
		X_{ij}\,a_i\,b_j.
		$$
		Hence the entries coincide for all $i,j$, which proves \eqref{eq:diag-hadamard}.
		
		Apply $\mathbf{X}=\mathbf{LL}^\top$:
		$$
		(\mathbf{LL}^\top)\circ(\mathbf{a}\mathbf{b}^\top)
		=
		\operatorname{diag}(\mathbf{a})\,(\mathbf{LL}^\top)\,\operatorname{diag}(\mathbf{b}).
		$$
		Finally, use associativity of matrix multiplication:
		$$
		\operatorname{diag}(\mathbf{a})\,(\mathbf{LL}^\top)\,\operatorname{diag}(\mathbf{b})
		=
		(\operatorname{diag}(\mathbf{a})\mathbf{L})\,(\mathbf{L}^\top\operatorname{diag}(\mathbf{b}))
		=
		(\operatorname{diag}(\mathbf{a})\mathbf{L})\,(\operatorname{diag}(\mathbf{b})\mathbf{L})^\top,
		$$
		since $\mathbf{L}^\top\operatorname{diag}(\mathbf{b})=(\operatorname{diag}(\mathbf{b})\mathbf{L})^\top$.
		
	\end{proof}

	\begin{lemma}[Matrix Bernstein]\label{lem:matrixBernstein}
		Let $\mathbf{M}_1, \ldots, \mathbf{M}_n$ be independent random matrices with common dimensions $m_1 \times m_2$, satisfying $\mathbb{E}\left[\mathbf{M}_k\right]=0$ and $\left\|\mathbf{M}_k\right\| \leq L$ for each $1 \leq k \leq n$, for some fixed value $L$.
		
		Let $\mathbf{M}=\sum \mathbf{M}_k$ and let $v(\mathbf{M})=\max \left\{\left\|\mathbb{E}\left[\mathbf{M} \mathbf{M}^{\top}\right]\right\|,\left\|\mathbb{E}\left[\mathbf{M}^{\top} \mathbf{M}\right]\right\|\right\}$ denote the matrix variance statistic of $\mathbf{M}$.  This is the rectangular form obtained by applying the self-adjoint dilation to the summands.  Then for all $\gamma \geq 0$, we have
		\[
		\mathbb{P}(\|\mathbf{M}\| \geq \gamma) \leq\left(m_1+m_2\right) \exp \left(\frac{-\gamma^2 / 2}{v(\mathbf{M})+L \gamma / 3}\right).
		\]
	\end{lemma}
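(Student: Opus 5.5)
The plan is to reduce the rectangular statement to the Hermitian matrix Bernstein inequality through the self-adjoint dilation, and then prove the Hermitian case by the matrix Laplace transform method of Tropp. This is a standard result, and the paper only invokes it as a tool, so I would follow that classical route rather than attempt anything new.

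\emph{Step 1 (dilation).} For an $m_1\times m_2$ matrix $\mathbf A$ define the symmetric $(m_1+m_2)\times(m_1+m_2)$ matrix
\[
\mathcal S(\mathbf A)=\begin{pmatrix}\mathbf 0&\mathbf A\\ \mathbf A^\top&\mathbf 0\end{pmatrix}.
\]
The map $\mathcal S$ has the following properties:
\begin{itemize}
\item it is linear, so $\mathcal S(\mathbf M)=\sum_k\mathcal S(\mathbf M_k)$;
\item the summands $\mathcal S(\mathbf M_k)$ remain independent and mean zero;
\item $\|\mathcal S(\mathbf M_k)\|=\|\mathbf M_k\|\le L$;
\item $\mathcal S(\mathbf A)^2=\operatorname{diag}(\mathbf A\mathbf A^\top,\mathbf A^\top\mathbf A)$.
\end{itemize}
By independence and zero means, the cross terms vanish, so $\mathbb E\,\mathcal S(\mathbf M)^2=\sum_k\mathbb E\,\mathcal S(\mathbf M_k)^2$. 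Its norm is exactly $v(\mathbf M)$, because $\mathbb E[\mathbf M\mathbf M^\top]=\sum_k\mathbb E[\mathbf M_k\mathbf M_k^\top]$ and likewise for $\mathbb E[\mathbf M^\top\mathbf M]$. The eigenvalues of $\mathcal S(\mathbf A)$ are $\pm$ the singular values of $\mathbf A$, padded with zeros. Hence $\lambda_{\max}(\mathcal S(\mathbf M))=\|\mathbf M\|$, and it suffices to bound $\mathbb P\{\lambda_{\max}(\mathbf Y)\ge\gamma\}$ for $\mathbf Y=\sum_k\mathbf X_k$, a sum of independent, centred, symmetric matrices of dimension $m_1+m_2$. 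Since $\lambda_{\max}$ equals the norm here, no extra factor of two is incurred.

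\emph{Step 2 (Laplace transform and subadditivity).} For $\theta>0$, Markov's inequality together with $e^{\theta\lambda_{\max}(\mathbf Y)}\le\operatorname{tr}e^{\theta\mathbf Y}$ gives
\[
\mathbb P\{\lambda_{\max}(\mathbf Y)\ge\gamma\}\le e^{-\theta\gamma}\,\mathbb E\operatorname{tr}e^{\theta\mathbf Y}.
\]
I would then invoke Lieb's concavity theorem: the map $\mathbf H\mapsto\operatorname{tr}\exp(\mathbf K+\log\mathbf H)$ is concave on positive definite matrices. Conditioning on the summands one at a time and applying Jensen's inequality at each step yields
\[
\mathbb E\operatorname{tr}e^{\theta\mathbf Y}\le\operatorname{tr}\exp\Bigl\{\sum_k\log\mathbb E\,e^{\theta\mathbf X_k}\Bigr\}.
\]
This subadditivity of matrix cumulant generating functions is the main obstacle, since the naive Golden--Thompson inequality only handles two factors. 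I would cite Lieb's theorem rather than reprove it.

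\emph{Step 3 (scalar mgf bound and optimisation).} Use the scalar inequality $e^{x}\le1+x+\frac{x^2/2}{1-|x|/3}$ for $|x|<3$. By the transfer rule for functions of a symmetric matrix with $\|\theta\mathbf X_k\|\le\theta L<3$, together with $\mathbb E\mathbf X_k=\mathbf 0$, this gives
\[
\mathbb E\,e^{\theta\mathbf X_k}\preceq\mathbf I+g(\theta)\,\mathbb E\mathbf X_k^2\preceq\exp\{g(\theta)\,\mathbb E\mathbf X_k^2\},
\qquad g(\theta)=\frac{\theta^2/2}{1-\theta L/3}.
\]
The matrix logarithm is operator monotone, and $\mathbf A\mapsto\operatorname{tr}e^{\mathbf A}$ is monotone, so the bound from Step 2 becomes
\[
\operatorname{tr}\exp\Bigl\{g(\theta)\sum_k\mathbb E\mathbf X_k^2\Bigr\}\le(m_1+m_2)\,e^{g(\theta)v(\mathbf M)}.
\]
Choosing $\theta=\gamma/\{v(\mathbf M)+L\gamma/3\}$, which satisfies $\theta L<3$, gives the exponent $-\tfrac{\gamma^2/2}{v(\mathbf M)+L\gamma/3}$. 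Combined with Step 1, this proves the stated bound.
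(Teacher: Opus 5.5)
Your proposal is correct and is the standard argument the paper has in mind. The paper gives no proof of its own and only notes that this is the rectangular form of Tropp's matrix Bernstein inequality obtained through the self-adjoint dilation, which is exactly your Step 1 followed by the Lieb-based Laplace-transform proof of the Hermitian case. The one detail to tidy is that your choice of $\theta$ gives $\theta L<3$ only when $v(\mathbf M)>0$; if $v(\mathbf M)=0$, then $\mathbf M=\mathbf 0$ almost surely and the bound is trivial for $\gamma>0$.
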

	The next lemma is a variant of the Davis--Kahan theorem (see Theorem~4 in \citet{yu2015useful} and Theorem~12 in \citet{jones_multilayer_2021}).
	\begin{lemma}\label{lem:davis}
		Let $\mathbf{M}_1, \mathbf{M}_2 \in \mathbb{R}^{m \times n}$ have singular value decompositions
		
		$$
		\mathbf{M}_i=\mathbf{U}_i \boldsymbol{\Sigma}_i \mathbf{V}_i^{\top}+\mathbf{U}_{i, \perp} \boldsymbol{\Sigma}_{i, \perp} \mathbf{V}_{i, \perp}^{\top},
		$$
		
		where $\mathbf{U}_i \in \mathbb{O}(m \times d)$ has orthonormal columns corresponding to the $d$ largest singular values of $\mathbf{M}_i$, for some $1 \leq d \leq \min(m,n)$. Then, if $\sigma_d(\mathbf M_1)>\sigma_{d+1}(\mathbf M_1)$, we have
		
		$$
		\left\|\sin \Theta\left(\mathbf{U}_2, \mathbf{U}_1\right)\right\|_F \leq \frac{2 \sqrt{d}\left(2 \sigma_1\left(\mathbf{M}_1\right)+\left\|\mathbf{M}_2-\mathbf{M}_1\right\|\right)\left\|\mathbf{M}_2-\mathbf{M}_1\right\|}{\sigma_d\left(\mathbf{M}_1\right)^2-\sigma_{d+1}\left(\mathbf{M}_1\right)^2} .
		$$
		
		Here singular values beyond the rank are taken to be \(0\).
		
		Moreover, there exist orthogonal matrices $\mathbf{O}_\mathbf{U},\mathbf{ O}_\mathbf{V} \in \mathbb{R}^{d \times d}$ such that
		
		$$
		\|\mathbf{U}_2 \mathbf{O}_\mathbf{U}-\mathbf{U}_1\|_{\mathrm{F}} \leq \frac{2^{3/2} \sqrt{d}\left(2 \sigma_1\left(\mathbf{M}_1\right)+\left\|\mathbf{M}_2-\mathbf{M}_1\right\|\right)\left\|\mathbf{M}_2-\mathbf{M}_1\right\|}{\sigma_d\left(\mathbf{M}_1\right)^2-\sigma_{d+1}\left(\mathbf{M}_1\right)^2} ,
		$$
		$$
		\|\mathbf{V}_2 \mathbf{O}_\mathbf{V}-\mathbf{V}_1\|_{\mathrm{F}} \leq \frac{2^{3/2} \sqrt{d}\left(2 \sigma_1\left(\mathbf{M}_1\right)+\left\|\mathbf{M}_2-\mathbf{M}_1\right\|\right)\left\|\mathbf{M}_2-\mathbf{M}_1\right\|}{\sigma_d\left(\mathbf{M}_1\right)^2-\sigma_{d+1}\left(\mathbf{M}_1\right)^2} .
		$$
	\end{lemma}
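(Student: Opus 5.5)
The plan is to reduce the rectangular statement to the symmetric Davis--Kahan $\sin\Theta$ theorem (Theorem~2 of \citet{yu2015useful}) applied to Gram matrices, and then to convert the $\sin\Theta$ bound into an aligned Frobenius bound by an orthogonal Procrustes argument.

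\textbf{Step 1: symmetric reduction for $\mathbf U$.} Set $\mathbf A_i=\mathbf M_i\mathbf M_i^\top\in\mathbb R^{m\times m}$. This matrix is positive semidefinite, with eigenvalues $\sigma_k(\mathbf M_i)^2$ and leading $d$ eigenvectors $\mathbf U_i$. The assumption $\sigma_d(\mathbf M_1)>\sigma_{d+1}(\mathbf M_1)$ gives an eigengap $\lambda_d(\mathbf A_1)-\lambda_{d+1}(\mathbf A_1)=\sigma_d(\mathbf M_1)^2-\sigma_{d+1}(\mathbf M_1)^2>0$; singular values beyond the rank are $0$, so this also covers $d=\operatorname{rank}(\mathbf M_1)$. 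Apply the symmetric Davis--Kahan variant with the eigenvalue window $r=1,\dots,d$. Only the lower gap then matters, and it gives
\[
\|\sin\Theta(\mathbf U_2,\mathbf U_1)\|_F\le \frac{2\sqrt d\,\|\mathbf A_2-\mathbf A_1\|}{\sigma_d(\mathbf M_1)^2-\sigma_{d+1}(\mathbf M_1)^2}.
\]

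\textbf{Step 2: perturbation of the Gram matrix.} Write $\mathbf E=\mathbf M_2-\mathbf M_1$. Then
\[
\mathbf A_2-\mathbf A_1=\mathbf E\mathbf M_2^\top+\mathbf M_1\mathbf E^\top .
\]
Since $\|\mathbf M_2\|\le\sigma_1(\mathbf M_1)+\|\mathbf E\|$, this gives
\[
\|\mathbf A_2-\mathbf A_1\|\le\|\mathbf E\|\bigl(2\sigma_1(\mathbf M_1)+\|\mathbf E\|\bigr).
\]
Substituting into Step~1 yields the first displayed inequality of the lemma.

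\textbf{Step 3: Procrustes alignment.} Take the SVD $\mathbf U_2^\top\mathbf U_1=\mathbf W_1\cos\Theta\,\mathbf W_2^\top$ and set $\mathbf O_{\mathbf U}=\mathbf W_1\mathbf W_2^\top$. Then
\[
\|\mathbf U_2\mathbf O_{\mathbf U}-\mathbf U_1\|_F^2=2d-2\operatorname{tr}(\cos\Theta)=2\sum_k(1-\cos\theta_k).
\]
Because $1-\cos\theta\le 1-\cos^2\theta=\sin^2\theta$ for $\cos\theta_k\in[0,1]$, the right-hand side is at most $2\|\sin\Theta\|_F^2$. Hence $\|\mathbf U_2\mathbf O_{\mathbf U}-\mathbf U_1\|_F\le\sqrt2\,\|\sin\Theta\|_F$, which gives the factor $2^{3/2}$.

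\textbf{Step 4: the $\mathbf V$ bound.} Repeat Steps~1--3 with $\mathbf M_i^\top$ in place of $\mathbf M_i$, that is, with $\mathbf M_i^\top\mathbf M_i$. The singular values, the operator norm $\|\mathbf E\|$ and the gap are unchanged, so the same bound holds for $\|\mathbf V_2\mathbf O_{\mathbf V}-\mathbf V_1\|_F$.

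\textbf{Main obstacle.} The main point to check is that the symmetric $\sin\Theta$ theorem is invoked in the form where only the gap of the unperturbed matrix $\mathbf A_1$ appears, with the constant $2\sqrt d$. The remaining steps are routine.
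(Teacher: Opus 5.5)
Your proposal is correct and takes essentially the same route as the source result. The paper gives no proof of its own; it cites Theorem~4 of \citet{yu2015useful} and Theorem~12 of \citet{jones_multilayer_2021}. The former is proved exactly as you propose: apply the symmetric variant to the Gram matrices, bound $\|\mathbf M_2\mathbf M_2^\top-\mathbf M_1\mathbf M_1^\top\|\le\|\mathbf E\|\{2\sigma_1(\mathbf M_1)+\|\mathbf E\|\}$, and obtain the extra $\sqrt2$ from the Procrustes identity $2d-2\operatorname{tr}(\cos\Theta)\le 2\|\sin\Theta\|_F^2$. The step you flag as the main obstacle is precisely Theorem~2 of \citet{yu2015useful} with $r=1$, $s=d$, so nothing is missing.
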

	
	\subsection*{S.1 Supporting results and proof of Theorem~\ref{thm1}}

	\subsubsection*{S.1.1 Proof of Proposition \ref{prop:u1u2}}

	\begin{proof}
		For \(i\ne j\), the marginal probabilities satisfy
		\[
		\Pi_{ij}^{(t)}
		=
		\alpha_{ij}^{(t)}
		+
		\bigl(
		1-\alpha_{ij}^{(t)}-\beta_{ij}^{(t)}
		\bigr)
		\Pi_{ij}^{(t-1)}.
		\]
		By the definition of the completed diagonal, the same recursion
		holds for \(i=j\). Hence,
		$$
		\mathbf\Pi^{(t)}=\mathbf P_Y^{(t)}+\bigl(\mathbf 1\mathbf 1^\top-\mathbf P_Y^{(t)}-\mathbf P_Z^{(t)}\bigr)\circ \mathbf\Pi^{(t-1)}.
		$$
		By Theorem~5.1.7 of \citet{horn1994topics},
		$\operatorname{rank}(\mathbf A\circ \mathbf B)
		\le \operatorname{rank}(\mathbf A)\operatorname{rank}(\mathbf B)$.
		Moreover,
		$$
		\operatorname{rank}(\mathbf P_Y^{(t)})\le d,\qquad
		\operatorname{rank}(\mathbf P_Z^{(t)})\le d,\qquad
		\operatorname{rank}(\mathbf 1\mathbf 1^\top-\mathbf P_Y^{(t)}-\mathbf P_Z^{(t)})\le 1+2d.
		$$
		Set
		$$
		\bar r_0=r_0,
		\qquad
		\bar r_t=d+(1+2d)\bar r_{t-1},
		\quad t\ge1.
		$$
		Then $\operatorname{rank}(\mathbf\Pi^{(t)})\le\bar r_t$ for each fixed $t$.
		
		We further show by induction that $\mathbf\Pi^{(t)}$ admits a finite-rank decomposition
		$$
		\mathbf\Pi^{(t)}=\sum_{r=1}^{\bar r_t}\mathbf c_r^{(t)}\mathbf d_r^{(t)\top},
		\qquad
		\|\mathbf c_r^{(t)}\|_\infty,\|\mathbf d_r^{(t)}\|_\infty\le C_t
		$$
		for some finite $C_t$.
		The case $t=0$ holds by Condition~\ref{ass:initial-finite-rank}.
		Assume the claim holds for $t-1$, i.e. $\mathbf\Pi^{(t-1)}=\sum_{r=1}^{\bar r_{t-1}}\mathbf c_r^{(t-1)}\mathbf d_r^{(t-1)\top}$ with
		$\|\mathbf c_r^{(t-1)}\|_\infty,\|\mathbf d_r^{(t-1)}\|_\infty\le C_{t-1}$.
		
		Write
		$$
		\mathbf P_Y^{(t)}=\rho_Y\mathbf L_Y^{(t)}\mathbf L_Y^{(t)\top}
		=\rho_Y\sum_{k=1}^d \mathbf u_k^{(t)}\mathbf u_k^{(t)\top},
		\qquad
		\mathbf P_Z^{(t)}=\rho_Z\mathbf L_Z^{(t)}\mathbf L_Z^{(t)\top}
		=\rho_Z\sum_{k=1}^d \mathbf w_k^{(t)}\mathbf w_k^{(t)\top},
		$$
		where $\mathbf u_k^{(t)}$ and $\mathbf w_k^{(t)}$ are the $k$th columns of $\mathbf L_Y^{(t)}$ and $\mathbf L_Z^{(t)}$.
		Using Lemma~\ref{lem:diag-hadamard} (Hadamard product of outer products),
		\begin{align*}
			\mathbf\Pi^{(t)}
			=&\mathbf P_Y^{(t)}+\mathbf\Pi^{(t-1)}-\mathbf P_Y^{(t)}\circ\mathbf\Pi^{(t-1)}-\mathbf P_Z^{(t)}\circ\mathbf\Pi^{(t-1)}\\
			=&\sum_{k=1}^d (\sqrt{\rho_Y}\mathbf u_k^{(t)})(\sqrt{\rho_Y}\mathbf u_k^{(t)})^\top
			+\sum_{r=1}^{\bar r_{t-1}}\mathbf c_r^{(t-1)}\mathbf d_r^{(t-1)\top}\\
			&-\sum_{k=1}^d\sum_{r=1}^{\bar r_{t-1}}
			(\sqrt{\rho_Y}\mathbf u_k^{(t)}\odot \mathbf c_r^{(t-1)})
			(\sqrt{\rho_Y}\mathbf u_k^{(t)}\odot \mathbf d_r^{(t-1)})^\top\\
			&-\sum_{k=1}^d\sum_{r=1}^{\bar r_{t-1}}
			(\sqrt{\rho_Z}\mathbf w_k^{(t)}\odot \mathbf c_r^{(t-1)})
			(\sqrt{\rho_Z}\mathbf w_k^{(t)}\odot \mathbf d_r^{(t-1)})^\top.
		\end{align*}
		Under $\|\boldsymbol\ell_{Y,i}^{(t)}\|_\infty,\|\boldsymbol\ell_{Z,i}^{(t)}\|_\infty\le \ell$ and $\rho_Y,\rho_Z\le 1$, we have
		$\|\sqrt{\rho_Y}\mathbf u_k^{(t)}\|_\infty,\|\sqrt{\rho_Z}\mathbf w_k^{(t)}\|_\infty\le \ell$ and
		$\|\sqrt{\rho_Y}\mathbf u_k^{(t)}\odot \mathbf c_r^{(t-1)}\|_\infty\le \ell C_{t-1}$ (similarly for $\mathbf w_k^{(t)}$).
		Hence $\mathbf\Pi^{(t)}$ admits the desired finite-rank decomposition with some finite
		$C_t\le \max\{\ell,C_{t-1},\ell C_{t-1}\}$.
		
		\textbf{(i) For $\mathbf Y$.}
		Since $\mathbf Q_{\mathbf Y}^{(t)}=\mathbf 1\mathbf 1^\top-\mathbf\Pi^{(t-1)}$, write
		$$
		\mathbf Q_{\mathbf Y}^{(t)}=\sum_{r=1}^{m_{Y,t}}\mathbf a_{Y,r}^{(t)}\mathbf b_{Y,r}^{(t)\top},
		\qquad
		m_{Y,t}=1+\bar r_{t-1},
		$$
		where $\mathbf a_{Y,r}^{(t)}, \mathbf b_{Y,r}^{(t)} \in \mathbb{R}^p$ and $\|\mathbf a_{Y,r}^{(t)}\|_{\infty},\|\mathbf b_{Y,r}^{(t)}\|_{\infty}<C_Q$ with $C_Q = \max\{\max_t\{C_t\}, 1\}$.
		Lemma~\ref{lem:diag-hadamard} yields
		$$
		\mathbf S_{\mathbf Y}^{(t)}
		=\big(\rho_Y\mathbf L_Y^{(t)}\mathbf L_Y^{(t)\top}\big)\circ \mathbf Q_{\mathbf Y}^{(t)}
		=\rho_Y\sum_{r=1}^{m_{Y,t}}
		\big(\operatorname{diag}(\mathbf a_{Y,r}^{(t)})\mathbf L_Y^{(t)}\big)
		\big(\operatorname{diag}(\mathbf b_{Y,r}^{(t)})\mathbf L_Y^{(t)}\big)^\top.
		$$
		Define
		$$
		\mathbf A_{Y,r}^{(t)}:=\sqrt{\rho_Y}\operatorname{diag}(\mathbf a_{Y,r}^{(t)})\mathbf L_Y^{(t)}\in \mathbb{R}^{p \times d},\qquad
		\mathbf B_{Y,r}^{(t)}:=\sqrt{\rho_Y}\operatorname{diag}(\mathbf b_{Y,r}^{(t)})\mathbf L_Y^{(t)}\in \mathbb{R}^{p \times d},
		$$
		and concatenate
		$$
		\mathbf U_{Y,1}^{(t)}:=[\mathbf A_{Y,1}^{(t)},\ldots,\mathbf A_{Y,m_{Y,t}}^{(t)}]\in\mathbb R^{p\times dm_{Y,t}},\qquad
		\mathbf U_{Y,2}^{(t)}:=[\mathbf B_{Y,1}^{(t)},\ldots,\mathbf B_{Y,m_{Y,t}}^{(t)}]\in\mathbb R^{p\times dm_{Y,t}},
		$$
		so that $\mathbf S_{\mathbf Y}^{(t)}=\mathbf U_{Y,1}^{(t)}\mathbf U_{Y,2}^{(t)\top}$.
		
		Define $\overline{K}_Y =d\sum_{t=1}^n(1+\bar r_{t-1})$ and
		$\overline{\mathbf{U}}_{Y,1}:=\bigl[\mathbf{U_{Y,1}}^{(1)},\ldots,\mathbf{U_{Y,1}}^{(n)}\bigr]\in\mathbb R^{p\times \overline K_Y},
		\overline{\mathbf{U}}_{Y,2}:=\mathbf{U_{Y,2}}^{(1)}\oplus\cdots\oplus \mathbf{U_{Y,2}}^{(n)}\in\mathbb R^{np\times \overline K_Y},$
		so that
		$$\mathbf S_Y=[\mathbf S_Y^{(1)}|\cdots|\mathbf S_Y^{(n)}]=\overline{\mathbf{U}}_{Y,1}\overline{\mathbf{U}}_{Y,2}^\top.$$

		\textbf{(ii) For $\mathbf Z$.}
		Here $\mathbf Q_{\mathbf Z}^{(t)}=\mathbf\Pi^{(t-1)}=\sum_{r=1}^{\bar r_{t-1}}\mathbf c_r^{(t-1)}\mathbf d_r^{(t-1)\top}$.
		Then
		$$
		\mathbf S_{\mathbf Z}^{(t)}=\big(\rho_Z\mathbf L_Z^{(t)}\mathbf L_Z^{(t)\top}\big)\circ \mathbf Q_{\mathbf Z}^{(t)}
		=\sum_{r=1}^{\bar r_{t-1}}\mathbf C_{Z,r}^{(t)}\mathbf D_{Z,r}^{(t)\top},
		$$
		where we define
		$$
		\mathbf C_{Z,r}^{(t)}:=\sqrt{\rho_Z}\operatorname{diag}(\mathbf c_r^{(t-1)})\mathbf L_Z^{(t)}\in \mathbb{R}^{p \times d},\qquad
		\mathbf D_{Z,r}^{(t)}:=\sqrt{\rho_Z}\operatorname{diag}(\mathbf d_r^{(t-1)})\mathbf L_Z^{(t)}\in \mathbb{R}^{p \times d}.
		$$
		Concatenate
		$$
		\mathbf U_{Z,1}^{(t)}:=[\mathbf C_{Z,1}^{(t)},\ldots,\mathbf C_{Z,\bar r_{t-1}}^{(t)}]\in\mathbb R^{p\times d\bar r_{t-1}},\qquad
		\mathbf U_{Z,2}^{(t)}:=[\mathbf D_{Z,1}^{(t)},\ldots,\mathbf D_{Z,\bar r_{t-1}}^{(t)}]\in\mathbb R^{p\times d\bar r_{t-1}},
		$$
		so that $\mathbf S_{\mathbf Z}^{(t)}=\mathbf U_{Z,1}^{(t)}\mathbf U_{Z,2}^{(t)\top}$.
		
		Define $\overline K_Z =d\sum_{t=1}^n\bar r_{t-1}$ and
		$\overline{\mathbf{U}}_{Z,1}:=\bigl[\mathbf{U_{Z,1}}^{(1)},\ldots,\mathbf{U_{Z,1}}^{(n)}\bigr]\in\mathbb R^{p\times \overline K_Z},
		\overline{\mathbf{U}}_{Z,2}:=\mathbf{U_{Z,2}}^{(1)}\oplus\cdots\oplus \mathbf{U_{Z,2}}^{(n)}\in\mathbb R^{np\times \overline K_Z},$
		so that
		$$\mathbf{S_Z}=[\mathbf{S_Z}^{(1)}|\cdots|\mathbf{S_Z}^{(n)}]=\overline{\mathbf{U}}_{Z,1}\overline{\mathbf{U}}_{Z,2}^\top.$$
		
	\end{proof}
	\subsubsection*{S.1.2 Proof of Theorem \ref{thm1}}
	The following propositions and proofs are stated for $\{\mathbf Y^{(t)}\}_{t=1}^n$.
	The corresponding results for $\{\mathbf Z^{(t)}\}_{t=1}^n$ follow by the same arguments and are therefore omitted.
	 The Bernstein and Hoeffding bounds below can then be taken with constants depending on any fixed \(c_0>0\), and a finite union bound over ranks, blocks and leave-one-row indices gives failure probability \(O(p^{-c_0})\). The \(O_{\mathbb P}\) displays are shorthand for these high-probability bounds.

	\begin{lemma}\label{lem:fixed-n-dyad-block-concentration}
		Let $\mathbf S_{Y,0}$ be the off-diagonal part of $\mathbf S_Y$ in
		the unfolded $Y$ channel.  For every fixed \(c_0>0\), there is a constant
		\(C_{c_0}<\infty\) such that,
		\[
		\mathbb P\!\left(
		\|\mathbf Y-\mathbf S_{Y,0}\|
		>
		C_{c_0}(p\rho_Y\log p)^{1/2} \right)
		\le
		C_{c_0}p^{-c_0}.
		\]
		The same statement holds for the $Z$ channel with $\rho_Y$ replaced by
		$\rho_Z$.
	\end{lemma}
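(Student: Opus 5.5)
The plan is to handle the temporal dependence by grouping all $n$ layers of one dyad into a single summand. Independence across dyads then holds, and the matrix Bernstein inequality (Lemma~\ref{lem:matrixBernstein}) applies directly to the unfolded $p\times np$ matrix.

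\textbf{Step 1: dyad decomposition.} For each dyad $i<j$, let $\mathbf M_{ij}\in\mathbb R^{p\times np}$ be supported on the entries $(i,(t-1)p+j)$ and $(j,(t-1)p+i)$, $t\in[n]$. These entries equal $Y_{ij}^{(t)}-\mathbb E Y_{ij}^{(t)}$. Then
\[
\mathbf Y-\mathbf S_{Y,0}=\sum_{i<j}\mathbf M_{ij},
\]
since $\mathbf Y^{(t)}$ has zero diagonal and $\mathbf S_{Y,0}$ is the off-diagonal part, whose entries are the unconditional means of $\mathbf Y^{(t)}$. The vector $(Y_{ij}^{(1)},\dots,Y_{ij}^{(n)})$ is a function of $X_{ij}^{(0)}$ and the innovations $\varepsilon_{ij}^{(1)},\dots,\varepsilon_{ij}^{(n)}$. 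By Definition~\ref{def:ar1-network} these are independent across dyads, so the $\mathbf M_{ij}$ are independent and mean zero. This dyad-level grouping is the key point: within a dyad the $n$ coordinates are arbitrarily dependent (a Markov chain), but this does not matter because they sit inside one summand. Each entry lies in $[-1,1]$, so $\|\mathbf M_{ij}\|\le\|\mathbf M_{ij}\|_F\le\sqrt{2n}=:L$.

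\textbf{Step 2: variance statistic.} First consider $\mathbb E[\mathbf M\mathbf M^\top]$ with $\mathbf M=\sum_{i<j}\mathbf M_{ij}$. Rows $i$ and $j$ of $\mathbf M_{ij}$ have disjoint column supports, so each $\mathbf M_{ij}\mathbf M_{ij}^\top$ is diagonal. Its $(i,i)$ entry is $\sum_t (Y_{ij}^{(t)}-\mathbb EY_{ij}^{(t)})^2$, whose expectation is at most $\sum_t\mathbb E Y_{ij}^{(t)}\le n\rho_Y d\ell^2$, using $\mathbb E Y_{ij}^{(t)}\le\alpha_{ij}^{(t)}\le\rho_Y d\ell^2$. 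Summing over $j$ gives $\|\mathbb E\mathbf M\mathbf M^\top\|\le n d\ell^2 p\rho_Y$.

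Next consider $\mathbb E[\mathbf M^\top\mathbf M]$. Its entry at $((t-1)p+j,(s-1)p+j')$ vanishes unless $j=j'$. It then equals
\[
\sum_{i\neq j}\operatorname{Cov}\bigl(Y_{ij}^{(t)},Y_{ij}^{(s)}\bigr),
\]
and each covariance is bounded by $\mathbb E Y^{(t)}_{ij}+\mathbb EY^{(t)}_{ij}\mathbb EY^{(s)}_{ij}\le 2d\ell^2\rho_Y$. After a permutation the matrix is block diagonal with $p$ blocks of size $n\times n$, each entry at most $2d\ell^2p\rho_Y$. Hence $\|\mathbb E\mathbf M^\top\mathbf M\|\le 2n d\ell^2 p\rho_Y$. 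So $v(\mathbf M)\le C_1 n\,p\rho_Y$ with $C_1$ depending only on $d,\ell$.

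\textbf{Step 3: apply Bernstein.} Take $\gamma=C(p\rho_Y\log p)^{1/2}$ in Lemma~\ref{lem:matrixBernstein}. The exponent is
\[
\frac{\gamma^2/2}{v+L\gamma/3}\ \ge\ \frac{C^2p\rho_Y\log p/2}{C_1np\rho_Y+\sqrt{2n}\,C(p\rho_Y\log p)^{1/2}/3}.
\]
The second term of the denominator is $o(p\rho_Y)$, because $p\rho_Y/\log p\to\infty$. This holds in the dense regime trivially and in the sparse regime by Condition~\ref{ass:transition-sparsity}, since $\sqrt p\rho_Y/\log p\to\infty$ implies it. Therefore the exponent is at least $C^2\log p/(4C_1n)$ for large $p$. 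The dimensional factor is $(n+1)p$, with $n$ fixed. Choosing $C^2\ge 4C_1n(c_0+1)$ gives a failure probability of at most $(n+1)p^{-c_0}$. For small $p$ this is absorbed into $C_{c_0}$. The $Z$ channel is identical: $\mathbb EZ_{ij}^{(t)}\le\beta_{ij}^{(t)}\le\rho_Zd\ell^2$.

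The only delicate step is Step 2's bound on $\mathbb E\mathbf M^\top\mathbf M$, where the cross-time covariances appear. Because $n$ is fixed, a crude entrywise bound on each $n\times n$ block suffices, and no mixing argument is needed.
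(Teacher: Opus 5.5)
Your proposal is correct and follows essentially the same route as the paper: each unordered dyad's $n$ transition coordinates form one independent $p\times np$ summand with norm $O(\sqrt n)$, both variance terms are bounded by $O(np\rho_Y)$ (the cross-time covariances treated as fixed-size $n\times n$ blocks), and matrix Bernstein is applied at level $(p\rho_Y\log p)^{1/2}$ using $p\rho_Y/\log p\to\infty$. Your explicit verification that $\mathbb E\mathbf M\mathbf M^\top$ is diagonal and $\mathbb E\mathbf M^\top\mathbf M$ is block diagonal after permutation fills in details the paper states only briefly.
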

	\begin{proof}
		Write
		\(
		\mathbf M:=\mathbf Y-\mathbf S_{Y,0}=\sum_{1\le i<j\le p}\mathbf M_{ij},
		\)
		where $\mathbf M_{ij}\in\mathbb R^{p\times np}$ is the dyad block whose
		$t$th $p\times p$ block has nonzero entries only at $(i,j)$ and $(j,i)$, equal
		to $(\mathbf Y^{(t)})_{ij}-(\mathbf S_Y^{(t)})_{ij}$.  The dyad blocks are independent across unordered pairs, while
		within a dyad the coordinates over $t=1,\ldots,n$ may be dependent.  The proof
		therefore treats each unordered dyad as one matrix-valued summand. 
		
		Each block satisfies $\|\mathbf M_{ij}\|\le \|\mathbf M_{ij}\|_F\le C\sqrt n$.
		For the row variance term, only rows $i$ and $j$ can be nonzero, and
		$\operatorname{Var}\big((\mathbf Y^{(t)})_{ij}\big)\le C\rho_Y$.  Hence
		\(
		\left\|\sum_{i<j}\mathbb E(\mathbf M_{ij}\mathbf M_{ij}^{\top})\right\|
		\le Cnp\rho_Y.
		\)
		For the column variance term, $\mathbb E(\mathbf M_{ij}^{\top}\mathbf M_{ij})$
		can contain covariance terms across transition times for the same dyad.  These
		terms form fixed-size $n\times n$ covariance blocks supported on the two endpoint
		coordinates.  Since $n$ is fixed and the transition-event variances are bounded
		by $C\rho_Y$, their operator norms are bounded by $Cn\rho_Y$, and therefore
		\(
		\left\|\sum_{i<j}\mathbb E(\mathbf M_{ij}^{\top}\mathbf M_{ij})\right\|
		\le Cnp\rho_Y.
		\)
		Thus the matrix Bernstein variance is $v(\mathbf M)\le Cnp\rho_Y$ and the
		uniform summand bound is $L\le C\sqrt n$.  Matrix Bernstein gives
		\[
		\mathbb P\{\|\mathbf M\|\ge x\}
		\le (n+1)p
		\exp\left[-\frac{x^2/2}{Cnp\rho_Y+C\sqrt n\,x}\right].
		\]
		Taking $x=C_1(np\rho_Y\log p)^{1/2}$ with \(C_1\) large enough for the prescribed
		\(c_0\), and using $p\rho_Y/\log p\to\infty$, gives the displayed
polynomial tail, with constants depending on fixed $n$.  The same dyad-block
		construction applied to $\mathbf Z-\mathbf S_{Z,0}$ gives the $Z$ statement
		because $Z$ is the dissolution event matrix and its transition-event variances
		are bounded by $C\rho_Z$.
	\end{proof}
	\begin{proposition}\label{prop:boundy-p}
		For every fixed \(c_0>0\), there is a constant \(C_{c_0}<\infty\) such that,
		\[
		\mathbb P\!\left(
		\|\mathbf{Y}-\mathbf S_Y\|
		>
		C_{c_0}(\rho_Yp\log p)^{1/2} \right)
		\le
		C_{c_0}p^{-c_0}.
		\]
	\end{proposition}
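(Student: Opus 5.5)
The plan is to split $\mathbf Y-\mathbf S_Y$ into an off-diagonal fluctuation and a deterministic diagonal remainder:
\[
\mathbf Y-\mathbf S_Y=(\mathbf Y-\mathbf S_{Y,0})-(\mathbf S_Y-\mathbf S_{Y,0}).
\]
Here $\mathbf S_{Y,0}$ is the off-diagonal part of $\mathbf S_Y$ from Lemma~\ref{lem:fixed-n-dyad-block-concentration}. The first term is handled by that lemma: with probability at least $1-C_{c_0}p^{-c_0}$,
\[
\|\mathbf Y-\mathbf S_{Y,0}\|\le C_{c_0}(p\rho_Y\log p)^{1/2}.
\]
So it remains to bound the second term deterministically.

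The second term $\mathbf S_Y-\mathbf S_{Y,0}$ is the unfolded matrix whose $t$th $p\times p$ block is diagonal. Its entries are
\[
(\mathbf S_Y^{(t)})_{ii}=\rho_Y\|\boldsymbol\ell_{Y,i}^{(t)}\|_2^2\,\bigl(1-\Pi_{ii}^{(t-1)}\bigr),
\]
where $\Pi_{ii}^{(t-1)}$ is the completed diagonal. The key point is that these entries are $O(\rho_Y)$. We have $\|\boldsymbol\ell_{Y,i}^{(t)}\|_2^2\le d\ell^2$, and we need $|1-\Pi_{ii}^{(t-1)}|$ bounded uniformly in $i$ and $t\le n$. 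This follows from the diagonal recursion. Its coefficients are bounded because $\rho_Y\|\boldsymbol\ell_{Y,i}\|^2$ and $\rho_Z\|\boldsymbol\ell_{Z,i}\|^2$ are at most $d\ell^2$. The starting value $\Pi_{ii}^{(0)}$ is bounded by $r_0C_0^2$ under Condition~\ref{ass:initial-finite-rank}. Since $n$ is fixed, induction gives $|\Pi_{ii}^{(t)}|\le C_n$. Alternatively, the constant $C_t$ decomposition in the proof of Proposition~\ref{prop:u1u2} gives $|\Pi_{ii}^{(t)}|\le \bar r_tC_t^2$ directly.

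For a block-diagonal row concatenation $[\mathbf D^{(1)}|\cdots|\mathbf D^{(n)}]$ of diagonal matrices, the operator norm is at most $\sqrt n\max_t\max_i|D^{(t)}_{ii}|$. The reason is that each row has at most $n$ nonzero entries, in distinct columns. Hence
\[
\|\mathbf S_Y-\mathbf S_{Y,0}\|\le C\sqrt n\,\rho_Y=O(\rho_Y).
\]

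By the triangle inequality,
\[
\|\mathbf Y-\mathbf S_Y\|\le C_{c_0}(p\rho_Y\log p)^{1/2}+C\rho_Y.
\]
Since $\rho_Y\le1$ and $p\log p\ge1$, we have $\rho_Y\le(p\rho_Y\log p)^{1/2}$, so the diagonal term is absorbed after enlarging the constant. This gives the stated bound with the same probability.

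The main obstacle is only the uniform boundedness of the completed diagonal $\Pi_{ii}^{(t)}$. Unlike the off-diagonal entries, it is not a probability, so it need not lie in $[0,1]$. The fixed-$n$ induction above handles it. The $Z$ channel is identical, with diagonal entries $\rho_Z\|\boldsymbol\ell_{Z,i}^{(t)}\|^2\Pi_{ii}^{(t-1)}$.
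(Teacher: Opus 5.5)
Your proof is correct and follows the paper's argument. You use the same split into the off-diagonal fluctuation, controlled by Lemma~\ref{lem:fixed-n-dyad-block-concentration}, plus a deterministic $O(\rho_Y)$ diagonal block absorbed by the triangle inequality; your fixed-$n$ induction bounding the completed diagonal $\Pi_{ii}^{(t)}$ supplies a detail the paper only asserts through ``boundedness of the latent positions.''
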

	\begin{proof}
		By construction, $\mathbf{Y}^{(t)}$ has zero diagonal while $\mathbf S_Y^{(t)}$ may have a nonzero diagonal.  Let
		$\mathbf{D_Y}^{(t)}:=\operatorname{diag}(\mathbf S_Y^{(t)})$,
		$\mathbf{D_Y}:=[\mathbf{D_Y}^{(1)}|\cdots|\mathbf{D_Y}^{(n)}]$, and
		$\mathbf S_{Y,0}:=\mathbf S_Y-\mathbf D_Y$.  Then
		\(
		\mathbf Y-\mathbf S_Y=(\mathbf Y-\mathbf S_{Y,0})-\mathbf D_Y.
		\)
		By boundedness of the latent positions and $\rho_Y\in(0,1]$,
		$\|\mathbf D_Y\|=O(\rho_Y)$.
		Lemma~\ref{lem:fixed-n-dyad-block-concentration} gives the displayed
tail bound for $\|\mathbf Y-\mathbf S_{Y,0}\|$.
		Since $p\rho_Y/\log p\to\infty$, the diagonal term is dominated by this rate,
		which proves the proposition.
	\end{proof}
	\begin{proposition}\label{prop:sigmapy}
		The $K_Y$ leading singular values $\sigma_i(\mathbf{Y})$ satisfy $\sigma_i\left(\mathbf{Y}\right)=O_{\mathbb P}\left(p \rho_Y\right)$ and $\sigma_i\left(\mathbf{Y}\right)=\Omega_{\mathbb P}\left(p \rho_Y\right).$
	\end{proposition}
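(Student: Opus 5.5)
The plan is to apply Weyl's inequality for singular values to the perturbation $\mathbf Y=\mathbf S_Y+(\mathbf Y-\mathbf S_Y)$. For every $i\le K_Y$,
\[
|\sigma_i(\mathbf Y)-\sigma_i(\mathbf S_Y)|\le\|\mathbf Y-\mathbf S_Y\|.
\]
Two earlier results supply the inputs. Condition~\ref{ass:3} places all of $\sigma_1(\mathbf S_Y),\ldots,\sigma_{K_Y}(\mathbf S_Y)$ in the interval $[c_{Y,S}p\rho_Y,\,C_{Y,S}p\rho_Y]$. Proposition~\ref{prop:boundy-p} gives, with probability at least $1-C_{c_0}p^{-c_0}$,
\[
\|\mathbf Y-\mathbf S_Y\|\le C_{c_0}(\rho_Yp\log p)^{1/2}.
\]

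The key step is to compare this perturbation with the signal scale. We have
\[
\frac{(\rho_Yp\log p)^{1/2}}{p\rho_Y}=\Bigl(\frac{\log p}{p\rho_Y}\Bigr)^{1/2}.
\]
In the dense regime $p\rho_Y\asymp p$, so this ratio tends to zero. In the sparse regime, Condition~\ref{ass:transition-sparsity} gives $\sqrt p\,\rho_Y/\log p\to\infty$, hence $p\rho_Y/\log p\ge \sqrt p\cdot(\sqrt p\rho_Y/\log p)\to\infty$, and the ratio again tends to zero. So for all sufficiently large $p$, on the high-probability event the perturbation is at most $\tfrac12 c_{Y,S}p\rho_Y$.

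It follows that, on that event and uniformly in $i\le K_Y$,
\[
\tfrac12 c_{Y,S}\,p\rho_Y\le\sigma_i(\mathbf Y)\le (C_{Y,S}+\tfrac12 c_{Y,S})\,p\rho_Y.
\]
This gives both $\sigma_i(\mathbf Y)=O_{\mathbb P}(p\rho_Y)$ and $\sigma_i(\mathbf Y)=\Omega_{\mathbb P}(p\rho_Y)$, in the strong form with failure probability $O(p^{-c_0})$.

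The argument for $\mathbf Z$ is identical, with $\rho_Z$, $c_{Z,S}$ and $C_{Z,S}$ in place of the $Y$ quantities.

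The only step that needs care is confirming that the sparsity rate makes the perturbation $o(p\rho_Y)$. This holds with room to spare, since only $p\rho_Y/\log p\to\infty$ is needed and Condition~\ref{ass:transition-sparsity} is stronger. The probabilistic work is already done in Proposition~\ref{prop:boundy-p}, so I do not expect a genuine obstacle.
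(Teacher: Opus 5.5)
Your proposal is correct and takes the same approach as the paper: Weyl's inequality for singular values (the paper cites Horn--Johnson Corollary 7.3.5), with Condition~\ref{ass:3} bounding the population singular values and Proposition~\ref{prop:boundy-p} bounding the perturbation. You also spell out why the sparsity condition makes the perturbation $o(p\rho_Y)$, a step the paper leaves implicit.
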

	\begin{proof}
		By Corollary 7.3.5 of \citet{horn2012matrix}, the ordered singular values of
		two matrices satisfy
		$|\sigma_i(\mathbf A)-\sigma_i(\mathbf B)|\le \|\mathbf A-\mathbf B\|_2$.
		Hence, for each $i\le K_Y$,
		\[
		\left|\sigma_i(\mathbf Y)-\sigma_i(\mathbf S_Y)\right|
		\le
		\|\mathbf Y-\mathbf S_Y\|.
		\]
		The conclusion follows from Condition~\ref{ass:3}
		and Proposition~\ref{prop:boundy-p}.
	\end{proof}

	\begin{proposition}\label{prop:uy-pv}
		$\left\|\mathbf{U}_{S_Y}^{\top}(\mathbf{Y}-\mathbf S_Y) \mathbf{V}_{S_Y}\right\|_F=O_{\mathbb P}\left(\log ^{1 / 2}(p)\right)$.
	\end{proposition}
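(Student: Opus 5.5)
The plan is to split $\mathbf Y-\mathbf S_Y=(\mathbf Y-\mathbf S_{Y,0})-\mathbf D_Y$, as in the proof of Proposition~\ref{prop:boundy-p}, and treat the two pieces separately. The diagonal piece is deterministic. Since $\mathbf U_{S_Y}$ and $\mathbf V_{S_Y}$ have orthonormal columns,
\[
\|\mathbf U_{S_Y}^\top\mathbf D_Y\mathbf V_{S_Y}\|_F\le \sqrt{K_Y}\,\|\mathbf D_Y\|=O(\rho_Y),
\]
which is $O(1)$ and hence negligible.

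For the random piece, note that the Frobenius norm of a $K_Y\times K_Y$ matrix is at most $K_Y$ times its largest entry. It therefore suffices to bound each entry
\[
\mathbf u_k^\top(\mathbf Y-\mathbf S_{Y,0})\mathbf v_l,\qquad k,l\le K_Y,
\]
where $\mathbf u_k$ is a column of $\mathbf U_{S_Y}$ and $\mathbf v_l=(\mathbf v_l^{(1)};\ldots;\mathbf v_l^{(n)})$ is a column of $\mathbf V_{S_Y}$. We expand this entry as a sum over unordered dyads:
\[
\sum_{i<j}\sum_{t=1}^n\bigl\{Y_{ij}^{(t)}-(\mathbf S_Y^{(t)})_{ij}\bigr\}\bigl(u_{k,i}v_{l,j}^{(t)}+u_{k,j}v_{l,i}^{(t)}\bigr)=:\sum_{i<j}\xi_{ij}.
\]
The summands $\xi_{ij}$ are independent across dyads because the dyad processes are independent, and each has mean zero because the off-diagonal entries of $\mathbf S_Y^{(t)}$ are the unconditional means of $\mathbf Y^{(t)}$. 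The temporal dependence within a dyad is absorbed into the single summand $\xi_{ij}$, exactly as in Lemma~\ref{lem:fixed-n-dyad-block-concentration}.

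We then apply scalar Bernstein, which needs bounds on the variance and on each summand. For the variance, Cauchy--Schwarz over the fixed number $n$ of times and $\operatorname{Var}(Y_{ij}^{(t)})\le C\rho_Y$ give
\[
\operatorname{Var}(\xi_{ij})\le Cn\rho_Y\sum_t\bigl(u_{k,i}v_{l,j}^{(t)}+u_{k,j}v_{l,i}^{(t)}\bigr)^2 .
\]
Summing over dyads and using $\|\mathbf u_k\|_2=\|\mathbf v_l\|_2=1$ yields total variance $O(\rho_Y)\le O(1)$. For the summand size, $|\xi_{ij}|\le C\sum_t(|u_{k,i}||v_{l,j}^{(t)}|+|u_{k,j}||v_{l,i}^{(t)}|)$, which is controlled by the delocalization bounds $\|\mathbf U_{S_Y}\|_{2\to\infty}$ and $\|\mathbf V_{S_Y}\|_{2\to\infty}$.

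These delocalization bounds are the main obstacle. From Proposition~\ref{prop:u1u2}, $\mathbf U_{S_Y}=\mathbf S_Y\mathbf V_{S_Y}\boldsymbol\Sigma_{S_Y}^{-1}$, and the factorisation $\overline{\mathbf U}_{Y,1}\overline{\mathbf U}_{Y,2}^\top$ has entries of order $\sqrt{\rho_Y}$. Combining this with Condition~\ref{ass:3}, I would aim for
\[
\|\mathbf U_{S_Y}\|_{2\to\infty},\ \|\mathbf V_{S_Y}\|_{2\to\infty}\le C p^{-1/2}.
\]
The argument is that each row of $\mathbf S_Y$ has norm $O(\sqrt{np}\,\rho_Y)$ while $\sigma_{K_Y}(\mathbf S_Y)\gtrsim p\rho_Y$, so each row of $\mathbf U_{S_Y}$ is $O(p^{-1/2})$; the same reasoning applied to columns handles $\mathbf V_{S_Y}$. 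This gives $L=O(1/p)$.

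Bernstein then gives, for each entry,
\[
\mathbb P(|\textstyle\sum\xi_{ij}|>x)\le 2\exp\{-x^2/(C+Cx/p)\}.
\]
Taking $x=C\sqrt{c_0\log p}$ makes the failure probability at most $p^{-c_0}$. A union bound over the $K_Y^2$ entries, a number bounded by Proposition~\ref{prop:u1u2}, gives $O_{\mathbb P}(\log^{1/2}p)$; in fact the bound is $O_{\mathbb P}(\sqrt{\rho_Y\log p})$, which is stronger. The $Z$ channel is identical, with $\rho_Z$ in place of $\rho_Y$. Should the delocalization bound fail for some degenerate latent configuration, the fallback is the crude bound $|\xi_{ij}|\le C\sqrt n$; with variance $O(\rho_Y)$, Bernstein then still gives $O(\log p)$, and the sharper $\sqrt{\log p}$ rate genuinely requires the $2\to\infty$ control.
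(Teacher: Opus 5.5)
Your proof is correct, but its key concentration step differs from the paper's. Both arguments split off the deterministic diagonal correction, which is $O(\rho_Y)$, and then bound each of the $K_Y^2$ entries of the off-diagonal part as a sum over independent dyads. The paper bounds the upper- and lower-triangular halves separately, whereas you fold both contributions of a dyad into one summand $\xi_{ij}$; this difference is immaterial.

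The real difference is the inequality used next. The paper applies Hoeffding's inequality with summand-specific ranges: $|\xi_{ij}|\le c_{ij}:=C\sum_t(|u_{k,i}||v^{(t)}_{l,j}|+|u_{k,j}||v^{(t)}_{l,i}|)$. By Cauchy--Schwarz over the $n$ times, $\sum_{i<j}c_{ij}^2\le 2C^2n\|\mathbf u_k\|_2^2\|\mathbf v_l\|_2^2=2C^2n$. So the $\sqrt{\log p}$ rate follows from the unit column norms alone, with no delocalization. Your closing claim that this rate ``genuinely requires'' $2\to\infty$ control is therefore mistaken. Your fallback loses a factor $\sqrt{\log p}$ only because it replaces these weights by a uniform bound.

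What your Bernstein route buys is use of the variance. With total variance $O(n\rho_Y)$ and $L=O(n/p)$, you obtain the sharper $O_{\mathbb P}(\sqrt{\rho_Y\log p})$, which Hoeffding cannot detect, although Theorem~\ref{thm1} does not need it. The price is the delocalization bound. Your row-norm-over-$\sigma_{K_Y}(\mathbf S_Y)$ argument for it is exactly the one in Proposition~\ref{prop:uv2}. That argument uses only Condition~\ref{ass:3} and $0\le(\mathbf S_Y)_{ij}\le\rho_Y$, not the present proposition, so there is no circularity. It also holds for every configuration the model allows, so your fallback for ``degenerate'' configurations is unnecessary.
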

	\begin{proof}
		Write the right singular matrix in blocks
		$\mathbf V_{S_Y}^{(1)},\ldots,\mathbf V_{S_Y}^{(n)}$.  For fixed
		$(i,j)$, we can write
		\(
		\left(\mathbf{U}_{S_Y}^{\top}(\mathbf{Y}-\mathbf S_Y) \mathbf{V}_{S_Y}\right)_{i,j}
		=E_{ij}-\sum_{t=1}^nF_{ij}^t,
		\)
		where
		\(
		E_{ij}
		=
		\sum_{a=1}^p\sum_{b\ne a}E^{ij}_{ab},
		E^{ij}_{ab}
		=
		\sum_{t=1}^n
		(\mathbf{U}_{S_Y})_{ai}
		(\mathbf{Y}_{ab}^t-(\mathbf S_Y^t)_{ab})
		(\mathbf{V}_{S_Y}^{(t)})_{bj},
		\)
		and
		\(
		F_{ij}^t
		=
		\sum_{a=1}^p
		(\mathbf{U}_{S_Y})_{ai}
		(\mathbf S_Y^t)_{aa}
		(\mathbf{V}_{S_Y}^{(t)})_{aj}.
		\)
		By the Cauchy--Schwarz inequality,
\begin{align*}
			|\sum_{t=1}^nF_{ij}^t|&\leq|\sum_{t=1}^n\sum_{a=1}^p(\mathbf{U}_{S_Y})_{ai}(\mathbf S_Y^t)_{aa}(\mathbf{V}_{S_Y}^{(t)})_{aj}|\\
			&\leq \left(\sum_{t=1}^n\sum_{a=1}^p(\mathbf{U}_{S_Y})^2_{ai}(\mathbf{S_Y^t})^2_{aa}\right)^{1/2}\left(\sum_{t=1}^n\sum_{a=1}^p(\mathbf{V}_{S_Y}^{(t)})^2_{aj}\right)^{1/2}\\
			&\leq\max_{t,a}\left( \mathbf{S_Y^t}\right)_{aa} \left(\sum_{t=1}^n\sum_{a=1}^p(\mathbf{U}_{S_Y})^2_{ai}\right)^{1/2}\left(\sum_{t=1}^n\sum_{a=1}^p(\mathbf{V}_{S_Y}^{(t)})^2_{aj}\right)^{1/2}\\
			& =O_{\mathbb P}(\rho_Y).
		\end{align*}
		The sum $\sum_{a=1}^p\sum_{b=a+1}^pE^{ij}_{ab}$ is a sum of independent
		mean-zero random variables, each bounded in absolute value by
		$\sum_{t=1}^n|(\mathbf{U}_{S_Y})_{ai}(\mathbf{V}_{S_Y}^{(t)})_{bj}|$.
		By Hoeffding's inequality,
		\[
		\mathrm{P}\left(\left|S_n-\mathrm{E}\left[S_n\right]\right| \geq t\right)
		\leq
		2 \exp \left(-\frac{2 t^2}{\sum_{i=1}^n\left(b_i-a_i\right)^2}\right).
		\]
		Hence, for some constant $C_2>0$,
		\begin{align*}
			\mathbb{P}\left(\left|\sum_{a=1}^p\sum_{b=a+1}^pE^{ij}_{ab}\right| \geq C_2n\sqrt{\log(p)}\right) \leq & 2 \exp \left(-\frac{2 C_2^2n^2\log(p)}{4\sum_{a=1}^p\sum_{b=a+1}^p(\sum_{t=1}^n|(\mathbf{U}_{S_Y})_{ai}(\mathbf{V}_{S_Y}^{(t)})_{bj}|)^2}\right) \\
			\leq & 2 \exp \left(-\frac{2 C_2^2n^2\log(p)}{4n\sum_{a=1}^p|(\mathbf{U}_{S_Y})_{ai}|^2\sum_{t=1}^n\sum_{b=1}^p|(\mathbf{V}_{S_Y}^{(t)})_{bj}|^2}\right) \\
			\leq & 2 \exp \left(-\frac{2 C_2^2n^2\log(p)}{4n^2}\right).
		\end{align*}
		Thus
		$|\sum_{a=1}^p\sum_{b=a+1}^pE^{ij}_{ab}|
		=O_{\mathbb P}(\sqrt{\log p})$.  The same argument gives
		$|\sum_{b=1}^p\sum_{a=b+1}^pE^{ij}_{ab}|
		=O_{\mathbb P}(\sqrt{\log p})$.  Since the diagonal correction is
		$O_{\mathbb P}(\rho_Y)$, it follows that
		\(
		\left(\mathbf{U}_{S_Y}^{\top}(\mathbf{Y}-\mathbf S_Y) \mathbf{V}_{S_Y}\right)_{i,j}
		=
		O_{\mathbb P}(\sqrt{\log p}).
		\)
		Therefore
		\begin{align*}
			\|\mathbf{U}_{S_Y}^{\top}(\mathbf{Y}-\mathbf S_Y) \mathbf{V}_{S_Y}\|_F = \sqrt{\sum_{i,j}|\left(\mathbf{U}_{S_Y}^{\top}(\mathbf{Y}-\mathbf S_Y) \mathbf{V}_{S_Y}\right)_{i,j}|^2} = O_{\mathbb P}(\sqrt{\log p})
		\end{align*}
		and the high-probability version follows by taking the Hoeffding
		constant large enough for the prescribed \(c_0\) and applying the finite union
		bound over \(i,j\le K_Y\).
	\end{proof}

	\begin{proposition}\label{prop:bounduv}
		The following bounds hold:
		
		(i) $\left\|\mathbf{U}_{\mathbf{Y}} \mathbf{U}_{\mathbf{Y}}^{\top}-\mathbf{U}_{S_Y} \mathbf{U}_{S_Y}^{\top}\right\|,\left\|\mathbf{V}_{\mathbf{Y}} \mathbf{V}_{\mathbf{Y}}^{\top}-\mathbf{V}_{S_Y} \mathbf{V}_{S_Y}^{\top}\right\|=O_{\mathbb P}\left(\frac{ \log ^{1 / 2}(p)}{\rho_Y^{1 / 2} p^{1 / 2}}\right)$.
		
		(ii) $\left\|\mathbf{U}_{\mathbf{Y}}-\mathbf{U}_{S_Y} \mathbf{U}_{S_Y}^{\top} \mathbf{U}_{\mathbf{Y}}\right\|_F,\left\|\mathbf{V}_{\mathbf{Y}}-\mathbf{V}_{S_Y} \mathbf{V}_{S_Y}^{\top} \mathbf{V}_{\mathbf{Y}}\right\|_F=O_{\mathbb P}\left(\frac{ \log ^{1 / 2}(p)}{\rho_Y^{1 / 2}p^{1 / 2}}\right)$.
		
		(iii) $\left\|\mathbf{U}_{S_Y}^{\top} \mathbf{U}_{\mathbf{Y}} \boldsymbol{\Sigma}_{\mathbf{Y}}-\boldsymbol{\Sigma}_{S_Y} \mathbf{V}_{S_Y}^{\top} \mathbf{V}_{\mathbf{Y}}\right\|_F,\left\|\boldsymbol{\Sigma}_{S_Y} \mathbf{U}_{S_Y}^{\top} \mathbf{U}_{\mathbf{Y}}-\mathbf{V}_{S_Y}^{\top} \mathbf{V}_{\mathbf{Y}} \boldsymbol{\Sigma}_{\mathbf{Y}}\right\|_F=O_{\mathbb P}\left( \log (p)\right)$.
		
		(iv) $\left\|\mathbf{U}_{S_Y}^{\top} \mathbf{U}_{\mathbf{Y}}-\mathbf{V}_{S_Y}^{\top} \mathbf{V}_{\mathbf{Y}}\right\|_F=O_{\mathbb P}\left(\frac{ \log (p)}{\rho_Y p }\right)$
	\end{proposition}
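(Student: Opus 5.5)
This is the standard UASE perturbation argument of \citet{jones_multilayer_2021}. It combines Davis--Kahan (Lemma~\ref{lem:davis}), the spectral-norm bound of Proposition~\ref{prop:boundy-p}, the singular value scale of Proposition~\ref{prop:sigmapy} and Condition~\ref{ass:3}, and the projected-noise bound of Proposition~\ref{prop:uy-pv}. Throughout, write $\mathbf E=\mathbf Y-\mathbf S_Y$, with $\|\mathbf E\|=O_{\mathbb P}((p\rho_Y\log p)^{1/2})$ and $\sigma_{K_Y}(\mathbf S_Y)\asymp p\rho_Y$.

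For (i), apply Lemma~\ref{lem:davis} with $\mathbf M_1=\mathbf S_Y$ and $\mathbf M_2=\mathbf Y$. Since $\operatorname{rank}(\mathbf S_Y)=K_Y$, we have $\sigma_{K_Y+1}(\mathbf S_Y)=0$, so the bound on $\|\sin\Theta\|_F$ is at most $C(2p\rho_Y+\|\mathbf E\|)\|\mathbf E\|/(p\rho_Y)^2=O_{\mathbb P}((\log p/(p\rho_Y))^{1/2})$. The projection differences are controlled by $\|\sin\Theta\|$ up to constants, which gives (i) for both $\mathbf U$ and $\mathbf V$. For (ii), use $\|\mathbf U_Y-\mathbf U_{S_Y}\mathbf U_{S_Y}^\top\mathbf U_Y\|_F=\|(\mathbf I-\mathbf U_{S_Y}\mathbf U_{S_Y}^\top)\mathbf U_Y\|_F=\|\sin\Theta(\mathbf U_Y,\mathbf U_{S_Y})\|_F$. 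The bound then follows directly from (i), or from the Frobenius form of Lemma~\ref{lem:davis}, and the same argument applies to $\mathbf V$.

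For (iii), start from the identities $\mathbf U_{S_Y}^\top\mathbf Y\mathbf V_Y=\mathbf U_{S_Y}^\top\mathbf U_Y\boldsymbol\Sigma_Y$ and $\mathbf U_{S_Y}^\top\mathbf S_Y\mathbf V_Y=\boldsymbol\Sigma_{S_Y}\mathbf V_{S_Y}^\top\mathbf V_Y$. Subtracting them gives
\[
\mathbf U_{S_Y}^\top\mathbf U_Y\boldsymbol\Sigma_Y-\boldsymbol\Sigma_{S_Y}\mathbf V_{S_Y}^\top\mathbf V_Y=\mathbf U_{S_Y}^\top\mathbf E\mathbf V_Y .
\]
Decompose $\mathbf V_Y=\mathbf V_{S_Y}\mathbf V_{S_Y}^\top\mathbf V_Y+(\mathbf I-\mathbf V_{S_Y}\mathbf V_{S_Y}^\top)\mathbf V_Y$. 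The first piece is bounded by $\|\mathbf U_{S_Y}^\top\mathbf E\mathbf V_{S_Y}\|_F\|\mathbf V_{S_Y}^\top\mathbf V_Y\|=O_{\mathbb P}(\log^{1/2}p)$ by Proposition~\ref{prop:uy-pv}. The second piece is at most $\|\mathbf E\|\cdot\|(\mathbf I-\mathbf V_{S_Y}\mathbf V_{S_Y}^\top)\mathbf V_Y\|_F=O_{\mathbb P}((p\rho_Y\log p)^{1/2})\cdot O_{\mathbb P}((\log p/(p\rho_Y))^{1/2})=O_{\mathbb P}(\log p)$ by (ii). The mirror identity, obtained from $\mathbf U_Y^\top\mathbf Y=\boldsymbol\Sigma_Y\mathbf V_Y^\top$ and transposed, handles the second expression in the same way.

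For (iv), write $\mathbf A=\mathbf U_{S_Y}^\top\mathbf U_Y$ and $\mathbf B=\mathbf V_{S_Y}^\top\mathbf V_Y$. From (iii) we have $\mathbf A\boldsymbol\Sigma_Y-\boldsymbol\Sigma_{S_Y}\mathbf B=\mathbf R_1$ and $\boldsymbol\Sigma_{S_Y}\mathbf A-\mathbf B\boldsymbol\Sigma_Y=\mathbf R_2$, each $O_{\mathbb P}(\log p)$. Adding the two gives $(\mathbf A-\mathbf B)\boldsymbol\Sigma_Y+\boldsymbol\Sigma_{S_Y}(\mathbf A-\mathbf B)=\mathbf R_1+\mathbf R_2$. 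This is a Sylvester equation for $\mathbf D=\mathbf A-\mathbf B$. Entrywise, $D_{ij}(\sigma_j(\mathbf Y)+\sigma_i(\mathbf S_Y))=(\mathbf R_1+\mathbf R_2)_{ij}$. Both singular values are $\Omega_{\mathbb P}(p\rho_Y)$ by Condition~\ref{ass:3} and Proposition~\ref{prop:sigmapy}, so $\|\mathbf D\|_F=O_{\mathbb P}(\log p/(p\rho_Y))$.

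The main obstacle is (iii). The crude bound $\|\mathbf E\|$ would give only $(p\rho_Y\log p)^{1/2}$, so the gain comes from splitting $\mathbf V_Y$ into its component in the population subspace, which Proposition~\ref{prop:uy-pv} controls, and the orthogonal component, which (ii) shows is small. Finally, each $O_{\mathbb P}$ is upgraded to hold with probability $1-C_{c_0}p^{-c_0}$ by a union bound over the finitely many events involved.
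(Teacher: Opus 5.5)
Your proposal is correct and follows essentially the same route as the paper. Parts (i)--(ii) use Davis--Kahan together with the projector/$\sin\Theta$ identities. Part (iii) uses the identity $\mathbf U_{S_Y}^\top\mathbf U_Y\boldsymbol\Sigma_Y-\boldsymbol\Sigma_{S_Y}\mathbf V_{S_Y}^\top\mathbf V_Y=\mathbf U_{S_Y}^\top(\mathbf Y-\mathbf S_Y)\mathbf V_Y$, with the same split of $\mathbf V_Y$ into its component in $\operatorname{span}(\mathbf V_{S_Y})$ and the orthogonal remainder. Part (iv) uses the same relation $(\mathbf A-\mathbf B)\boldsymbol\Sigma_Y+\boldsymbol\Sigma_{S_Y}(\mathbf A-\mathbf B)=\mathbf R_1+\mathbf R_2$, solved entrywise. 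Your version also writes out two steps the paper leaves implicit or bounds more crudely: the mirrored identity for the second expression in (iii), and the direct division by $\sigma_j(\mathbf Y)+\sigma_i(\mathbf S_Y)$ in (iv).
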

	\begin{proof}
		(i) Let
		\(
		P_Y=\mathbf{U}_{\mathbf{Y}} \mathbf{U}_{\mathbf{Y}}^{\top},
		P_{S_Y}=\mathbf{U}_{S_Y} \mathbf{U}_{S_Y}^{\top}
		\)
		be the two projection matrices.  Write the singular value decomposition
		\[
		\mathbf{U}_{\mathbf{Y}}^{\top} \mathbf{U}_{S_Y}
		=
		\mathbf{W_{Y,1}} \mathbf{\Sigma}_\mathbf{U} \mathbf{W_{Y,2}}^{\top},
		\]
		where the diagonal entries of \(\mathbf\Sigma_{\mathbf U}\) are
		\((\sigma_\mathbf{U})_1, \ldots, (\sigma_\mathbf{U})_{K_Y}\).  Since the
		Frobenius norm is unitarily invariant,
		\begin{align*}
			\|\mathbf{U}_{\mathbf{Y}} \mathbf{U}_{\mathbf{Y}}^{\top}-\mathbf{U}_{S_Y} \mathbf{U}_{S_Y}^{\top}\|_F^2 
			& =\operatorname{tr}\left((\mathbf{U}_{\mathbf{Y}} \mathbf{U}_{\mathbf{Y}}^{\top}-\mathbf{U}_{S_Y} \mathbf{U}_{S_Y}^{\top})(\mathbf{U}_{\mathbf{Y}} \mathbf{U}_{\mathbf{Y}}^{\top}-\mathbf{U}_{S_Y} \mathbf{U}_{S_Y}^{\top})^{\top}\right) \nonumber\\\nonumber
			& =2 K_Y-2 \operatorname{tr}\left(\mathbf{U}_{\mathbf{Y}} \mathbf{U}_{\mathbf{Y}}^{\top} \mathbf{U}_{S_Y} \mathbf{U}_{S_Y}^{\top}\right) \\\nonumber
			& =2 K_Y-2\left\|\mathbf{U}_{\mathbf{Y}}^{\top} \mathbf{U}_{S_Y}\right\|_F^2
			=2 K_Y-2\|\mathbf{\Sigma}_\mathbf{U}\|_F^2 \\\nonumber
			& =2 K_Y-2\left\|\cos \left(\Theta_{\mathbf{U}_{\mathbf{Y}}, \mathbf{U}_{S_Y}}\right)\right\|_F^2  =2 K_Y-2\sum_{i=1}^{K_Y} \cos ^2\left(\theta^i_{\mathbf{U}_{\mathbf{Y}}, \mathbf{U}_{S_Y}}\right)  \\&
			= 2\left\|\sin \left(\Theta_{\mathbf{U}_{\mathbf{Y}}, \mathbf{U}_{S_Y}}\right)\right\|_F^2 
		\end{align*}
		where $$
		\Theta_{\mathbf{U}_{\mathbf{Y}},\mathbf{U}_{S_Y}}:=\left(\Theta_{\mathbf{U}_{\mathbf{Y}}, \mathbf{U}_{S_Y}}^1, \ldots, \Theta_{\mathbf{U}_{\mathbf{Y}}, \mathbf{U}_{S_Y}}^{K_Y}\right)^{\top}=\left(\arccos \left((\sigma_\mathbf{U})_1\right), \ldots, \arccos \left((\sigma_\mathbf{U})_{K_Y}\right)\right)^{\top} 
		$$ is the column vector of the principal angles.
		
		Furthermore, by Lemma~\ref{lem:davis} and Propositions~\ref{prop:boundy-p} and~\ref{prop:sigmapy},
		\begin{align*}
			\|\mathbf{U}_{\mathbf{Y}} \mathbf{U}_{\mathbf{Y}}^{\top}-\mathbf{U}_{S_Y} \mathbf{U}_{S_Y}^{\top}\|_F & = \sqrt{2}\left\|\sin \left(\Theta_{\mathbf{U}_{\mathbf{Y}}, \mathbf{U}_{S_Y}}\right)\right\|_F \\
			& \leq \frac{2 \sqrt{2}\sqrt{K_Y}\left(2 \sigma_1(\mathbf S_Y)+\|\mathbf{Y}-\mathbf S_Y\|\right)\|\mathbf{Y}-\mathbf S_Y\|}{\sigma_{K_Y}(\mathbf S_Y)^2}\\
			& = O_{\mathbb P}\left(\frac{ \log ^{1 / 2}(p)}{\rho_Y^{1 / 2} p^{1 / 2}}\right).
		\end{align*}
		
		(ii)  Since
		\(
		\left\|\mathbf{U}_{\mathbf{Y}}\right\|_F
		=
		\sqrt{\operatorname{trace}\left(\mathbf{U}_{\mathbf{Y}}^{\top} \mathbf{U}_{\mathbf{Y}}\right)}
=
\sqrt{K_Y}
=
O(1),
		\)
		the bound from part (i) gives
		$$
		\left\|\mathbf{U}_{\mathbf{Y}}-\mathbf{U}_{S_Y} \mathbf{U}_{S_Y}^{\top} \mathbf{U}_{\mathbf{Y}}\right\|_F=\left\|\left(\mathbf{U}_{\mathbf{Y}} \mathbf{U}_{\mathbf{Y}}^{\top}-\mathbf{U}_{S_Y} \mathbf{U}_{S_Y}^{\top}\right) \mathbf{U}_{\mathbf{Y}}\right\|_F=O_{\mathbb P}\left(\frac{ \log ^{1 / 2}(p)}{\rho_Y^{1 / 2} p^{1 / 2}}\right)
		$$
		
		(iii) Part (ii), Proposition~\ref{prop:boundy-p}, and
		Proposition~\ref{prop:uy-pv} give
		\begin{align*}
			&\left\|\mathbf{U}_{S_Y}^{\top} \mathbf{U}_{\mathbf{Y}} \boldsymbol{\Sigma}_{\mathbf{Y}}-\boldsymbol{\Sigma}_{S_Y} \mathbf{V}_{S_Y}^{\top} \mathbf{V}_{\mathbf{Y}}\right\|_F= \left\|\mathbf{U}_{S_Y}^{\top}(\mathbf{Y}-\mathbf S_Y) \mathbf{V}_{\mathbf{Y}} \right\|_F\\=& \left\|\mathbf{U}_{S_Y}^{\top}(\mathbf{Y}-\mathbf S_Y)\left(\mathbf{V}_{\mathbf{Y}}-\mathbf{V}_{S_Y} \mathbf{V}_{S_Y}^{\top} \mathbf{V}_{\mathbf{Y}}\right)  +\mathbf{U}_{S_Y}^{\top}(\mathbf{Y}-\mathbf S_Y) \mathbf{V}_{S_Y} \mathbf{V}_{S_Y}^{\top} \mathbf{V}_{\mathbf{Y}}\right\|_F \\
			\leq&\left\|\mathbf{Y}-\mathbf S_Y\right\|\left\|\left(\mathbf{V}_{\mathbf{Y}}-\mathbf{V}_{S_Y} \mathbf{V}_{S_Y}^{\top} \mathbf{V}_{\mathbf{Y}}\right)\right\|_F  +\left\|\mathbf{U}_{S_Y}^{\top}(\mathbf{Y}-\mathbf S_Y) \mathbf{V}_{S_Y}\right\|_F \left\|\mathbf{V}_{S_Y}^{\top} \mathbf{V}_{\mathbf{Y}}\right\|_F\\
			=&O_{\mathbb P}\left( \log (p)\right).
		\end{align*}
		
		(iv) Use the identity
		$$
		\begin{aligned}
			\mathbf{U}_{S_Y}^{\top} \mathbf{U}_{\mathbf{Y}}-\mathbf{V}_{S_Y}^{\top} \mathbf{V}_{\mathbf{Y}}= & \left(\left(\mathbf{U}_{S_Y}^{\top} \mathbf{U}_{\mathbf{Y}} \boldsymbol{\Sigma}_{\mathbf{Y}}-\boldsymbol{\Sigma}_{S_Y} \mathbf{V}_{S_Y}^{\top} \mathbf{V}_{\mathbf{Y}}\right)+\left(\boldsymbol{\Sigma}_{S_Y} \mathbf{U}_{S_Y}^{\top} \mathbf{U}_{\mathbf{Y}}\right.\right. \\
			& \left.\left.-\mathbf{V}_{S_Y}^{\top} \mathbf{V}_{\mathbf{Y}} \boldsymbol{\Sigma}_{\mathbf{Y}}\right)\right) \boldsymbol{\Sigma}_{\mathbf{Y}}^{-1}-\boldsymbol{\Sigma}_{S_Y}\left(\mathbf{U}_{S_Y}^{\top} \mathbf{U}_{\mathbf{Y}}-\mathbf{V}_{S_Y}^{\top} \mathbf{V}_{\mathbf{Y}}\right) \boldsymbol{\Sigma}_{\mathbf{Y}}^{-1} .
		\end{aligned}
		$$
		Using $\left|X_{i j}\right| \leq\|\mathbf{X}\|_F$,
		$$
		\begin{aligned}
			\left|\left(\mathbf{U}_{S_Y}^{\top} \mathbf{U}_{\mathbf{Y}}-\mathbf{V}_{S_Y}^{\top} \mathbf{V}_{\mathbf{Y}}\right)_{i j}\right|\left(1+\frac{\sigma_i(\mathbf S_Y)}{\sigma_j(\mathbf{Y})}\right) \leq & \left(\left\|\mathbf{U}_{S_Y}^{\top} \mathbf{U}_{\mathbf{Y}} \boldsymbol{\Sigma}_{\mathbf{Y}}-\boldsymbol{\Sigma}_{S_Y} \mathbf{V}_{S_Y}^{\top} \mathbf{V}_{\mathbf{Y}}\right\|_F\right. \\
			& \left.+\left\|\boldsymbol{\Sigma}_{S_Y} \mathbf{U}_{S_Y}^{\top} \mathbf{U}_{\mathbf{Y}}-\mathbf{V}_{S_Y}^{\top} \mathbf{V}_{\mathbf{Y}} \boldsymbol{\Sigma}_{\mathbf{Y}}\right\|_F\right)\left\|\boldsymbol{\Sigma}_{\mathbf{Y}}^{-1}\right\|_F.
		\end{aligned}
		$$
		
		Since $\left(1+\frac{\sigma_i(\mathbf S_Y)}{\sigma_j(\mathbf{Y})}\right) \geq 1$,
		part (iii) and Proposition~\ref{prop:sigmapy} give
		$$
		\left|\left(\mathbf{U}_{S_Y}^{\top} \mathbf{U}_{\mathbf{Y}}-\mathbf{V}_{S_Y}^{\top} \mathbf{V}_{\mathbf{Y}}\right)_{i j}\right|=O_{\mathbb P}\left(\frac{\log (p)}{\rho_Y p}\right)
		$$
		and the asserted Frobenius-norm bound follows. 
	\end{proof}
	
	The next result identifies the orthogonal matrix $\mathbf{W_{YS}}$ used for the
simultaneous Procrustes alignment of $\mathbf{U}_{\mathbf{Y}}$ with
$\mathbf{U}_{S_Y}$ and of $\mathbf{V}_{\mathbf{Y}}$ with
$\mathbf{V}_{S_Y}$.
	
	\begin{proposition}\label{prop:uvw}
		Let $\mathbf{U}_{S_Y}^{\top} \mathbf{U}_{\mathbf{Y}}+\mathbf{V}_{S_Y}^{\top} \mathbf{V}_{\mathbf{Y}}$ admit the singular value decomposition
		
		$$
		\mathbf{U}_{S_Y}^{\top} \mathbf{U}_{\mathbf{Y}}+\mathbf{V}_{S_Y}^{\top} \mathbf{V}_{\mathbf{Y}}=\mathbf{W_{Y,1}} \boldsymbol{\Lambda}_Y \mathbf{W_{Y,2}}^{\top}
		$$
		
		and let $\mathbf{W_{YS}}=\mathbf{W_{Y,1}} \mathbf{W_{Y,2}}^{\top}$. Then
		
		$$
		\max \left\{\left\|\mathbf{U}_{S_Y}^{\top} \mathbf{U}_{\mathbf{Y}}-\mathbf{W_{YS}}\right\|_F,\left\|\mathbf{V}_{S_Y}^{\top} \mathbf{V}_{\mathbf{Y}}-\mathbf{W_{YS}}\right\|_F\right\}=O_{\mathbb P}\left(\frac{\log p}{\rho_Y p}\right).
		$$
	\end{proposition}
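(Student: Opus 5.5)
This is the standard simultaneous Procrustes argument in the style of Jones and Rubin-Delanchy, built on Proposition~\ref{prop:bounduv}. Write $\mathbf A=\mathbf U_{S_Y}^\top\mathbf U_{\mathbf Y}$ and $\mathbf B=\mathbf V_{S_Y}^\top\mathbf V_{\mathbf Y}$, so that $\mathbf A+\mathbf B=\mathbf W_{Y,1}\boldsymbol\Lambda_Y\mathbf W_{Y,2}^\top$. The plan is to show that each of $\mathbf A$ and $\mathbf B$ is close to $\tfrac12(\mathbf A+\mathbf B)$, and that $\tfrac12(\mathbf A+\mathbf B)$ is close to its orthogonal polar factor $\mathbf W_{YS}$.

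\textbf{Step 1.} Bound the gap between $\mathbf A$ and the symmetric average by the triangle inequality,
\[
\|\mathbf A-\mathbf W_{YS}\|_F\le \tfrac12\|\mathbf A-\mathbf B\|_F+\|\tfrac12(\mathbf A+\mathbf B)-\mathbf W_{YS}\|_F .
\]
Proposition~\ref{prop:bounduv}(iv) gives $\|\mathbf A-\mathbf B\|_F=O_{\mathbb P}(\log p/(\rho_Yp))$. The same bound handles $\mathbf B$. It therefore suffices to control $\|\tfrac12(\mathbf A+\mathbf B)-\mathbf W_{YS}\|_F=\|\tfrac12\boldsymbol\Lambda_Y-\mathbf I\|_F$.

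\textbf{Step 2.} Control the singular values of $\mathbf A$. Its singular values are the cosines $\cos\theta_i$ of the principal angles between $\mathbf U_{\mathbf Y}$ and $\mathbf U_{S_Y}$. Proposition~\ref{prop:bounduv}(i) and the computation in its proof give $\|\sin\Theta\|_F=O_{\mathbb P}((\log p/(\rho_Yp))^{1/2})$. Since $1-\cos\theta\le\sin^2\theta$, we get
\[
\|\mathbf I-\boldsymbol\Sigma_{\mathbf A}\|_F\le\|\sin\Theta\|_F^2=O_{\mathbb P}\!\left(\frac{\log p}{\rho_Yp}\right).
\]
The analogous statement for $\mathbf V$ also holds, though it is not strictly needed. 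This quadratic improvement of $\sin\Theta$ into $1-\cos$ is what produces the rate $\log p/(\rho_Y p)$ rather than its square root.

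\textbf{Step 3.} Compare singular values of $\tfrac12(\mathbf A+\mathbf B)$ with those of $\mathbf A$. By Weyl's inequality for singular values (Corollary 7.3.5 of \citet{horn2012matrix}, as used in Proposition~\ref{prop:sigmapy}),
\[
|\sigma_i(\tfrac12(\mathbf A+\mathbf B))-\sigma_i(\mathbf A)|\le\tfrac12\|\mathbf A-\mathbf B\|=O_{\mathbb P}\!\left(\frac{\log p}{\rho_Yp}\right).
\]
Combining this with Step 2 yields $\|\tfrac12\boldsymbol\Lambda_Y-\mathbf I\|_F=O_{\mathbb P}(\log p/(\rho_Yp))$, because $K_Y$ is fixed. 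Feeding this back into Step 1 gives the claim.

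\textbf{Main obstacle.} The delicate point is Step 2. One must verify that the projection-distance bound of Proposition~\ref{prop:bounduv}(i) indeed controls $\|\sin\Theta\|$ at rate $(\log p/(\rho_Yp))^{1/2}$, via the Davis--Kahan Lemma~\ref{lem:davis} combined with Condition~\ref{ass:3} and Proposition~\ref{prop:boundy-p}. Once that holds, the square is what produces the stated rate. Upgrading the $O_{\mathbb P}$ statements to probability $1-C_{c_0}p^{-c_0}$ follows from the same finite union bound used for the earlier propositions.
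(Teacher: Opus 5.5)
Your proof is correct, but it reaches the bound by a different route from the paper. The paper does not pass through the average $\tfrac12(\mathbf A+\mathbf B)$. Instead it anchors at $\mathbf W_{\mathbf U}$, the orthogonal polar factor of $\mathbf A=\mathbf U_{S_Y}^\top\mathbf U_{\mathbf Y}$ alone, and argues in three steps:
\begin{enumerate}
\item It shows $\|\mathbf A-\mathbf W_{\mathbf U}\|_F=\|\boldsymbol\Sigma_{\mathbf U}-\mathbf I\|_F\le\sum_i\{1-(\sigma_{\mathbf U})_i^2\}\le\tfrac12\|\mathbf U_{\mathbf Y}\mathbf U_{\mathbf Y}^\top-\mathbf U_{S_Y}\mathbf U_{S_Y}^\top\|_F^2$. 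This is the same quadratic $1-\cos\le\sin^2$ gain as your Step~2.
\item It bounds $\|\mathbf B-\mathbf W_{\mathbf U}\|_F\le\|\mathbf A-\mathbf B\|_F+\|\mathbf A-\mathbf W_{\mathbf U}\|_F$ using Proposition~\ref{prop:bounduv}(iv).
\item It uses the variational characterisation: $\mathbf W_{YS}$ minimises $\|\mathbf A-\mathbf Q\|_F^2+\|\mathbf B-\mathbf Q\|_F^2$ over $\mathbf Q\in\mathbb O(K_Y)$. Its objective is therefore at most the value at $\mathbf W_{\mathbf U}$, which bounds both terms at once.
\end{enumerate}
Your version replaces this Procrustes-optimality step by Weyl's inequality for singular values, applied to $\tfrac12(\mathbf A+\mathbf B)$ versus $\mathbf A$. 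Both proofs rely on the same two inputs: Proposition~\ref{prop:bounduv}(i) in its Frobenius form, and Proposition~\ref{prop:bounduv}(iv).

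The paper's route needs no singular-value perturbation theory. It also makes clear that any orthogonal matrix aligning $\mathbf A$ and $\mathbf B$ simultaneously certifies $\mathbf W_{YS}$. Your route works directly from the defining SVD. It costs only a harmless $\sqrt{K_Y}$ factor when passing from entrywise Weyl bounds to $\|\tfrac12\boldsymbol\Lambda_Y-\mathbf I\|_F$, and Mirsky's inequality would remove even that. As a by-product, it shows every singular value of $\mathbf A+\mathbf B$ is $2+O_{\mathbb P}\{\log p/(\rho_Yp)\}$, so $\mathbf W_{YS}$ is uniquely defined on the high-probability event.
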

	\begin{proof}
		Let $\mathbf{U}_{S_Y}^{\top} \mathbf{U}_{\mathbf{Y}}=\mathbf{W}_{\mathbf{U}, 1} \boldsymbol{\Sigma}_{\mathbf{U}} \mathbf{W}_{\mathbf{U}, 2}^{\top}$ be the singular value decomposition of $\mathbf{U}_{S_Y}^{\top} \mathbf{U}_{\mathbf{Y}}$, and define $\mathbf{W}_{\mathbf{U}} \in \mathbb{O}(K_Y)$ by $\mathbf{W}_{\mathbf{U}}=\mathbf{W}_{\mathbf{U}, 1} \mathbf{W}_{\mathbf{U}, 2}^{\top}$. Proposition~\ref{prop:bounduv} gives
		$$
		\begin{aligned}
			\left\|\mathbf{U}_{S_Y}^{\top} \mathbf{U}_{\mathbf{Y}}-\mathbf{W}_{\mathbf{U}}\right\|_F  =&\|\boldsymbol{\Sigma}_{\mathbf{U}}-\mathbf{I}\|_F=\left(\sum_{i=1}^{K_Y}\left(1-(\sigma_{\mathbf{U}})_i\right)^2\right)^{1 / 2} \leq \sum_{i=1}^{K_Y}\left(1-(\sigma_{\mathbf{U}})_i\right) \\
			\leq &\sum_{i=1}^{K_Y}\left(1-(\sigma_{\mathbf{U}})_i^2\right) \leq \frac{1}{2} \left\|\mathbf{U}_{\mathbf{Y}} \mathbf{U}_{\mathbf{Y}}^{\top}-\mathbf{U}_{S_Y} \mathbf{U}_{S_Y}^{\top}\right\|_F^2=O_{\mathbb P}\left(\frac{\log (p)}{\rho_Y p}\right) .
		\end{aligned}
		$$
		Proposition~\ref{prop:bounduv}(iv) then implies
		\[
		\left\|\mathbf{V}_{S_Y}^{\top} \mathbf{V}_{\mathbf{Y}}-\mathbf{W}_{\mathbf{U}}\right\|_F
		\leq
		\left\|\mathbf{V}_{S_Y}^{\top} \mathbf{V}_{\mathbf{Y}}-\mathbf{U}_{S_Y}^{\top} \mathbf{U}_{\mathbf{Y}}\right\|_F
		+
		\left\|\mathbf{U}_{S_Y}^{\top} \mathbf{U}_{\mathbf{Y}}-\mathbf{W}_{\mathbf{U}}\right\|_F
		=
		O_{\mathbb P}\left(\frac{ \log (p)}{\rho_Y p}\right).
		\]
		Since $\mathbf{W_{YS}}$ minimises
		\[
		\left\|\mathbf{U}_{S_Y}^{\top} \mathbf{U}_{\mathbf{Y}}-\mathbf{Q}\right\|_F^2
		+
		\left\|\mathbf{V}_{S_Y}^{\top} \mathbf{V}_{\mathbf{Y}}-\mathbf{Q}\right\|_F^2
		\]
		over $\mathbf{Q} \in \mathbb{O}(K_Y)$, its objective value is no larger than
		the value at $\mathbf W_{\mathbf U}$.  Hence
		\[
		\max \left\{
		\left\|\mathbf{U}_{S_Y}^{\top} \mathbf{U}_{\mathbf{Y}}-\mathbf{W_{YS}}\right\|_F,
		\left\|\mathbf{V}_{S_Y}^{\top} \mathbf{V}_{\mathbf{Y}}-\mathbf{W_{YS}}\right\|_F
		\right\}
		=
		O_{\mathbb P}\left(\frac{\log (p)}{\rho_Y p}\right).
		\]
	\end{proof}
	
	\begin{proposition}\label{prop:sigmaw}
		The following bounds hold:
		
		(i) $\left\|\mathbf{W_{YS}} \boldsymbol{\Sigma}_{\mathbf{Y}}-\boldsymbol{\Sigma}_{S_Y} \mathbf{W_{YS}}\right\|_F=O_{\mathbb P}\left(\log (p)\right)$.
		
		(ii) $\left\|\mathbf{W_{YS}} \boldsymbol{\Sigma}_{\mathbf{Y}}^{1 / 2}-\boldsymbol{\Sigma}_{S_Y}^{1 / 2} \mathbf{W_{YS}}\right\|_F=O_{\mathbb P}\left(\frac{\log (p)}{\rho_Y^{1 / 2} p^{1 / 2}}\right)$.
		
		(iii) $\left\|\mathbf{W_{YS}} \boldsymbol{\Sigma}_{\mathbf{Y}}^{-1 / 2}-\boldsymbol{\Sigma}_{S_Y}^{-1 / 2} \mathbf{W_{YS}}\right\|_F=O_{\mathbb P}\left(\frac{\log (p)}{\rho_Y^{3 / 2} p^{3 / 2}}\right)$.
	\end{proposition}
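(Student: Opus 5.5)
Part (i) comes first, and the plan is to get it from the near-commutation established in Proposition~\ref{prop:bounduv}(iii), with the two projections replaced by $\mathbf W_{YS}$ using Proposition~\ref{prop:uvw}. Write
\[
\mathbf W_{YS}\boldsymbol\Sigma_{\mathbf Y}-\boldsymbol\Sigma_{S_Y}\mathbf W_{YS}
=(\mathbf W_{YS}-\mathbf U_{S_Y}^\top\mathbf U_{\mathbf Y})\boldsymbol\Sigma_{\mathbf Y}
+(\mathbf U_{S_Y}^\top\mathbf U_{\mathbf Y}\boldsymbol\Sigma_{\mathbf Y}-\boldsymbol\Sigma_{S_Y}\mathbf V_{S_Y}^\top\mathbf V_{\mathbf Y})
+\boldsymbol\Sigma_{S_Y}(\mathbf V_{S_Y}^\top\mathbf V_{\mathbf Y}-\mathbf W_{YS}).
\]
The middle term is $O_{\mathbb P}(\log p)$ by Proposition~\ref{prop:bounduv}(iii). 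The outer terms are bounded by $O_{\mathbb P}(\log p/(\rho_Yp))\cdot O_{\mathbb P}(p\rho_Y)$, using Proposition~\ref{prop:uvw} together with Proposition~\ref{prop:sigmapy} and Condition~\ref{ass:3}. This gives (i).

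For (ii) and (iii), the plan is to pass from the linear commutator to the square-root commutators entrywise. Set $\mathbf E=\mathbf W_{YS}\boldsymbol\Sigma_{\mathbf Y}-\boldsymbol\Sigma_{S_Y}\mathbf W_{YS}$. Since both $\boldsymbol\Sigma$'s are diagonal, the $(i,j)$ entry is $E_{ij}=W_{ij}(\sigma_j(\mathbf Y)-\sigma_i(\mathbf S_Y))$, where $W_{ij}=(\mathbf W_{YS})_{ij}$. The $(i,j)$ entry of the square-root commutator is
\[
W_{ij}\bigl(\sigma_j(\mathbf Y)^{1/2}-\sigma_i(\mathbf S_Y)^{1/2}\bigr)
=\frac{E_{ij}}{\sigma_j(\mathbf Y)^{1/2}+\sigma_i(\mathbf S_Y)^{1/2}}.
\]
By Condition~\ref{ass:3} and Proposition~\ref{prop:sigmapy}, the denominator is at least $c(p\rho_Y)^{1/2}$ with high probability. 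Summing squares then gives (ii), with the bound $O_{\mathbb P}(\log p/(\rho_Yp)^{1/2})$.

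For (iii), the corresponding entry is
\[
W_{ij}\bigl(\sigma_j(\mathbf Y)^{-1/2}-\sigma_i(\mathbf S_Y)^{-1/2}\bigr)
=-\,\frac{W_{ij}\bigl(\sigma_j(\mathbf Y)^{1/2}-\sigma_i(\mathbf S_Y)^{1/2}\bigr)}{\sigma_j(\mathbf Y)^{1/2}\sigma_i(\mathbf S_Y)^{1/2}}.
\]
The extra denominator is of order $p\rho_Y$, so this yields $O_{\mathbb P}(\log p/(\rho_Yp)^{3/2})$. This agrees with the stated power $\rho_Y^{3/2}p^{3/2}$.

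There is no real obstacle. The one point requiring care is that the singular value lower bounds hold on a single high-probability event, which I will take to be the event of Proposition~\ref{prop:boundy-p}. On that event Weyl's inequality and $p\rho_Y/\log p\to\infty$ give $\sigma_{K_Y}(\mathbf Y)\ge c_{Y,S}p\rho_Y/2$. All the $O_{\mathbb P}$ bounds are then intersected with this event through a finite union bound, so the failure probability stays $O(p^{-c_0})$.
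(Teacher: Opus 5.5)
Your proposal is correct and follows the paper's proof essentially step for step: it uses the same three-term decomposition through $\mathbf U_{S_Y}^\top\mathbf U_{\mathbf Y}$ and $\mathbf V_{S_Y}^\top\mathbf V_{\mathbf Y}$ for (i), and the same entrywise identities dividing by $\sigma_j(\mathbf Y)^{1/2}+\sigma_i(\mathbf S_Y)^{1/2}$ and by $\sigma_i(\mathbf S_Y)^{1/2}\sigma_j(\mathbf Y)^{1/2}$ for (ii) and (iii). Two details are handled more carefully than in the paper without changing the route: you give an explicit Weyl-based lower bound on $\sigma_{K_Y}(\mathbf Y)$ on a single high-probability event, and your identity in (iii) carries the minus sign that the paper's display omits (harmlessly, since only the norm matters).
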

	\begin{proof}
		(i) Observe that
		\begin{align*}
			\mathbf{W_{YS}} \boldsymbol{\Sigma}_{\mathbf{Y}}-\boldsymbol{\Sigma}_{S_Y} \mathbf{W_{YS}}&=\left(\mathbf{W_{YS}}-\mathbf{U}_{S_Y}^{\top} \mathbf{U}_{\mathbf{Y}}\right) \boldsymbol{\Sigma}_{\mathbf{Y}}+\mathbf{U}_{S_Y}^{\top} \mathbf{U}_{\mathbf{Y}} \boldsymbol{\Sigma}_{\mathbf{Y}}-\boldsymbol{\Sigma}_{S_Y} \mathbf{W_{YS}} \\
			& =\left(\mathbf{W_{YS}}-\mathbf{U}_{S_Y}^{\top} \mathbf{U}_{\mathbf{Y}}\right) \boldsymbol{\Sigma}_{\mathbf{Y}}+\left(\mathbf{U}_{S_Y}^{\top} \mathbf{U}_{\mathbf{Y}} \boldsymbol{\Sigma}_{\mathbf{Y}}-\boldsymbol{\Sigma}_{S_Y} \mathbf{V}_{S_Y}^{\top} \mathbf{V}_{\mathbf{Y}}\right)+\boldsymbol{\Sigma}_{S_Y}\left(\mathbf{V}_{S_Y}^{\top} \mathbf{V}_{\mathbf{Y}}-\mathbf{W_{YS}}\right) .
		\end{align*}
		
		The first and third terms in the last display are
		$O_{\mathbb P}\{\log(p)\}$ by Propositions~\ref{prop:uvw} and
		\ref{prop:sigmapy}.  The middle term is
		$O_{\mathbb P}\{\log(p)\}$ by Proposition~\ref{prop:bounduv}(iii).  Hence
		\(\left\|\mathbf{W_{YS}} \boldsymbol{\Sigma}_{\mathbf{Y}}
		-\boldsymbol{\Sigma}_{S_Y} \mathbf{W_{YS}}\right\|_F
		=O_{\mathbb P}\{\log(p)\}\).
		
		(ii) Entrywise,
		\[
		\begin{aligned}
			&\left(\mathbf{W_{YS}} \boldsymbol{\Sigma}_{\mathbf{Y}}^{1 / 2}-\boldsymbol{\Sigma}_{S_Y}^{1 / 2} \mathbf{W_{YS}}\right)_{i j}  =(\mathbf{W_{YS}})_{i j}\left(\sigma_j(\mathbf{Y})^{1 / 2}-\sigma_i(\mathbf S_Y)^{1 / 2}\right) \\
			&=
			\frac{(\mathbf{W_{YS}})_{i j}\left(\sigma_j(\mathbf{Y})-\sigma_i(\mathbf S_Y)\right)}{\sigma_j(\mathbf{Y})^{1 / 2}+\sigma_i(\mathbf S_Y)^{1 / 2}}
			=
			\frac{\left(\mathbf{W_{YS}} \boldsymbol{\Sigma}_{\mathbf{Y}}-\boldsymbol{\Sigma}_{S_Y} \mathbf{W_{YS}}\right)_{i j}}{\sigma_j(\mathbf{Y})^{1 / 2}+\sigma_i(\mathbf S_Y)^{1 / 2}} .
		\end{aligned}
		\]
		
		Summing over all $i,j\in [{K_Y}]$ and using Proposition~\ref{prop:sigmapy} and
		part (i) gives the stated bound.
		
		(iii) 
		$$
		\begin{aligned}
			\left(\mathbf{W_{YS}} \boldsymbol{\Sigma}_{\mathbf{Y}}^{-1 / 2}-\boldsymbol{\Sigma}_{S_Y}^{-1 / 2} \mathbf{W_{YS}}\right)_{i j} & =\frac{(\mathbf{W_{YS}})_{i j}\left(\sigma_i(\mathbf S_Y)^{1 / 2}-\sigma_j(\mathbf{Y})^{1 / 2}\right)}{\sigma_i(\mathbf S_Y)^{1 / 2} \sigma_j(\mathbf{Y})^{1 / 2}} \\
			& =\frac{\left(\mathbf{W_{YS}} \boldsymbol{\Sigma}_{\mathbf{Y}}^{1 / 2}-\boldsymbol{\Sigma}_{S_Y}^{1 / 2} \mathbf{W_{YS}}\right)_{i j}}{\sigma_i(\mathbf S_Y)^{1 / 2} \sigma_j(\mathbf{Y})^{1 / 2}}
		\end{aligned}
		$$
		Summing over all $i,j\in [{K_Y}]$ and using Proposition~\ref{prop:sigmapy} and
		part (ii) gives the asserted inverse-square-root bound.
	\end{proof}
	\begin{proposition}
		\label{prop:uv2}
		We have 
		\[
		\left\|\mathbf V_{\mathbf S_Y}\right\|_{2\to\infty},
\left\|\mathbf U_{\mathbf S_Y}\right\|_{2\to\infty}
=
O\left(p^{-1/2}\right), 
		\left\|\mathbf V_{\mathbf Y}\right\|_{2\to\infty},
		\left\|\mathbf U_{\mathbf Y}\right\|_{2\to\infty}
		=
		O_{\mathbb P}
		\left(
		\sqrt{\frac{\log p}{\rho_Y p}}
		\right).
		\]
		
	\end{proposition}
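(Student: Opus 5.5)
The population bounds follow from the factorisation of Proposition~\ref{prop:u1u2} and Condition~\ref{ass:3}; the sample bounds follow by expressing the sample singular vectors through the data matrix and controlling each row with Proposition~\ref{prop:boundy-p}.

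\emph{Population incoherence.} Write $\mathbf S_Y=\overline{\mathbf U}_{Y,1}\overline{\mathbf U}_{Y,2}^\top$. Every entry of $\overline{\mathbf U}_{Y,1}$ and $\overline{\mathbf U}_{Y,2}$ is bounded by $C\sqrt{\rho_Y}$, since these matrices are built from $\sqrt{\rho_Y}\operatorname{diag}(\mathbf a)\mathbf L_Y^{(t)}$ with $\|\mathbf a\|_\infty\le C_Q$ and $\|\boldsymbol\ell\|_\infty\le\ell$. Each has a bounded number $\overline K_Y$ of columns, so every row has Euclidean norm $O(\sqrt{\rho_Y})$. Use the identity $\mathbf U_{S_Y}=\mathbf S_Y\mathbf V_{S_Y}\boldsymbol\Sigma_{S_Y}^{-1}$. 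Then
\[
\|\mathbf e_i^\top\mathbf U_{S_Y}\|_2\le \|\mathbf e_i^\top\overline{\mathbf U}_{Y,1}\|_2\,\|\overline{\mathbf U}_{Y,2}^\top\mathbf V_{S_Y}\|\,\sigma_{K_Y}(\mathbf S_Y)^{-1}.
\]
The operator norm satisfies $\|\overline{\mathbf U}_{Y,2}\|\le\|\overline{\mathbf U}_{Y,2}\|_F=O(\sqrt{np\rho_Y})$. Combining this with Condition~\ref{ass:3} gives $O(\sqrt{\rho_Y}\sqrt{p\rho_Y}/(p\rho_Y))=O(p^{-1/2})$. The symmetric identity $\mathbf V_{S_Y}=\mathbf S_Y^\top\mathbf U_{S_Y}\boldsymbol\Sigma_{S_Y}^{-1}$, with the roles of the two factors swapped, gives the same bound for $\mathbf V_{S_Y}$, since the block-diagonal rows of $\overline{\mathbf U}_{Y,2}$ are also $O(\sqrt{\rho_Y})$.

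\emph{Sample bound.} Use $\mathbf U_Y=\mathbf Y\mathbf V_Y\boldsymbol\Sigma_Y^{-1}$ and split $\mathbf Y=\mathbf S_Y+(\mathbf Y-\mathbf S_Y)$. The signal part contributes
\[
\|\mathbf e_i^\top\mathbf S_Y\mathbf V_Y\boldsymbol\Sigma_Y^{-1}\|_2\le \|\mathbf e_i^\top\mathbf S_Y\|_2/\sigma_{K_Y}(\mathbf Y).
\]
Here $\|\mathbf e_i^\top\mathbf S_Y\|_2\le\sqrt{np}\,C\rho_Y$, and Proposition~\ref{prop:sigmapy} gives $\sigma_{K_Y}(\mathbf Y)\asymp p\rho_Y$, so this term is $O_{\mathbb P}(p^{-1/2})$.

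The noise part is bounded by $\|\mathbf e_i^\top(\mathbf Y-\mathbf S_Y)\|_2/\sigma_{K_Y}(\mathbf Y)$. Each row of $\mathbf Y-\mathbf S_Y$ is a sum over $j$ of independent dyad contributions, and each coordinate has variance at most $C\rho_Y$. Squared row norms therefore concentrate around $O(np\rho_Y)$; the bounded summands and $p\rho_Y/\log p\to\infty$ allow Bernstein's inequality for $\sum_j\sum_t(\cdot)^2$, and a union bound over the $p$ rows gives a uniform bound $O(\sqrt{p\rho_Y})$ with probability $1-O(p^{-c_0})$. This term is thus $O_{\mathbb P}(1/\sqrt{p\rho_Y})=O_{\mathbb P}(\sqrt{\log p/(\rho_Yp)})$.

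For $\mathbf V_Y$, apply the same argument to $\mathbf V_Y=\mathbf Y^\top\mathbf U_Y\boldsymbol\Sigma_Y^{-1}$. The relevant rows are now columns of $\mathbf Y$, indexed by $(t,j)$, with $p$ entries each, and the same Bernstein bound applies.

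\emph{Main obstacle.} The crude row-norm bound above is the step I expect to be weakest. If the row-norm concentration needs sharpening, I would instead bound $\|\mathbf e_i^\top(\mathbf Y-\mathbf S_Y)\mathbf V_{S_Y}\|_2$ by Hoeffding's inequality. For fixed $i$, the entries $\mathbf V_{S_Y}$ are deterministic and $O(p^{-1/2})$ by the population bound, so this gives $O(\sqrt{\log p})$. The residual term $(\mathbf Y-\mathbf S_Y)(\mathbf V_Y-\mathbf V_{S_Y}\mathbf V_{S_Y}^\top\mathbf V_Y)$ would then be controlled using Propositions~\ref{prop:boundy-p} and~\ref{prop:bounduv}. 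The stated rate is loose enough that the direct row-norm bound suffices.
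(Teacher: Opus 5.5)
Your proposal is correct. For the sample bound it takes a genuinely different route from the paper.

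\textbf{Population bounds.} Both arguments rest on $\mathbf U_{S_Y}=\mathbf S_Y\mathbf V_{S_Y}\boldsymbol\Sigma_{S_Y}^{-1}$ and Condition~\ref{ass:3}. The paper bounds $\|\mathbf e_i^\top\mathbf S_Y\|_2\le\sqrt{np}\,\rho_Y$ and $\|\mathbf S_Y\mathbf e_j\|_2\le\sqrt p\,\rho_Y$ directly from the entrywise bound $0\le(\mathbf S_Y)_{ij}\le\rho_Y$. Your detour through the factorisation of Proposition~\ref{prop:u1u2} is valid but unnecessary; you use the direct row bound yourself in the sample step.

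\textbf{Sample bounds.} For $\mathbf U_Y$ and $\mathbf V_Y$ the paper uses a projection argument:
$\|\mathbf e_i^\top\mathbf U_Y\|_2=\|\mathbf U_Y\mathbf U_Y^\top\mathbf e_i\|_2\le\|\mathbf U_Y\mathbf U_Y^\top-\mathbf U_{S_Y}\mathbf U_{S_Y}^\top\|+\|\mathbf e_i^\top\mathbf U_{S_Y}\|_2$.
It then invokes the Davis--Kahan bound of Proposition~\ref{prop:bounduv}(i). The rate $\sqrt{\log p/(\rho_Y p)}$ is thus exactly the $\sin\Theta$ rate, and no new concentration estimate is required.

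Your route is $\|\mathbf U_Y\|_{2\to\infty}\lesssim\|\mathbf Y\|_{2\to\infty}/\sigma_{K_Y}(\mathbf Y)$, after splitting off $\mathbf S_Y$. It needs only $\sigma_{K_Y}(\mathbf Y)\gtrsim p\rho_Y$ from Proposition~\ref{prop:sigmapy} plus a scalar Bernstein or Chernoff bound on row and column sums. That bound requires a union bound over $p$ rows and $np$ columns, independence over $j$ of the dyad trajectories, and $p\rho_Y/\log p\to\infty$; all of these are available.

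\textbf{What each route buys.} Yours gives the sharper rate $O_{\mathbb P}\{(p\rho_Y)^{-1/2}\}$, without the $\sqrt{\log p}$ factor, and does not depend on the subspace-perturbation step. The paper's version is shorter given what is already proved, and it is the same template the paper reuses for the leave-one-out singular vectors in Proposition~\ref{prop:loo}. The Hoeffding-based fallback in your last paragraph also works but is not needed, as you note.
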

	\begin{proof}
By definition, we have 
\(0\le(\mathbf P_Y^{(t)})_{ij}\le\rho_Y\), 
\(0\le(\mathbf Q_Y^{(t)})_{ij}\le1\) and then 
\(
  \|\mathbf S_Y\|_{\mathrm F}^{2}
  =
  \sum_{i=1}^{p}\sum_{j=1}^{np}(\mathbf S_Y)_{ij}^{2}
  \le np^{2}\rho_Y^{2}.
\)
This gives
\(\sigma_1(\mathbf S_Y)\le \|\mathbf S_Y\|_{\mathrm F}
\le \sqrt n\,p\rho_Y\).
For every \(i\in[p]\), the singular value decomposition gives
\(
  \mathbf e_i^{\top}\mathbf U_{\mathbf S_Y}
  =
  \mathbf e_i^{\top}\mathbf S_Y
  \mathbf V_{\mathbf S_Y}
  \boldsymbol\Sigma_{\mathbf S_Y}^{-1}.
\)
Therefore, by Condition~\ref{ass:3},
\[
\begin{aligned}
  \|\mathbf e_i^{\top}\mathbf U_{\mathbf S_Y}\|_2
  &\le
  \frac{\|\mathbf e_i^{\top}\mathbf S_Y\|_2}
       {\sigma_{K_Y}(\mathbf S_Y)}
   \le
  \frac{\sqrt{np}\,\rho_Y}{c_{Y,S}p\rho_Y}
   =
  \frac{\sqrt n}{c_{Y,S}\sqrt p}.
\end{aligned}
\]
Taking the maximum over \(i\) yields the asserted bound for
\(\|\mathbf U_{\mathbf S_Y}\|_{2\to\infty}\).

Similarly, for every \(j\in[np]\),
\(
  \mathbf e_j^{\top}\mathbf V_{\mathbf S_Y}
  =
  \mathbf e_j^{\top}\mathbf S_Y^{\top}
  \mathbf U_{\mathbf S_Y}
  \boldsymbol\Sigma_{\mathbf S_Y}^{-1},
\)
and hence by Condition~\ref{ass:3},
\[
\begin{aligned}
  \|\mathbf e_j^{\top}\mathbf V_{\mathbf S_Y}\|_2
  &\le
  \frac{\|\mathbf S_Y\mathbf e_j\|_2}
       {\sigma_{K_Y}(\mathbf S_Y)}
   \le
  \frac{\sqrt p\,\rho_Y}{c_{Y,S}p\rho_Y}
   =
  \frac{1}{c_{Y,S}\sqrt p}.
\end{aligned}
\]
Taking the maximum over \(j\) proves the right singular-vector bound.
		
		It remains to bound the empirical singular vectors.  Since
		$\mathbf U_{\mathbf Y}$ has orthonormal columns,
		\[
		\left\|\mathbf U_{\mathbf Y}\mathbf U_{\mathbf Y}^{\top}\mathbf e_i\right\|_2^2
		=
		\mathbf e_i^{\top}
		\mathbf U_{\mathbf Y}\mathbf U_{\mathbf Y}^{\top}
		\mathbf U_{\mathbf Y}\mathbf U_{\mathbf Y}^{\top}
		\mathbf e_i
		=
		\mathbf e_i^{\top}
		\mathbf U_{\mathbf Y}\mathbf U_{\mathbf Y}^{\top}
		\mathbf e_i
		=
		\left\|\mathbf e_i^{\top}\mathbf U_{\mathbf Y}\right\|_2^2.
		\]
		Therefore, for any $i\in[p]$,
		\[
		\begin{aligned}
			\left\|\mathbf e_i^{\top}\mathbf U_{\mathbf Y}\right\|_2
			&=
			\left\|
			\mathbf U_{\mathbf Y}\mathbf U_{\mathbf Y}^{\top}\mathbf e_i
			\right\|_2 \\
			&\le
			\left\|
			\left(
			\mathbf U_{\mathbf Y}\mathbf U_{\mathbf Y}^{\top}
			-
			\mathbf U_{\mathbf S_Y}\mathbf U_{\mathbf S_Y}^{\top}
			\right)\mathbf e_i
			\right\|_2
			+
			\left\|
			\mathbf U_{\mathbf S_Y}\mathbf U_{\mathbf S_Y}^{\top}\mathbf e_i
			\right\|_2 \\
			&\le
			\left\|
			\mathbf U_{\mathbf Y}\mathbf U_{\mathbf Y}^{\top}
			-
			\mathbf U_{\mathbf S_Y}\mathbf U_{\mathbf S_Y}^{\top}
			\right\|
			+
			\left\|
			\mathbf e_i^{\top}\mathbf U_{\mathbf S_Y}
			\right\|_2.
		\end{aligned}
		\]
		By Proposition \ref{prop:bounduv}(i),
		we have 
		$\left\|
		\mathbf U_{\mathbf Y}
		\right\|_{2\to\infty}
		=
		O_{\mathbb P}
		\left(
		\sqrt{\frac{\log p}{\rho_Y p}}
		\right).$
		The proof for $\mathbf V_{\mathbf Y}$ is identical. 
		
	\end{proof}
	
	\begin{proposition}\label{prop:loo}
		For each vertex \(r\in[p]\), define \(\mathbf Y^{(-r)}\) as follows. For each
		\(t\in[n]\), let \(\mathbf Y^{(-r),(t)}\in\mathbb R^{p\times p}\) be the matrix
		obtained from \(\mathbf Y^{(t)}\) by replacing all entries incident to vertex \(r\)
		by their population counterparts:
		\[
		(\mathbf Y^{(-r),(t)})_{ab}
		=
		\begin{cases}
			(\mathbf S_Y^{(t)})_{ab}, & a=r \text{ or } b=r,\\
			(\mathbf Y^{(t)})_{ab}, & a\neq r,\ b\neq r.
		\end{cases}
		\]
		Set
		\[
		\mathbf Y^{(-r)}
		:=
		[\mathbf Y^{(-r),(1)}|\cdots|\mathbf Y^{(-r),(n)}]
		\in\mathbb R^{p\times np}.
		\]
		Let the rank-\(K_Y\) singular value decomposition of \(\mathbf Y^{(-r)}\) be
		\[
		\mathbf Y^{(-r)}
		=
		\mathbf U_{\mathbf Y}^{(-r)}
		\boldsymbol\Sigma_{\mathbf Y}^{(-r)}
		\mathbf V_{\mathbf Y}^{(-r)\top}
		+
		\mathbf U_{\mathbf Y,\perp}^{(-r)}
		\boldsymbol\Sigma_{\mathbf Y,\perp}^{(-r)}
		\mathbf V_{\mathbf Y,\perp}^{(-r)\top}.
		\]
		Then
		\[
		\max_{r\in[p]}
		\left\|
		\mathbf U_{\mathbf Y}^{(-r)}
		\right\|_{2\to\infty},
		\quad
		\max_{r\in[p]}
		\left\|
		\mathbf V_{\mathbf Y}^{(-r)}
		\right\|_{2\to\infty}
		=
		O_{\mathbb P}
		\left(
		\sqrt{\frac{\log p}{\rho_Y p}}
		\right),
		\]
		and
		\[
		\max_{r\in[p]}
		\left\|
		\left(
		\mathbf Y-\mathbf Y^{(-r)}
		\right)
		\mathbf V_{\mathbf Y}^{(-r)}
		\right\|_F,\max_{r\in[p]}\left\|\left(\mathbf Y-\mathbf Y^{(-r)}\right)^{\top} \mathbf U_Y^{(-r)}\right\|_F
		=
		O_{\mathbb P}(\sqrt{\log p}).
		\]
		The same conclusions hold for \(\mathbf Z\) after replacing \(Y\) by \(Z\) and
		\(\rho_Y\) by \(\rho_Z\).
	\end{proposition}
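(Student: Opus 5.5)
The plan splits Proposition~\ref{prop:loo} into two parts. The first is a perturbation bound for the leave-one-out matrix, handled exactly as for $\mathbf Y$. The second bounds the projection of the removed row/column difference onto the leave-one-out singular spaces, using the independence built into the construction. Throughout, we treat $\mathbf Y$; the $\mathbf Z$ case is identical with $\rho_Z$ in place of $\rho_Y$.

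\emph{Step 1 (delocalisation of $\mathbf U_{\mathbf Y}^{(-r)},\mathbf V_{\mathbf Y}^{(-r)}$).} For each $r$, the matrix $\mathbf Y^{(-r)}-\mathbf S_Y$ equals $\mathbf Y-\mathbf S_Y$ with the row-$r$ and column-$r$ entries of every block set to zero, plus the diagonal correction. Its off-diagonal part is still a sum of independent dyad blocks $\mathbf M_{ij}$ over pairs $i<j$ with $r\notin\{i,j\}$. The matrix Bernstein argument of Lemma~\ref{lem:fixed-n-dyad-block-concentration} therefore applies verbatim, with the same variance proxy $Cnp\rho_Y$ and the same summand bound $C\sqrt n$. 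This gives $\|\mathbf Y^{(-r)}-\mathbf S_Y\|\le C_{c_0}(p\rho_Y\log p)^{1/2}$ with probability $1-C p^{-c_0-1}$. A union bound over $r\in[p]$ then gives the bound uniformly in $r$ with probability $1-Cp^{-c_0}$. Given this, the arguments of Proposition~\ref{prop:sigmapy}, Proposition~\ref{prop:bounduv}(i) (Davis--Kahan, Lemma~\ref{lem:davis}, with the gap from Condition~\ref{ass:3}) and Proposition~\ref{prop:uv2} go through with $\mathbf Y^{(-r)}$ in place of $\mathbf Y$. The projector identity used in Proposition~\ref{prop:uv2} gives
\[
\|\mathbf e_i^\top\mathbf U_{\mathbf Y}^{(-r)}\|_2\le\|\mathbf U_{\mathbf Y}^{(-r)}\mathbf U_{\mathbf Y}^{(-r)\top}-\mathbf U_{S_Y}\mathbf U_{S_Y}^\top\|+\|\mathbf U_{S_Y}\|_{2\to\infty},
\]
and similarly for $\mathbf V$. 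This yields the first display uniformly over $r$.

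\emph{Step 2 (the projected difference).} The matrix $\mathbf D_r:=\mathbf Y-\mathbf Y^{(-r)}$ is supported on row $r$ and on the columns $r+(t-1)p$, $t\in[n]$. Its nonzero entries are the centred variables $Y^{(t)}_{rb}-(\mathbf S^{(t)}_Y)_{rb}$ for $b\ne r$, and minus the diagonal entries $(\mathbf S_Y^{(t)})_{rr}$. Hence
\[
\mathbf D_r\mathbf V_{\mathbf Y}^{(-r)}=\mathbf e_r\,\mathbf x_r^\top\mathbf V_{\mathbf Y}^{(-r)}+\sum_{t=1}^n\mathbf d_{r,t}\,\mathbf e_{r+(t-1)p}^\top\mathbf V_{\mathbf Y}^{(-r)}.
\]
Here $\mathbf x_r$ is row $r$ of $\mathbf Y-\mathbf S_Y$, and $\mathbf d_{r,t}$ is column $r$ of block $t$. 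The second sum has Frobenius norm at most $\sum_t\|\mathbf d_{r,t}\|_2\,\|\mathbf V^{(-r)}_{\mathbf Y}\|_{2\to\infty}$. Since $\|\mathbf d_{r,t}\|_2=O(\sqrt{p\rho_Y\log p})$ by a Bernstein bound on the column degree, Step 1 gives $O_{\mathbb P}(\log p)$ for this term. To sharpen it to $\sqrt{\log p}$, we instead treat this column term by the same conditioning argument as the row term below, noting that $\mathbf e_{r+(t-1)p}^\top\mathbf V^{(-r)}_{\mathbf Y}$ is independent of $\mathbf d_{r,t}$. The key term is $\mathbf x_r^\top\mathbf V_{\mathbf Y}^{(-r)}$. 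By construction $\mathbf V_{\mathbf Y}^{(-r)}$ is a function only of dyads not containing $r$, so it is independent of the dyads $\{(r,b)\}_b$. Conditioning on $\mathbf V_{\mathbf Y}^{(-r)}$, each coordinate $k\le K_Y$ is
\[
\sum_{b\ne r}\sum_{t}\bigl(Y^{(t)}_{rb}-(\mathbf S_Y^{(t)})_{rb}\bigr)\bigl(\mathbf V^{(-r)}_{\mathbf Y}\bigr)_{b+(t-1)p,k},
\]
a sum of independent (across $b$) mean-zero bounded terms, with dependence allowed within a dyad across $t$. Bernstein's inequality with variance $\le Cn\rho_Y\sum_{b,t}(\mathbf V^{(-r)}_{\mathbf Y})^2_{b+(t-1)p,k}\le Cn\rho_Y$ and range $\le\sqrt n\|\mathbf V^{(-r)}_{\mathbf Y}\|_{2\to\infty}$ gives a bound of order $\sqrt{\rho_Y\log p}+\sqrt{\log p/(\rho_Y p)}\,\log p$. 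This is $O(\sqrt{\log p})$ under Condition~\ref{ass:transition-sparsity}. The diagonal contribution is $O(\rho_Y)$ by Cauchy--Schwarz, as in Proposition~\ref{prop:uy-pv}. The transposed quantity $\mathbf D_r^\top\mathbf U^{(-r)}_{\mathbf Y}$ is handled symmetrically. A union bound over $r$ and $k$ finishes the argument.

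The main obstacle is Step 2. The independence of $\mathbf V_{\mathbf Y}^{(-r)}$ from the $r$-incident dyads has to be used carefully, because the temporal dependence lives within each dyad: the whole dyad $(r,b)$ across all $t$ must be replaced together, and this is precisely what the construction does. The column-support term must also be bounded without losing an extra $\sqrt{\log p}$. If the conditioning route for that term proves delicate, I would fall back on bounding it by $\|\mathbf d_{r,t}\|_2\|\mathbf V^{(-r)}_{\mathbf Y}\|_{2\to\infty}$ and check whether the resulting $O(\log p)$ rate still suffices downstream.
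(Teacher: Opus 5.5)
Your Step 1, and your treatment of the row-$r$ term $\mathbf x_r^\top\mathbf V_{\mathbf Y}^{(-r)}$, are essentially the paper's argument. Both use the dyad-block matrix Bernstein bound with a union bound over $r$, then Davis--Kahan and the projector identity. Both then apply a conditional concentration bound that exploits the independence of $\mathbf Y^{(-r)}$ from the $r$-incident dyad trajectories (the paper uses Hoeffding where you use Bernstein).

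The gap is the column-support term $\sum_t\mathbf d_{r,t}\mathbf e_{r+(t-1)p}^\top\mathbf V_{\mathbf Y}^{(-r)}$. The same gap affects its mirror image $\mathbf x_r\mathbf e_r^\top\mathbf U_{\mathbf Y}^{(-r)}$ in the transposed statement, which your ``handled symmetrically'' inherits. As written you only obtain $O_{\mathbb P}(\log p)$ there, so the stated rate is not proved. Your suggested repair is also aimed at the wrong thing, because this term is not a centred linear form. Conditionally on $\mathbf Y^{(-r)}$, its squared Frobenius norm is a sum over $a\ne r$ of independent nonnegative terms. It therefore concentrates around its conditional mean, not around zero; that mean is bounded by, and typically comparable to, $Cp\rho_Y\sum_t\|\mathbf e_{r+(t-1)p}^\top\mathbf V_{\mathbf Y}^{(-r)}\|_2^2$. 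Independence of $\mathbf d_{r,t}$ from the weights produces no cancellation here. Your fallback of accepting $O(\log p)$ would prove a weaker proposition, not this one.

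The missing step is the size of the column itself. The bound $\|\mathbf d_{r,t}\|_2=O(\sqrt{p\rho_Y\log p})$ is loose by $\sqrt{\log p}$: that is the scale of $\|\mathbf Y-\mathbf S_Y\|$ or of a centred degree, not the $\ell_2$ norm of a centred column. With row $r$ removed, $\|\mathbf d_{r,t}\|_2^2=\sum_{a\ne r}\{Y^{(t)}_{ar}-(\mathbf S_Y^{(t)})_{ar}\}^2$ is a sum of independent $[0,1]$-valued variables with means at most $C\rho_Y$. Scalar Bernstein and a union bound over $(r,t)$ then give $\max_{r,t}\|\mathbf d_{r,t}\|_2^2\le Cp\rho_Y+C(p\rho_Y\log p)^{1/2}+C\log p=O(p\rho_Y)$, since $p\rho_Y\gg\log p$. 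The triangle inequality, $\|\mathbf a\mathbf b^\top\|_F=\|\mathbf a\|_2\|\mathbf b\|_2$ and Step 1 then give $\sum_t\|\mathbf d_{r,t}\|_2\|\mathbf e_{r+(t-1)p}^\top\mathbf V_{\mathbf Y}^{(-r)}\|_2=O_{\mathbb P}\{(p\rho_Y)^{1/2}(\log p/(p\rho_Y))^{1/2}\}=O_{\mathbb P}(\sqrt{\log p})$ uniformly in $r$, with no conditioning needed.

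Likewise $\|\mathbf x_r\|_2=O_{\mathbb P}\{(np\rho_Y)^{1/2}\}$ settles the transposed statement. Its remaining rows, indexed by $r+(t-1)p$, are centred linear forms in the incident edges, which your conditional argument does handle. The paper combines the same two ingredients, delocalisation plus the correct column scale, through a conditional Bernstein bound on the nonnegative variables $X_{a,r}$. Their conditional means sum to $Cp\rho_Y\sum_t\|\mathbf e_{r+(t-1)p}^\top\mathbf V_{\mathbf Y}^{(-r)}\|_2^2=O_{\mathbb P}(\log p)$. (Minor: as you define them, $\mathbf x_r$ and $\mathbf d_{r,t}$ both contain the entry $-(\mathbf S_Y^{(t)})_{rr}$; drop row $r$ from $\mathbf d_{r,t}$ to avoid counting it twice.)
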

	
	\begin{proof}
		Throughout the proof, the latent positions are fixed and deterministic, and
		\(\mathbf S_Y\) denotes the corresponding population mean of \(\mathbf Y\). Thus
		$(\mathbf Y^{(t)})_{ab}-(\mathbf S_Y^{(t)})_{ab}$
		is centred for $a\neq b$, the deterministic diagonal mismatch being handled
		separately, and different edge trajectories are independent across unordered
		edge pairs.
		
		We first prove the \(2\to\infty\) bounds. Fix \(r\in[p]\) and \(\ell\in[p]\). By
		the projection argument,
		\[
		\left\|\mathbf e_\ell^{\top}\mathbf U_{\mathbf Y}^{(-r)}\right\|_2
		\le
		\left\|
		\mathbf U_{\mathbf Y}^{(-r)}
		\mathbf U_{\mathbf Y}^{(-r)\top}
		-
		\mathbf U_{\mathbf S_Y}
		\mathbf U_{\mathbf S_Y}^{\top}
		\right\|
		+
		\left\|
		\mathbf U_{\mathbf S_Y}
		\right\|_{2\to\infty}.
		\]
		By Lemma~\ref{lem:davis},
		\[
		\left\|
		\mathbf U_{\mathbf Y}^{(-r)}
		\mathbf U_{\mathbf Y}^{(-r)\top}
		-
		\mathbf U_{\mathbf S_Y}
		\mathbf U_{\mathbf S_Y}^{\top}
		\right\|
		\leq
		\frac{
			2 \sqrt{2K_Y}
			\left(
			2 \sigma_1(\mathbf S_Y)+\|\mathbf{Y}^{(-r)}-\mathbf S_Y\|
			\right)
			\|\mathbf{Y}^{(-r)}-\mathbf S_Y\|
		}{
			\sigma_{K_Y}(\mathbf S_Y)^2
		},
		\]
		on the event where the denominator is positive.
		
		It remains to control \(\|\mathbf Y^{(-r)}-\mathbf S_Y\|\) uniformly over
		\(r\).  We do this directly rather than comparing \(\mathbf Y^{(-r)}\) with
		\(\mathbf Y\) in Frobenius norm.  The off-diagonal part of \(\mathbf Y^{(-r)}-\mathbf S_Y\) is a sum of independent
		centred dyad-block matrices over unordered dyads not incident to \(r\); dyads
		incident to \(r\) contribute zero after the leave-one-row replacement.  As in
		Proposition~\ref{prop:boundy-p}, the deterministic diagonal contribution has
		operator norm \(O(\rho_Y)\), which is dominated by the stochastic bound below.
		For each fixed \(r\), the same dyad-block calculation as in
		Lemma~\ref{lem:fixed-n-dyad-block-concentration} gives variance bounded
		by \(Cnp\rho_Y\) and summand norm bounded by \(C\sqrt n\).  Therefore
		Lemma~\ref{lem:matrixBernstein} gives, for every fixed \(c>0\),
		\[
		\mathbb P\left\{
		\left\|
		\mathbf Y^{(-r)}-\mathbf S_Y
		\right\|
		>
		C_c\left((p\rho_Y\log p)^{1/2}+\log p\right) \right\}
		\le
		p^{-(c+2)},
		\]
		uniformly in \(r\).  A union
		bound over \(r\in[p]\) yields
		\[
		\max_{r\in[p]}
		\left\|
		\mathbf Y^{(-r)}-\mathbf S_Y
		\right\|
		=
		O_{\mathbb P}
		\left(
		(p\rho_Y\log p)^{1/2}+\log p
		\right).
		\]
		Since \(n\) is fixed and Condition~\ref{ass:transition-sparsity} implies
		\(p\rho_Y/\log p\to\infty\), the logarithmic term is dominated.  Hence
		\begin{equation}\label{eq:uniform-loo-operator-bound}
		    	\max_{r\in[p]}
		\left\|
		\mathbf Y^{(-r)}-\mathbf S_Y
		\right\|
		=
		O_{\mathbb P}
		\left(
		(\rho_Yp\log p)^{1/2}
		\right).
		\end{equation}
		Condition~\ref{ass:3} then yields
		\begin{equation}\label{eq:loo}
			\max_{r\in[p]}
			\left\|
			\mathbf U_{\mathbf Y}^{(-r)}
			\mathbf U_{\mathbf Y}^{(-r)\top}
			-
			\mathbf U_{\mathbf S_Y}
			\mathbf U_{\mathbf S_Y}^{\top}
			\right\|
			=
			O_{\mathbb P}
			\left(
			\sqrt{\frac{\log p}{\rho_Y p}}
			\right).
		\end{equation}
		
		Together with Proposition~\ref{prop:uv2}, this gives
		\[
		\max_{r\in[p]}
		\left\|
		\mathbf U_{\mathbf Y}^{(-r)}
		\right\|_{2\to\infty}
		=
		O_{\mathbb P}
		\left(
		\sqrt{\frac{\log p}{\rho_Y p}}
		\right).
		\]
		
		The proof for \(\mathbf V_{\mathbf Y}^{(-r)}\) is identical. 
		
		We next prove the Frobenius norm bound. Fix \(r\in[p]\). By the definition of
		\(\mathbf Y^{(-r)}\), for each \(t\in[n]\),
		\[
		\left(\mathbf Y^{(t)}-\mathbf Y^{(-r),(t)}\right)_{ab}
		=
		\begin{cases}
			(\mathbf Y^{(t)})_{ab}-(\mathbf S_Y^{(t)})_{ab},
			& a=r \text{ or } b=r,\\
			0,
			& a\neq r,\ b\neq r.
		\end{cases}
		\]
		Thus \((\mathbf Y-\mathbf Y^{(-r)})\mathbf V_{\mathbf Y}^{(-r)}\) only has
		nonzero contributions from entries incident to vertex \(r\).
		
		First consider the \(r\)-th row. We have
		\[
		\begin{aligned}
			\mathbf e_r^\top
			\left(
			\mathbf Y-\mathbf Y^{(-r)}
			\right)
			\mathbf V_{\mathbf Y}^{(-r)}
			&=
			\sum_{a\neq r}
			\sum_{t=1}^n
			\left\{
			(\mathbf Y^{(t)})_{ra}
			-
			(\mathbf S_Y^{(t)})_{ra}
			\right\}
			\mathbf e_{(t,a)}^\top
			\mathbf V_{\mathbf Y}^{(-r)}
			\\
			&\quad
			-
			\sum_{t=1}^n
			(\mathbf S_Y^{(t)})_{rr}
			\mathbf e_{(t,r)}^\top
			\mathbf V_{\mathbf Y}^{(-r)}.
		\end{aligned}
		\]
		The second term is the diagonal correction. Since
		$\max_{t,r}|(\mathbf S_Y^{(t)})_{rr}|
		\le C\rho_Y$,
		we have
		\[
		\max_{r\in[p]}
		\left\|
		\sum_{t=1}^n
		(\mathbf S_Y^{(t)})_{rr}
		\mathbf e_{(t,r)}^\top
		\mathbf V_{\mathbf Y}^{(-r)}
		\right\|_2
		=
		O_{\mathbb P}
		\left(
		\sqrt{\frac{\rho_Y\log p}{p}}
		\right)
		=
		O_{\mathbb P}(\sqrt{\log p}).
		\]
		
		Fix \(q\in[K_Y]\).  The \(q\)-th coordinate of the first term in the
		\(r\)-th row is
		\[
		\sum_{a\neq r}
		\sum_{t=1}^n
		\left\{
		(\mathbf Y^{(t)})_{ra}
		-
		(\mathbf S_Y^{(t)})_{ra}
		\right\}
		\left(
		\mathbf e_{(t,a)}^\top
		\mathbf V_{\mathbf Y}^{(-r)}
		\right)_q .
		\]
		Conditional on \(\mathbf Y^{(-r)}\), the
		weights
		$\left(
		\mathbf e_{(t,a)}^\top
		\mathbf V_{\mathbf Y}^{(-r)}
		\right)_q$
		are fixed, while the edge trajectories
		\[
		\left\{
		\left(
		(\mathbf Y^{(1)})_{ra},
		\ldots,
		(\mathbf Y^{(n)})_{ra}
		\right):a\neq r
		\right\}
		\]
		are independent over \(a\neq r\). We do not require independence over time within
		a fixed edge trajectory.
		
		For each \(a\neq r\), the summand
		$\sum_{t=1}^n
		\left\{
		(\mathbf Y^{(t)})_{ra}
		-
		(\mathbf S_Y^{(t)})_{ra}
		\right\}
		\left(
		\mathbf e_{(t,a)}^\top
		\mathbf V_{\mathbf Y}^{(-r)}
		\right)_q$
		is centred and bounded in absolute value by
		$\sum_{t=1}^n
		\left|
		\left(
		\mathbf e_{(t,a)}^\top
		\mathbf V_{\mathbf Y}^{(-r)}
		\right)_q
		\right|.$
		Moreover, by Cauchy--Schwarz and the fact that the \(q\)-th column of
		\(\mathbf V_{\mathbf Y}^{(-r)}\) has unit norm,
		\[
		\sum_{a\neq r}
		\left(
		\sum_{t=1}^n
		\left|
		\left(
		\mathbf e_{(t,a)}^\top
		\mathbf V_{\mathbf Y}^{(-r)}
		\right)_q
		\right|
		\right)^2
		\le
		n
		\sum_{a\neq r}
		\sum_{t=1}^n
		\left(
		\left(
		\mathbf e_{(t,a)}^\top
		\mathbf V_{\mathbf Y}^{(-r)}
		\right)_q
		\right)^2
		\le n.
		\]
		Therefore, by conditional Hoeffding's inequality, there exists a constant
		\(c>0\) such that
		\[
		\mathbb P
		\left(
		\left|
		\sum_{a\neq r}
		\sum_{t=1}^n
		\left\{
		(\mathbf Y^{(t)})_{ra}
		-
		(\mathbf S_Y^{(t)})_{ra}
		\right\}
		\left(
		\mathbf e_{(t,a)}^\top
		\mathbf V_{\mathbf Y}^{(-r)}
		\right)_q
		\right|
		>x
		\ \middle|\
		\mathbf Y^{(-r)}
		\right)
		\le
		2\exp(-cx^2).
		\]
		Taking \(x=C\sqrt{\log p}\) and applying a union bound over
		\(r\in[p]\) and \(q\in[K_Y]\), with \(K_Y\) fixed, gives
		\[
		\max_{r\in[p]}
		\left\|
		\sum_{a\neq r}
		\sum_{t=1}^n
		\left\{
		(\mathbf Y^{(t)})_{ra}
		-
		(\mathbf S_Y^{(t)})_{ra}
		\right\}
		\mathbf e_{(t,a)}^\top
		\mathbf V_{\mathbf Y}^{(-r)}
		\right\|_2
		=
		O_{\mathbb P}(\sqrt{\log p}).
		\]
		Combining this with the diagonal correction,
		\[
		\max_{r\in[p]}
		\left\|
		\mathbf e_r^\top
		\left(
		\mathbf Y-\mathbf Y^{(-r)}
		\right)
		\mathbf V_{\mathbf Y}^{(-r)}
		\right\|_2
		=
		O_{\mathbb P}(\sqrt{\log p}).
		\]
		
		Next consider rows \(a\neq r\). For each \(a\neq r\),
		\[
		\mathbf e_a^\top
		\left(
		\mathbf Y-\mathbf Y^{(-r)}
		\right)
		\mathbf V_{\mathbf Y}^{(-r)}
		=
		\sum_{t=1}^n
		\left\{
		(\mathbf Y^{(t)})_{ar}
		-
		(\mathbf S_Y^{(t)})_{ar}
		\right\}
		\mathbf e_{(t,r)}^\top
		\mathbf V_{\mathbf Y}^{(-r)}.
		\]
		Define
		\[
		X_{a,r}
		:=
		\left\|
		\sum_{t=1}^n
		\left\{
		(\mathbf Y^{(t)})_{ar}
		-
		(\mathbf S_Y^{(t)})_{ar}
		\right\}
		\mathbf e_{(t,r)}^\top
		\mathbf V_{\mathbf Y}^{(-r)}
		\right\|_2^2,
		\qquad a\neq r.
		\]
		Then
		\[
		\sum_{a\neq r}
		\left\|
		\mathbf e_a^\top
		\left(
		\mathbf Y-\mathbf Y^{(-r)}
		\right)
		\mathbf V_{\mathbf Y}^{(-r)}
		\right\|_2^2
		=
		\sum_{a\neq r}X_{a,r}.
		\]
		Conditional on \(\mathbf Y^{(-r)}\), the edge
		trajectories
		\[
		\left\{
		\bigl(
		(\mathbf Y^{(1)})_{ar},\ldots,(\mathbf Y^{(n)})_{ar}
		\bigr):a\neq r
		\right\}
		\]
		are independent over \(a\neq r\). Hence the random variables
		\(\{X_{a,r}:a\neq r\}\) are conditionally independent given
		\(\mathbf Y^{(-r)}\).
		
		Since \(n\) is fixed and all entries are bounded,
		\[
		X_{a,r}
		\le
		C
		\sum_{t=1}^n
		\left\|
		\mathbf e_{(t,r)}^\top
		\mathbf V_{\mathbf Y}^{(-r)}
		\right\|_2^2
		=
		O_{\mathbb P}
		\left(
		\frac{\log p}{\rho_Yp}
		\right).
		\]
		Therefore, on a high-probability event, there exists a constant \(C_1>0\) such
		that uniformly over \(a\neq r\) and \(r\in[p]\),
		\[
		0\le X_{a,r}
		\le
		C_1\frac{\log p}{\rho_Yp}.
		\]
		
		Next, since \(n\) is fixed and the temporal covariance matrix of
		\[
		\left(
		(\mathbf Y^{(1)})_{ar}-(\mathbf S_Y^{(1)})_{ar},
		\ldots,
		(\mathbf Y^{(n)})_{ar}-(\mathbf S_Y^{(n)})_{ar}
		\right)
		\]
		has operator norm \(O(\rho_Y)\), we have
		\[
		\mathbb E[X_{a,r}\mid \mathbf Y^{(-r)}]
		\le
		C\rho_Y
		\sum_{t=1}^n
		\left\|
		\mathbf e_{(t,r)}^\top
		\mathbf V_{\mathbf Y}^{(-r)}
		\right\|_2^2.
		\]
		Thus
		\[
		\sum_{a\neq r}
		\mathbb E[X_{a,r}\mid \mathbf Y^{(-r)}]
		\le
		Cp\rho_Y
		\sum_{t=1}^n
		\left\|
		\mathbf e_{(t,r)}^\top
		\mathbf V_{\mathbf Y}^{(-r)}
		\right\|_2^2
		=
		O_{\mathbb P}(\log p),
		\]
		uniformly over \(r\in[p]\). Hence, on the same high-probability event, there
		exists a constant \(C_2>0\) such that for all \(r\in[p]\),
		$\sum_{a\neq r}
		\mathbb E[X_{a,r}\mid \mathbf Y^{(-r)}]
		\le
		C_2\log p.$
		
		Apply Bernstein's inequality conditionally on \(\mathbf Y^{(-r)}\).  The variance
		term satisfies
		\[
		\sum_{a\neq r}
		\operatorname{Var}(X_{a,r}\mid \mathbf Y^{(-r)})
		\le
		\sum_{a\neq r}
		\mathbb E[X_{a,r}^2\mid \mathbf Y^{(-r)}] 
		\le
		C_1\frac{\log p}{\rho_Yp}
		\sum_{a\neq r}
		\mathbb E[X_{a,r}\mid \mathbf Y^{(-r)}] 
		\le
		C_1C_2\frac{\log^2 p}{\rho_Yp}.
		\]
		Therefore, for any sufficiently large constant \(C_3>C_2\),
		\[
		\begin{aligned}
			\mathbb P\left(
			\sum_{a\neq r}X_{a,r}
			>
			C_3\log p
			\ \middle|\
			\mathbf Y^{(-r)}
			\right) &\le
			\exp\left[
			-
			\frac{
				(C_3-C_2)^2\log^2 p/2
			}{
				C_1C_2\frac{\log^2 p}{\rho_Yp}
				+
				C_1\frac{\log p}{\rho_Yp}(C_3-C_2)\log p/3
			}
			\right] \\
			&\le
			\exp(-c\rho_Yp)
		\end{aligned}
		\]
		for some constant \(c>0\).  A union bound over \(r\in[p]\), together with
		Condition~\ref{ass:transition-sparsity}, which implies
		\(\rho_Yp\gg \log p\), gives
		\[
		\max_{r\in[p]}
		\sum_{a\neq r}
		\left\|
		\mathbf e_a^\top
		\left(
		\mathbf Y-\mathbf Y^{(-r)}
		\right)
		\mathbf V_{\mathbf Y}^{(-r)}
		\right\|_2^2
		=
		O_{\mathbb P}(\log p).
		\]
		Therefore,
		\[
		\max_{r\in[p]}
		\left(
		\sum_{a\neq r}
		\left\|
		\mathbf e_a^\top
		\left(
		\mathbf Y-\mathbf Y^{(-r)}
		\right)
		\mathbf V_{\mathbf Y}^{(-r)}
		\right\|_2^2
		\right)^{1/2}
		=
		O_{\mathbb P}(\sqrt{\log p}).
		\]
		
		Combining the bounds for the \(r\)-th row and the rows \(a\neq r\), we conclude
		that
		\[
		\max_{r\in[p]}
		\left\|
		\left(
		\mathbf Y-\mathbf Y^{(-r)}
		\right)
		\mathbf V_{\mathbf Y}^{(-r)}
		\right\|_F
		=
		O_{\mathbb P}(\sqrt{\log p}).
		\]
		The proof of
		\[
		\max _r\left\|\left(\mathbf Y-\mathbf Y^{(-r)}\right)^{\top}
		\mathbf U_Y^{(-r)}\right\|_F
		=
		O_{\mathbb{P}}(\sqrt{\log p})
		\]
		is identical.
	\end{proof}

	\begin{proposition}\label{prop:last4}
		Define 
		\begin{align*}
			R_1 =& \mathbf{V}_{S_Y}\left(\mathbf{V}_{S_Y}^{\top} \mathbf{V}_{\mathbf{Y}} \boldsymbol{\Sigma}_{\mathbf{Y}}^{1 / 2}-\boldsymbol{\Sigma}_{S_Y}^{1 / 2} \mathbf{W_{YS}}\right);\\
			R_2 =& \left(\mathbf{I}-\mathbf{V}_{S_Y} \mathbf{V}_{S_Y}^{\top}\right)(\mathbf{Y}-\mathbf S_Y)^{\top}\left(\mathbf{U}_{\mathbf{Y}}-\mathbf{U}_{S_Y} \mathbf{W_{YS}}\right) \boldsymbol{\Sigma}_{\mathbf{Y}}^{-1 / 2};\\
			R_3 =& -\mathbf{V}_{S_Y} \mathbf{V}_{S_Y}^{\top}(\mathbf{Y}-\mathbf S_Y)^{\top} \mathbf{U}_{S_Y} \mathbf{W_{YS}} \boldsymbol{\Sigma}_{\mathbf{Y}}^{-1 / 2};\\
			R_4 =& (\mathbf{Y}-\mathbf S_Y)^{\top} \mathbf{U}_{S_Y}\left(\mathbf{W_{YS}} \boldsymbol{\Sigma}_{\mathbf{Y}}^{-1 / 2}-\boldsymbol{\Sigma}_{S_Y}^{-1 / 2} \mathbf{W_{YS}}\right).
		\end{align*}
		The following bounds hold:
		
		(i) $\left\|R_1\right\|_{2 \rightarrow \infty}=O_{\mathbb P}\left(\frac{\log (p)}{\rho_Y^{1 / 2} p}\right)$.
		
		(ii) $\left\|R_2\right\|_{2 \rightarrow \infty}=O_{\mathbb P}\left(\frac{\log ^{3 / 2} p}{\rho_Y p}\right)$.
		
		(iii) $\left\|R_3\right\|_{2 \rightarrow \infty}=O_{\mathbb P}\left(
\frac{\sqrt{\log p}}{p\sqrt{\rho_Y}}
\right)$.
		
		(iv) $\left\|R_4\right\|_{2 \rightarrow \infty}=O_{\mathbb P}\left(\frac{\log ^{3 / 2}(p)}{\rho_Y p}\right)$.
		
	\end{proposition}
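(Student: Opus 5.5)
We bound each of the four remainders separately in the $2\to\infty$ norm, using $\|\mathbf A\mathbf B\|_{2\to\infty}\le\|\mathbf A\|_{2\to\infty}\|\mathbf B\|$ together with the spectral estimates already available: Proposition~\ref{prop:sigmapy} for $\boldsymbol\Sigma_{\mathbf Y}$, Propositions~\ref{prop:uvw} and~\ref{prop:sigmaw} for the alignment, Proposition~\ref{prop:uv2} for the $2\to\infty$ size of the singular vectors, and Proposition~\ref{prop:loo} for the leave-one-out decoupling.

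\textbf{Terms $R_1$ and $R_4$.} For $R_1$ we write
\[
\mathbf V_{S_Y}^\top\mathbf V_{\mathbf Y}\boldsymbol\Sigma_{\mathbf Y}^{1/2}-\boldsymbol\Sigma_{S_Y}^{1/2}\mathbf W_{YS}
=(\mathbf V_{S_Y}^\top\mathbf V_{\mathbf Y}-\mathbf W_{YS})\boldsymbol\Sigma_{\mathbf Y}^{1/2}+(\mathbf W_{YS}\boldsymbol\Sigma_{\mathbf Y}^{1/2}-\boldsymbol\Sigma_{S_Y}^{1/2}\mathbf W_{YS}).
\]
The first piece is $O_{\mathbb P}(\log p/(\rho_Yp))\cdot O((p\rho_Y)^{1/2})$ by Proposition~\ref{prop:uvw}. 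The second is $O_{\mathbb P}(\log p/(\rho_Yp)^{1/2})$ by Proposition~\ref{prop:sigmaw}(ii). Multiplying by $\|\mathbf V_{S_Y}\|_{2\to\infty}=O(p^{-1/2})$ gives (i).

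For $R_4$, each row of $(\mathbf Y-\mathbf S_Y)^\top\mathbf U_{S_Y}$ is a sum over independent dyads with weights $(\mathbf U_{S_Y})_{a\cdot}=O(p^{-1/2})$. The diagonal mismatch is negligible. Hoeffding's inequality plus a union bound over the $np$ rows then gives $\|(\mathbf Y-\mathbf S_Y)^\top\mathbf U_{S_Y}\|_{2\to\infty}=O_{\mathbb P}(\sqrt{\log p})$; for this bound it is enough that each column of $\mathbf Y-\mathbf S_Y$ involves independent entries. Combining with Proposition~\ref{prop:sigmaw}(iii) gives $O_{\mathbb P}(\log^{3/2}p/(\rho_Y p)^{3/2})$, which is within (iv).

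\textbf{Term $R_3$.} We have $\|R_3\|_{2\to\infty}\le\|\mathbf V_{S_Y}\|_{2\to\infty}\,\|\mathbf U_{S_Y}^\top(\mathbf Y-\mathbf S_Y)\mathbf V_{S_Y}\|\,\|\boldsymbol\Sigma_{\mathbf Y}^{-1/2}\|$. Proposition~\ref{prop:uy-pv} bounds the middle factor by $O_{\mathbb P}(\sqrt{\log p})$, and Proposition~\ref{prop:sigmapy} gives $\|\boldsymbol\Sigma_{\mathbf Y}^{-1/2}\|=O_{\mathbb P}((p\rho_Y)^{-1/2})$. Together with $\|\mathbf V_{S_Y}\|_{2\to\infty}=O(p^{-1/2})$ this yields $\sqrt{\log p}/(p\sqrt{\rho_Y})$, which is (iii).

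\textbf{Term $R_2$: the main obstacle.} Here the noise $(\mathbf Y-\mathbf S_Y)^\top$ is dependent on $\mathbf U_{\mathbf Y}$. We first split off the projection:
\[
\|R_2\|_{2\to\infty}\le\|(\mathbf Y-\mathbf S_Y)^\top(\mathbf U_{\mathbf Y}-\mathbf U_{S_Y}\mathbf W_{YS})\|_{2\to\infty}\|\boldsymbol\Sigma_{\mathbf Y}^{-1/2}\|+\|\mathbf V_{S_Y}\|_{2\to\infty}\|\mathbf Y-\mathbf S_Y\|\|\mathbf U_{\mathbf Y}-\mathbf U_{S_Y}\mathbf W_{YS}\|\|\boldsymbol\Sigma_{\mathbf Y}^{-1/2}\|.
\]
The second term is $p^{-1/2}(p\rho_Y\log p)^{1/2}(\log p/(\rho_Yp))^{1/2}(p\rho_Y)^{-1/2}$, which is small enough.

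For the first term, fix a row index $(t,r)$; it involves column $r$ of $\mathbf Y^{(t)}-\mathbf S_Y^{(t)}$. We insert the leave-one-out matrix $\mathbf U^{(-r)}_{\mathbf Y}$ with a Procrustes matrix $\mathbf H^{(r)}$ and split the row into three pieces:
\[
\mathbf e_{(t,r)}^\top(\mathbf Y-\mathbf S_Y)^\top\bigl[(\mathbf U_{\mathbf Y}-\mathbf U_{\mathbf Y}^{(-r)}\mathbf H^{(r)})+(\mathbf U_{\mathbf Y}^{(-r)}\mathbf H^{(r)}-\mathbf U_{S_Y}\mathbf W_{YS})\bigr].
\]
\emph{Piece (a)}, with the first bracket, is bounded by $\|\mathbf Y-\mathbf S_Y\|\,\|\mathbf U_{\mathbf Y}\mathbf U_{\mathbf Y}^\top-\mathbf U^{(-r)}_{\mathbf Y}\mathbf U^{(-r)\top}_{\mathbf Y}\|$. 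A Davis--Kahan/Wedin step controls the projection difference by $\|(\mathbf Y-\mathbf Y^{(-r)})\mathbf V^{(-r)}_{\mathbf Y}\|_F/\sigma_{K_Y}$. Proposition~\ref{prop:loo} makes this $O_{\mathbb P}(\sqrt{\log p}/(p\rho_Y))$.

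\emph{Piece (b)}, with the second bracket, uses that conditional on $\mathbf Y^{(-r)}$ the weights are fixed and column $r$ is a sum over independent dyads. Conditional Hoeffding gives $O_{\mathbb P}(\sqrt{\log p})$ times $\|\mathbf U_{\mathbf Y}^{(-r)}\mathbf H^{(r)}-\mathbf U_{S_Y}\mathbf W_{YS}\|_{2\to\infty}$.

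\emph{Piece (c)} is the $2\to\infty$ factor itself. We bound it crudely by $\|\mathbf U_{\mathbf Y}^{(-r)}\|_{2\to\infty}+\|\mathbf U_{S_Y}\|_{2\to\infty}=O_{\mathbb P}(\sqrt{\log p/(\rho_Yp)})$ via Propositions~\ref{prop:loo} and~\ref{prop:uv2}.

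Collecting the pieces, the first term of $R_2$ is $(p\rho_Y)^{-1/2}[(p\rho_Y\log p)^{1/2}\sqrt{\log p}/(p\rho_Y)+\sqrt{\log p}\,\sqrt{\log p/(\rho_Yp)}]$, i.e.\ $O_{\mathbb P}(\log p/(\rho_Yp))$, within the stated $\log^{3/2}p/(\rho_Yp)$. A final union bound over $r,t$ keeps the probability $1-O(p^{-c_0})$.

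The delicate point is matching $\mathbf H^{(r)}$ consistently and ensuring the conditional independence is valid, i.e.\ that $\mathbf Y^{(-r)}$ contains no randomness from dyads incident to $r$. This holds by the construction in Proposition~\ref{prop:loo}, since dyad trajectories are independent across unordered pairs.
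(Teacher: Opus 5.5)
Your bounds for $R_1$ and $R_3$ coincide with the paper's. For $R_4$, your rowwise Hoeffding bound $\|(\mathbf Y-\mathbf S_Y)^\top\mathbf U_{S_Y}\|_{2\to\infty}=O_{\mathbb P}(\sqrt{\log p})$ is valid and sharper than the paper's operator-norm argument.

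\textbf{The gap is in piece (b) of $R_2$.} Conditional Hoeffding given $\mathbf Y^{(-r)}$ requires the weights to be functions of $\mathbf Y^{(-r)}$ alone. But $\mathbf U_{\mathbf Y}^{(-r)}\mathbf H^{(r)}-\mathbf U_{S_Y}\mathbf W_{YS}$ contains $\mathbf W_{YS}$, which is computed from $\mathbf U_{\mathbf Y},\mathbf V_{\mathbf Y}$ of the full matrix. It therefore depends on the very dyads $\{a,r\}$ you are summing over. The same holds for any $\mathbf H^{(r)}$ (e.g.\ $\mathbf U_{\mathbf Y}^{(-r)\top}\mathbf U_{\mathbf Y}$ or its polar factor) that makes your piece (a) bound true. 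Proposition~\ref{prop:loo} only gives independence of $\mathbf Y^{(-r)}$ from those dyads, which does not fix this.

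There is also a norm error. Hoeffding bounds $\sum_a\{(\mathbf Y^{(t)})_{ar}-(\mathbf S_Y^{(t)})_{ar}\}w_{aq}$ by $\sqrt{\log p}\,(\sum_a w_{aq}^2)^{1/2}$, the Euclidean norm of a weight column, not by $\|\cdot\|_{2\to\infty}$; the two can differ by a factor up to $\sqrt p$. Your final number comes out right only because your crude $2\to\infty$ bound coincides in order with the spectral bound $\|\mathbf U_{\mathbf Y}^{(-r)}\mathbf H^{(r)}-\mathbf U_{S_Y}\mathbf W_{YS}\|=O_{\mathbb P}(\sqrt{\log p/(\rho_Yp)})$. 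That spectral bound needs $\mathbf H^{(r)}$ to be aligned with $\mathbf W_{YS}$, which is exactly what breaks the conditioning. With measurable but unaligned weights, piece (b) is only $O_{\mathbb P}(\sqrt{\log p})$. After multiplying by $\|\boldsymbol\Sigma_{\mathbf Y}^{-1/2}\|$ this gives $\sqrt{\log p/(p\rho_Y)}$, which exceeds the target rate under Condition~\ref{ass:transition-sparsity}.

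\textbf{The missing step is to separate the global alignment from the leave-one-out weights.} The paper writes
$(\mathbf Y-\mathbf S_Y)^\top(\mathbf U_{\mathbf Y}-\mathbf U_{S_Y}\mathbf W_{YS})=(\mathbf Y-\mathbf S_Y)^\top(\mathbf I-\mathbf U_{S_Y}\mathbf U_{S_Y}^\top)\mathbf U_{\mathbf Y}+(\mathbf Y-\mathbf S_Y)^\top\mathbf U_{S_Y}(\mathbf U_{S_Y}^\top\mathbf U_{\mathbf Y}-\mathbf W_{YS})$.
The second term is bounded in operator norm using the second-order estimate of Proposition~\ref{prop:uvw}; this is where the $\log^{3/2}p$ comes from. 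The first term is written as $(\mathbf Y-\mathbf S_Y)^\top(\mathbf I-\mathbf U_{S_Y}\mathbf U_{S_Y}^\top)\mathbf U_{\mathbf Y}\mathbf U_{\mathbf Y}^\top\cdot\mathbf U_{\mathbf Y}\boldsymbol\Sigma_{\mathbf Y}^{-1/2}$. The leave-one-out comparison is then made between the projections $\mathbf U_{\mathbf Y}\mathbf U_{\mathbf Y}^\top$ and $\mathbf U_{\mathbf Y}^{(-r)}\mathbf U_{\mathbf Y}^{(-r)\top}$. No rotation appears, and the Hoeffding weights $(\mathbf I-\mathbf U_{S_Y}\mathbf U_{S_Y}^\top)\mathbf U_{\mathbf Y}^{(-r)}$ are $\mathbf Y^{(-r)}$-measurable with spectral norm $O_{\mathbb P}(\sqrt{\log p/(\rho_Yp)})$ by \eqref{eq:loo}.

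In your own notation the repair is as follows. Set $\mathbf H^{(r)}=\mathbf U_{\mathbf Y}^{(-r)\top}\mathbf U_{\mathbf Y}$ and $\mathbf G^{(r)}=\mathbf U_{S_Y}^\top\mathbf U_{\mathbf Y}^{(-r)}$, and use the exact identity $\mathbf U_{\mathbf Y}^{(-r)}\mathbf H^{(r)}-\mathbf U_{S_Y}\mathbf W_{YS}=(\mathbf I-\mathbf U_{S_Y}\mathbf U_{S_Y}^\top)\mathbf U_{\mathbf Y}^{(-r)}\mathbf H^{(r)}+\mathbf U_{S_Y}(\mathbf G^{(r)}\mathbf H^{(r)}-\mathbf W_{YS})$. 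Apply conditional Hoeffding only to the first piece, with $\mathbf H^{(r)}$ factored out on the right. Bound the second piece by a deterministic-weight Hoeffding bound times $\|\mathbf G^{(r)}\mathbf H^{(r)}-\mathbf W_{YS}\|=O_{\mathbb P}(\log p/(\rho_Yp))$. With this change your claimed $R_2$ rate goes through.
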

	
	\begin{proof}
		(i) Using the relation $\|\mathbf{A B}\|_{2 \rightarrow \infty} \leq\|\mathbf{A}\|_{2 \rightarrow \infty}\|\mathbf{B}\|$, we have 
		\begin{align*}
			&\left\|R_1\right\|_{2 \rightarrow \infty} \leq \left\|\mathbf{V}_{S_Y}\right\|_{2 \rightarrow \infty}\left\|\left(\mathbf{V}_{S_Y}^{\top} \mathbf{V}_{\mathbf{Y}} \boldsymbol{\Sigma}_{\mathbf{Y}}^{1 / 2}-\boldsymbol{\Sigma}_{S_Y}^{1 / 2} \mathbf{W_{YS}}\right)\right\| \\
			\leq &\left\|\mathbf{V}_{S_Y}\right\|_{2 \rightarrow \infty}\left(\left\|\left(\mathbf{V}_{S_Y}^{\top} \mathbf{V}_{\mathbf{Y}}-\mathbf{W_{YS}}\right) \boldsymbol{\Sigma}_{\mathbf{Y}}^{1 / 2}\right\|_F+\left\|\mathbf{W_{YS}} \boldsymbol{\Sigma}_{\mathbf{Y}}^{1 / 2}-\boldsymbol{\Sigma}_{S_Y}^{1 / 2} \mathbf{W_{YS}}\right\|_F\right)\\
			=& O_{\mathbb P}\left(\frac{\log (p)}{\rho_Y^{1 / 2} p}\right)
		\end{align*}
		by Proposition~\ref{prop:sigmapy},~\ref{prop:uvw},~\ref{prop:sigmaw},~\ref{prop:uv2}.
		
		(ii) Decompose $R_2$ into the three terms
		\[
		\begin{aligned}
			-\mathbf M_1
			&=
			-\mathbf{V}_{S_Y} \mathbf{V}_{S_Y}^{\top}
			(\mathbf{Y}-\mathbf S_Y)^{\top}
			\left(\mathbf{U}_{\mathbf{Y}}-\mathbf{U}_{S_Y} \mathbf{W_{YS}}\right)
			\boldsymbol{\Sigma}_{\mathbf{Y}}^{-1 / 2},\\
			\mathbf M_2
			&=
			(\mathbf{Y}-\mathbf S_Y)^{\top}
			\left(\mathbf{I} -\mathbf{U}_{S_Y}\mathbf{U}_{S_Y}^{\top}\right)
			\mathbf{U}_{\mathbf{Y}}
			\boldsymbol{\Sigma}_{\mathbf{Y}}^{-1 / 2},\\
			\mathbf M_3
			&=
			(\mathbf{Y}-\mathbf S_Y)^{\top}\mathbf{U}_{S_Y}
			\left(\mathbf{U}_{S_Y}^{\top}\mathbf{U}_{\mathbf{Y}}-\mathbf{W_{YS}}\right)
			\boldsymbol{\Sigma}_{\mathbf{Y}}^{-1 / 2}.
		\end{aligned}
		\]
		For the first term, Propositions~\ref{prop:boundy-p},~\ref{prop:sigmapy},
		\ref{prop:uvw}, and~\ref{prop:uv2} imply
		$$\|\mathbf{M}_1\|_{2 \rightarrow \infty} \leq \left\|\mathbf{V}_{S_Y}\right\|_{2 \rightarrow \infty}\left\|\mathbf{Y}-\mathbf S_Y\right\|\left\|\mathbf{U}_{\mathbf{Y}}-\mathbf{U}_{S_Y} \mathbf{W_{YS}}\right\|\left\|\boldsymbol{\Sigma}_{\mathbf{Y}}^{-1 / 2}\right\|\leq \left\|\mathbf{U}_{\mathbf{Y}}-\mathbf{U}_{S_Y} \mathbf{W_{YS}}\right\| O_{\mathbb P}(\sqrt{\frac{\log p}{p}}).$$
		Furthermore, Propositions~\ref{prop:bounduv} and~\ref{prop:uvw} give
		\[
		\left\|\mathbf{U}_{\mathbf{Y}}-\mathbf{U}_{S_Y} \mathbf{W_{YS}}\right\|
		\leq
		\left\|\mathbf{U}_{\mathbf{Y}}-\mathbf{U}_{S_Y}
		\mathbf{U}_{S_Y}^{\top} \mathbf{U}_{\mathbf{Y}}\right\|
		+
		\left\|\mathbf{U}_{S_Y}
		\left(\mathbf{U}_{S_Y}^{\top} \mathbf{U}_{\mathbf{Y}}-\mathbf{W_{YS}}\right)\right\|
		=
		O_{\mathbb P}\left(\sqrt{\frac{\log p}{\rho_Y p}}\right).
		\]
		Hence
		\(\|\mathbf{M}_1\|_{2\rightarrow \infty}
		=O_{\mathbb P}\{\log p/(p\rho_Y^{1/2})\}\).
		
		For \(\mathbf M_2\),
		$$\left\|\mathbf{M}_2\right\|_{2\rightarrow\infty} \leq \left\|(\mathbf{Y}-\mathbf S_Y)^{\top}\left(\mathbf{I} -\mathbf{U}_{S_Y}\mathbf{U}_{S_Y}^{\top}\right)\mathbf{U}_{\mathbf{Y}}\mathbf{U}_{\mathbf{Y}}^{\top}\right\|_{2\rightarrow\infty} \|\mathbf{U}_{\mathbf{Y}} \boldsymbol{\Sigma}_{\mathbf{Y}}^{-1 / 2}\|.$$
		Set $\mathbf{M}_4 = (\mathbf{Y}-\mathbf S_Y)^{\top}\left(\mathbf{I} -\mathbf{U}_{S_Y}\mathbf{U}_{S_Y}^{\top}\right)\mathbf{U}_{\mathbf{Y}}\mathbf{U}_{\mathbf{Y}}^{\top}. $ 
		For \(t\in[n]\) and \(r\in[p]\), let \(\mathbf e_{(t,r)}\in\mathbb R^{np}\) denote the
		standard basis vector corresponding to the \(r\)-th column in the \(t\)-th block
		of \(\mathbf Y\). Since \(\mathbf Y^{(-r)}\) replaces all entries incident to
		vertex \(r\) by their population counterparts, we have
		$\mathbf e_{(t,r)}^\top(\mathbf Y-\mathbf S_Y)^\top
		=
		\mathbf e_{(t,r)}^\top(\mathbf Y-\mathbf Y^{(-r)})^\top.$
		Therefore,
		\[
		\begin{aligned}
			\mathbf e_{(t,r)}^\top
			\mathbf M_4
			&=
			\mathbf e_{(t,r)}^\top
			(\mathbf Y-\mathbf Y^{(-r)})^\top
			(\mathbf I-\mathbf U_{\mathbf S_Y}\mathbf U_{\mathbf S_Y}^{\top})
			\mathbf U_{\mathbf Y}^{(-r)}
			\mathbf U_{\mathbf Y}^{(-r)\top} \\
			&\qquad +
			\mathbf e_{(t,r)}^\top
			(\mathbf Y-\mathbf Y^{(-r)})^\top
			(\mathbf I-\mathbf U_{\mathbf S_Y}\mathbf U_{\mathbf S_Y}^{\top})
			\left(
			\mathbf U_{\mathbf Y}\mathbf U_{\mathbf Y}^{\top}
			-
			\mathbf U_{\mathbf Y}^{(-r)}
			\mathbf U_{\mathbf Y}^{(-r)\top}
			\right) \\
			&=:
			\mathbf T_{t,r,1}
			+
			\mathbf T_{t,r,2}.
		\end{aligned}
		\]
		
		We first control \(\mathbf T_{t,r,1}\). Since
		$(\mathbf I-\mathbf U_{\mathbf S_Y}\mathbf U_{\mathbf S_Y}^{\top})
		\mathbf U_{\mathbf S_Y}\mathbf U_{\mathbf S_Y}^{\top}
		=
		\mathbf 0,$
		we have
		\[
		\begin{aligned}
			(\mathbf I-\mathbf U_{\mathbf S_Y}\mathbf U_{\mathbf S_Y}^{\top})
			\mathbf U_{\mathbf Y}^{(-r)}
			& =
			(\mathbf I-\mathbf U_{\mathbf S_Y}\mathbf U_{\mathbf S_Y}^{\top})
			\mathbf U_{\mathbf Y}^{(-r)}
			\mathbf U_{\mathbf Y}^{(-r)\top}
			\mathbf U_{\mathbf Y}^{(-r)} \\
			& =
			(\mathbf I-\mathbf U_{\mathbf S_Y}\mathbf U_{\mathbf S_Y}^{\top})
			\left(
			\mathbf U_{\mathbf Y}^{(-r)}
			\mathbf U_{\mathbf Y}^{(-r)\top}
			-
			\mathbf U_{\mathbf S_Y}\mathbf U_{\mathbf S_Y}^{\top}
			\right)
			\mathbf U_{\mathbf Y}^{(-r)}.
		\end{aligned}
		\]
		Since $\left\|\mathbf{I}-\mathbf{U}_{\mathbf{S}_Y} \mathbf{U}_{\mathbf{S}_Y}^{\top}\right\| \leq 1$,
		\[
		\begin{aligned}
			\left\|
			(\mathbf I-\mathbf U_{\mathbf S_Y}\mathbf U_{\mathbf S_Y}^{\top})
			\mathbf U_{\mathbf Y}^{(-r)}
			\right\|
			&\le
			\left\|
			\mathbf I-\mathbf U_{\mathbf S_Y}\mathbf U_{\mathbf S_Y}^{\top}
			\right\|
			\left\|
			\mathbf U_{\mathbf Y}^{(-r)}
			\mathbf U_{\mathbf Y}^{(-r)\top}
			-
			\mathbf U_{\mathbf S_Y}\mathbf U_{\mathbf S_Y}^{\top}
			\right\|
			\left\|
			\mathbf U_{\mathbf Y}^{(-r)}
			\right\| \\
			&=
			\left\|
			\mathbf U_{\mathbf Y}^{(-r)}
			\mathbf U_{\mathbf Y}^{(-r)\top}
			-
			\mathbf U_{\mathbf S_Y}\mathbf U_{\mathbf S_Y}^{\top}
			\right\|.
		\end{aligned}
		\]
		By Equation (\ref{eq:loo}), we have
		\[
		\max_{r\in[p]}
		\left\|
		(\mathbf I-\mathbf U_{\mathbf S_Y}\mathbf U_{\mathbf S_Y}^{\top})
		\mathbf U_{\mathbf Y}^{(-r)}
		\right\|
		=
		O_{\mathbb P}
		\left(
		\sqrt{\frac{\log p}{\rho_Yp}}
		\right).
		\]
		
		Since \(\|\mathbf U_{\mathbf Y}^{(-r)\top}\|=1\),
		\[
		\|\mathbf T_{t,r,1}\|_2
		\le
		\left\|
		\mathbf e_{(t,r)}^\top
		(\mathbf Y-\mathbf Y^{(-r)})^\top
		(\mathbf I-\mathbf U_{\mathbf S_Y}\mathbf U_{\mathbf S_Y}^{\top})
		\mathbf U_{\mathbf Y}^{(-r)}
		\right\|_2.
		\]
		Conditional on \(\mathbf Y^{(-r)}\), the weights
		in
		$(\mathbf I-\mathbf U_{\mathbf S_Y}\mathbf U_{\mathbf S_Y}^{\top})
		\mathbf U_{\mathbf Y}^{(-r)}$
		are fixed, while the edge variables
		$\{(\mathbf Y^{(t)})_{ar}:a\neq r\}$
		are independent over \(a\neq r\). Therefore, the same conditional Hoeffding
		argument as in Proposition~\ref{prop:loo} yields
		\[
		\max_{t\in[n]}\max_{r\in[p]}
		\|\mathbf T_{t,r,1}\|_2
		=
		O_{\mathbb P}
		\left(
		\frac{\log p}{\sqrt{\rho_Yp}}
		\right).
		\]

		Next we control \(\mathbf T_{t,r,2}\).  Applying Wedin's theorem (for example,
Theorem~2.9 of \citet{chen2021spectral}) to
$\mathbf Y
=
\mathbf Y^{(-r)}
+
(\mathbf Y-\mathbf Y^{(-r)})$,
we have
		\[
		\left\|
		\mathbf U_{\mathbf Y}\mathbf U_{\mathbf Y}^{\top}
		-
		\mathbf U_{\mathbf Y}^{(-r)}
		\mathbf U_{\mathbf Y}^{(-r)\top}
		\right\|_F  \le
		\frac{
			C
			\max \left\{
			\left\|
			(\mathbf Y-\mathbf Y^{(-r)})
			\mathbf V_{\mathbf Y}^{(-r)}
			\right\|_F
			,
			\left\|
			(\mathbf Y-\mathbf Y^{(-r)})^\top
			\mathbf U_{\mathbf Y}^{(-r)}
			\right\|_F
			\right\}
		}{
			\sigma_{K_Y}(\mathbf Y^{(-r)})
		}.
		\]
		By Weyl's inequality, Condition~\ref{ass:3}, and
		\eqref{eq:uniform-loo-operator-bound},
		$\min_{r\in[p]}
		\sigma_{K_Y}(\mathbf Y^{(-r)})
		=
		\Omega_{\mathbb P}(p\rho_Y).$  Proposition~\ref{prop:loo}, together with the
		same argument for left singular vectors, gives
		\[
		\max_{r\in[p]}
		\left\|
		\mathbf U_{\mathbf Y}\mathbf U_{\mathbf Y}^{\top}
		-
		\mathbf U_{\mathbf Y}^{(-r)}
		\mathbf U_{\mathbf Y}^{(-r)\top}
		\right\|_F
		=
		O_{\mathbb P}
		\left(
		\frac{\sqrt{\log p}}{p\rho_Y}
		\right).
		\]
		
		On the other hand, the same Bernstein argument for incident edge variables as in
		Proposition~\ref{prop:loo} gives
		\[
		\max_{t\in[n]}\max_{r\in[p]}
		\left\|
		\mathbf e_{(t,r)}^\top
		(\mathbf Y-\mathbf Y^{(-r)})^\top
		\right\|_2
		=
		O_{\mathbb P}
		\left(
		(\rho_Yp\log p)^{1/2}
\right),
\]
and hence
		\[
		\max_{t\in[n]}\max_{r\in[p]}
		\left\|
		\mathbf e_{(t,r)}^\top
		(\mathbf Y-\mathbf Y^{(-r)})^\top
		(\mathbf I-\mathbf U_{\mathbf S_Y}\mathbf U_{\mathbf S_Y}^{\top})
		\right\|_2
		=
		O_{\mathbb P}
		\left(
		(\rho_Yp\log p)^{1/2}
		\right).
		\]
		Thus
		\[
		\begin{aligned}
			\max_{t\in[n]}\max_{r\in[p]}
			\|\mathbf T_{t,r,2}\|_2
			&\le
			\max_{t,r}
			\left\|
			\mathbf e_{(t,r)}^\top
			(\mathbf Y-\mathbf Y^{(-r)})^\top
			(\mathbf I-\mathbf U_{\mathbf S_Y}\mathbf U_{\mathbf S_Y}^{\top})
			\right\|_2 
			\max_{r\in[p]}
			\left\|
			\mathbf U_{\mathbf Y}\mathbf U_{\mathbf Y}^{\top}
			-
			\mathbf U_{\mathbf Y}^{(-r)}
			\mathbf U_{\mathbf Y}^{(-r)\top}
			\right\|_F \\
			&=
			O_{\mathbb P}
			\left(
			(\rho_Yp\log p)^{1/2}
			\right)
			O_{\mathbb P}
			\left(
			\frac{\sqrt{\log p}}{p\rho_Y}
			\right) \\
			&=
			O_{\mathbb P}
			\left(
			\frac{\log p}{\sqrt{\rho_Yp}}
			\right).
		\end{aligned}
		\]
		
		Combining the bounds for \(\mathbf T_{t,r,1}\) and \(\mathbf T_{t,r,2}\), we obtain
		\[
		\left\|
		\mathbf M_4
		\right\|_{2\to\infty}
		=
		O_{\mathbb P}
		\left(
		\frac{\log p}{\sqrt{\rho_Yp}}
		\right).
		\]
		It follows that
		\[
		\left\|\mathbf{M}_2\right\|_{2\rightarrow\infty}
		=
		O_{\mathbb P}\left(\frac{\log p}{\rho_Y p}\right).
		\]

		For the third term, Propositions~\ref{prop:boundy-p}, \ref{prop:sigmapy},
		and~\ref{prop:uvw} imply
		$$\left\|\mathbf{M}_3\right\|_{2\rightarrow\infty} \leq\left\|\mathbf{M}_3\right\|\leq \left\|\mathbf{Y}-\mathbf S_Y\right\|\left\|\mathbf{U}_{S_Y}\right\|\left\|\mathbf{U}_{S_Y}^{\top}\mathbf{U}_{\mathbf{Y}}-\mathbf{W_{YS}} \right\|\left\|\boldsymbol{\Sigma}_{\mathbf{Y}}^{-1 / 2}\right\|=O_{\mathbb P}(\frac{(\log p)^{3/2}}{p\rho_Y}).$$
		
		Combining the three bounds gives the asserted rate.
		
		(iii) By Propositions~\ref{prop:sigmapy},~\ref{prop:uy-pv}, and~\ref{prop:uv2},
		\begin{align*}
			\left\|R_3\right\|_{2 \rightarrow \infty} \leq & \left\|\mathbf{V}_{S_Y}\right\|_{2 \rightarrow \infty}\left\| \mathbf{V}_{S_Y}^{\top}\left(\mathbf{Y}-\mathbf S_Y\right)^{\top} \mathbf{U}_{S_Y} \mathbf{W_{YS}} \boldsymbol{\Sigma}_{\mathbf{Y}}^{-1 / 2}\right\| \\
			\leq & \left\|\mathbf{V}_{S_Y}\right\|_{2 \rightarrow \infty}\left\| \mathbf{V}_{S_Y}^{\top}\left(\mathbf{Y}-\mathbf S_Y\right)^{\top} \mathbf{U}_{S_Y}\right\|_{F}\left\| \mathbf{W_{YS}} \boldsymbol{\Sigma}_{\mathbf{Y}}^{-1 / 2}\right\|_{F} \\
			= &O_{\mathbb P}\left(
\frac{\sqrt{\log p}}{p\sqrt{\rho_Y}}
\right).
		\end{align*}
		
		(iv) By Propositions~\ref{prop:boundy-p} and~\ref{prop:sigmaw},
		\begin{align*}
			\left\|R_4\right\|_{2 \rightarrow \infty}\leq& \left\|(\mathbf{Y}-\mathbf S_Y)^{\top} \mathbf{U}_{S_Y}\left(\mathbf{W_{YS}} \boldsymbol{\Sigma}_{\mathbf{Y}}^{-1 / 2}-\boldsymbol{\Sigma}_{S_Y}^{-1 / 2} \mathbf{W_{YS}}\right)\right\|_{F}\\
			\leq& \left\|\mathbf{Y}-\mathbf S_Y\right\| \left\| \mathbf{U}_{S_Y}\right\|_F\left\|\mathbf{W_{YS}} \boldsymbol{\Sigma}_{\mathbf{Y}}^{-1 / 2}-\boldsymbol{\Sigma}_{S_Y}^{-1 / 2} \mathbf{W_{YS}}\right\|_{F}\\
			=&O_{\mathbb P}\left(\frac{\log ^{3 / 2}(p)}{\rho_Yp}\right). 
		\end{align*}
	\end{proof}
	
	\begin{proposition}\label{prop:R5-bound}
		$\left\|(\mathbf{Y}-\mathbf S_Y)^{\top} \mathbf{U}_{S_Y} \boldsymbol{\Sigma}_{S_Y}^{-1 / 2} \right\|_{2\rightarrow\infty} = O_{\mathbb P}(\sqrt{\frac{\log p}{p}})$.
	\end{proposition}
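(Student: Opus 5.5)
Since $K_Y$ is fixed and $\boldsymbol\Sigma_{S_Y}^{-1/2}$ is diagonal with entries at most $(c_{Y,S}p\rho_Y)^{-1/2}$ by Condition~\ref{ass:3}, it suffices to show
\[
\max_{t\in[n]}\max_{r\in[p]}\left\|\mathbf e_{(t,r)}^\top(\mathbf Y-\mathbf S_Y)^\top\mathbf U_{S_Y}\right\|_2=O_{\mathbb P}\bigl((\rho_Y\log p)^{1/2}\bigr),
\]
because multiplying by $(p\rho_Y)^{-1/2}$ then gives $\sqrt{\log p/p}$. Fix a row index $(t,r)$ of the unfolded transpose, so that we are looking at column $r$ of the $t$th block $\mathbf Y^{(t)}-\mathbf S_Y^{(t)}$. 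Fix also a coordinate $q\in[K_Y]$. The $q$th coordinate then splits as
\[
\sum_{a\ne r}\bigl\{(\mathbf Y^{(t)})_{ar}-(\mathbf S_Y^{(t)})_{ar}\bigr\}(\mathbf U_{S_Y})_{aq}\;-\;(\mathbf S_Y^{(t)})_{rr}(\mathbf U_{S_Y})_{rq}.
\]
The diagonal correction is deterministic and bounded by $C\rho_Y\|\mathbf U_{S_Y}\|_{2\to\infty}=O(\rho_Y p^{-1/2})$ (Proposition~\ref{prop:uv2}), which is negligible.

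For the stochastic part, the key point is that $\mathbf U_{S_Y}$ is deterministic. For fixed $t$ and $r$, the summands over $a\ne r$ involve distinct unordered dyads $\{a,r\}$, so they are independent across $a$; the temporal dependence within a dyad plays no role here because $t$ is fixed. Each summand is centred and bounded by $|(\mathbf U_{S_Y})_{aq}|\le Cp^{-1/2}$. Its variance is at most $C\rho_Y(\mathbf U_{S_Y})_{aq}^2$, since $Y^{(t)}_{ar}$ is Bernoulli with mean at most $\rho_Y\ell^2d$. Summing, the total variance is at most $C\rho_Y\sum_a(\mathbf U_{S_Y})_{aq}^2\le C\rho_Y$.

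Scalar Bernstein then gives a tail
\[
2\exp\left[-\frac{x^2/2}{C\rho_Y+Cp^{-1/2}x/3}\right].
\]
Take $x=C'(\rho_Y\log p)^{1/2}$. The range term satisfies $p^{-1/2}x=C'(\rho_Y\log p/p)^{1/2}\ll\rho_Y$, because $p\rho_Y/\log p\to\infty$ by Condition~\ref{ass:transition-sparsity}. The exponent is therefore of order $C'^2\log p$. Choosing $C'$ large relative to $c_0$ and taking a union bound over $q\le K_Y$, $t\le n$ and $r\le p$ (at most $K_Ynp$ events) yields failure probability $O(p^{-c_0})$.

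The main obstacle is using Bernstein rather than Hoeffding. Hoeffding, as in Proposition~\ref{prop:uy-pv}, would only give $O(\sqrt{\log p})$ per coordinate, which after multiplying by $(p\rho_Y)^{-1/2}$ yields $\sqrt{\log p/(p\rho_Y)}$. That loses the factor $\rho_Y^{-1/2}$ in the sparse regime. Bernstein exploits the $O(\rho_Y)$ Bernoulli variance and removes this loss, which is exactly what the stated rate requires. The sparsity condition is what makes the sub-exponential range term negligible. In the dense regime, Hoeffding suffices directly.
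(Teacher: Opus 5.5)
Your proof is correct and follows essentially the same route as the paper. Both arguments reduce by $\sigma_{K_Y}(\mathbf S_Y)^{-1/2}\le(c_{Y,S}p\rho_Y)^{-1/2}$, split off the $O(\rho_Y p^{-1/2})$ diagonal correction, apply Bernstein to the independent dyad summands using $\|\mathbf U_{S_Y}\|_{2\to\infty}=O(p^{-1/2})$ and the $O(\rho_Y)$ variance, and finish with a union bound over the $np$ rows. The only difference is cosmetic: you use scalar Bernstein on each coordinate with an extra union bound over $q\in[K_Y]$, whereas the paper applies the rectangular matrix Bernstein inequality directly to the vector-valued sum.
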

	\begin{proof}
		By the smallest singular-value bound,
		\[
		\left\|(\mathbf{Y}-\mathbf S_Y)^{\top}\mathbf{U}_{S_Y}
		\boldsymbol{\Sigma}_{S_Y}^{-1/2}\right\|_{2\to\infty}
		\le
		\sigma_{K_Y}(\mathbf S_Y)^{-1/2}
		\left\|(\mathbf Y-\mathbf S_Y)^\top\mathbf U_{\mathbf S_Y}\right\|_{2\to\infty}.
		\]
		Set
		\(
		\boldsymbol\xi_{t,j}^{\top}
		=
		(\mathbf Y_{\cdot j}^{(t)}-\mathbf S_{Y,\cdot j}^{(t)})^\top
		\mathbf U_{\mathbf S_Y}.
		\)
		Then
		\(
		\left\|(\mathbf Y-\mathbf S_Y)^\top\mathbf U_{\mathbf S_Y}\right\|_{2\to\infty}
		=
		\max_{t,j}\|\boldsymbol\xi_{t,j}\|.
		\)
		Write
		\(
		\boldsymbol\xi_{t,j}
		=
		\sum_{i\ne j}\mathbf M_i+D_j^{(t)}
		\)
		where $\mathbf M_i=
		\{Y_{ij}^{(t)}-(S_Y^{(t)})_{ij}\}
		(\mathbf U_{\mathbf S_Y})_{i\cdot}^{\top}$ and 
		$D_j^{(t)}=-(S_Y^{(t)})_{jj}(\mathbf U_{\mathbf S_Y})_{j\cdot}^{\top}$.
		By Proposition~\ref{prop:uv2},
\(\|D_j^{(t)}\|=O(\rho_Y/\sqrt p)\).  The summands \(\mathbf M_i\) are independent and centred.  Moreover,
		\(
		\|\mathbf M_i\|
\le
\|\mathbf U_{\mathbf S_Y}\|_{2\to\infty}
=O(p^{-1/2}),
		\)
		and, using \(\operatorname{Var}(Y_{ij}^{(t)})\le \rho_Y\), we have 
		\(
		\left\|\sum_{i\ne j}\mathbb E(\mathbf M_i\mathbf M_i^\top)\right\|
		\le \rho_Y,
		\sum_{i\ne j}\mathbb E(\mathbf M_i^\top\mathbf M_i)
		\le \rho_Y K_Y .
		\)
		Thus the Bernstein variance term is bounded by \(v\le \rho_YK_Y\), and
		Lemma~\ref{lem:matrixBernstein} gives
		\[
		\mathbb{P}\!\left(\left\|\sum_{i\neq j}\mathbf{M}_i\right\|\ge\gamma\right)
		\le
		(K_Y+1)
		\exp\left\{
		\frac{-\gamma^2/2}{\rho_YK_Y+C_L\gamma/(3\sqrt p)}
		\right\}.
		\]
		Fix \(c_0>0\), and choose
		\(\gamma=C_{\gamma,c_0}\sqrt{\rho_Y\log p}\), where
		\(C_{\gamma,c_0}\) is sufficiently large satisfying
		\(
		C_{\gamma,c_0}
		>
		\frac{
		(c_0+1)C_L+
		\sqrt{(c_0+1)^2C_L^2+18(c_0+1)K_Y}
		}{3}.\)  Since
		Condition~\ref{ass:transition-sparsity} implies \(\rho_Y>\log p/p\) for large
		\(p\),
		\begin{align*}
			\mathbb{P}(\max_{t,j}\|\sum_{i\neq j}\mathbf{M}_i\| \geq \gamma) \leq & np\left(K_Y+1\right) \exp \left(\frac{-\gamma^2 / 2}{\rho_YK_Y+C_L\gamma / 3\sqrt{p}}\right) \\
			\leq & np\left(K_Y+1\right) \exp \left(\frac{-C_{\gamma,c_0}^2\rho_Y\log p / 2}{\rho_YK_Y+\frac{C_LC_{\gamma,c_0}}{3}\sqrt{\frac{\log p\rho_Y}{p}}}\right)\\
			\leq & np\left(K_Y+1\right) \exp \left(\frac{-C_{\gamma,c_0}^2\rho_Y\log p / 2}{\rho_Y(K_Y+\frac{C_LC_{\gamma,c_0}}{3})}\right)\\
			\leq & np\left(K_Y+1\right) \exp \left(\frac{-C_{\gamma,c_0}^2}{2(K_Y+\frac{C_LC_{\gamma,c_0}}{3})}\log p\right)\\
			=&n(K_Y+1)p^{1-\frac{3C_{\gamma,c_0}^2}{6K_Y+2C_LC_{\gamma,c_0}}}\\
            \leq&n(K_Y+1)p^{-c_0}.
		\end{align*}
		Consequently,
		\[
		\left\|(\mathbf Y-\mathbf S_Y)^\top\mathbf U_{\mathbf S_Y}\right\|_{2\to\infty}
		=O_{\mathbb P}(\sqrt{\rho_Y\log p}).
		\]
		Condition~\ref{ass:3} gives 
		\(\sigma_{K_Y}(\mathbf S_Y)^{-1/2}=O\{(p\rho_Y)^{-1/2}\}\), and
		the displayed bound follows.
	\end{proof}

	\begin{proof}[Proof of Theorem~\ref{thm1}]
		We use
		\[
		\mathbf{Y}^{\top} \mathbf{U}_{\mathbf{Y}} \boldsymbol{\Sigma}_{\mathbf{Y}}^{-1 / 2}
		=
		\mathbf{V}_{\mathbf{Y}} \boldsymbol{\Sigma}_{\mathbf{Y}}^{1 / 2},
		\qquad
		\mathbf{V}_{S_Y} \mathbf{V}_{S_Y}^{\top} \mathbf S_Y^{\top}
		=
		\mathbf S_Y^{\top}.
		\]
        Then
\begin{align*}
\mathbf R_Y-\mathbf R_{S_Y}\mathbf W_{YS}
&=
\mathbf V_Y\boldsymbol\Sigma_Y^{1/2}
-
\mathbf V_{S_Y}\boldsymbol\Sigma_{S_Y}^{1/2}
\mathbf W_{YS} \\
&=
\mathbf V_Y\boldsymbol\Sigma_Y^{1/2}
-
\mathbf V_{S_Y}\mathbf V_{S_Y}^{\top}
\mathbf V_Y\boldsymbol\Sigma_Y^{1/2}
+
R_1 \\
&=
\mathbf Y^\top\mathbf U_Y\boldsymbol\Sigma_Y^{-1/2}
-
\mathbf V_{S_Y}\mathbf V_{S_Y}^{\top}
\mathbf Y^\top\mathbf U_Y\boldsymbol\Sigma_Y^{-1/2}
+
R_1 \\
&=
(\mathbf Y-\mathbf S_Y)^\top
\mathbf U_Y\boldsymbol\Sigma_Y^{-1/2} \\
&\quad-
\left(
\mathbf V_{S_Y}\mathbf V_{S_Y}^{\top}\mathbf Y^\top
-
\mathbf S_Y^\top
\right)
\mathbf U_Y\boldsymbol\Sigma_Y^{-1/2}
+
R_1 \\
&=
\left(
\mathbf I-
\mathbf V_{S_Y}\mathbf V_{S_Y}^{\top}
\right)
(\mathbf Y-\mathbf S_Y)^\top
\mathbf U_Y\boldsymbol\Sigma_Y^{-1/2}
+
R_1 \\
&=
\left(
\mathbf I-
\mathbf V_{S_Y}\mathbf V_{S_Y}^{\top}
\right)
(\mathbf Y-\mathbf S_Y)^\top
\left\{
\mathbf U_{S_Y}\mathbf W_{YS}
+
\left(
\mathbf U_Y-\mathbf U_{S_Y}\mathbf W_{YS}
\right)
\right\}
\boldsymbol\Sigma_Y^{-1/2}
+
R_1 \\
&=
(\mathbf Y-\mathbf S_Y)^\top
\mathbf U_{S_Y}\mathbf W_{YS}
\boldsymbol\Sigma_Y^{-1/2}
+
R_3+R_2+R_1 \\
&=
R_5+R_4+R_3+R_2+R_1,
\end{align*}
where
\[
R_5
=
(\mathbf Y-\mathbf S_Y)^\top
\mathbf U_{S_Y}
\boldsymbol\Sigma_{S_Y}^{-1/2}
\mathbf W_{YS}.
\]
By the preceding proposition and
Proposition~\ref{prop:last4},
\[
\begin{aligned}
\left\|
\mathbf R_Y-\mathbf R_{S_Y}\mathbf W_{YS}
\right\|_{2\to\infty}
\le
\sum_{j=1}^{5}\|R_j\|_{2\to\infty} 
=
O_{\mathbb P}
\left(
\sqrt{\frac{\log p}{p}}
\right),
\end{aligned}
\]
where Condition~\ref{ass:transition-sparsity} ensures that the
rates for \(R_1,\ldots,R_4\) are no larger than the displayed rate.
Since the blocks partition the rows of the unfolded embedding,
\[
\max_{1\le t\le n}
\left\|
\mathbf R_Y^{(t)}
-
\mathbf R_{S_Y}^{(t)}\mathbf W_{YS}
\right\|_{2\to\infty}
=
\left\|
\mathbf R_Y-\mathbf R_{S_Y}\mathbf W_{YS}
\right\|_{2\to\infty}.
\]
The \(Z\)-channel result follows by the same argument. Fix \(c_0>0\). By the polynomial-tail versions of
Proposition~\ref{prop:last4} and Proposition~\ref{prop:R5-bound},
a union bound over \(R_1,\ldots,R_5\) gives constants
\(C_{Y,c_0},C'_{c_0}<\infty\) such that
\[
\mathbb P\left\{
\left\|
\mathbf R_Y-\mathbf R_{S_Y}\mathbf W_{YS}
\right\|_{2\to\infty}
>
C_{Y,c_0}\sqrt{\frac{\log p}{p}}
\right\}
\le
C'_{c_0}p^{-c_0}.
\]
The same argument gives the corresponding bound for the \(Z\) channel.
A union bound over the two channels proves the simultaneous bounds in
Theorem~\ref{thm1}.
	\end{proof}
	
	\subsection*{S.2 Supporting results and proofs for Section~3}

Appendix~S.2.1 proves Lemma~\ref{lem:sparse-xi-main}.
Appendix~S.2.2 proves Theorem~\ref{thm:2v0} and
Corollary~\ref{cor:perfect-v}.
Appendix~S.2.3 establishes
Proposition~\ref{prop:memory-geometric-decay},
Theorem~\ref{thm:memory-echo-main}, and
Corollary~\ref{cor:vertex-memory-burnin-main}.
Appendix~S.2.4 proves the network-level detection result,
Theorem~\ref{thm:2g}.
    
	In the following proofs, let
\(\mathcal E_{\rm UASE}(c_0)\) denote the event on which the bound
in Theorem~\ref{thm1} holds.

\subsubsection*{S.2.1 Proof of Lemma~\ref{lem:sparse-xi-main}}
	
\begin{proof}
		For every \(j\ne i\) such that \(\delta_{Y,j}^{(t^*-1,t)}>0\),
		$ \|\boldsymbol\ell_{Y,j}^{(t)}-\boldsymbol\ell_{Y,j}^{(t^*-1)}\|_2
		\leq
		2\sqrt d\,\ell.$
		Also
		$\|\boldsymbol\ell_{Y,i}^{(t)}\|_2
		\leq
		\sqrt d\,\ell.$
		Hence
		$  \left|
		(\boldsymbol\ell_{Y,j}^{(t)}-\boldsymbol\ell_{Y,j}^{(t^*-1)})^\top
		\boldsymbol\ell_{Y,i}^{(t)}
		\right|
		\leq
		2d\ell^2 .$
		Since at most \(m_Y^{(t^*-1,t)}\) terms are nonzero,
		\[
		\begin{aligned}
			\xi_{Y,i}^{(t^*-1,t)}
			=
			\sqrt{\rho_Y/p}
			\left\|
			\Delta\mathbf L_{Y,-i}^{(t^*-1,t)}
			\boldsymbol\ell_{Y,i}^{(t)}
			\right\|_2        
			\leq
			\sqrt{\rho_Y/p}
			\left(
			m_Y^{(t^*-1,t)}
			\cdot
			4d^2\ell^4
			\right)^{1/2}        
			=
			2d\ell^2
			\sqrt{\frac{\rho_Y m_Y^{(t^*-1,t)}}{p}}.
		\end{aligned}
		\]
		The proof for \(Z\) is identical. Hence, for either
		\(h\in\mathcal H\), the displayed bound holds uniformly in \(i\).
		If \(\rho_hm_h^{(t^*-1,t)}=o(\log p)\), then, uniformly in
		\(i\),
		\[
		\frac{\xi_{h,i}^{(t^*-1,t)}}{\epsilon_p}
		\leq
		C\left\{\frac{\rho_hm_h^{(t^*-1,t)}}{\log p}\right\}^{1/2}
		=o(1),
		\]
		which proves the stated conclusion.
	\end{proof}
	
	\subsubsection*{S.2.2 Proofs of Theorem~\ref{thm:2v0}
and Corollary~\ref{cor:perfect-v}}
	
		\begin{proposition}\label{prop:column-wise-SY-bound}
		For \(t\geq t^*\), let $\xi_{Y,i}^{(t^*-1,t)}$, $M_{Y,i}^{(t^*-1,t)}$, and
		$\delta_{Y,i}^{(t^*-1,t)}$ be defined as in the main paper.
		Assume 
		Condition~\ref{ass:column-visibility} holds.
		Then
		\[
		\frac{1}{\sqrt{p}}
		\left\|\left(\mathbf S_Y^{(t)}-\mathbf S_Y^{(t^*-1)}\right)\mathbf e_i\right\|_2
		\ge
		\rho_Y\mu_{Y,i}\,\delta_{Y,i}^{(t^*-1,t)}
		-\sqrt{\rho_Y}\,\xi_{Y,i}^{(t^*-1,t)}
		-\sqrt{\rho_Y}\,M_{Y,i}^{(t^*-1,t)},
		\]
		\[
		\frac{1}{\sqrt{p}}
		\left\|\left(\mathbf S_Y^{(t)}-\mathbf S_Y^{(t^*-1)}\right)\mathbf e_i\right\|_2
		\le		\rho_Y\sqrt{d}\,\ell\,\delta_{Y,i}^{(t^*-1,t)}+\sqrt{\rho_Y}\xi_{Y,i}^{(t^*-1,t)}
		+\sqrt{\rho_Y}\,M_{Y,i}^{(t^*-1,t)}
		+\frac{\rho_Y}{\sqrt p}.
		\]
		The same bounds hold for the \(Z\) channel after replacing the corresponding
		\(Y\)-channel quantities by their \(Z\)-channel counterparts.
	\end{proposition}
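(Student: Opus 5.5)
The plan is a direct entrywise expansion of the $i$th column of $\mathbf S_Y^{(t)}-\mathbf S_Y^{(t^*-1)}$. Write $\boldsymbol\ell_j=\boldsymbol\ell_{Y,j}^{(t)}$ and $\boldsymbol\ell_j^*=\boldsymbol\ell_{Y,j}^{(t^*-1)}$. By the definition of $\mathbf S_Y$, for $j\ne i$,
\[
(\mathbf S_Y^{(t)}-\mathbf S_Y^{(t^*-1)})_{ji}
=\rho_Y\boldsymbol\ell_j^\top\boldsymbol\ell_i\,(1-\Pi_{ji}^{(t-1)})
-\rho_Y\boldsymbol\ell_j^{*\top}\boldsymbol\ell_i^*\,(1-\Pi_{ji}^{(t^*-2)}).
\]

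The first step replaces $\Pi^{(t^*-2)}$ by $\Pi^{(t^*-1)}$. This uses stationarity over the baseline $B$: the marginal probabilities are constant there, and if $t^*-1$ is the first baseline transition we use that the initial state is stationary. This identification is the one point I expect to need care, because the statement is phrased with the weight $w_{Y,ji}^{(t^*)}=1-\Pi_{ji}^{(t^*-1)}$ from Condition~\ref{ass:column-visibility}. With it, add and subtract $\rho_Y\boldsymbol\ell_j^\top\boldsymbol\ell_i\,w_{Y,ji}^{(t^*)}$ and split $\boldsymbol\ell_j^\top\boldsymbol\ell_i-\boldsymbol\ell_j^{*\top}\boldsymbol\ell_i^*$. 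This gives three terms:
\[
\underbrace{\rho_Y w_{Y,ji}^{(t^*)}\boldsymbol\ell_j^{*\top}(\boldsymbol\ell_i-\boldsymbol\ell_i^*)}_{A_j}
+\underbrace{\rho_Y w_{Y,ji}^{(t^*)}(\boldsymbol\ell_j-\boldsymbol\ell_j^*)^\top\boldsymbol\ell_i}_{B_j}
\;\underbrace{-\,\alpha_{ji}^{(t)}\bigl(\Pi_{ji}^{(t-1)}-\Pi_{ji}^{(t^*-1)}\bigr)}_{C_j}.
\]

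Next, identify each term with a quantity from the main text after dividing by $\sqrt p$. For $A$, Condition~\ref{ass:column-visibility} gives
\[
p^{-1/2}\|(A_j)_{j\ne i}\|_2\ge\rho_Y\mu_{Y,i}\delta_{Y,i}^{(t^*-1,t)} .
\]
For the matching upper bound, Cauchy--Schwarz with $\|\boldsymbol\ell_j^*\|_2\le\sqrt d\,\ell$ and $0\le w\le1$ gives $\rho_Y\sqrt d\,\ell\,\delta_{Y,i}^{(t^*-1,t)}$. For $B$, since $0\le w\le1$,
\[
p^{-1/2}\|(B_j)_{j\ne i}\|_2\le \rho_Y p^{-1/2}\bigl\|\{\mathbf L_Y^{(t)}-\mathbf L_Y^{(t^*-1)}\}_{-i}\boldsymbol\ell_i\bigr\|_2=\sqrt{\rho_Y}\,\xi_{Y,i}^{(t^*-1,t)}.
\]
For $C$, the definition of the memory term gives exactly
\[
p^{-1/2}\|(C_j)_{j\ne i}\|_2=\sqrt{\rho_Y}\,M_{Y,i}^{(t^*-1,t)},
\]
since $\alpha_{ji}^{(t)}/\sqrt{\rho_Y}$ is the weight appearing in $M$.

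Then assemble the two bounds. For the lower bound, the full column norm dominates the norm of the off-diagonal coordinates; the reverse triangle inequality applied to $A+B+C$ gives the first display. For the upper bound, use the triangle inequality over $A$, $B$, $C$ and add the single diagonal coordinate $j=i$. Both completed diagonal entries lie in $[0,\rho_Y]$, because $\rho_Y\|\boldsymbol\ell_{Y,i}\|_2^2\le1$ and the completed $\Pi_{ii}\in[0,1]$ by the recursion. Hence the diagonal coordinate contributes at most $\rho_Y/\sqrt p$ after normalisation.

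For the $Z$ channel, repeat the argument with weight $w_{Z,ji}^{(t^*)}=\Pi_{ji}^{(t^*-1)}$ and $\beta$ in place of $\alpha$. The memory term now enters with a $+$ sign, which does not affect the norm bounds.
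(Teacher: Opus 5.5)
Your proposal is correct and follows essentially the same route as the paper. The paper writes the column difference as $\{\mathbf P_Y^{(t)}-\mathbf P_Y^{(t^*-1)}\}\circ\{\mathbf 1\mathbf 1^\top-\mathbf\Pi^{(t^*-1)}\}-\mathbf P_Y^{(t)}\circ\{\mathbf\Pi^{(t-1)}-\mathbf\Pi^{(t^*-1)}\}$, splits the first piece into your $A$ and $B$ using the same inner-product identity, and bounds the completed diagonal by the same $\rho_Y/\sqrt p$; the off-diagonal replacement of $\mathbf\Pi^{(t^*-2)}$ by $\mathbf\Pi^{(t^*-1)}$, which you justify explicitly via baseline stationarity, is used without comment in the paper's first display.
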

	
	\begin{proof}
		The population difference decomposes as
		\[
		\begin{aligned}
			\mathbf S_Y^{(t)}-\mathbf S_Y^{(t^*-1)}
			&=
			\left\{
			\mathbf P_Y^{(t)}
			-
			\mathbf P_Y^{(t^*-1)}
			\right\}
			\circ
			\left\{
			\mathbf 1\mathbf 1^\top-\mathbf\Pi^{(t^*-1)}
			\right\}        \\
			&\qquad
			-
			\mathbf P_Y^{(t)}
			\circ
			\left\{
			\mathbf\Pi^{(t-1)}
			-
			\mathbf\Pi^{(t^*-1)}
			\right\}.
		\end{aligned}
		\]
		Multiplying by \(\mathbf e_i\) gives
		\[
		\begin{aligned}
			\left(
			\mathbf S_Y^{(t)}-\mathbf S_Y^{(t^*-1)}
			\right)\mathbf e_i
			&=
			\left[
			\left\{
			\mathbf P_Y^{(t)}
			-
			\mathbf P_Y^{(t^*-1)}
			\right\}
			\circ
			\left\{
			\mathbf 1\mathbf 1^\top-\mathbf\Pi^{(t^*-1)}
			\right\}
			\right]\mathbf e_i        \\
			&\qquad
			-
			\left[
			\mathbf P_Y^{(t)}
			\circ
			\left\{
			\mathbf\Pi^{(t-1)}
			-
			\mathbf\Pi^{(t^*-1)}
			\right\}
			\right]\mathbf e_i .
		\end{aligned}
		\]
		
		Define
		\[
		\begin{aligned}
			A_{Y,i}^{(t^*-1,t)}
			&:=
			\frac{1}{\sqrt p}
			\left\|
			\left(I-\mathbf e_i \mathbf e_i^\top\right)
			\left[
			\left\{
			\mathbf P_Y^{(t)}
			-
			\mathbf P_Y^{(t^*-1)}
			\right\}
			\circ
			\left\{
			\mathbf 1\mathbf 1^\top-\mathbf\Pi^{(t^*-1)}
			\right\}
			\right]\mathbf e_i
			\right\|_2,        \\
			B_{Y,i}^{(t^*-1,t)}
			&:=
			\frac{1}{\sqrt p}
			\left\|
			\left(I-\mathbf e_i \mathbf e_i^\top\right)
			\left[
			\mathbf P_Y^{(t)}
			\circ
			\left\{
			\mathbf\Pi^{(t-1)}
			-
			\mathbf\Pi^{(t^*-1)}
			\right\}
			\right]\mathbf e_i
			\right\|_2 .
		\end{aligned}
		\]
		The reverse triangle inequality applied after removing the completed diagonal
		coordinate yields
		\[
		\begin{aligned}
			\frac{1}{\sqrt p}
			\left\|
			\left(
			\mathbf S_Y^{(t)}-\mathbf S_Y^{(t^*-1)}
			\right)\mathbf e_i
			\right\|_2
			&\geq
			\frac{1}{\sqrt p}
			\left\|
			\left(I-\mathbf e_i \mathbf e_i^\top\right)
			\left(
			\mathbf S_Y^{(t)}-\mathbf S_Y^{(t^*-1)}
			\right)\mathbf e_i
			\right\|_2\\
			\geq
			A_{Y,i}^{(t^*-1,t)}
			-
			B_{Y,i}^{(t^*-1,t)} .
		\end{aligned}
		\]
		
		We first lower bound \(A_{Y,i}^{(t^*-1,t)}\).  For \(j\neq i\),
		\[
		\begin{aligned}
			\alpha_{ji}^{(t)}-\alpha_{ji}^{(t^*-1)}
			&=
			\rho_Y
			\left\{
			(\boldsymbol\ell_{Y,j}^{(t)})^\top \boldsymbol\ell_{Y,i}^{(t)}
			-
			(\boldsymbol\ell_{Y,j}^{(t^*-1)})^\top
			\boldsymbol\ell_{Y,i}^{(t^*-1)}
			\right\}        \\
			&=
			\rho_Y
			\left\{
			(\boldsymbol\ell_{Y,j}^{(t^*-1)})^\top
			\Delta\boldsymbol\ell_{Y,i}^{(t^*-1,t)}
			+
			\left(
			\boldsymbol\ell_{Y,j}^{(t)}
			-
			\boldsymbol\ell_{Y,j}^{(t^*-1)}
			\right)^\top
			\boldsymbol\ell_{Y,i}^{(t)}
			\right\}.
		\end{aligned}
		\]
		Hence
		\[
		\begin{aligned}
			A_{Y,i}^{(t^*-1,t)}
			&\geq
			\rho_Y
			\frac{1}{\sqrt p}
			\left\|
			\operatorname{diag}
			\left(
			\{1-\Pi_{ji}^{(t^*-1)}\}_{j\neq i}
			\right)
			\mathbf L_{Y,-i}^{(t^*-1)}
			\Delta\boldsymbol\ell_{Y,i}^{(t^*-1,t)}
			\right\|_2        \\
			&\qquad
			-
			\rho_Y
			\frac{1}{\sqrt p}
			\left\|
			\operatorname{diag}
			\left(
			\{1-\Pi_{ji}^{(t^*-1)}\}_{j\neq i}
			\right)
			\Delta\mathbf L_{Y,-i}^{(t^*-1,t)}
			\boldsymbol\ell_{Y,i}^{(t)}
			\right\|_2 .
		\end{aligned}
		\]
		By Condition~\ref{ass:column-visibility},
		\[
		\frac{1}{\sqrt p}
		\left\|
		\operatorname{diag}
		\left(
		\{1-\Pi_{ji}^{(t^*-1)}\}_{j\neq i}
		\right)
		\mathbf L_{Y,-i}^{(t^*-1)}
		\Delta\boldsymbol\ell_{Y,i}^{(t^*-1,t)}
		\right\|_2
		\geq
		\mu_{Y,i}\delta_{Y,i}^{(t^*-1,t)}.
		\]
		Moreover, since \(0\leq 1-\Pi_{ji}^{(t^*-1)}\leq 1\),
		\[
		\begin{aligned}
			\frac{1}{\sqrt p}
			\left\|
			\operatorname{diag}
			\left(
			\{1-\Pi_{ji}^{(t^*-1)}\}_{j\neq i}
			\right)
			\Delta\mathbf L_{Y,-i}^{(t^*-1,t)}
			\boldsymbol\ell_{Y,i}^{(t)}
			\right\|_2
			\leq
			\frac{1}{\sqrt p}
			\left\|
			\Delta\mathbf L_{Y,-i}^{(t^*-1,t)}
			\boldsymbol\ell_{Y,i}^{(t)}
			\right\|_2        
			=
			\xi_{Y,i}^{(t^*-1,t)}/\sqrt{\rho_Y}.
		\end{aligned}
		\]
		Therefore, we have
		\(
		A_{Y,i}^{(t^*-1,t)}
		\geq
		\rho_Y\mu_{Y,i}\delta_{Y,i}^{(t^*-1,t)}
		-
		\sqrt{\rho_Y}\xi_{Y,i}^{(t^*-1,t)}.
		\)
		
		Next, by the definition of \(M_{Y,i}^{(t^*-1,t)}\),
		\[
		\begin{aligned}
			B_{Y,i}^{(t^*-1,t)}
			&=
			\frac{1}{\sqrt p}
			\left[
			\sum_{j\neq i}
			\left\{
			\alpha_{ji}^{(t)}
			\left(
			\Pi_{ji}^{(t-1)}
			-
			\Pi_{ji}^{(t^*-1)}
			\right)
			\right\}^2
			\right]^{1/2}        \\
			&=
			\sqrt{\rho_Y}
			\left[
			\frac{1}{p}
			\sum_{j\neq i}
			\left\{
			\frac{\alpha_{ji}^{(t)}}{\sqrt{\rho_Y}}
			\left(
			\Pi_{ji}^{(t-1)}
			-
			\Pi_{ji}^{(t^*-1)}
			\right)
			\right\}^2
			\right]^{1/2}        \\
			&=
			\sqrt{\rho_Y}\,
			M_{Y,i}^{(t^*-1,t)} .
		\end{aligned}
		\]
		Combining the bounds for \(A_{Y,i}^{(t^*-1,t)}\) and \(B_{Y,i}^{(t^*-1,t)}\), we obtain
		\[
		\begin{aligned}
			\frac{1}{\sqrt p}
			\left\|
			\left(
			\mathbf S_Y^{(t)}-\mathbf S_Y^{(t^*-1)}
			\right)\mathbf e_i
			\right\|_2
			&\geq
			\rho_Y\mu_{Y,i}\delta_{Y,i}^{(t^*-1,t)}
			-
			\sqrt{\rho_Y}\xi_{Y,i}^{(t^*-1,t)}
			-
			\sqrt{\rho_Y}\,
			M_{Y,i}^{(t^*-1,t)} .
		\end{aligned}
		\]
		This establishes the lower bound.
		
		For the upper bound, the off-diagonal coordinates satisfy the same triangle
		inequality.  Since the completed population entries satisfy
		\(
		0\leq (\mathbf S_Y^{(t)})_{ii}, (\mathbf S_Y^{(t^*-1)})_{ii}\leq\rho_Y,
		\left|(\mathbf S_Y^{(t)})_{ii}-(\mathbf S_Y^{(t^*-1)})_{ii}\right|
		\leq\rho_Y,
		\)
		we have
		\[
		\begin{aligned}
			\frac{1}{\sqrt p}
			\left\|
			\left(
			\mathbf S_Y^{(t)}-\mathbf S_Y^{(t^*-1)}
			\right)\mathbf e_i
			\right\|_2
			&\leq
			A_{Y,i}^{(t^*-1,t)}
			+
			B_{Y,i}^{(t^*-1,t)}
			+
			\frac{\rho_Y}{\sqrt p} .
		\end{aligned}
		\]
		Using the same decomposition of
		\(\alpha_{ji}^{(t)}-\alpha_{ji}^{(t^*-1)}\), we have
		\[
		\begin{aligned}
			A_{Y,i}^{(t^*-1,t)}
			&\leq
			\rho_Y
			\frac{1}{\sqrt p}
			\left\|
			\operatorname{diag}
			\left(
			\{1-\Pi_{ji}^{(t^*-1)}\}_{j\neq i}
			\right)
			\mathbf L_{Y,-i}^{(t^*-1)}
			\Delta\boldsymbol\ell_{Y,i}^{(t^*-1,t)}
			\right\|_2        \\
			&\qquad
			+
			\rho_Y
			\frac{1}{\sqrt p}
			\left\|
			\operatorname{diag}
			\left(
			\{1-\Pi_{ji}^{(t^*-1)}\}_{j\neq i}
			\right)
			\Delta\mathbf L_{Y,-i}^{(t^*-1,t)}
			\boldsymbol\ell_{Y,i}^{(t)}
			\right\|_2 .
		\end{aligned}
		\]
		Since \(0\leq 1-\Pi_{ji}^{(t^*-1)}\leq 1\) and
		\(\|\boldsymbol\ell_{Y,j}^{(t^*-1)}\|_2\leq \sqrt d\,\ell\),
		\[
		\frac{1}{\sqrt p}
		\left\|
		\operatorname{diag}
		\left(
		\{1-\Pi_{ji}^{(t^*-1)}\}_{j\neq i}
		\right)
		\mathbf L_{Y,-i}^{(t^*-1)}
		\Delta\boldsymbol\ell_{Y,i}^{(t^*-1,t)}
		\right\|_2
		\leq
		\sqrt d\,\ell\,\delta_{Y,i}^{(t^*-1,t)}.
		\]
		Also,
		\[
		\frac{1}{\sqrt p}
		\left\|
		\operatorname{diag}
		\left(
		\{1-\Pi_{ji}^{(t^*-1)}\}_{j\neq i}
		\right)
		\Delta\mathbf L_{Y,-i}^{(t^*-1,t)}
		\boldsymbol\ell_{Y,i}^{(t)}
		\right\|_2
		\leq
		\xi_{Y,i}^{(t^*-1,t)}/\sqrt{\rho_Y}.
		\]
		Thus, we have
		\(
		A_{Y,i}^{(t^*-1,t)}
		\leq
		\rho_Y
		\sqrt d\,\ell\,\delta_{Y,i}^{(t^*-1,t)}
		+
		\sqrt{\rho_Y}\xi_{Y,i}^{(t^*-1,t)}.
		\)
		As shown above,
		\(
		B_{Y,i}^{(t^*-1,t)}
		=
		\sqrt{\rho_Y}\,
		M_{Y,i}^{(t^*-1,t)}.
		\)
		Therefore,
		\[
		\begin{aligned}
			\frac{1}{\sqrt p}
			\left\|
			\left(
			\mathbf S_Y^{(t)}-\mathbf S_Y^{(t^*-1)}
			\right)\mathbf e_i
			\right\|_2
			&\leq
			\rho_Y
			\sqrt d\,\ell\,\delta_{Y,i}^{(t^*-1,t)}
			+
			\sqrt{\rho_Y}\xi_{Y,i}^{(t^*-1,t)}
			+
			\sqrt{\rho_Y}\,
			M_{Y,i}^{(t^*-1,t)}
			+
			\frac{\rho_Y}{\sqrt p} .
		\end{aligned}
		\]
		Together the lower and upper estimates prove the proposition.  The proof for
		\(Z\) is identical.
	\end{proof}

	\begin{proposition}
		\label{prop:pointwise-S-to-R}
		For every \(i\in[p]\),
		\[
		\left\|
		\mathbf e_i^\top
		\left(
		\mathbf R_{\mathbf S_Y}^{(t)}
		-
		\mathbf R_{\mathbf S_Y}^{(t^*-1)}
		\right)
		\right\|_2
		\geq
		\frac{1}{\sqrt{C_{Y,S}\rho_Y}}
		\frac{1}{\sqrt p}
		\left\|
		\left(
		\mathbf S_Y^{(t)}
		-
		\mathbf S_Y^{(t^*-1)}
		\right)\mathbf e_i
		\right\|_2.
		\]
		Moreover,
		\[
		\left\|
		\mathbf e_i^\top
		\left(
		\mathbf R_{\mathbf S_Y}^{(t)}
		-
		\mathbf R_{\mathbf S_Y}^{(t^*-1)}
		\right)
		\right\|_2
		\leq
		\frac{1}{\sqrt{c_{Y,S}\rho_Y}}
		\frac{1}{\sqrt p}
		\left\|
		\left(
		\mathbf S_Y^{(t)}
		-
		\mathbf S_Y^{(t^*-1)}
		\right)\mathbf e_i
		\right\|_2.
		\]
		The same statement holds for \(\mathbf S_Z\), with
		\(\rho_Y,c_{Y,S},C_{Y,S}\) replaced by \(\rho_Z,c_{Z,S},C_{Z,S}\).
	\end{proposition}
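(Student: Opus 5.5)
The plan is to write the population right embedding explicitly in terms of the columns of $\mathbf S_Y$, and then use the fact that a difference of two columns lies in the column space of $\mathbf U_{S_Y}$. That reduces both inequalities to bounding the singular values of the diagonal matrix $\boldsymbol\Sigma_{S_Y}^{-1/2}$.

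First, since $K_Y=\operatorname{rank}(\mathbf S_Y)$, the top-$K_Y$ SVD is exact: $\mathbf S_Y=\mathbf U_{S_Y}\boldsymbol\Sigma_{S_Y}\mathbf V_{S_Y}^\top$. Hence $\mathbf V_{S_Y}=\mathbf S_Y^\top\mathbf U_{S_Y}\boldsymbol\Sigma_{S_Y}^{-1}$, and therefore
\[
\mathbf R_{S_Y}=\mathbf V_{S_Y}\boldsymbol\Sigma_{S_Y}^{1/2}=\mathbf S_Y^\top\mathbf U_{S_Y}\boldsymbol\Sigma_{S_Y}^{-1/2}.
\]
The row of $\mathbf R_{S_Y}$ indexed by $(t,i)$ is $\mathbf e_i^\top\mathbf R_{S_Y}^{(t)}=(\mathbf S_Y^{(t)}\mathbf e_i)^\top\mathbf U_{S_Y}\boldsymbol\Sigma_{S_Y}^{-1/2}$. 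Subtracting the corresponding row for block $t^*-1$ gives
\[
\mathbf e_i^\top\bigl(\mathbf R_{S_Y}^{(t)}-\mathbf R_{S_Y}^{(t^*-1)}\bigr)=\mathbf x_i^\top\mathbf U_{S_Y}\boldsymbol\Sigma_{S_Y}^{-1/2},\qquad \mathbf x_i:=(\mathbf S_Y^{(t)}-\mathbf S_Y^{(t^*-1)})\mathbf e_i.
\]

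Second, $\mathbf S_Y^{(t)}\mathbf e_i$ and $\mathbf S_Y^{(t^*-1)}\mathbf e_i$ are both columns of $\mathbf S_Y$. Every column of $\mathbf S_Y$ lies in the range of $\mathbf U_{S_Y}$, so $\mathbf x_i$ does too. This is the one step where a genuine structural fact is needed, and it relies on the compact SVD being exact, i.e.\ on $K_Y$ being the true rank. Writing $\mathbf x_i=\mathbf U_{S_Y}\mathbf a$, orthonormality of the columns gives $\mathbf a=\mathbf U_{S_Y}^\top\mathbf x_i$ and $\|\mathbf a\|_2=\|\mathbf x_i\|_2$. The row difference therefore equals $\mathbf a^\top\boldsymbol\Sigma_{S_Y}^{-1/2}$.

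Third, because $\boldsymbol\Sigma_{S_Y}^{-1/2}$ is diagonal with entries in $[\sigma_1(\mathbf S_Y)^{-1/2},\sigma_{K_Y}(\mathbf S_Y)^{-1/2}]$, we get
\[
\sigma_1(\mathbf S_Y)^{-1/2}\|\mathbf x_i\|_2\le\|\mathbf a^\top\boldsymbol\Sigma_{S_Y}^{-1/2}\|_2\le\sigma_{K_Y}(\mathbf S_Y)^{-1/2}\|\mathbf x_i\|_2 .
\]
Condition~\ref{ass:3} supplies $\sigma_1(\mathbf S_Y)\le C_{Y,S}p\rho_Y$ and $\sigma_{K_Y}(\mathbf S_Y)\ge c_{Y,S}p\rho_Y$. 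Substituting these yields exactly the factors $(C_{Y,S}\rho_Y)^{-1/2}p^{-1/2}$ in the lower bound and $(c_{Y,S}\rho_Y)^{-1/2}p^{-1/2}$ in the upper bound. The $Z$ channel follows by the same argument verbatim.

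I expect no real obstacle. The only point needing care is the column-space membership in the second step, together with keeping track of which singular value gives the lower bound and which gives the upper bound.
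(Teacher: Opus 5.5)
Your proof is correct and is essentially the paper's argument. The paper writes the forward identity $\mathbf S_Y^{(t)}=\mathbf U_{S_Y}\boldsymbol\Sigma_{S_Y}^{1/2}(\mathbf R_{S_Y}^{(t)})^\top$, so the column difference is $\mathbf U_{S_Y}\boldsymbol\Sigma_{S_Y}^{1/2}$ applied to the row difference and the column-space step you isolate is automatic; your inverted identity $\mathbf R_{S_Y}=\mathbf S_Y^\top\mathbf U_{S_Y}\boldsymbol\Sigma_{S_Y}^{-1/2}$ needs that step stated explicitly, and you handle it correctly, with both routes using the exactness of the rank-$K_Y$ SVD and Condition~\ref{ass:3} in the same way.
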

	
	\begin{proof}
		We prove the statement for \(Y\). The proof for \(Z\) is identical.
		
		By the population SVD, for each block \(t\), we have
		\(
		\mathbf S_Y^{(t)}
		=
		\mathbf U_{S_Y}\mathbf \Sigma_{S_Y}^{1/2}
		\left(
		\mathbf R_{\mathbf S_Y}^{(t)}
		\right)^\top .
		\)
		Hence
		\[
		\mathbf S_Y^{(t)}-\mathbf S_Y^{(t^*-1)}
		=
		\mathbf U_{S_Y}\mathbf \Sigma_{S_Y}^{1/2}
		\left(
		\mathbf R_{\mathbf S_Y}^{(t)}
		-
		\mathbf R_{\mathbf S_Y}^{(t^*-1)}
		\right)^\top .
		\]
		Multiplying by \(\mathbf e_i\) gives
		\[
		\left(
		\mathbf S_Y^{(t)}-\mathbf S_Y^{(t^*-1)}
		\right)\mathbf e_i
		=
		U_{S_Y}\mathbf \Sigma_{S_Y}^{1/2}
		\left[
		\mathbf e_i^\top
		\left(
		\mathbf R_{\mathbf S_Y}^{(t)}
		-
		\mathbf R_{\mathbf S_Y}^{(t^*-1)}
		\right)
		\right]^\top .
		\]
        
		Then the two singular-value bounds in Condition~\ref{ass:3}, give the two inequalities.
	\end{proof}

	\begin{proof}[Proof of Theorem~\ref{thm:2v0}]
		By Theorem~\ref{thm1},
		\(\mathbb P\{\mathcal E_{\rm UASE}(c_0)^c\}\le C_{c_0}p^{-c_0}\), and on
		\(\mathcal E_{\rm UASE}(c_0)\) the rowwise embedding bounds hold in both
		channels for every block used below.
		
		(1) Under \(H_{0,i}^{(t^*-1,t)}\), the own latent displacement of vertex \(i\)
		vanishes in both transition channels.  Thus the own-vertex visible signal terms in
		Proposition~\ref{prop:column-wise-SY-bound} vanish; the remaining population
		row displacement is bounded by the interference and memory terms.
		By Proposition~\ref{prop:column-wise-SY-bound} and Proposition~\ref{prop:pointwise-S-to-R},
		\[
		\begin{aligned}
			\left\|
			\mathbf e_i^\top
			\left(
			\mathbf R_{\mathbf S_Y}^{(t)}
			-
			\mathbf R_{\mathbf S_Y}^{(t^*-1)}
			\right)
			\right\|_2
			&\le
			\frac{1}{\sqrt{c_{Y,S}}}
			\left(
			\xi_{Y,i}^{(t^*-1,t)}
			+
			M_{Y,i}^{(t^*-1,t)}
			+
			\sqrt{\frac{\rho_Y}{p}}
			\right),
			\\
			\left\|
			\mathbf e_i^\top
			\left(
			\mathbf R_{\mathbf S_Z}^{(t)}
			-
			\mathbf R_{\mathbf S_Z}^{(t^*-1)}
			\right)
			\right\|_2
			&\le
			\frac{1}{\sqrt{c_{Z,S}}}
			\left(
			\xi_{Z,i}^{(t^*-1,t)}
			+
			M_{Z,i}^{(t^*-1,t)}
			+
			\sqrt{\frac{\rho_Z}{p}}
			\right).
		\end{aligned}
		\]
		Hence the null condition implies
		\[
		\max
		\left\{
		\left\|
		\mathbf e_i^\top
		\left(
		\mathbf R_{\mathbf S_Y}^{(t)}
		-
		\mathbf R_{\mathbf S_Y}^{(t^*-1)}
		\right)
		\right\|_2,
		\left\|
		\mathbf e_i^\top
		\left(
		\mathbf R_{\mathbf S_Z}^{(t)}
		-
		\mathbf R_{\mathbf S_Z}^{(t^*-1)}
		\right)
		\right\|_2
		\right\}
		<
		(\gamma-2)\epsilon_p .
		\]
		On \(\mathcal E_{\rm UASE}(c_0)\), the embedding error contributes at most \(2\epsilon_p\), so
		\[
		T_i^{(t^*-1,t)}
		<
		(\gamma-2)\epsilon_p+2\epsilon_p
		=
		\gamma\epsilon_p
		=
		\tau_p.
		\]
		Thus \(\phi_i^{(t^*-1,t)}=0\) on \(\mathcal E_{\rm UASE}(c_0)\). Therefore
		\[
		\mathbb P_{H_{0,i}^{(t^*-1,t)}}
		\left(
		\phi_i^{(t^*-1,t)}=1
		\right)
		\le
		\mathbb P\{\mathcal E_{\rm UASE}(c_0)^c\}
		\le
		C_{c_0}p^{-c_0}.
		\]
		
		(2) Under \(H_{1,i}^{(t^*-1,t)}\), Proposition~\ref{prop:column-wise-SY-bound} and
		Proposition~\ref{prop:pointwise-S-to-R} give
		\[
		\begin{aligned}
			\left\|
			\mathbf e_i^\top
			\left(
			\mathbf R_{\mathbf S_Y}^{(t)}
			-
			\mathbf R_{\mathbf S_Y}^{(t^*-1)}
			\right)
			\right\|_2
			&\ge
			\frac{1}{\sqrt{C_{Y,S}}}
			\left\{
			\sqrt{\rho_Y}
			\mu_{Y,i}\delta_{Y,i}^{(t^*-1,t)}
			-
			\xi_{Y,i}^{(t^*-1,t)}
			-
			M_{Y,i}^{(t^*-1,t)}
			\right\},
			\\
			\left\|
			\mathbf e_i^\top
			\left(
			\mathbf R_{\mathbf S_Z}^{(t)}
			-
			\mathbf R_{\mathbf S_Z}^{(t^*-1)}
			\right)
			\right\|_2
			&\ge
			\frac{1}{\sqrt{C_{Z,S}}}
			\left\{
			\sqrt{\rho_Z}
			\mu_{Z,i}\delta_{Z,i}^{(t^*-1,t)}
			-
			\xi_{Z,i}^{(t^*-1,t)}
			-
			M_{Z,i}^{(t^*-1,t)}
			\right\}.
		\end{aligned}
		\]
		Hence the separation condition \eqref{eq:vertex_sep} implies
		\[
		\max
		\left\{
		\left\|
		\mathbf e_i^\top
		\left(
		\mathbf R_{\mathbf S_Y}^{(t)}
		-
		\mathbf R_{\mathbf S_Y}^{(t^*-1)}
		\right)
		\right\|_2,
		\left\|
		\mathbf e_i^\top
		\left(
		\mathbf R_{\mathbf S_Z}^{(t)}
		-
		\mathbf R_{\mathbf S_Z}^{(t^*-1)}
		\right)
		\right\|_2
		\right\}
		>
		(\gamma+2)\epsilon_p .
		\]
		On \(\mathcal E_{\rm UASE}(c_0)\), the embedding error is at most \(2\epsilon_p\), so
		\[
		T_i^{(t^*-1,t)}
		>
		(\gamma+2)\epsilon_p-2\epsilon_p
		=
		\gamma\epsilon_p
		=
		\tau_p.
		\]
		Thus \(\phi_i^{(t^*-1,t)}=1\) on \(\mathcal E_{\rm UASE}(c_0)\). Therefore
		\[
		\mathbb P_{H_{1,i}^{(t^*-1,t)}}
		\left(
		\phi_i^{(t^*-1,t)}=1
		\right)
		\ge
		\mathbb P\{\mathcal E_{\rm UASE}(c_0)\}
		\ge
		1-C_{c_0}p^{-c_0}.
		\]
	\end{proof}
	
	\begin{proof}[Proof of Corollary~\ref{cor:perfect-v}]
		The two displays in Theorem~\ref{thm:2v0} are per-vertex statements. Under the
		stated hypotheses the null condition~\eqref{eq:null_drift_control} holds for
		every $i\notin\mathcal A^{(t^*-1,t)}$ and the separation
		condition~\eqref{eq:vertex_sep} holds for every $i\in\mathcal A^{(t^*-1,t)}$.
		On the common embedding event \(\mathcal E_{\rm UASE}(c_0)\), the proof of
		Theorem~\ref{thm:2v0} therefore classifies every null and anomalous vertex
		correctly.  Hence the recovery failure is contained in
		\(\mathcal E_{\rm UASE}(c_0)^c\), and
		\[
		\mathbb P\!\left(\widehat{\mathcal A}^{(t^*-1,t)}\neq\mathcal A^{(t^*-1,t)}\right)
		\le
		\mathbb P\{\mathcal E_{\rm UASE}(c_0)^c\}
		\le
		C_{c_0}p^{-c_0}.
		\]
		Taking complements gives
		$\mathbb P\!\left(\widehat{\mathcal A}^{(t^*-1,t)}=\mathcal A^{(t^*-1,t)}\right)\ge
		1-C_{c_0}p^{-c_0}$. This lower bound tends to one as $p\to\infty$. Applying the same argument to the channel-specific tests
		based on \(T_{Y,i}^{(t^*-1,t)}\) and \(T_{Z,i}^{(t^*-1,t)}\), and taking the
		intersection of their common embedding events, gives exact recovery of
		the formation- and dissolution-anomaly sets. This proves the stated
		channel-attribution conclusion.
	\end{proof}

\subsubsection*{S.2.3 Proofs for Section~\ref{sec:vertex-memory-main-insert}}

	\begin{proof}[Proof of Proposition~\ref{prop:memory-geometric-decay}]
		Write \(\lambda_{ij}=1-\alpha_{ij}-\beta_{ij}\).  After the isolated anomaly, iterating the baseline recursion
		gives, for $i\ne j$ and $t>t^*$,
		\[
			\Pi_{ij}^{(t-1)}-\Pi_{ij}^{(t^*-1)}
			=
			\lambda_{ij}^{t-t^*-1}
			\left(
			\Pi_{ij}^{(t^*)}-\Pi_{ij}^{(t^*-1)}
			\right).
		\]
		Therefore,
		\[
			\{M_{Y,i}^{(t^*-1,t)}\}^2
			=
			\frac1p\sum_{j\ne i}
			\frac{\alpha_{ji}^2}{\rho_Y}
			|\lambda_{ji}|^{2(t-t^*-1)}
			\left(
			\Pi_{ji}^{(t^*)}-\Pi_{ji}^{(t^*-1)}
			\right)^2
			\le
			\rho_Y\bar r^{2(t-t^*-1)}(\Delta_i^{\rm mean})^2.
		\]
		Taking square roots gives the formation bound.  The dissolution bound
		is identical.  	If \(i,j\notin\mathcal I^{(t^*)}\), the mechanism of dyad \((i,j)\)
		does not change at \(t^*\), so baseline stationarity gives
		\(\Pi_{ij}^{(t^*)}-\Pi_{ij}^{(t^*-1)}=0\). Thus, for every
		\(i\notin\mathcal I^{(t^*)}\), at most \(m^{(t^*)}\) summands in
		\(\Delta_i^{\rm mean}\) are nonzero, and
		\(
		\Delta_i^{\rm mean}\leq\sqrt{\frac{m^{(t^*)}}{p}}.
		\)
		Substitution in the channelwise geometric bound gives
		\[
		M_{h,i}^{(t^*-1,t)}
		\leq
		\left(\frac{\rho_hm^{(t^*)}}{p}\right)^{1/2}
		\bar r^{\,t-t^*-1},\qquad h\in\mathcal H.
		\]
		Finally, if
		\(\rho_hm^{(t^*)}=o(\log p)\), then
		\(\bar r^{\,t-t^*-1}\leq1\), and
		\(\epsilon_p\asymp\sqrt{\log p/p}\) imply
		\(M_{h,i}^{(t^*-1,t)}=o(\epsilon_p)\) uniformly over every non-anomalous
		vertex and every \(t>t^*\).
	\end{proof}

	\begin{proof}[Proof of Theorem~\ref{thm:memory-echo-main}]

    Under the isolated-anomaly data-generating process, retain the
		already defined notation \(\mathbf S_h\) and \(\mathbf R_{S_h}\) for the
		population unfolded matrix and its right embedding. For use only inside this proof, define
		\[
		D_{h,i}^{(t^*-1,t)}
		:=
		\left\|
		\mathbf e_i^\top\mathbf R_{S_h}^{(t)}
		-
		\mathbf e_i^\top\mathbf R_{S_h}^{(t^*-1)}
		\right\|_2,
		\qquad h\in\mathcal H.
		\]

		We first establish deterministic population bounds for this distance. 
		Multiplying the SVD identity on the left by
		\(\boldsymbol\Sigma_Y^{-1/2}\mathbf U_Y^\top\) gives
		\(
		\mathbf R_{S_Y}^\top
		=
		\boldsymbol\Sigma_Y^{-1/2}\mathbf U_Y^\top\mathbf S_Y.
		\)
		Taking the difference between the \(t\) and \(t^*-1\)
		blocks and then multiplying by \(\mathbf e_i\) therefore yields the exact
		identity
		\[
		D_{Y,i}^{(t^*-1,t)}
		=
		\left\|
		\boldsymbol\Sigma_Y^{-1/2}\mathbf U_Y^\top
		\left\{
		\mathbf S_Y^{(t)}-\mathbf S_Y^{(t^*-1)}
		\right\}\mathbf e_i
		\right\|_2.
		\]

		For \(j\ne i\), the returned-baseline formation blocks
		satisfy
		\((\mathbf S_Y^{(s)})_{ji}=\alpha_{ji}(1-\Pi_{ji}^{(s-1)})\).
		Consequently,
		\[
		\mathbf e_j^\top
		\left\{
		\mathbf S_Y^{(t)}-\mathbf S_Y^{(t^*-1)}
		\right\}\mathbf e_i
		=
		-\alpha_{ji}
		\left(\Pi_{ji}^{(t-1)}-\Pi_{ji}^{(t^*-1)}\right).
		\]
		By the definition of \(M_{Y,i}^{(t^*-1,t)}\), the off-diagonal
		part of this column has Euclidean norm
		\(\sqrt{p\rho_Y}\,M_{Y,i}^{(t^*-1,t)}\). The completed diagonal changes
		its norm by at most \(C\rho_Y\). Hence
		\[
	\sqrt{p\rho_Y}\,M_{Y,i}^{(t^*-1,t)}
		\le
		\left\|
		\left\{\mathbf S_Y^{(t)}-\mathbf S_Y^{(t^*-1)}\right\}
		\mathbf e_i
		\right\|_2
		\le
		\sqrt{p\rho_Y}\,M_{Y,i}^{(t^*-1,t)}+C\rho_Y.
		\]

		For the upper bound, the first identity and
		\(\sigma_{K_Y}(\mathbf S_Y)\ge c_{Y,S}p\rho_Y\) from
		Condition~\ref{ass:3} imply
		\[
		\begin{aligned}
		D_{Y,i}^{(t^*-1,t)}
		\le
		\frac{
		\left\|\left\{\mathbf S_Y^{(t)}-
		\mathbf S_Y^{(t^*-1)}\right\}\mathbf e_i\right\|_2
		}{\sqrt{\sigma_{K_Y}(\mathbf S_Y)}}
		\le
		CM_{Y,i}^{(t^*-1,t)}
		+C\sqrt{\frac{\rho_Y}{p}}.
		\end{aligned}
		\]
		For the lower bound, every column of each block difference
		belongs to the column space of \(\mathbf U_Y\), because
		\[
		\mathbf U_Y\mathbf U_Y^\top
		\left\{\mathbf S_Y^{(t)}-\mathbf S_Y^{(t^*-1)}\right\}
		\mathbf e_i
		=
		\left\{\mathbf S_Y^{(t)}-\mathbf S_Y^{(t^*-1)}\right\}
		\mathbf e_i.
		\]
		Thus \(\mathbf U_Y^\top\) preserves the norm of this vector.
		Using \(\sigma_1(\mathbf S_Y)\le C_{Y,S}p\rho_Y\) and the lower column
		bound above gives
		\[
		\begin{aligned}
		D_{Y,i}^{(t^*-1,t)}
		&\ge
		\frac{
		\left\|\mathbf U_Y^\top
		\left\{\mathbf S_Y^{(t)}-\mathbf S_Y^{(t^*-1)}\right\}
		\mathbf e_i\right\|_2
		}{\sqrt{\sigma_1(\mathbf S_Y)}}\\
		&=
		\frac{
		\left\|\left\{\mathbf S_Y^{(t)}-
		\mathbf S_Y^{(t^*-1)}\right\}\mathbf e_i\right\|_2
		}{\sqrt{\sigma_1(\mathbf S_Y)}}
		\ge
		cM_{Y,i}^{(t^*-1,t)}.
		\end{aligned}
		\]
		The same calculation applies to the \(Z\) channel. Since
		\(\mathcal H=\{Y,Z\}\) is finite, the channelwise constants may be replaced
		by common constants \(0<c<C<\infty\). We have therefore shown
		\[
		cM_{h,i}^{(t^*-1,t)}
		\le
		D_{h,i}^{(t^*-1,t)}
		\le
		CM_{h,i}^{(t^*-1,t)}
		+C\sqrt{\frac{\rho_h}{p}},
		\qquad h\in\mathcal H.
		\]

		It remains to transfer these deterministic bounds to the
		observed statistic. On \(\mathcal E_{\rm UASE}(c_0)\), let
		\(\mathbf W_{hS}\) be the common alignment matrix from
		Theorem~\ref{thm1}. Orthogonality gives
		\[
		D_{h,i}^{(t^*-1,t)}
		=
		\left\|
		\mathbf e_i^\top
		\left\{\mathbf R_{S_h}^{(t)}-\mathbf R_{S_h}^{(t^*-1)}\right\}
		\mathbf W_{hS}
		\right\|_2.
		\]
		The reverse triangle inequality, followed by the triangle
		inequality and the two rowwise bounds from Theorem~\ref{thm1}, yields
		\[
		\begin{aligned}
		\left|T_{h,i}^{(t^*-1,t)}-D_{h,i}^{(t^*-1,t)}\right|
		&\le
		\left\|
		\mathbf e_i^\top
		\left[
		\left\{\mathbf R_h^{(t)}-\mathbf R_{S_h}^{(t)}\mathbf W_{hS}\right\}
		-
		\left\{\mathbf R_h^{(t^*-1)}-
		\mathbf R_{S_h}^{(t^*-1)}\mathbf W_{hS}\right\}
		\right]
		\right\|_2\\
		&\le
		\left\|
		\mathbf e_i^\top
		\left\{\mathbf R_h^{(t)}-\mathbf R_{S_h}^{(t)}\mathbf W_{hS}\right\}
		\right\|_2
		+
		\left\|
		\mathbf e_i^\top
		\left\{\mathbf R_h^{(t^*-1)}-
		\mathbf R_{S_h}^{(t^*-1)}\mathbf W_{hS}\right\}
		\right\|_2\\
		&\le
		2\epsilon_p.
		\end{aligned}
		\]
		Combining this approximation with the population upper and
		lower bounds, and using \(T_{h,i}^{(t^*-1,t)}\ge0\), gives on the same event
		\[
		\max\left\{cM_{h,i}^{(t^*-1,t)}-2\epsilon_p,0\right\}
		\le
	T_{h,i}^{(t^*-1,t)}
		\le
		CM_{h,i}^{(t^*-1,t)}
		+C\sqrt{\frac{\rho_h}{p}}+2\epsilon_p.
		\]
		The event is common to all \(h\), \(i\), and time blocks, and
		Theorem~\ref{thm1} gives
		\(\mathbb P\{\mathcal E_{\rm UASE}(c_0)^c\}
		\le C_{c_0}p^{-c_0}\). This proves the asserted simultaneous sample bound.
		Finally, if the right-hand side of \eqref{eq:memory-decay} is
		\(O(\epsilon_p)\), then \(M_{h,i}^{(t^*-1,t)}=O(\epsilon_p)\) and
		\(\sqrt{\rho_h/p}=O(\epsilon_p)\). The sample upper bound consequently
		implies \(T_{h,i}^{(t^*-1,t)}=O_{\mathbb P}(\epsilon_p)\).

	\end{proof}

		\begin{proof}[Proof of Corollary~\ref{cor:vertex-memory-burnin-main}]
		Because \(\log(1/\bar r)>0\), the assumed lower bound on
		\(t-t^*-1\) implies
		\[
		\bar r^{\,t-t^*-1}
		\leq
		\left[
		\max\left\{1,
		\frac{\sqrt{\max\{\rho_Y,\rho_Z\}}\,\Delta_i^{\rm mean}}
		{\epsilon_p}\right\}
		\right]^{-1}.
		\]
		Proposition~\ref{prop:memory-geometric-decay} therefore gives
		\[
		\max_{h\in\mathcal H}M_{h,i}^{(t^*-1,t)}
		\leq
		\sqrt{\max\{\rho_Y,\rho_Z\}}\,
		\bar r^{\,t-t^*-1}\Delta_i^{\rm mean}
		\leq\epsilon_p.
		\]
		Thus the right-hand side of \eqref{eq:memory-decay} is
		\(O(\epsilon_p)\), and Theorem~\ref{thm:memory-echo-main} directly yields
		\(\max_{h\in\mathcal H}T_{h,i}^{(t^*-1,t)}
		=O_{\mathbb P}(\epsilon_p)\), completing the proof.
	\end{proof}
	\subsubsection*{S.2.4 Proof of Theorem~\ref{thm:2g}}
	
		\begin{proposition}\label{prop:g0}
		Define the network-level AR(1) memory functionals $M_{Y,G}^{(t^*-1,t)}$ and
		$M_{Z,G}^{(t^*-1,t)}$ as in the main paper.
		Assume  Condition~\ref{ass:graph-visibility} holds. Then
		\[
		\frac{1}{p}\left\|\mathbf S_Y^{(t)}-\mathbf S_Y^{(t^*-1)}\right\|_F
		\;\in\;
		\left[\rho_Y\mu_Y\Delta_Y^{(t^*-1,t)}-\sqrt{\rho_Y}M_{Y,G}^{(t^*-1,t)}
		-\frac{\rho_Y}{\sqrt p},\;
		2\rho_Y\sqrt{d}\,\ell\,\Delta_Y^{(t^*-1,t)}+\sqrt{\rho_Y}M_{Y,G}^{(t^*-1,t)}
		+\frac{\rho_Y}{\sqrt p}\right],
		\]
		and analogously for $Z$.
	\end{proposition}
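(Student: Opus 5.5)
I will mirror the column-wise argument of Proposition~\ref{prop:column-wise-SY-bound}, now in Frobenius norm over all entries. As there, I start from the exact block decomposition
\[
\mathbf S_Y^{(t)}-\mathbf S_Y^{(t^*-1)}
=\{\mathbf P_Y^{(t)}-\mathbf P_Y^{(t^*-1)}\}\circ\mathbf W_Y
-\mathbf P_Y^{(t)}\circ\{\mathbf\Pi^{(t-1)}-\mathbf\Pi^{(t^*-1)}\},
\]
with $\mathbf W_Y=\mathbf 1\mathbf 1^\top-\mathbf\Pi^{(t^*-1)}$ and $\mathbf P_Y^{(t)}=\rho_Y\mathbf L_Y^{(t)}\mathbf L_Y^{(t)\top}$. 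I then split each matrix into its off-diagonal part and its completed diagonal. The off-diagonal part of the second (memory) term has Frobenius norm exactly $p\sqrt{\rho_Y}\,M_{Y,G}^{(t^*-1,t)}$, directly from the definition of $M_{Y,G}$ with $a_{Y,ij}^{(t)}=\alpha_{ij}^{(t)}$. The diagonal of the whole difference has $p$ entries, each bounded by $\rho_Y$ in absolute value, because the completed entries $(\mathbf S_Y^{(s)})_{ii}\in[0,\rho_Y]$. Its Frobenius norm is therefore at most $\rho_Y\sqrt p$, which gives the term $\rho_Y/\sqrt p$ after dividing by $p$.

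\textbf{Lower bound.} Apply the reverse triangle inequality:
\[
\|\mathbf S_Y^{(t)}-\mathbf S_Y^{(t^*-1)}\|_F
\ge\|\{\mathbf P_Y^{(t)}-\mathbf P_Y^{(t^*-1)}\}\circ\mathbf W_Y\|_F
-\|\text{off-diagonal memory part}\|_F-\|\text{diagonal of memory term}\|_F.
\]
Condition~\ref{ass:graph-visibility} bounds the first term below by $p\rho_Y\mu_Y\Delta_Y^{(t^*-1,t)}$, since the $\rho_Y$ factors out of the Hadamard product. The diagonal of the memory term is at most $\rho_Y\sqrt p$ in Frobenius norm, because $|(\mathbf P_Y^{(t)})_{ii}|\le\rho_Y$ and $|\Pi_{ii}^{(\cdot)}|\le1$ after completion. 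I should check that the completed $\Pi_{ii}$ lie in $[0,1]$; this follows from the recursion and the constraint $\rho_Y\|\boldsymbol\ell\|^2+\rho_Z\|\boldsymbol\ell\|^2\le1$. Dividing by $p$ gives the stated lower endpoint. Condition~5 already uses the full matrix including the diagonal, so no separate diagonal correction is needed for the signal term.

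\textbf{Upper bound.} Use the triangle inequality and bound the signal term by
\[
\|\{\mathbf P_Y^{(t)}-\mathbf P_Y^{(t^*-1)}\}\circ\mathbf W_Y\|_F\le\rho_Y\|\mathbf L_Y^{(t)}\mathbf L_Y^{(t)\top}-\mathbf L_Y^{(t^*-1)}\mathbf L_Y^{(t^*-1)\top}\|_F,
\]
since $0\le(\mathbf W_Y)_{ij}\le1$. Write
\[
\mathbf L\mathbf L^\top-\mathbf L'\mathbf L'^\top=(\mathbf L-\mathbf L')\mathbf L^\top+\mathbf L'(\mathbf L-\mathbf L')^\top.
\]
Each piece has Frobenius norm at most $\|\mathbf L-\mathbf L'\|_F$ times the operator norm of $\mathbf L$ or $\mathbf L'$. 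Those operator norms are at most $\|\mathbf L\|_F\le\sqrt{p d}\,\ell$, using $\|\boldsymbol\ell\|_\infty\le\ell$. Since $\|\mathbf L-\mathbf L'\|_F=\sqrt p\,\Delta_Y^{(t^*-1,t)}$, the signal term is at most $2\rho_Y\sqrt d\,\ell\,p\,\Delta_Y^{(t^*-1,t)}$.

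The memory term contributes $p\sqrt{\rho_Y}M_{Y,G}$ off the diagonal. The diagonal of the whole difference adds at most $\rho_Y\sqrt p$; I will handle the diagonal once, for the whole difference, to avoid double counting. Dividing by $p$ gives the upper endpoint. The $Z$ channel is identical, with $\mathbf W_Z=\mathbf\Pi^{(t^*-1)}$, a plus sign on the memory term, and $\beta$ in place of $\alpha$.

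\textbf{Main obstacle.} The delicate point is bookkeeping the diagonal, so that exactly one $\rho_Y/\sqrt p$ appears in each endpoint. For the lower bound I will apply the reverse triangle inequality to (signal) minus (memory, full matrix), and bound the memory matrix's Frobenius norm by its off-diagonal part plus $\rho_Y\sqrt p$. This yields precisely the stated form.
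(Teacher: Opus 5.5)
Your proposal is correct and follows essentially the same argument as the paper. You use the same Hadamard decomposition of $\mathbf S_Y^{(t)}-\mathbf S_Y^{(t^*-1)}$ into a signal term and a memory term. For the lower bound you apply the reverse triangle inequality with Condition~\ref{ass:graph-visibility} and an off-diagonal/diagonal split of the memory term. For the upper bound you use the triangle inequality, the expansion $(\Delta\mathbf L_Y^{(t^*-1,t)})\mathbf L_Y^{(t)\top}+\mathbf L_Y^{(t^*-1)}(\Delta\mathbf L_Y^{(t^*-1,t)})^{\top}$, and the bound $\|\mathbf L\|_F\le\sqrt{pd}\,\ell$. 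The only difference is cosmetic: in the upper bound you charge the $\rho_Y/\sqrt p$ term to the diagonal of the whole difference, whereas the paper charges it to the diagonal of the memory term, and both give the same constant.
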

	
	\begin{proof}
		We first prove the bounds for \(\mathbf S_Y^{(t)}\). By definition,
		\[
		\begin{aligned}
			\mathbf S_Y^{(t)}-\mathbf S_Y^{(t^*-1)}
			&=
			\left\{
			\mathbf P_Y^{(t)}
			-
			\mathbf P_Y^{(t^*-1)}
			\right\}
			\circ
			\left\{
			\mathbf 1\mathbf 1^\top-\mathbf\Pi^{(t^*-1)}
			\right\}        \\
			&\qquad
			-
			\mathbf P_Y^{(t)}
			\circ
			\left\{
			\mathbf\Pi^{(t-1)}
			-
			\mathbf\Pi^{(t^*-1)}
			\right\}.
		\end{aligned}
		\]
		Hence, by the reverse triangle inequality,
		\[
		\begin{aligned}
			\frac{1}{p}
			\left\|
			\mathbf S_Y^{(t)}-\mathbf S_Y^{(t^*-1)}
			\right\|_F
			&\geq
			A_{Y,G}^{(t^*-1,t)}
			-
			B_{Y,G}^{(t^*-1,t)},
		\end{aligned}
		\]
		where
		\[
		\begin{aligned}
			A_{Y,G}^{(t^*-1,t)}
			&:=
			\frac{1}{p}
			\left\|
			\left\{
			\mathbf P_Y^{(t)}
			-
			\mathbf P_Y^{(t^*-1)}
			\right\}
			\circ
			\left\{
			\mathbf 1\mathbf 1^\top-\mathbf\Pi^{(t^*-1)}
			\right\}
			\right\|_F,        \\
			B_{Y,G}^{(t^*-1,t)}
			&:=
			\frac{1}{p}
			\left\|
			\mathbf P_Y^{(t)}
			\circ
			\left\{
			\mathbf\Pi^{(t-1)}
			-
			\mathbf\Pi^{(t^*-1)}
			\right\}
			\right\|_F .
		\end{aligned}
		\]
		
		For the direct signal term, since
		$\mathbf P_Y^{(t)}
		=
		\rho_Y\mathbf L_Y^{(t)}\mathbf L_Y^{(t)\top},$
		we have
		\[
		\begin{aligned}
			A_{Y,G}^{(t^*-1,t)}
			&=
			\rho_Y
			\frac{1}{p}
			\left\|
			\left[
			\left\{
			\mathbf L_Y^{(t)}\mathbf L_Y^{(t)\top}
			-
			\mathbf L_Y^{(t^*-1)}\mathbf L_Y^{(t^*-1)\top}
			\right\}
			\circ
			\left\{
			\mathbf 1\mathbf 1^\top-\mathbf\Pi^{(t^*-1)}
			\right\}
			\right]
			\right\|_F        \\
			&\geq
			\rho_Y\mu_Y\Delta_Y^{(t^*-1,t)}
		\end{aligned}
		\]
		by Condition~\ref{ass:graph-visibility}.
		
		For the memory term, separate the completed diagonal from the true
		off-diagonal probabilities.  Since
		\(0\leq\alpha_{ii}^{(t)}\leq\rho_Y\) and
		\(0\leq\Pi_{ii}^{(t)}\leq1\), the triangle inequality gives
		\[
		\begin{aligned}
			B_{Y,G}^{(t^*-1,t)}
			&\leq
			\frac{1}{p}
			\left[
			\sum_{i\neq j}
			\left\{
			\alpha_{ij}^{(t)}
			\left(
			\Pi_{ij}^{(t-1)}
			-
			\Pi_{ij}^{(t^*-1)}
			\right)
			\right\}^2
			\right]^{1/2}        \\
			&\qquad
			+
			\frac{1}{p}
			\left[
			\sum_i
			\left\{
			\alpha_{ii}^{(t)}
			\left(
			\Pi_{ii}^{(t-1)}
			-
			\Pi_{ii}^{(t^*-1)}
			\right)
			\right\}^2
			\right]^{1/2}        \\
			&=
			\sqrt{\rho_Y}
			\left[
			\frac{1}{p^2}
			\sum_{i\neq j}
			\left\{
			\frac{\alpha_{ij}^{(t)}}{\sqrt{\rho_Y}}
			\left(
			\Pi_{ij}^{(t-1)}
			-
			\Pi_{ij}^{(t^*-1)}
			\right)
			\right\}^2
			\right]^{1/2}
			+
			\frac{1}{p}
			\left[
			\sum_i
			\left\{
			\alpha_{ii}^{(t)}
			\left(
			\Pi_{ii}^{(t-1)}
			-
			\Pi_{ii}^{(t^*-1)}
			\right)
			\right\}^2
			\right]^{1/2}        \\
			&\leq
			\sqrt{\rho_Y}\,
			M_{Y,G}^{(t^*-1,t)}
			+
			\frac{\rho_Y}{\sqrt p} .
		\end{aligned}
		\]
		Combining the two estimates gives
		\[
		\begin{aligned}
			\frac{1}{p}
			\left\|
			\mathbf S_Y^{(t)}-\mathbf S_Y^{(t^*-1)}
			\right\|_F
			&\geq
			\rho_Y\mu_Y\Delta_Y^{(t^*-1,t)}
			- 
			\sqrt{\rho_Y}\,
			M_{Y,G}^{(t^*-1,t)}
			-
			\frac{\rho_Y}{\sqrt p} .
		\end{aligned}
		\]
		This establishes the lower bound for \(\mathbf S_Y^{(t)}\).
		
		For the upper bound, the triangle inequality gives
		\[
		\begin{aligned}
			\frac{1}{p}
			\left\|
			\mathbf S_Y^{(t)}-\mathbf S_Y^{(t^*-1)}
			\right\|_F
			&\leq
			A_{Y,G}^{(t^*-1,t)}
			+
			B_{Y,G}^{(t^*-1,t)} .
		\end{aligned}
		\]
		It remains to upper bound \(A_{Y,G}^{(t^*-1,t)}\).  Write
		$\Delta\mathbf L_Y^{(t^*-1,t)}
		:=
		\mathbf L_Y^{(t)}
		-
		\mathbf L_Y^{(t^*-1)}.$
		Then
		\[
		\begin{aligned}
			\mathbf L_Y^{(t)}\mathbf L_Y^{(t)\top}
			-
			\mathbf L_Y^{(t^*-1)}\mathbf L_Y^{(t^*-1)\top}
			&=
			\Delta\mathbf L_Y^{(t^*-1,t)}
			\mathbf L_Y^{(t)\top}
			+
			\mathbf L_Y^{(t^*-1)}
			\Delta\mathbf L_Y^{(t^*-1,t)\top}.
		\end{aligned}
		\]
		Since Hadamard multiplication by
		\(\mathbf 1\mathbf 1^\top-\mathbf\Pi^{(t^*-1)}\) cannot increase the Frobenius norm,
		\[
		\begin{aligned}
			A_{Y,G}^{(t^*-1,t)}
			&\leq
			\frac{\rho_Y}{p}
			\left\|
			\mathbf L_Y^{(t)}\mathbf L_Y^{(t)\top}
			-
			\mathbf L_Y^{(t^*-1)}\mathbf L_Y^{(t^*-1)\top}
			\right\|_F        \\
			&\leq
			\frac{\rho_Y}{p}
			\left\{
			\left\|
			\Delta\mathbf L_Y^{(t^*-1,t)}
			\mathbf L_Y^{(t)\top}
			\right\|_F
			+
			\left\|
			\mathbf L_Y^{(t^*-1)}
			\Delta\mathbf L_Y^{(t^*-1,t)\top}
			\right\|_F
			\right\}        \\
			&\leq
			\frac{\rho_Y}{p}
			\left\{
			\left\|
			\Delta\mathbf L_Y^{(t^*-1,t)}
			\right\|_F
			\left\|
			\mathbf L_Y^{(t)}
			\right\|_F
			+
			\left\|
			\mathbf L_Y^{(t^*-1)}
			\right\|_F
			\left\|
			\Delta\mathbf L_Y^{(t^*-1,t)}
			\right\|_F
			\right\}.
		\end{aligned}
		\]
		The uniform latent-position bound gives
		$\left\|\mathbf L_Y^{(t)}\right\|_F
		\leq
		\sqrt p\,\sqrt d\,\ell,
		\left\|\mathbf L_Y^{(t^*-1)}\right\|_F
		\leq
		\sqrt p\,\sqrt d\,\ell,$
		and, by definition, 
		$ \left\|
		\Delta\mathbf L_Y^{(t^*-1,t)}
		\right\|_F
		=
		\sqrt p\,\Delta_Y^{(t^*-1,t)}.$
		Consequently,
		\[
		A_{Y,G}^{(t^*-1,t)}
		\leq
		2\rho_Y\sqrt d\,\ell\,\Delta_Y^{(t^*-1,t)}.
		\]
		Together with
		\[
		B_{Y,G}^{(t^*-1,t)}
		\leq
		\sqrt{\rho_Y}\,
		M_{Y,G}^{(t^*-1,t)}
		+
		\frac{\rho_Y}{\sqrt p},
		\]
		we obtain
		\[
		\begin{aligned}
			\frac{1}{p}
			\left\|
			\mathbf S_Y^{(t)}-\mathbf S_Y^{(t^*-1)}
			\right\|_F
			&\leq
			2\rho_Y\sqrt d\,\ell\,\Delta_Y^{(t^*-1,t)}
			+
			\sqrt{\rho_Y}\,
			M_{Y,G}^{(t^*-1,t)}
			+
			\frac{\rho_Y}{\sqrt p} .
		\end{aligned}
		\]
		
		The bounds for \(\mathbf S_Z^{(t)}\) are proved analogously. Indeed,
		\[
		\begin{aligned}
			\mathbf S_Z^{(t)}-\mathbf S_Z^{(t^*-1)}
			&=
			\left\{
			\mathbf P_Z^{(t)}
			-
			\mathbf P_Z^{(t^*-1)}
			\right\}
			\circ
			\mathbf\Pi^{(t^*-1)}
			+
			\mathbf P_Z^{(t)}
			\circ
			\left\{
			\mathbf\Pi^{(t-1)}
			-
			\mathbf\Pi^{(t^*-1)}
			\right\}.
		\end{aligned}
		\]
		Condition~\ref{ass:graph-visibility} and the same Frobenius-norm argument give
		\[
		\begin{aligned}
			\frac{1}{p}
			\left\|
			\mathbf S_Z^{(t)}-\mathbf S_Z^{(t^*-1)}
			\right\|_F
			&\geq
			\rho_Z\mu_Z\Delta_Z^{(t^*-1,t)}
			-
			\sqrt{\rho_Z}\,
			M_{Z,G}^{(t^*-1,t)}
			-
			\frac{\rho_Z}{\sqrt p},
		\end{aligned}
		\]
		and
		\[
		\begin{aligned}
			\frac{1}{p}
			\left\|
			\mathbf S_Z^{(t)}-\mathbf S_Z^{(t^*-1)}
			\right\|_F
			&\leq
			2\rho_Z\sqrt d\,\ell\,\Delta_Z^{(t^*-1,t)}
			+
			\sqrt{\rho_Z}\,
			M_{Z,G}^{(t^*-1,t)}
			+
			\frac{\rho_Z}{\sqrt p} .
		\end{aligned}
		\]
		The two channelwise bounds prove the proposition.
	\end{proof}

	\begin{proposition}
		\label{prop:graph-S-to-R}
		It holds that
		\[
		\frac{1}{\sqrt p}
		\left\|
		\mathbf R_{\mathbf S_Y}^{(t)}
		-
		\mathbf R_{\mathbf S_Y}^{(t^*-1)}
		\right\|_F
		\geq
		\frac{1}{\sqrt{C_{Y,S}\rho_Y}}
		\frac{1}{p}
		\left\|
		\mathbf S_Y^{(t)}
		-
		\mathbf S_Y^{(t^*-1)}
		\right\|_F .
		\]
		Moreover,
		\[
		\frac{1}{\sqrt p}
		\left\|
		\mathbf R_{\mathbf S_Y}^{(t)}
		-
		\mathbf R_{\mathbf S_Y}^{(t^*-1)}
		\right\|_F
		\leq
		\frac{1}{\sqrt{c_{Y,S}\rho_Y}}
		\frac{1}{p}
		\left\|
		\mathbf S_Y^{(t)}
		-
		\mathbf S_Y^{(t^*-1)}
		\right\|_F .
		\]
		The same statement holds for \(\mathbf S_Z\), with
		\(\rho_Y,c_{Y,S},C_{Y,S}\) replaced by
		\(\rho_Z,c_{Z,S},C_{Z,S}\).
	\end{proposition}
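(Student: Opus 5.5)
The plan is to mirror the vertex-level argument in Proposition~\ref{prop:pointwise-S-to-R}, replacing a single column by the whole block difference and the Euclidean norm by the Frobenius norm.

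First I will use the population singular value decomposition $\mathbf S_Y=\mathbf U_{S_Y}\boldsymbol\Sigma_{S_Y}\mathbf V_{S_Y}^\top$ together with $\mathbf R_{S_Y}=\mathbf V_{S_Y}\boldsymbol\Sigma_{S_Y}^{1/2}$. Reading off the $t$th block gives $\mathbf S_Y^{(t)}=\mathbf U_{S_Y}\boldsymbol\Sigma_{S_Y}^{1/2}(\mathbf R_{S_Y}^{(t)})^\top$. Subtracting the $(t^*-1)$ block then yields the exact identity
\[
\mathbf S_Y^{(t)}-\mathbf S_Y^{(t^*-1)}
=\mathbf U_{S_Y}\boldsymbol\Sigma_{S_Y}^{1/2}
\bigl(\mathbf R_{S_Y}^{(t)}-\mathbf R_{S_Y}^{(t^*-1)}\bigr)^\top .
\]
Because $\mathbf U_{S_Y}$ has orthonormal columns, left multiplication by it preserves the Frobenius norm. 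Hence $\|\mathbf S_Y^{(t)}-\mathbf S_Y^{(t^*-1)}\|_F=\|\boldsymbol\Sigma_{S_Y}^{1/2}\mathbf D^\top\|_F$, where $\mathbf D=\mathbf R_{S_Y}^{(t)}-\mathbf R_{S_Y}^{(t^*-1)}$.

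Next I will sandwich $\|\boldsymbol\Sigma_{S_Y}^{1/2}\mathbf D^\top\|_F$ between $\sigma_{K_Y}(\mathbf S_Y)^{1/2}\|\mathbf D\|_F$ and $\sigma_1(\mathbf S_Y)^{1/2}\|\mathbf D\|_F$. This works row by row, since $\boldsymbol\Sigma_{S_Y}^{1/2}$ is diagonal with entries in $[\sigma_{K_Y}^{1/2},\sigma_1^{1/2}]$. Condition~\ref{ass:3} then gives
\[
\sqrt{c_{Y,S}p\rho_Y}\,\|\mathbf D\|_F\le\|\mathbf S_Y^{(t)}-\mathbf S_Y^{(t^*-1)}\|_F\le\sqrt{C_{Y,S}p\rho_Y}\,\|\mathbf D\|_F .
\]
Dividing by $\sqrt{p\rho_Y}$ and rescaling by $1/\sqrt p$ on the embedding side and $1/p$ on the matrix side produces both displayed inequalities. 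Note that the upper constant $C_{Y,S}$ yields the lower bound on $\|\mathbf D\|_F$, and the lower constant $c_{Y,S}$ yields the upper bound.

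The only point needing care is that $\mathbf R_{S_Y}$ uses exactly the $K_Y=\operatorname{rank}(\mathbf S_Y)$ leading singular vectors. This makes the block identity exact, with no discarded tail, and it ensures that all diagonal entries of $\boldsymbol\Sigma_{S_Y}$ are positive and controlled by Condition~\ref{ass:3}. There is no real obstacle here. The $Z$ channel follows verbatim with $\rho_Z,c_{Z,S},C_{Z,S}$ in place of $\rho_Y,c_{Y,S},C_{Y,S}$.
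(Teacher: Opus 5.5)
Your proposal is correct and follows essentially the same route as the paper. Both use the blockwise population SVD identity $\mathbf S_Y^{(t)}=\mathbf U_{S_Y}\boldsymbol\Sigma_{S_Y}^{1/2}(\mathbf R_{S_Y}^{(t)})^\top$, Frobenius-norm invariance under the orthonormal columns of $\mathbf U_{S_Y}$, and the $\sigma_{K_Y}$--$\sigma_1$ sandwich from Condition~\ref{ass:3}.
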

	
	\begin{proof}
		We prove the statement for \(Y\). The proof for \(Z\) is identical.
		
		By the population SVD, for each block \(t\),
		\[
		\mathbf S_Y^{(t)}
		=
		\mathbf U_{S_Y}
		\mathbf \Sigma_{S_Y}^{1/2}
		\left(
		\mathbf R_{\mathbf S_Y}^{(t)}
		\right)^\top .
		\]
		Hence
		\[
		\mathbf S_Y^{(t)}
		-
		\mathbf S_Y^{(t^*-1)}
		=
		\mathbf U_{S_Y}
		\mathbf \Sigma_{S_Y}^{1/2}
		\left(
		\mathbf R_{\mathbf S_Y}^{(t)}
		-
		\mathbf R_{\mathbf S_Y}^{(t^*-1)}
		\right)^\top .
		\]
		Since \(\mathbf U_{S_Y}\) has orthonormal columns,
		\[
		\begin{aligned}
			\left\|
			\mathbf S_Y^{(t)}
			-
			\mathbf S_Y^{(t^*-1)}
			\right\|_F
			&=
			\left\|
			\mathbf \Sigma_{S_Y}^{1/2}
			\left(
			\mathbf R_{\mathbf S_Y}^{(t)}
			-
			\mathbf R_{\mathbf S_Y}^{(t^*-1)}
			\right)^\top
			\right\|_F .
		\end{aligned}
		\]
		Therefore,
		\[
		\sqrt{\sigma_{K_Y}(\mathbf S_Y)}
		\left\|
		\mathbf R_{\mathbf S_Y}^{(t)}
		-
		\mathbf R_{\mathbf S_Y}^{(t^*-1)}
		\right\|_F
		\leq
		\left\|
		\mathbf S_Y^{(t)}
		-
		\mathbf S_Y^{(t^*-1)}
		\right\|_F
		\]
		and
		\[
		\left\|
		\mathbf S_Y^{(t)}
		-
		\mathbf S_Y^{(t^*-1)}
		\right\|_F
		\leq
		\sqrt{\sigma_1(\mathbf S_Y)}
		\left\|
		\mathbf R_{\mathbf S_Y}^{(t)}
		-
		\mathbf R_{\mathbf S_Y}^{(t^*-1)}
		\right\|_F .
		\]
		By Condition~\ref{ass:3}, we have
		\[
		\left\|
		\mathbf R_{\mathbf S_Y}^{(t)}
		-
		\mathbf R_{\mathbf S_Y}^{(t^*-1)}
		\right\|_F
		\geq
		\frac{1}{\sqrt{C_{Y,S}p\rho_Y}}
		\left\|
		\mathbf S_Y^{(t)}
		-
		\mathbf S_Y^{(t^*-1)}
		\right\|_F ,
		\]
		and
		\[
		\left\|
		\mathbf R_{\mathbf S_Y}^{(t)}
		-
		\mathbf R_{\mathbf S_Y}^{(t^*-1)}
		\right\|_F
		\leq
		\frac{1}{\sqrt{c_{Y,S}p\rho_Y}}
		\left\|
		\mathbf S_Y^{(t)}
		-
		\mathbf S_Y^{(t^*-1)}
		\right\|_F .
		\]
		Dividing both inequalities by \(\sqrt p\) gives
		\[
		\frac{1}{\sqrt p}
		\left\|
		\mathbf R_{\mathbf S_Y}^{(t)}
		-
		\mathbf R_{\mathbf S_Y}^{(t^*-1)}
		\right\|_F
		\geq
		\frac{1}{\sqrt{C_{Y,S}\rho_Y}}
		\frac{1}{p}
		\left\|
		\mathbf S_Y^{(t)}
		-
		\mathbf S_Y^{(t^*-1)}
		\right\|_F ,
		\]
		and
		\[
		\frac{1}{\sqrt p}
		\left\|
		\mathbf R_{\mathbf S_Y}^{(t)}
		-
		\mathbf R_{\mathbf S_Y}^{(t^*-1)}
		\right\|_F
		\leq
		\frac{1}{\sqrt{c_{Y,S}\rho_Y}}
		\frac{1}{p}
		\left\|
		\mathbf S_Y^{(t)}
		-
		\mathbf S_Y^{(t^*-1)}
		\right\|_F .
		\]
		The proposition follows.
	\end{proof}
	
		\begin{proof}[Proof of Theorem~\ref{thm:2g}]
		By Theorem~\ref{thm1},
		\(\mathbb P\{\mathcal E_{\rm UASE}(c_0)^c\}\le C_{c_0}p^{-c_0}\), and on
		\(\mathcal E_{\rm UASE}(c_0)\) the rowwise embedding bounds hold in both
		channels for every block used below.
		
		(1) Under \(H_0^{(t^*-1,t)}\), the aggregate visible transition-probability
		displacement vanishes; the remaining population displacement is bounded by
		the network-level memory terms.
		By Proposition~\ref{prop:g0} and \ref{prop:graph-S-to-R},
		\[
		\begin{aligned}
			\frac{1}{\sqrt p}
			\left\|
			\mathbf R_{\mathbf S_Y}^{(t)}
			-
			\mathbf R_{\mathbf S_Y}^{(t^*-1)}
			\right\|_F
			&\le
			\frac{1}{\sqrt{c_{Y,S}}}
			\left(
			M_{Y,G}^{(t^*-1,t)}
			+
			\sqrt{\frac{\rho_Y}{p}}
			\right),\\
			\frac{1}{\sqrt p}
			\left\|
			\mathbf R_{\mathbf S_Z}^{(t)}
			-
			\mathbf R_{\mathbf S_Z}^{(t^*-1)}
			\right\|_F
			&\le
			\frac{1}{\sqrt{c_{Z,S}}}
			\left(
			M_{Z,G}^{(t^*-1,t)}
			+
			\sqrt{\frac{\rho_Z}{p}}
			\right).
		\end{aligned}
		\]
		Hence the network-level null condition implies
		\[
		\max
		\left\{
		\frac{1}{\sqrt p}
		\left\|
		\mathbf R_{\mathbf S_Y}^{(t)}
		-
		\mathbf R_{\mathbf S_Y}^{(t^*-1)}
		\right\|_F,
		\frac{1}{\sqrt p}
		\left\|
		\mathbf R_{\mathbf S_Z}^{(t)}
		-
		\mathbf R_{\mathbf S_Z}^{(t^*-1)}
		\right\|_F
		\right\}
		<
		(\gamma-2)\epsilon_p .
		\]
		On \(\mathcal E_{\rm UASE}(c_0)\), the embedding error contributes at most \(2\epsilon_p\) after
		normalisation by \(p^{-1/2}\). Therefore
		\[
		T_G^{(t^*-1,t)}
		<
		(\gamma-2)\epsilon_p+2\epsilon_p
		=
		\gamma\epsilon_p
		=
		\tau_p.
		\]
		Thus \(\phi^{(t^*-1,t)}=0\) on \(\mathcal E_{\rm UASE}(c_0)\). Hence
		\[
		\mathbb P_{H_0^{(t^*-1,t)}}
		\left(
		\phi^{(t^*-1,t)}=1
		\right)
		\le
		\mathbb P\{\mathcal E_{\rm UASE}(c_0)^c\}
		\le
		C_{c_0}p^{-c_0}.
		\]
		
		(2) Under \(H_1^{(t^*-1,t)}\), Proposition~\ref{prop:g0} and \ref{prop:graph-S-to-R} give
		\[
		\begin{aligned}
			\frac{1}{\sqrt p}
			\left\|
			\mathbf R_{\mathbf S_Y}^{(t)}
			-
			\mathbf R_{\mathbf S_Y}^{(t^*-1)}
			\right\|_F
			&\ge
			\frac{1}{\sqrt{C_{Y,S}}}
			\left[
			\sqrt{\rho_Y}\mu_Y\Delta_Y^{(t^*-1,t)}
			-
			M_{Y,G}^{(t^*-1,t)}
			-
			\sqrt{\frac{\rho_Y}{p}}
			\right],\\
			\frac{1}{\sqrt p}
			\left\|
			\mathbf R_{\mathbf S_Z}^{(t)}
			-
			\mathbf R_{\mathbf S_Z}^{(t^*-1)}
			\right\|_F
			&\ge
			\frac{1}{\sqrt{C_{Z,S}}}
			\left[
			\sqrt{\rho_Z}\mu_Z\Delta_Z^{(t^*-1,t)}
			-
			M_{Z,G}^{(t^*-1,t)}
			-
			\sqrt{\frac{\rho_Z}{p}}
			\right].
		\end{aligned}
		\]
		Hence the separation condition implies
		\[
		\max
		\left\{
		\frac{1}{\sqrt p}
		\left\|
		\mathbf R_{\mathbf S_Y}^{(t)}
		-
		\mathbf R_{\mathbf S_Y}^{(t^*-1)}
		\right\|_F,
		\frac{1}{\sqrt p}
		\left\|
		\mathbf R_{\mathbf S_Z}^{(t)}
		-
		\mathbf R_{\mathbf S_Z}^{(t^*-1)}
		\right\|_F
		\right\}
		>
		(\gamma+2)\epsilon_p .
		\]
		On \(\mathcal E_{\rm UASE}(c_0)\), the embedding error contributes at most \(2\epsilon_p\), so
		\[
		T_G^{(t^*-1,t)}
		>
		(\gamma+2)\epsilon_p-2\epsilon_p
		=
		\gamma\epsilon_p
		=
		\tau_p.
		\]
		Thus \(\phi^{(t^*-1,t)}=1\) on \(\mathcal E_{\rm UASE}(c_0)\). Therefore
		\[
		\mathbb P_{H_1^{(t^*-1,t)}}
		\left(
		\phi^{(t^*-1,t)}=1
		\right)
		\ge
		\mathbb P\{\mathcal E_{\rm UASE}(c_0)\}
		\ge
		1-C_{c_0}p^{-c_0}.
		\]
        Applying the same argument separately to
		\(T_{Y,G}^{(t^*-1,t)}\) and \(T_{Z,G}^{(t^*-1,t)}\) on the common UASE
		event gives the channelwise conclusions simultaneously, proving the
		stated formation--dissolution attribution result.
	\end{proof}
		\subsection*{S.3 Supporting results and proofs for Section 4}
	
	Appendix~S.3.1 proves Proposition~\ref{prop:eta}.
Appendix~S.3.2 proves
Theorem~\ref{thm:dcsbm_community_exact_recovery} and
Corollary~\ref{cor:pooled_community_exact_recovery}.
Appendix S.3.3 proves the community-reassignment recovery result, Appendix S.3.4 proves Theorem 6 for the three community-level tests, and Appendix S.3.5 gives an additional corollary linking \(\widetilde\eta_{S,t}\) to autoregressive memory.
	
	The singular-value bounds used below are those in
Condition~\ref{ass:3}, while the population leverage bounds follow
from Proposition~\ref{prop:uv2}. As in Appendix~S.1, all
\(O_{\mathbb P}\) statements are interpreted in the polynomial
high-probability sense. The community-recovery and testing arguments
below are deterministic on the corresponding embedding events.
	
	\subsubsection*{S.3.1 Proof of Proposition \ref{prop:eta}}

\begin{lemma}\label{lem:eta-channel-reduction}
		In the AR(1)-DCSBM model, suppose Condition~\ref{ass:3} holds and
		\(m_{\min,S,t}>0\). Then
		\begin{equation}\label{eq:eta_memory_channel_reduction}
			\widetilde\eta_{S,t}
			\le
			\frac{C}{m_{\min,S,t}}
			\max_{a\in[q_t]}
			\max_{i,j\in C_a^{(t)}}
			\left[
			\frac{
				\left\|
				\mathbf S_Y^{(t)}
				\left(
				\frac{\mathbf e_i}{\psi_{Y,i}^{(t)}}
				-
				\frac{\mathbf e_j}{\psi_{Y,j}^{(t)}}
				\right)
				\right\|_2^2
			}{p\rho_Yc_{Y,S}}
			+
			\frac{
				\left\|
				\mathbf S_Z^{(t)}
				\left(
				\frac{\mathbf e_i}{\psi_{Z,i}^{(t)}}
				-
				\frac{\mathbf e_j}{\psi_{Z,j}^{(t)}}
				\right)
				\right\|_2^2
			}{p\rho_Zc_{Z,S}}
			\right]^{1/2}.
		\end{equation}
	\end{lemma}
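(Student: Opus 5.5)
The plan is to reduce the radius to pairwise distances and then work channel by channel. First, since the centre $\widetilde{\boldsymbol c}_{S,a}^{(t)}$ is an average of rows in $C_a^{(t)}$, convexity gives
\[
\|\widetilde{\mathbf J}_{S,i}^{(t)}-\widetilde{\boldsymbol c}_{S,a}^{(t)}\|_2\le\max_{j\in C_a^{(t)}}\|\widetilde{\mathbf J}_{S,i}^{(t)}-\widetilde{\mathbf J}_{S,j}^{(t)}\|_2 .
\]
So $\widetilde\eta_{S,t}$ is bounded by the maximal within-community pairwise distance. By the definition of the joint embedding,
\[
\|\widetilde{\mathbf J}_{S,i}^{(t)}-\widetilde{\mathbf J}_{S,j}^{(t)}\|_2^2=\tfrac12\sum_{h\in\mathcal H}\|\widetilde{\mathbf R}_{S,h,i}^{(t)}-\widetilde{\mathbf R}_{S,h,j}^{(t)}\|_2^2 .
\]
The alignment $\mathbf W_{hS}$ is orthogonal and can be dropped. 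It therefore suffices to bound each channel's normalised-row distance by $C/m_{\min,S,t}$ times the corresponding term in \eqref{eq:eta_memory_channel_reduction}.

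Fix $h$ and write $\mathbf r_i=\mathbf e_i^\top\mathbf R_{S_h}^{(t)}$. The key step uses the fact that normalisation is scale invariant: for any positive scalars $\psi_i,\psi_j$, the point $\mathbf r_i/\|\mathbf r_i\|_2$ equals $(\mathbf r_i/\psi_i)/\|\mathbf r_i/\psi_i\|_2$. We can then apply the elementary inequality
\[
\left\|\frac{\mathbf x}{\|\mathbf x\|_2}-\frac{\mathbf y}{\|\mathbf y\|_2}\right\|_2\le\frac{2\|\mathbf x-\mathbf y\|_2}{\max\{\|\mathbf x\|_2,\|\mathbf y\|_2\}},
\]
with $\mathbf x=\mathbf r_i/\psi_{h,i}^{(t)}$ and $\mathbf y=\mathbf r_j/\psi_{h,j}^{(t)}$. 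This yields
\[
\|\widetilde{\mathbf R}_{S,h,i}^{(t)}-\widetilde{\mathbf R}_{S,h,j}^{(t)}\|_2\le\frac{2\,\psi_{\max}\,\|\mathbf r_i/\psi_{h,i}^{(t)}-\mathbf r_j/\psi_{h,j}^{(t)}\|_2}{m_{h,S,t}},
\]
where $\psi_{\max}$ is whichever of $\psi_{h,i}^{(t)},\psi_{h,j}^{(t)}$ attains the maximum in the denominator. I expect the main obstacle to be controlling this factor by a constant independent of $p$. The degree parameters are bounded above because $\|\boldsymbol\ell_{h,i}^{(t)}\|_\infty\le\ell$ and the directions have unit norm, which gives $\psi_{h,i}^{(t)}\le\sqrt d\,\ell$. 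If that does not suffice, I would instead use $\max\{\|\mathbf x\|_2,\|\mathbf y\|_2\}\ge m_{h,S,t}/\max(\psi_{h,i}^{(t)},\psi_{h,j}^{(t)})$, so the factor is $\sqrt d\,\ell$ and can be absorbed into $C$.

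Next, convert embedding-row differences into column differences of $\mathbf S_h^{(t)}$, as in Proposition~\ref{prop:pointwise-S-to-R}. From $\mathbf S_h^{(t)}=\mathbf U_{S_h}\boldsymbol\Sigma_{S_h}^{1/2}\mathbf R_{S_h}^{(t)\top}$, every vector $\mathbf v\in\mathbb R^p$ satisfies
\[
\|\mathbf R_{S_h}^{(t)\top}\mathbf v\|_2\le\sigma_{K_h}(\mathbf S_h)^{-1/2}\|\mathbf S_h^{(t)}\mathbf v\|_2\le(c_{h,S}p\rho_h)^{-1/2}\|\mathbf S_h^{(t)}\mathbf v\|_2
\]
by Condition~\ref{ass:3}. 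Taking $\mathbf v=\mathbf e_i/\psi_{h,i}^{(t)}-\mathbf e_j/\psi_{h,j}^{(t)}$ gives exactly the term inside \eqref{eq:eta_memory_channel_reduction}.

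Finally, sum the squares over the two channels, use $1/m_{h,S,t}\le 1/m_{\min,S,t}$, take the square root, and maximise over $a$ and $i,j\in C_a^{(t)}$. Combined with the convexity step, this gives \eqref{eq:eta_memory_channel_reduction} with an explicit $C$ depending only on $d$ and $\ell$.
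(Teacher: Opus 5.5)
Your proposal is correct and follows essentially the same route as the paper. You bound the radius by within-community pairwise distances via convexity, rescale each row by $1/\psi_{h,i}^{(t)}$ using scale invariance of normalisation, apply the $2\|\mathbf x-\mathbf y\|_2/(\cdot)$ normalisation inequality, and convert row differences into column differences of $\mathbf S_h^{(t)}$ through the population SVD and $\sigma_{K_h}(\mathbf S_h)\ge c_{h,S}p\rho_h$. The only differences are cosmetic: you use the valid, slightly sharper $\max\{\|\mathbf x\|_2,\|\mathbf y\|_2\}$ version of the inequality with $\psi_{h,i}^{(t)}\le\sqrt d\,\ell$, where the paper uses the $\min$ version with $\psi_{h,i}^{(t)}\le\ell$, and in both cases the resulting factor is a constant absorbed into $C$.
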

	\begin{proof}
		Fix \(a\in[q_t]\), \(i,j\in C_a^{(t)}\), and \(h\in\mathcal H\).
		The population SVD identity and Condition~\ref{ass:3} give
		\[
		\left\|
		\frac{\mathbf e_i^\top\mathbf R_{S_h}^{(t)}}
		{\psi_{h,i}^{(t)}}
		-
		\frac{\mathbf e_j^\top\mathbf R_{S_h}^{(t)}}
		{\psi_{h,j}^{(t)}}
		\right\|_2
		\le
		\frac{
		\left\|
		\mathbf S_h^{(t)}
		\left(
		\frac{\mathbf e_i}{\psi_{h,i}^{(t)}}
		-
		\frac{\mathbf e_j}{\psi_{h,j}^{(t)}}
		\right)
		\right\|_2
		}{\sqrt{p\rho_hc_{h,S}}}.
		\]
		Because \(\psi_{h,r}^{(t)}>0\), division by
		\(\psi_{h,r}^{(t)}\) does not change the normalised row. Moreover,
		\[
		\min_{r\in\{i,j\}}
		\left\|
		\frac{\mathbf e_r^\top\mathbf R_{S_h}^{(t)}}
		{\psi_{h,r}^{(t)}}
		\right\|_2
		\ge
		\frac{m_{h,S,t}}{\ell}.
		\]
		Applying
		\[
		\left\|
		\frac{\boldsymbol x}{\|\boldsymbol x\|_2}
		-
		\frac{\boldsymbol y}{\|\boldsymbol y\|_2}
		\right\|_2
		\le
		\frac{2\|\boldsymbol x-\boldsymbol y\|_2}
		{\min\{\|\boldsymbol x\|_2,\|\boldsymbol y\|_2\}}
		\]
		with
		\(
		\boldsymbol x=\mathbf e_i^\top\mathbf R_{S_h}^{(t)}/\psi_{h,i}^{(t)}
		\)
		and
		\(
		\boldsymbol y=\mathbf e_j^\top\mathbf R_{S_h}^{(t)}/\psi_{h,j}^{(t)}
		\), and using the orthogonality of \(\mathbf W_{hS}\), gives
		\[
		\left\|
		\widetilde{\mathbf R}_{S,h,i}^{(t)}
		-
		\widetilde{\mathbf R}_{S,h,j}^{(t)}
		\right\|_2
		\le
		\frac{C}{m_{h,S,t}}
		\frac{
		\left\|
		\mathbf S_h^{(t)}
		\left(
		\mathbf e_i/\psi_{h,i}^{(t)}
		-
		\mathbf e_j/\psi_{h,j}^{(t)}
		\right)
		\right\|_2
		}{\sqrt{p\rho_hc_{h,S}}}.
		\] Concatenating the two channel rows then yields
		\[
		\left\|
		\widetilde{\mathbf J}_{S,i}^{(t)}
		-
		\widetilde{\mathbf J}_{S,j}^{(t)}
		\right\|_2
		\le
		\frac{C}{m_{\min,S,t}}
		\left[
		\frac{
		\left\|\mathbf S_Y^{(t)}
		(\mathbf e_i/\psi_{Y,i}^{(t)}-
		\mathbf e_j/\psi_{Y,j}^{(t)})\right\|_2^2
		}{p\rho_Yc_{Y,S}}
		+
		\frac{
		\left\|\mathbf S_Z^{(t)}
		(\mathbf e_i/\psi_{Z,i}^{(t)}-
		\mathbf e_j/\psi_{Z,j}^{(t)})\right\|_2^2
		}{p\rho_Zc_{Z,S}}
		\right]^{1/2}.
		\]
		Finally,
		\[
		\left\|
		\widetilde{\mathbf J}_{S,i}^{(t)}-
		\widetilde{\boldsymbol c}_{S,a}^{(t)}
		\right\|_2
		\le
		\max_{j\in C_a^{(t)}}
		\left\|
		\widetilde{\mathbf J}_{S,i}^{(t)}-
		\widetilde{\mathbf J}_{S,j}^{(t)}
		\right\|_2.
		\]
		Taking the maximum over \(i\) and \(a\) proves the lemma.
	\end{proof}
    
	\begin{proof}
		By Lemma~\ref{lem:eta-channel-reduction}, it remains to
		bound the two channelwise numerators. Fix \(a\in[q_t]\) and
		\(i,j\in C_a^{(t)}\).
		For \(k\notin\{i,j\}\), the vector inside \(\mathbf S_Y^{(t)}\)
		has nonzero entries only at coordinates \(i\) and \(j\). Moreover,
		\[
		S_{Y,kr}^{(t)}
		=
		\alpha_{kr}^{(t)}\bigl(1-\Pi_{kr}^{(t-1)}\bigr),
		\qquad
		\alpha_{kr}^{(t)}
		=
		\rho_Y\psi_{Y,k}^{(t)}\psi_{Y,r}^{(t)}
		\left(\boldsymbol\theta_{Y,\nu^{(t)}(k)}^{(t)}\right)^\top
		\boldsymbol\theta_{Y,\nu^{(t)}(r)}^{(t)}.
		\]
		Since \(i,j\in C_a^{(t)}\), we have
		\(\nu^{(t)}(i)=\nu^{(t)}(j)=a\). Therefore,
		\[
		\begin{aligned}
		\frac{1}{\sqrt{\rho_Y}}
		\left[
		\mathbf S_Y^{(t)}
		\left(
		\frac{\mathbf e_i}{\psi_{Y,i}^{(t)}}
		-
		\frac{\mathbf e_j}{\psi_{Y,j}^{(t)}}
		\right)
		\right]_k
		&=
		\frac{1}{\sqrt{\rho_Y}}
		\left\{
		\frac{S_{Y,ki}^{(t)}}{\psi_{Y,i}^{(t)}}
		-
		\frac{S_{Y,kj}^{(t)}}{\psi_{Y,j}^{(t)}}
		\right\}\\
		&=
		\sqrt{\rho_Y}\,\psi_{Y,k}^{(t)}
		\left(\boldsymbol\theta_{Y,\nu^{(t)}(k)}^{(t)}\right)^\top
		\boldsymbol\theta_{Y,a}^{(t)}
		\left\{
		\bigl(1-\Pi_{ki}^{(t-1)}\bigr)
		-
		\bigl(1-\Pi_{kj}^{(t-1)}\bigr)
		\right\}\\
		&=
		\sqrt{\rho_Y}\,\psi_{Y,k}^{(t)}
		\left(\boldsymbol\theta_{Y,\nu^{(t)}(k)}^{(t)}\right)^\top
		\boldsymbol\theta_{Y,a}^{(t)}
		\left(
		\Pi_{kj}^{(t-1)}-\Pi_{ki}^{(t-1)}
		\right).
		\end{aligned}
		\]
		For \(k\in\{i,j\}\), the completed diagonal satisfies
		\(0\le S_{Y,kk}^{(t)}\le
		\rho_Y(\psi_{Y,k}^{(t)})^2\), while
		\(0\le S_{Y,ij}^{(t)}\le
		\rho_Y\psi_{Y,i}^{(t)}\psi_{Y,j}^{(t)}\). Thus each exceptional
		coordinate is \(O(\rho_Y)\), and their combined contribution after
		division by \(\sqrt{p\rho_Y}\) is
		\(O(\sqrt{\rho_Y/p})=O(p^{-1/2})\), since \(\rho_Y\le1\).
		For \(k\notin\{i,j\}\), the preceding identity, together with
		\(\rho_Y\le1\), \(\psi_{Y,k}^{(t)}\le\ell\), and
		\(\|\boldsymbol\theta_{Y,b}^{(t)}\|_2=1\), bounds the remaining
		coordinates by \(C\omega_{\Pi,S,t-1}^{(t)}\). Combining the two
		bounds gives
		\[
		\frac{
			\left\|
			\mathbf S_Y^{(t)}
			\left(
			\frac{\mathbf e_i}{\psi_{Y,i}^{(t)}}
			-
			\frac{\mathbf e_j}{\psi_{Y,j}^{(t)}}
			\right)
			\right\|_2
		}{
			\sqrt{p\rho_Y}
		}
		\le
		C\left(
		\omega_{\Pi,S,t-1}^{(t)}+p^{-1/2}
		\right).
		\]
		The same coordinate calculation for the \(Z\) channel
		gives the analogous bound with \(\rho_Z\). Substitution into
		Lemma~\ref{lem:eta-channel-reduction} proves the result.
	\end{proof}
	\subsubsection*{S.3.2 Proofs of Theorem~\ref{thm:dcsbm_community_exact_recovery}
and Corollary~\ref{cor:pooled_community_exact_recovery}}
	\begin{lemma}\label{lem:normalized_joint_rowwise}
		Under Conditions~\ref{ass:transition-sparsity}, \ref{ass:initial-finite-rank},
		and~\ref{ass:3} in the AR(1)-DCSBM model, fix \(t\in[n]\). Recall that
		\[
		m_{Y,S,t}
		=
		\min_{i\in[p]}
		\|\mathbf e_i^\top\mathbf R_{S_Y}^{(t)}\|_2,
		\qquad
		m_{Z,S,t}
		=
		\min_{i\in[p]}
		\|\mathbf e_i^\top\mathbf R_{S_Z}^{(t)}\|_2,
		\]
		and define
		\[
		E_t
		=
		\left\{
		\|\mathbf R_Y^{(t)}
		-\mathbf R_{S_Y}^{(t)}\mathbf W_{YS}\|_{2\to\infty}
		\le\epsilon_p
		\right\}
		\cap
		\left\{
		\|\mathbf R_Z^{(t)}
		-\mathbf R_{S_Z}^{(t)}\mathbf W_{ZS}\|_{2\to\infty}
		\le\epsilon_p
		\right\}.
		\]
		Then, for every fixed \(c_0>0\) and all sufficiently large \(p\),
		\(
		\mathbb P(E_t^c)\le C_{c_0}p^{-c_0}.
		\)
		Assume \(m_{Y,S,t}>\epsilon_p\) and
		\(m_{Z,S,t}>\epsilon_p\). On \(E_t\), the sample normalised rows
		are well-defined and
		\[
		\|\widetilde{\mathbf R}_Y^{(t)}
		-\widetilde{\mathbf R}_{S_Y}^{(t)}\|_{2\to\infty}
		\le
		\frac{2\epsilon_p}{m_{Y,S,t}},
		\qquad
		\|\widetilde{\mathbf R}_Z^{(t)}
		-\widetilde{\mathbf R}_{S_Z}^{(t)}\|_{2\to\infty}
		\le
		\frac{2\epsilon_p}{m_{Z,S,t}}.
		\]
		Consequently,
		\begin{equation*}
			\|\widetilde{\mathbf J}^{(t)}
			-\widetilde{\mathbf J}_S^{(t)}\|_{2\to\infty}
			\le
			\sqrt{2}\,\epsilon_p
			\left(
			m_{Y,S,t}^{-2}+m_{Z,S,t}^{-2}
			\right)^{1/2}
			=
			\widetilde\epsilon_t,
		\end{equation*}
        which is $\tilde\epsilon_t$ of Theorem 5.
	\end{lemma}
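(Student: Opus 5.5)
The probability statement will come directly from Theorem~\ref{thm1}. For the fixed block \(t\), the events \(\{\|\mathbf R_h^{(t)}-\mathbf R_{S_h}^{(t)}\mathbf W_{hS}\|_{2\to\infty}\le\epsilon_p\}\), \(h\in\mathcal H\), contain \(\mathcal E_{\rm UASE}(c_0)\), because \(\epsilon_p\) is defined as \(\max_h C_{h,c_0}(\log p/p)^{1/2}\). Hence \(\mathbb P(E_t^c)\le C_{c_0}p^{-c_0}\). Everything after this is deterministic on \(E_t\).

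I will fix \(i\) and \(h\) and write \(\boldsymbol x=\mathbf e_i^\top\mathbf R_h^{(t)}\) and \(\boldsymbol y=\mathbf e_i^\top\mathbf R_{S_h}^{(t)}\mathbf W_{hS}\). Since \(\mathbf W_{hS}\) is orthogonal, \(\|\boldsymbol y\|_2=\|\mathbf e_i^\top\mathbf R_{S_h}^{(t)}\|_2\ge m_{h,S,t}\). By definition, \(\widetilde{\mathbf R}_{S,h,i}^{(t)}=\boldsymbol y/\|\boldsymbol y\|_2\) and \(\widetilde{\mathbf R}_{h,i}^{(t)}=\boldsymbol x/\|\boldsymbol x\|_2\). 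On \(E_t\) we have \(\|\boldsymbol x-\boldsymbol y\|_2\le\epsilon_p\), so
\[
\|\boldsymbol x\|_2\ge\|\boldsymbol y\|_2-\epsilon_p\ge m_{h,S,t}-\epsilon_p>0 .
\]
The sample normalised row is therefore well defined.

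For the distance bound I will use the sharper form of the normalisation inequality, measured against the population norm:
\[
\left\|\frac{\boldsymbol x}{\|\boldsymbol x\|_2}-\frac{\boldsymbol y}{\|\boldsymbol y\|_2}\right\|_2
\le\left\|\frac{\boldsymbol x}{\|\boldsymbol x\|_2}-\frac{\boldsymbol x}{\|\boldsymbol y\|_2}\right\|_2+\frac{\|\boldsymbol x-\boldsymbol y\|_2}{\|\boldsymbol y\|_2}
=\frac{\bigl|\|\boldsymbol y\|_2-\|\boldsymbol x\|_2\bigr|+\|\boldsymbol x-\boldsymbol y\|_2}{\|\boldsymbol y\|_2}\le\frac{2\|\boldsymbol x-\boldsymbol y\|_2}{\|\boldsymbol y\|_2}.
\]
This gives \(2\epsilon_p/m_{h,S,t}\), which is the asserted constant. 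The point to be careful about is that the denominator must be \(\|\boldsymbol y\|_2\), not \(\min\{\|\boldsymbol x\|_2,\|\boldsymbol y\|_2\}\). Using the minimum would leave an \(m_{h,S,t}-\epsilon_p\) in the denominator and would not give the constant \(2\). Taking the maximum over \(i\) yields the two channelwise \(2\to\infty\) bounds.

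The joint row difference is \(2^{-1/2}\) times the concatenation of the two channel differences. Its squared norm is therefore at most
\[
\tfrac12\{4\epsilon_p^2m_{Y,S,t}^{-2}+4\epsilon_p^2m_{Z,S,t}^{-2}\}=2\epsilon_p^2(m_{Y,S,t}^{-2}+m_{Z,S,t}^{-2}).
\]
Taking the square root and the maximum over rows gives \(\widetilde\epsilon_t\).

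I do not expect a real obstacle. The only step that needs checking is the normalisation inequality with the population-norm denominator. It also helps that \(\widetilde{\mathbf R}_{S,h,i}^{(t)}\) already includes \(\mathbf W_{hS}\) in the paper's definition, so the alignment is built in and no separate alignment step is needed.
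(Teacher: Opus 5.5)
Your proposal is correct and follows the paper's proof essentially step for step. Both take the probability bound from Theorem~\ref{thm1}, get well-definedness from $\|\boldsymbol x\|_2\ge m_{h,S,t}-\epsilon_p>0$, apply the normalisation inequality with denominator $\|\boldsymbol y\|_2$ rowwise in each channel, and obtain $\widetilde\epsilon_t$ from the $1/\sqrt2$ concatenation; you additionally derive the normalisation inequality, which the paper only quotes.
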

	
	\begin{proof}
		The probability bound follows from Theorem~\ref{thm1}. On \(E_t\),
		for every \(i\in[p]\),
		\[
		\|\mathbf e_i^\top\mathbf R_Y^{(t)}\|_2
		\ge
		\|\mathbf e_i^\top\mathbf R_{S_Y}^{(t)}\|_2
		-\epsilon_p
		\ge
		m_{Y,S,t}-\epsilon_p
		>
		0.
		\]
		Thus the sample normalised rows in the \(Y\) channel are well-defined.
		The inequality
		\(
		\left\|
		\frac{\boldsymbol x}{\|\boldsymbol x\|_2}
		-
		\frac{\boldsymbol y}{\|\boldsymbol y\|_2}
		\right\|_2
		\le
		\frac{2\|\boldsymbol x-\boldsymbol y\|_2}{\|\boldsymbol y\|_2}
		\)
		with
		\(\boldsymbol x=\mathbf e_i^\top\mathbf R_Y^{(t)}\) and
		\(\boldsymbol y=\mathbf e_i^\top
		\mathbf R_{S_Y}^{(t)}\mathbf W_{YS}\) gives
		\[
		\|\widetilde{\mathbf R}_{Y,i}^{(t)}
		-\widetilde{\mathbf R}_{S_Y,i}^{(t)}\|_2
		\le
		\frac{2\epsilon_p}{m_{Y,S,t}}.
		\]
		The same argument for the \(Z\) channel gives
		\[
		\|\widetilde{\mathbf R}_{Z,i}^{(t)}
		-\widetilde{\mathbf R}_{S_Z,i}^{(t)}\|_2
		\le
		\frac{2\epsilon_p}{m_{Z,S,t}}.
		\]
		Taking the maximum over \(i\) proves the two channelwise bounds. Finally,
		the factor \(1/\sqrt2\) in the concatenation gives
		\[
		\begin{aligned}
			\|\widetilde{\mathbf J}^{(t)}
			-\widetilde{\mathbf J}_S^{(t)}\|_{2\to\infty}
			&\le
			\frac{1}{\sqrt2}
			\left\{
			\left(\frac{2\epsilon_p}{m_{Y,S,t}}\right)^2
			+
			\left(\frac{2\epsilon_p}{m_{Z,S,t}}\right)^2
			\right\}^{1/2}\\
			&=
			\sqrt2\,\epsilon_p
			\left(
			m_{Y,S,t}^{-2}+m_{Z,S,t}^{-2}
			\right)^{1/2}.
		\end{aligned}
		\]
	\end{proof}

	\begin{proof}[Proof of Theorem \ref{thm:dcsbm_community_exact_recovery}]
		Define
		$
		\widetilde{\mathbf{M}}_S^{(t)}
		:=
		\bigl[
		\widetilde{\boldsymbol{c}}_{S,\nu^{(t)}(1)}^{(t)}
		\mid
		\cdots
		\mid
		\widetilde{\boldsymbol{c}}_{S,\nu^{(t)}(p)}^{(t)}
		\bigr]^\top
		\in\mathbb{R}^{p\times (K_Y+K_Z)}.
		$
		Define the fitted-centre matrix by
		$
		\widehat{\mathbf M}^{(t)}
		:=
		\bigl[
		\widehat{\boldsymbol m}_{\widehat\nu^{(t)}(1)}^{(t)}
		\mid\cdots\mid
		\widehat{\boldsymbol m}_{\widehat\nu^{(t)}(p)}^{(t)}
		\bigr]^\top.
		$
		A fitted centre is called nonempty if its fitted cluster
		\(\{i\in[p]:\widehat\nu^{(t)}(i)=b\}\) contains at least one row.
		Since $\widehat{\mathbf{M}}^{(t)}$ minimises the $q_t$-means objective,
		$$
		\|\widetilde{\mathbf{J}}^{(t)}-\widehat{\mathbf{M}}^{(t)}\|_F
		\le
		\|\widetilde{\mathbf{J}}^{(t)}-\widetilde{\mathbf{M}}_S^{(t)}\|_F,
		$$
		and hence
		$$
		\|\widehat{\mathbf{M}}^{(t)}-\widetilde{\mathbf{M}}_S^{(t)}\|_F
		\le
		\|\widetilde{\mathbf{J}}^{(t)}-\widetilde{\mathbf{M}}_S^{(t)}\|_F
		+
		\|\widetilde{\mathbf{J}}^{(t)}-\widehat{\mathbf{M}}^{(t)}\|_F
		\le
		2\|\widetilde{\mathbf{J}}^{(t)}-\widetilde{\mathbf{M}}_S^{(t)}\|_F.
		$$
		
		For each $i\in C_a^{(t)}$,
		$$
		\|\widetilde{\mathbf{J}}_i^{(t)}-\widetilde{\boldsymbol{c}}_{S,a}^{(t)}\|_2
		\le
		\|\widetilde{\mathbf{J}}_i^{(t)}-\widetilde{\mathbf{J}}_{S,i}^{(t)}\|_2
		+
		\|\widetilde{\mathbf{J}}_{S,i}^{(t)}-\widetilde{\boldsymbol{c}}_{S,a}^{(t)}\|_2
		\le
		\widetilde{\epsilon}_t+\widetilde{\eta}_{S,t}.
		$$
		Therefore
		$$
		\|\widetilde{\mathbf{J}}^{(t)}-\widetilde{\mathbf{M}}_S^{(t)}\|_F
		\le
		\sqrt{p}\,(\widetilde{\epsilon}_t+\widetilde{\eta}_{S,t}),
		$$
		so
		$$
		\|\widehat{\mathbf{M}}^{(t)}-\widetilde{\mathbf{M}}_S^{(t)}\|_F
		\le
		2\sqrt{p}\,(\widetilde{\epsilon}_t+\widetilde{\eta}_{S,t}).
		$$
		
		Set
		$
		\widetilde r_t
		:=
		2(\widetilde\epsilon_t+\widetilde\eta_{S,t})\sqrt{\frac{p}{s_t^{\min}}}.
		$
		For each $a\in[q_t]$, define the ball
		$$
		\widetilde B_a
		:=
		\left\{
		\boldsymbol x\in\mathbb R^{K_Y+K_Z}:
		\|\boldsymbol x-\widetilde{\boldsymbol c}_{S,a}^{(t)}\|_2\le \widetilde r_t
		\right\}.
		$$
		
		We first show that each $\widetilde B_a$ contains at least one nonempty fitted centre of
		$\widehat{\mathbf M}^{(t)}$.
		Suppose, to the contrary, that for some $a\in[q_t]$, the ball $\widetilde B_a$
		contains no nonempty fitted centre. Then, for every $i\in C_a^{(t)}$, since
		$\mathbf e_i^\top \widetilde{\mathbf M}_S^{(t)}=\widetilde{\boldsymbol c}_{S,a}^{(t)}$, we have
		$$
		\|\mathbf e_i^\top(\widehat{\mathbf M}^{(t)}-\widetilde{\mathbf M}_S^{(t)})\|_2
		=
		\|\mathbf e_i^\top\widehat{\mathbf M}^{(t)}-\widetilde{\boldsymbol c}_{S,a}^{(t)}\|_2
		>
		\widetilde r_t.
		$$
		Therefore,
		$$
		\|\widehat{\mathbf M}^{(t)}-\widetilde{\mathbf M}_S^{(t)}\|_F^2
		\ge
		\sum_{i\in C_a^{(t)}}
		\|\mathbf e_i^\top(\widehat{\mathbf M}^{(t)}-\widetilde{\mathbf M}_S^{(t)})\|_2^2
		>
		|C_a^{(t)}|\,\widetilde r_t^2
		\ge
		s_t^{\min}\,\widetilde r_t^2
		=
		4p(\widetilde\epsilon_t+\widetilde\eta_{S,t})^2,
		$$
		which contradicts the bound
		$$
		\|\widehat{\mathbf M}^{(t)}-\widetilde{\mathbf M}_S^{(t)}\|_F
		\le
		2\sqrt p\,(\widetilde\epsilon_t+\widetilde\eta_{S,t}).
		$$
		Hence every ball $\widetilde B_a$ contains at least one nonempty fitted centre.
		
		Next, by \eqref{eq:dcsbm_community_sep_condition},
		$
		\widetilde\Delta_{S,t}>3\widetilde r_t,
		$
		so the balls $\widetilde B_1,\ldots,\widetilde B_{q_t}$ are pairwise disjoint.
		The fitted partition has at most \(q_t\) nonempty centres, while the \(q_t\)
		disjoint balls each contain at least one nonempty fitted centre.  It follows
		that each ball $\widetilde B_a$ contains exactly one nonempty fitted centre.
		Hence there exists a permutation
		$\pi_t\in \mathfrak S_{q_t}$ such that the fitted centre indexed by $\pi_t(a)$ lies in
		$\widetilde B_a$ for every $a\in[q_t]$.

		Fix $i\in C_a^{(t)}$, and let
		$\widehat{\boldsymbol m}_{\pi_t(a)}^{(t)}$
		be the fitted centre in $\widetilde{B}_a$. Then
		$$
		\|\widetilde{\mathbf{J}}_i^{(t)}
		-
		\widehat{\boldsymbol m}_{\pi_t(a)}^{(t)}\|_2
		\le
		\|\widetilde{\mathbf{J}}_i^{(t)}-\widetilde{\boldsymbol{c}}_{S,a}^{(t)}\|_2
		+
		\|\widetilde{\boldsymbol{c}}_{S,a}^{(t)}
		-
		\widehat{\boldsymbol m}_{\pi_t(a)}^{(t)}\|_2
		\le
		\widetilde{\epsilon}_t+\widetilde{\eta}_{S,t}+\widetilde{r}_t.
		$$
		For any $b\neq a$, let
		$\widehat{\boldsymbol m}_{\pi_t(b)}^{(t)}$
		be the fitted centre in $\widetilde{B}_b$. Then
		$$
		\|\widetilde{\mathbf{J}}_i^{(t)}
		-
		\widehat{\boldsymbol m}_{\pi_t(b)}^{(t)}\|_2
		\ge
		\|\widetilde{\boldsymbol{c}}_{S,a}^{(t)}-\widetilde{\boldsymbol{c}}_{S,b}^{(t)}\|_2
		-
		\|\widetilde{\boldsymbol{c}}_{S,b}^{(t)}
		-
		\widehat{\boldsymbol m}_{\pi_t(b)}^{(t)}\|_2
		-
		\|\widetilde{\mathbf{J}}_i^{(t)}-\widetilde{\boldsymbol{c}}_{S,a}^{(t)}\|_2
		\ge
		\widetilde{\Delta}_{S,t}
		-
		\widetilde{r}_t
		-
		\widetilde{\epsilon}_t
		-
		\widetilde{\eta}_{S,t}.
		$$
		Since $\widetilde{r}_t\ge 2(\widetilde{\epsilon}_t+\widetilde{\eta}_{S,t})$ and
		$\widetilde{\Delta}_{S,t}>3\widetilde{r}_t$, we have
		$$
		\widetilde{\epsilon}_t+\widetilde{\eta}_{S,t}+\widetilde{r}_t
		<
		\widetilde{\Delta}_{S,t}-\widetilde{r}_t-\widetilde{\epsilon}_t-\widetilde{\eta}_{S,t}.
		$$
		Hence every vertex is strictly closer to the fitted centre matched to its true community
		than to any other fitted centre. Therefore
		$
		\widehat{\nu}^{(t)}(i)=\pi_t(\nu^{(t)}(i)),
		 i\in[p],
		$
		so exact recovery holds on \(E_t\). By
		Lemma~\ref{lem:normalized_joint_rowwise}, for all sufficiently large \(p\),
		\(
		\mathbb P(E_t)
		\ge
		1-C_{c_0}p^{-c_0}.
		\)
		Therefore,
		\[
		\mathbb P\left\{
		\min_{\pi\in\mathfrak S_{q_t}}
		\left|
		\left\{
		i\in[p]:
		\widehat{\nu}^{(t)}(i)
		\neq
		\pi\bigl(\nu^{(t)}(i)\bigr)
		\right\}
		\right|
		=0
		\right\}
		\ge
		1-C_{c_0}p^{-c_0}.
		\]
		The preceding argument applies to every \(t\in[n]\).  Since \(n\) is
		fixed, a union bound gives
		\[
		\mathbb P\left\{
		\bigcap_{t=1}^{n}
		\left[
		\min_{\pi\in\mathfrak S_{q_t}}
		\left|
		\left\{
		i\in[p]:
		\widehat{\nu}^{(t)}(i)
		\neq
		\pi\bigl(\nu^{(t)}(i)\bigr)
		\right\}
		\right|
		=0
		\right]
		\right\}
		\ge
		1-C_{c_0}n p^{-c_0},
		\]
		which proves simultaneous exact recovery over all \(t\in[n]\).
	\end{proof}

	\begin{proof}[Proof of Corollary~\ref{cor:pooled_community_exact_recovery}]
		On \(\mathcal E_{\rm UASE}(c_0)\), Lemma~\ref{lem:normalized_joint_rowwise}
		and the triangle inequality give
		\[
		\begin{aligned}
			\|\overline{\widetilde{\mathbf J}}_{\mathcal T}
			-\overline{\widetilde{\mathbf J}}_{S,\mathcal T}\|_{2\to\infty}
			&\le
			\frac{1}{|\mathcal T|}
			\sum_{t\in\mathcal T}
			\|\widetilde{\mathbf J}^{(t)}
			-\widetilde{\mathbf J}_S^{(t)}\|_{2\to\infty}\\
			&\le
			\frac{1}{|\mathcal T|}
			\sum_{t\in\mathcal T}\widetilde\epsilon_t
			=
			\widetilde\epsilon_{\mathcal T}.
		\end{aligned}
		\]
		Applying the deterministic argument in the proof of
		Theorem~\ref{thm:dcsbm_community_exact_recovery} to these averaged
		rows shows that the stated separation condition gives exact recovery
		on \(\mathcal E_{\rm UASE}(c_0)\). Finally,
		\(\mathbb P\{\mathcal E_{\rm UASE}(c_0)^c\}
		\le C_{c_0}p^{-c_0}\) by Theorem~\ref{thm1}.
	\end{proof}
	
	\subsubsection*{S.3.3 Proof of Corollary \ref{thm:dcsbm_jump_set_exact_recovery}}
	\begin{lemma}
		\label{lem:dcsbm_endpoint_center_error}
		Suppose the conditions of
		Corollary~\ref{cor:pooled_community_exact_recovery} hold with
		\(\mathcal T=B\), so that the baseline partition is recovered exactly up
		to a permutation of labels.  On the exact-recovery event and
		\(E_{t^\ast-1}\), after relabelling the estimated communities,
		\[
		\max_{a\in[q_B]}
		\left\|
		\widetilde{\boldsymbol c}_{B,a}^{(t^\ast-1)}
		-
		\widetilde{\boldsymbol c}_{S,B,a}^{(t^\ast-1)}
		\right\|_2
		\le
		\widetilde\epsilon_{t^\ast-1}.
		\]
	\end{lemma}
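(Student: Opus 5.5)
The argument is deterministic once we are on the intersection of the baseline exact-recovery event and $E_{t^\ast-1}$. First, we fix the relabelling. On the exact-recovery event of Corollary~\ref{cor:pooled_community_exact_recovery} with $\mathcal T=B$, there is a permutation $\pi$ with $\widehat\nu_B(i)=\pi(\nu_{S,B}(i))$ for all $i\in[p]$. After relabelling the estimated communities by $\pi^{-1}$, we get the identity of sets $\widehat C_{B,a}=C_{S,B,a}$ for every $a\in[q_B]$. This is the key consequence of exact recovery. It means the sample centre and the population centre at time $t^\ast-1$ average over exactly the same index set, and no misclassified rows need to be handled.

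Next, with a common index set, the difference of centres is an average of row differences:
\[
\widetilde{\boldsymbol c}_{B,a}^{(t^\ast-1)}-\widetilde{\boldsymbol c}_{S,B,a}^{(t^\ast-1)}
=|C_{S,B,a}|^{-1}\sum_{i\in C_{S,B,a}}\bigl(\widetilde{\mathbf J}_i^{(t^\ast-1)}-\widetilde{\mathbf J}_{S,i}^{(t^\ast-1)}\bigr).
\]
By the triangle inequality (convexity of the norm), its Euclidean norm is at most $\max_i\|\widetilde{\mathbf J}_i^{(t^\ast-1)}-\widetilde{\mathbf J}_{S,i}^{(t^\ast-1)}\|_2=\|\widetilde{\mathbf J}^{(t^\ast-1)}-\widetilde{\mathbf J}_S^{(t^\ast-1)}\|_{2\to\infty}$.

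Finally, we apply Lemma~\ref{lem:normalized_joint_rowwise} at $t=t^\ast-1$. On $E_{t^\ast-1}$, and under $m_{Y,S,t^\ast-1},m_{Z,S,t^\ast-1}>\epsilon_p$ (which the hypotheses of Corollary~\ref{cor:pooled_community_exact_recovery} supply for $t^\ast-1\in B$), this $2\to\infty$ error is at most $\widetilde\epsilon_{t^\ast-1}$. Taking the maximum over $a$ gives the claim.

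The only real subtlety is in the first step. The population centre $\widetilde{\boldsymbol c}_{S,B,a}^{(t^\ast-1)}$ must be defined with the same alignment $\mathbf W_{hS}$ built into $\widetilde{\mathbf J}_S$. We also need the label permutation from the pooled recovery to be applied consistently to both the hat-sets and the centres. Once both conventions are fixed this way, nothing probabilistic remains to prove.
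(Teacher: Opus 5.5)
Your proposal is correct and follows the paper's route. You relabel so that $\widehat C_{B,a}=C_{S,B,a}$ on the exact-recovery event, write the centre difference as an average of row differences over this common index set, and bound it by $\widetilde\epsilon_{t^\ast-1}$ on $E_{t^\ast-1}$ using Lemma~\ref{lem:normalized_joint_rowwise}; you also make explicit that $m_{Y,S,t^\ast-1},m_{Z,S,t^\ast-1}>\epsilon_p$ follows from the hypotheses of Corollary~\ref{cor:pooled_community_exact_recovery} because $t^\ast-1\in B$, which the paper leaves implicit.
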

	
	\begin{proof}
		On the exact-recovery event, after relabelling,
		\(\widehat C_{B,a}=C_{S,B,a}\) for all \(a\in[q_B]\).  Hence
		\[
		\widetilde{\boldsymbol c}_{B,a}^{(t^\ast-1)}
		-
		\widetilde{\boldsymbol c}_{S,B,a}^{(t^\ast-1)}
		=
		\frac{1}{|C_{S,B,a}|}
		\sum_{i\in C_{S,B,a}}
		\left(
		\widetilde{\mathbf J}_i^{(t^\ast-1)}
		-
		\widetilde{\mathbf J}_{S,i}^{(t^\ast-1)}
		\right).
		\]
		The triangle inequality and the rowwise bound on \(E_{t^\ast-1}\)
		give
		\[
		\left\|
		\widetilde{\boldsymbol c}_{B,a}^{(t^\ast-1)}
		-
		\widetilde{\boldsymbol c}_{S,B,a}^{(t^\ast-1)}
		\right\|_2
		\le
		\frac{1}{|C_{S,B,a}|}
		\sum_{i\in C_{S,B,a}}
		\left\|
		\widetilde{\mathbf J}_i^{(t^\ast-1)}
		-
		\widetilde{\mathbf J}_{S,i}^{(t^\ast-1)}
		\right\|_2
		\le
		\widetilde\epsilon_{t^\ast-1}.
		\]
		Taking the maximum over \(a\) proves the claim.
	\end{proof}
	
	\begin{proof}[Proof of Corollary~\ref{thm:dcsbm_jump_set_exact_recovery}]
    On \(\bigcap_{s\in B}E_s\), the proof of
		Corollary~\ref{cor:pooled_community_exact_recovery} gives exact recovery
		of the baseline partition. After relabelling, we have 
		\(
		\widehat C_{B,a}=C_{S,B,a}, a\in[q_B].
		\)
        Define the population nearest-centre label by
		\[
		\tau_{S,B}^{(t^*-1,t)}(i)
		:=
		\arg\min_{b\in[q_B]}
		\left\|
		\widetilde{\mathbf J}_{S,i}^{(t)}
		-
		\widetilde{\mathbf c}_{S,B,b}^{(t^*-1)}
		\right\|_2.
		\]
		Fix any $i\in[p]$ and any $b\in[q_B]$. By the reverse triangle inequality,
		$$
		\left|
		\left\|\widetilde{\mathbf{J}}_i^{(t)}
		-\widetilde{\boldsymbol c}_{B,b}^{(t^\ast-1)}\right\|_2
		-
		\left\|\widetilde{\mathbf{J}}_{S,i}^{(t)}
		-\widetilde{\boldsymbol c}_{S,B,b}^{(t^\ast-1)}\right\|_2
		\right|
		\le
		\|\widetilde{\mathbf{J}}_i^{(t)}-\widetilde{\mathbf{J}}_{S,i}^{(t)}\|_2
		+
		\left\|\widetilde{\boldsymbol c}_{B,b}^{(t^\ast-1)}
		-\widetilde{\boldsymbol c}_{S,B,b}^{(t^\ast-1)}\right\|_2.
		$$
		On the event $E_t\cap\bigcap_{s\in B}E_s$,
		Lemma~\ref{lem:normalized_joint_rowwise} and
		Lemma~\ref{lem:dcsbm_endpoint_center_error} imply
		$$
		\|\widetilde{\mathbf{J}}_i^{(t)}-\widetilde{\mathbf{J}}_{S,i}^{(t)}\|_2
		\le
		\widetilde{\epsilon}_t,
		\qquad
		\left\|\widetilde{\boldsymbol c}_{B,b}^{(t^\ast-1)}
		-\widetilde{\boldsymbol c}_{S,B,b}^{(t^\ast-1)}\right\|_2
		\le
		\widetilde\epsilon_{t^\ast-1},
		$$
		and therefore
		\begin{equation}\label{eq:dcsbm_node_destination_perturb}
			\left|
			\left\|\widetilde{\mathbf{J}}_i^{(t)}
			-\widetilde{\boldsymbol c}_{B,b}^{(t^\ast-1)}\right\|_2
			-
			\left\|\widetilde{\mathbf{J}}_{S,i}^{(t)}
			-\widetilde{\boldsymbol c}_{S,B,b}^{(t^\ast-1)}\right\|_2
			\right|
			\le
			\overline{\epsilon}_{t^\ast-1,t}.
		\end{equation}
		
		Fix \(i\in\mathcal A^{(t^*-1,t)}\) and write
		\(b^\star=\nu^{(t)}(i)\).
		By the definition of \(\widetilde\eta_{S,t}\) and
		\eqref{eq:dcsbm_centre_drift},
		\begin{align*}
		\left\|
		\widetilde{\mathbf J}_{S,i}^{(t)}
		-
		\widetilde{\boldsymbol c}_{S,B,b^\star}^{(t^\ast-1)}
		\right\|_2
		\le
		\left\|
		\widetilde{\mathbf J}_{S,i}^{(t)}
		-
		\widetilde{\boldsymbol c}_{S,b^\star}^{(t)}
		\right\|_2
		+
		\left\|
		\widetilde{\boldsymbol c}_{S,b^\star}^{(t)}
		-
		\widetilde{\boldsymbol c}_{S,B,b^\star}^{(t^\ast-1)}
		\right\|_2
		\le
		\widetilde\eta_{S,t}+\widetilde\zeta_{S,B,t}.
		\end{align*}
		For every \(c\neq b^\star\), the triangle inequality gives
		\begin{align*}
		\left\|
		\widetilde{\mathbf J}_{S,i}^{(t)}
		-
		\widetilde{\boldsymbol c}_{S,B,c}^{(t^\ast-1)}
		\right\|_2
		&\ge
		\left\|
		\widetilde{\boldsymbol c}_{S,B,b^\star}^{(t^\ast-1)}
		-
		\widetilde{\boldsymbol c}_{S,B,c}^{(t^\ast-1)}
		\right\|_2
		-
		\left\|
		\widetilde{\mathbf J}_{S,i}^{(t)}
		-
		\widetilde{\boldsymbol c}_{S,B,b^\star}^{(t^\ast-1)}
		\right\|_2\\
		&\ge
		\min_{a\neq b}
		\left\|
		\widetilde{\boldsymbol c}_{S,B,a}^{(t^\ast-1)}
		-
		\widetilde{\boldsymbol c}_{S,B,b}^{(t^\ast-1)}
		\right\|_2
		-\widetilde\eta_{S,t}
		-\widetilde\zeta_{S,B,t}.
		\end{align*}
		Hence
		\begin{align*}
		&\left\|
		\widetilde{\mathbf J}_{S,i}^{(t)}
		-
		\widetilde{\boldsymbol c}_{S,B,c}^{(t^\ast-1)}
		\right\|_2
		-
		\left\|
		\widetilde{\mathbf J}_{S,i}^{(t)}
		-
		\widetilde{\boldsymbol c}_{S,B,b^\star}^{(t^\ast-1)}
		\right\|_2\\
		&\ge
		\min_{a\neq b}
		\left\|
		\widetilde{\boldsymbol c}_{S,B,a}^{(t^\ast-1)}
		-
		\widetilde{\boldsymbol c}_{S,B,b}^{(t^\ast-1)}
		\right\|_2
		-2\widetilde\eta_{S,t}
		-2\widetilde\zeta_{S,B,t}
		>
		2\overline\epsilon_{t^\ast-1,t},
		\qquad c\neq b^\star,
		\end{align*}
		where the strict inequality follows from
		\eqref{eq:dcsbm_jump_set_margin}. Thus
		\(\tau_{S,B}^{(t^*-1,t)}(i)=b^\star=\nu^{(t)}(i)\).
		Moreover, \eqref{eq:dcsbm_node_destination_perturb} yields
		\begin{align*}
		\left\|
		\widetilde{\mathbf J}_i^{(t)}
		-
		\widetilde{\boldsymbol c}_{B,b^\star}^{(t^\ast-1)}
		\right\|_2
		&\le
		\left\|
		\widetilde{\mathbf J}_{S,i}^{(t)}
		-
		\widetilde{\boldsymbol c}_{S,B,b^\star}^{(t^\ast-1)}
		\right\|_2
		+\overline\epsilon_{t^\ast-1,t}\\
		&<
		\left\|
		\widetilde{\mathbf J}_{S,i}^{(t)}
		-
		\widetilde{\boldsymbol c}_{S,B,c}^{(t^\ast-1)}
		\right\|_2
		-\overline\epsilon_{t^\ast-1,t}\\
		&\le
		\left\|
		\widetilde{\mathbf J}_i^{(t)}
		-
		\widetilde{\boldsymbol c}_{B,c}^{(t^\ast-1)}
		\right\|_2,
		\qquad c\neq b^\star.
		\end{align*}
		Therefore, 
		\(
		\widehat\tau_B^{(t^*-1,t)}(i)
		=
		\tau_{S,B}^{(t^*-1,t)}(i)
		=
		\nu^{(t)}(i), i\in\mathcal A^{(t^*-1,t)}.
		\)

		Consider the event $\mathcal{E}_{\mathrm{jump}}^{(t^*-1,t)}
		:=
		\left\{
		\widehat{\mathcal A}^{(t^*-1,t)}=\mathcal A^{(t^*-1,t)}
		\right\}
		\cap
		E_t
		\cap
		\bigcap_{s\in B}E_s.$ 
		On this event,
		Corollary~\ref{cor:pooled_community_exact_recovery} gives
		\(
		\widehat{\nu}_B(i)=\nu_{S,B}(i), i\in[p],
		\)
			after relabelling, and the preceding argument gives
		\(
		\widehat\tau_B^{(t^*-1,t)}(i)=\nu^{(t)}(i), i\in\mathcal A^{(t^*-1,t)}.
		\)
		Since every membership-changing vertex belongs to
		\(\mathcal A^{(t^*-1,t)}\), the exact recovery of
		\(\mathcal A^{(t^*-1,t)}\), \(\nu_{S,B}\), and the target nearest-centre
		assignment implies, on \(\mathcal E_{\mathrm{jump}}^{(t^*-1,t)}\),
		\[
		\widehat{\mathcal J}_B^{(t^*-1,t)}
		=
		\left\{i\in\mathcal A^{(t^*-1,t)}:
		\nu^{(t)}(i)\neq\nu_{S,B}(i)\right\}
		=
		\mathcal J_{S,B}^{(t^*-1,t)}.
		\]
		
		Finally, Lemma~\ref{lem:normalized_joint_rowwise} gives, for every
		\(u\in B\cup\{t\}\),
		\(
		\mathbb P(E_u^c)
		\le
		C_{c_0}p^{-c_0}.
		\)
		Therefore,
		\[
		\begin{aligned}
		\mathbb P\left[
		\left(
		E_t\cap\bigcap_{u\in B}E_u
		\right)^c
		\right]
		&\le
		\mathbb P(E_t^c)
		+
		\sum_{u\in B}\mathbb P(E_u^c) \\
		&\le
		C_{c_0}(|B|+1)p^{-c_0}.
		\end{aligned}
		\]
		By Corollary~\ref{cor:perfect-v},
		\[
		\mathbb P\left\{
		\widehat{\mathcal A}^{(t^*-1,t)}
		\neq
		\mathcal A^{(t^*-1,t)}
		\right\}
		\le
		C_{c_0}p^{-c_0}.
		\]
		Hence, after enlarging \(C_{c_0}\),
		\[
		\begin{aligned}
		\mathbb P\left\{
		\left(
		\mathcal E_{\mathrm{jump}}^{(t^*-1,t)}
		\right)^c
		\right\}
		&\le
		\mathbb P\left\{
		\widehat{\mathcal A}^{(t^*-1,t)}
		\neq
		\mathcal A^{(t^*-1,t)}
		\right\} +
		\mathbb P\left[
		\left(
		E_t\cap\bigcap_{u\in B}E_u
		\right)^c
		\right] \\
		&\le
		C_{c_0}(|B|+1)p^{-c_0}.
		\end{aligned}
		\]
		Since the two reassignment sets agree on
		\(\mathcal E_{\mathrm{jump}}^{(t^*-1,t)}\),
		\[
		\mathbb P\left\{
		\widehat{\mathcal J}_B^{(t^*-1,t)}
		=
		\mathcal J_{S,B}^{(t^*-1,t)}
		\right\}
		\ge
		1-C_{c_0}(|B|+1)p^{-c_0}.
		\]
	\end{proof}
	\subsubsection*{S.3.4 Proof of Theorem \ref{thm:community_score_detection}}
	\begin{lemma}\label{thm:dcsbm_community_level_test}
		Suppose the conditions of
		Corollary~\ref{cor:pooled_community_exact_recovery} hold with
		\(\mathcal T=B\), and
		Conditions~\ref{ass:transition-sparsity}--\ref{ass:3} hold at time
		\(t\).  Assume that \(m_{Y,S,t}>\epsilon_p\) and
		\(m_{Z,S,t}>\epsilon_p\).
		Then, after relabelling the estimated baseline communities, simultaneously
		for every \(a\in[q_B]\) and every prespecified
		\(G\subseteq[q_B]\) with \(|G|\ge2\),
		\[
		\left|
		\widetilde T_{B,a}^{\mathrm{cent},(t^*-1,t)}
		-
		\widetilde T_{S,B,a}^{\mathrm{cent},(t^*-1,t)}
		\right|
		\le
		\overline\epsilon_{t^\ast-1,t},
		\]
		\[
		\left|
		\widetilde T_{B,a}^{\mathrm{split},(t^*-1,t)}
		-
		\widetilde T_{S,B,a}^{\mathrm{split},(t^*-1,t)}
		\right|
		\le
		\overline\epsilon_{t^\ast-1,t},
		\]
		and
		\[
		\left|
		\widetilde T_{B,G}^{\mathrm{merge},(t^*-1,t)}
		-
		\widetilde T_{S,B,G}^{\mathrm{merge},(t^*-1,t)}
		\right|
		\le
		\overline\epsilon_{t^\ast-1,t}.
		\]
		These inequalities hold with probability at least
		\(1-C_{c_0}(|B|+1)p^{-c_0}\) for every fixed \(c_0>0\) and all
		sufficiently large \(p\).

		More generally, if the same statistics are formed with a reference time
		\(u\) and a target time \(v\), then, on the corresponding exact-recovery
		event and \(E_u\cap E_v\), each sample--population test statistic error is at most
		\(
		\overline\epsilon_{u,v}
		:=
		\widetilde\epsilon_u+\widetilde\epsilon_v.
		\)
	\end{lemma}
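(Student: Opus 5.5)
We work throughout on the event $\mathcal E:=E_{t^*-1}\cap E_t\cap\bigcap_{s\in B}E_s$, intersected with the exact baseline-recovery event of Corollary~\ref{cor:pooled_community_exact_recovery}. By Lemma~\ref{lem:normalized_joint_rowwise} and a union bound over the $|B|+1$ times (with $t^*-1\in B$), this event has probability at least $1-C_{c_0}(|B|+1)p^{-c_0}$. On it, the baseline communities can be relabelled so that $\widehat C_{B,a}=C_{S,B,a}$ for all $a$, and the sample and population statistics are then built from the same index sets. On $\mathcal E$ every row satisfies $\|\widetilde{\mathbf J}_i^{(u)}-\widetilde{\mathbf J}_{S,i}^{(u)}\|_2\le\widetilde\epsilon_u$ for $u\in\{t^*-1,t\}$. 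The rest of the argument is deterministic, and each statistic is handled by a $1$-Lipschitz argument.

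\emph{Centre shift.} As in Lemma~\ref{lem:dcsbm_endpoint_center_error}, averaging the rowwise bound gives $\|\widetilde{\boldsymbol c}_{B,a}^{(u)}-\widetilde{\boldsymbol c}_{S,B,a}^{(u)}\|_2\le\widetilde\epsilon_u$ for $u=t^*-1$ and $u=t$. The reverse triangle inequality applied to $\|\widetilde{\boldsymbol c}^{(t)}-\widetilde{\boldsymbol c}^{(t^*-1)}\|_2$ then gives the error bound $\widetilde\epsilon_t+\widetilde\epsilon_{t^*-1}=\overline\epsilon_{t^*-1,t}$.

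\emph{Merge.} The map $(\boldsymbol c_a)_{a\in G}\mapsto D$, the root-mean-square distance of the centres to their mean, is the norm $|G|^{-1/2}\|(\mathbf I-|G|^{-1}\mathbf 1\mathbf 1^\top)\mathbf C\|_F$. The centring projection has operator norm at most one, so $D$ is $1$-Lipschitz in the root-mean-square perturbation $\max_a\|\delta\boldsymbol c_a\|_2$. Hence $|\widetilde D_{B,G}^{(u)}-\widetilde D_{S,B,G}^{(u)}|\le\widetilde\epsilon_u$ for each $u$, and subtracting the two times gives the bound $\overline\epsilon_{t^*-1,t}$.

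\emph{Split.} For each fixed subset $A$, the mean displacement difference satisfies $\|\widetilde{\boldsymbol d}_{B,i}-\widetilde{\boldsymbol d}_{S,B,i}\|_2\le\overline\epsilon$ for every $i$. Consequently, the difference of the two group means, $\widetilde{\boldsymbol d}_{A}-\widetilde{\boldsymbol d}_{C\setminus A}$, is perturbed by at most $2\overline\epsilon$. This naive bound, combined with the weight $\{|A||C\setminus A|/|C|^2\}^{1/2}\le 1/2$, yields exactly $\overline\epsilon$. The main obstacle is precisely this: one must keep the factor $1/2$ from the weight rather than bounding the weight by one. A sharper route, if wanted, writes the weighted difference as $\sqrt{w(1-w)}$ times the contrast and uses $\sqrt{w(1-w)}\le 1/2$. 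Since the sample and population collections $\widehat{\mathcal P}_{B,a}=\mathcal P_{S,B,a}$ coincide after relabelling, and the maximum of functions that are uniformly within $\overline\epsilon$ of each other is itself within $\overline\epsilon$, the split bound follows.

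The same argument applies verbatim to a general reference time $u$ and target time $v$ on $E_u\cap E_v$ and the recovery event, giving $\overline\epsilon_{u,v}$.
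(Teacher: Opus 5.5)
Your proposal is correct and follows the paper's proof essentially step for step. You relabel on the exact-recovery event, average the rowwise bounds to get centre errors $\widetilde\epsilon_u$, and apply the reverse triangle inequality for the centre shift; for the split you use the $2\overline\epsilon_{t^*-1,t}$ contrast bound times the weight $\le 1/2$ together with the max-comparison; and you union bound over the $|B|+1$ rowwise events. The only cosmetic difference is in the merge step: you invoke the contraction of the centring projection, where the paper writes out the equivalent identity $\frac{1}{|G|}\sum_a\|\boldsymbol\delta_a-\bar{\boldsymbol\delta}\|_2^2=\frac{1}{|G|}\sum_a\|\boldsymbol\delta_a\|_2^2-\|\bar{\boldsymbol\delta}\|_2^2$.
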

	
	\begin{proof}
		On the baseline exact-recovery event, relabel the estimated communities so
		that \(\widehat C_{B,a}=C_{S,B,a}\).  On
		\(E_{t^\ast-1}\cap E_t\), Lemma~\ref{lem:normalized_joint_rowwise}
		gives, for \(s\in\{t^\ast-1,t\}\), we have 
		\(
		\max_{i\in[p]}
		\left\|
		\widetilde{\mathbf J}_i^{(s)}
		-
		\widetilde{\mathbf J}_{S,i}^{(s)}
		\right\|_2
		\le
		\widetilde\epsilon_s.
		\)
		Averaging over the common vertex set \(C_{S,B,a}\) therefore yields
		\[
		\max_{a\in[q_B]}
		\left\|
		\widetilde{\boldsymbol c}_{B,a}^{(s)}
		-
		\widetilde{\boldsymbol c}_{S,B,a}^{(s)}
		\right\|_2
		\le
		\widetilde\epsilon_s,
		\qquad
		s\in\{t^\ast-1,t\}.
		\]
		The centre-shift bound follows immediately from the reverse triangle
		inequality.

		For the split statistic, the definition of the vertex displacement gives
		\[
		\left\|
		\widetilde{\boldsymbol d}_{B,i}^{(t^*-1,t)}
		-
		\widetilde{\boldsymbol d}_{S,B,i}^{(t^*-1,t)}
		\right\|_2
		\le
		\left\|
		\widetilde{\mathbf J}_i^{(t)}
		-
		\widetilde{\mathbf J}_{S,i}^{(t)}
		\right\|_2
		+
		\left\|
		\widetilde{\mathbf J}_i^{(t^\ast-1)}
		-
		\widetilde{\mathbf J}_{S,i}^{(t^\ast-1)}
		\right\|_2
		\le
		\overline\epsilon_{t^\ast-1,t}.
		\]
		On exact recovery, the sample and population maximisations have the same
		candidate subsets after relabelling.  For
		\(A\in\mathcal P_{S,B,a}\), write
		\(
		\gamma_A
		:=
		\left\{
		\frac{|A|\,|C_{S,B,a}\setminus A|}
		{|C_{S,B,a}|^2}
		\right\}^{1/2}.
		\)
		Each subgroup mean displacement differs from its population counterpart
		by at most \(\overline\epsilon_{t^\ast-1,t}\).  Thus
		\[
		\left|
		\gamma_A
		\left\|
		\widetilde{\boldsymbol d}_{B,a,A}^{(t^*-1,t)}
		-
		\widetilde{\boldsymbol d}_{B,a,C_{S,B,a}\setminus A}^{(t^*-1,t)}
		\right\|_2
		\right.\\[-0.2em]
		\left.
		-
		\gamma_A
		\left\|
		\widetilde{\boldsymbol d}_{S,B,a,A}^{(t^*-1,t)}
		-
		\widetilde{\boldsymbol d}_{S,B,a,C_{S,B,a}\setminus A}^{(t^*-1,t)}
		\right\|_2
		\right|
		\le
		2\gamma_A\overline\epsilon_{t^\ast-1,t}
		\le
		\overline\epsilon_{t^\ast-1,t},
		\]
		because \(\gamma_A\le1/2\).  Applying
		\(\left|\max_A f_A-\max_A g_A\right|
		\le\max_A|f_A-g_A|\) proves the split bound.

		For the merge statistic, the reverse triangle inequality gives
		\begin{align*}
		\left|
		\widetilde D_{B,G}^{(s)}
		-
		\widetilde D_{S,B,G}^{(s)}
		\right|
		&\le
		\left\{
		\frac{1}{|G|}\sum_{a\in G}
		\left\|
		\widetilde{\boldsymbol c}_{B,a}^{(s)}
		-
		\widetilde{\boldsymbol c}_{S,B,a}^{(s)}
		-
		\frac{1}{|G|}\sum_{b\in G}
		\left(
		\widetilde{\boldsymbol c}_{B,b}^{(s)}
		-
		\widetilde{\boldsymbol c}_{S,B,b}^{(s)}
		\right)
		\right\|_2^2
		\right\}^{1/2}\\
		&=
		\left\{
		\frac{1}{|G|}\sum_{a\in G}
		\left\|
		\widetilde{\boldsymbol c}_{B,a}^{(s)}
		-
		\widetilde{\boldsymbol c}_{S,B,a}^{(s)}
		\right\|_2^2
		-
		\left\|
		\frac{1}{|G|}\sum_{a\in G}
		\left(
		\widetilde{\boldsymbol c}_{B,a}^{(s)}
		-
		\widetilde{\boldsymbol c}_{S,B,a}^{(s)}
		\right)
		\right\|_2^2
		\right\}^{1/2}\\
		&\le
		\left\{
		\frac{1}{|G|}\sum_{a\in G}
		\left\|
		\widetilde{\boldsymbol c}_{B,a}^{(s)}
		-
		\widetilde{\boldsymbol c}_{S,B,a}^{(s)}
		\right\|_2^2
		\right\}^{1/2}
		\le
		\widetilde\epsilon_s,
		\qquad
		s\in\{t^\ast-1,t\}.
		\end{align*}
		Using the definition of merge statistics directly,
		\begin{align*}
		\left|
		\widetilde T_{B,G}^{\mathrm{merge},(t^*-1,t)}
		-
		\widetilde T_{S,B,G}^{\mathrm{merge},(t^*-1,t)}
		\right|
		&=
		\left|
		\left\{
		\widetilde D_{B,G}^{(t^\ast-1)}
		-
		\widetilde D_{B,G}^{(t)}
		\right\}
		-
		\left\{
		\widetilde D_{S,B,G}^{(t^\ast-1)}
		-
		\widetilde D_{S,B,G}^{(t)}
		\right\}
		\right|\\
		&\le
		\left|
		\widetilde D_{B,G}^{(t^\ast-1)}
		-
		\widetilde D_{S,B,G}^{(t^\ast-1)}
		\right|
		+
		\left|
		\widetilde D_{B,G}^{(t)}
		-
		\widetilde D_{S,B,G}^{(t)}
		\right|\\
		&\le
		\overline\epsilon_{t^\ast-1,t}.
		\end{align*}

		The same proof with \(t^\ast-1\) and \(t\) replaced by \(u\) and
		\(v\) proves the general pairwise statement.  Finally, intersecting the
		baseline exact-recovery event with the rowwise events over \(B\) and at
		\(t\), and applying a union bound, gives the stated probability.
	\end{proof}

	\begin{proof}[Proof of Theorem \ref{thm:community_score_detection}]
		Lemma~\ref{thm:dcsbm_community_level_test} gives, simultaneously for the
		centre-shift, split, and merge statistics,
		\[
		\left|
		T_C^{(t^*-1,t)}
		-
		T_{S,C}^{(t^*-1,t)}
		\right|
		\le
		\overline\epsilon_{t^\ast-1,t}
		\]
		on an event with probability at least
		\(1-C_{c_0}(|B|+1)p^{-c_0}\).

		Under the centre-shift and split null hypotheses,
		\(T_{S,C}^{(t^*-1,t)}=0\), while under the merge null hypothesis,
		\(T_{S,C}^{(t^*-1,t)}\le0\).  Thus, under any of the three corresponding
		null hypotheses,
		\[
		T_C^{(t^*-1,t)}
		\le
		T_{S,C}^{(t^*-1,t)}
		+\overline\epsilon_{t^\ast-1,t}
		\le
		\overline\epsilon_{t^\ast-1,t}
		\le\tau.
		\]
		Because the test rejects only when \(T_C^{(t^*-1,t)}>\tau\), it does not
		reject on this event.

		Conversely, if
		\(T_{S,C}^{(t^*-1,t)}>
		\tau+\overline\epsilon_{t^\ast-1,t}\), then
		\[
		T_C^{(t^*-1,t)}
		\ge
		T_{S,C}^{(t^*-1,t)}
		-\overline\epsilon_{t^\ast-1,t}
		>\tau,
		\]
		so the corresponding test rejects on the same event.  This proves both
		claims.
	\end{proof}

    \subsubsection*{S.3.5 An additional corollary linking
\(\widetilde\eta_{S,t}\) to autoregressive memory}
	For a target time \(t\geq t^*\) after a stationary baseline ended at $t^*-1$, Proposition~\ref{prop:eta} can be refined by
separating the baseline within-community variation from the AR(1) memory
generated after the baseline endpoint.

\begin{corollary}
\label{cor:eta}
Suppose Conditions~\ref{ass:initial-finite-rank}--\ref{ass:3} hold and
\(m_{\min,S,t}>0\). Then there is a constant \(C>0\) such that
\[
\widetilde\eta_{S,t}
\leq
\frac{C}{m_{\min,S,t}}
\left[
\omega_{\Pi,S,t^*-1}^{(t)}
+2\max_{i\in[p]}
\left\{
\frac{\{M_{Y,i}^{(t^*-1,t)}\}^2}
{c_{Y,S}(\psi_{Y,i}^{(t)})^2}
+
\frac{\{M_{Z,i}^{(t^*-1,t)}\}^2}
{c_{Z,S}(\psi_{Z,i}^{(t)})^2}
\right\}^{1/2}
+p^{-1/2}
\right].
\]
\end{corollary}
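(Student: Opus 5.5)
The plan is to start from Lemma~\ref{lem:eta-channel-reduction}. It already reduces \(\widetilde\eta_{S,t}\) to the two channelwise quantities
\[
\frac{\|\mathbf S_h^{(t)}(\mathbf e_i/\psi_{h,i}^{(t)}-\mathbf e_j/\psi_{h,j}^{(t)})\|_2}{\sqrt{p\rho_h c_{h,S}}},\qquad i,j\in C_a^{(t)},\ h\in\mathcal H.
\]
So it suffices to bound each of these by \(C\omega_{\Pi,S,t^*-1}^{(t)}\) plus a memory contribution plus \(O(p^{-1/2})\). As in the proof of Proposition~\ref{prop:eta}, the coordinates \(k\in\{i,j\}\) are bounded by \(O(\rho_h)\) and contribute \(O(p^{-1/2})\) after normalisation. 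For \(k\notin\{i,j\}\), the same coordinate identity gives, in the \(Y\) channel,
\[
\frac{1}{\sqrt{\rho_Y}}\Bigl[\mathbf S_Y^{(t)}\Bigl(\frac{\mathbf e_i}{\psi_{Y,i}^{(t)}}-\frac{\mathbf e_j}{\psi_{Y,j}^{(t)}}\Bigr)\Bigr]_k=\sqrt{\rho_Y}\,\psi_{Y,k}^{(t)}\bigl(\boldsymbol\theta^{(t)}_{Y,\nu^{(t)}(k)}\bigr)^\top\boldsymbol\theta^{(t)}_{Y,a}\bigl(\Pi_{kj}^{(t-1)}-\Pi_{ki}^{(t-1)}\bigr).
\]

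The key step is a three-term telescoping of the previous-time difference through the baseline endpoint:
\[
\Pi_{kj}^{(t-1)}-\Pi_{ki}^{(t-1)}=\bigl(\Pi_{kj}^{(t^*-1)}-\Pi_{ki}^{(t^*-1)}\bigr)+\bigl(\Pi_{kj}^{(t-1)}-\Pi_{kj}^{(t^*-1)}\bigr)-\bigl(\Pi_{ki}^{(t-1)}-\Pi_{ki}^{(t^*-1)}\bigr).
\]
The first term, summed in squares over \(k\) and divided by \(p\), is bounded by \((\omega^{(t)}_{\Pi,S,t^*-1})^2\) up to constants, because \(\sqrt{\rho_Y}\,\psi_{Y,k}^{(t)}\le\ell\). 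The second and third terms are memory terms. The plan is to rewrite each as \(\alpha_{kr}^{(t)}(\Pi_{kr}^{(t-1)}-\Pi_{kr}^{(t^*-1)})/\sqrt{\rho_Y}\) divided by \(\psi_{Y,r}^{(t)}\), for \(r=j\) and \(r=i\) respectively. This works because \(\alpha_{kr}^{(t)}=\rho_Y\psi_{Y,k}^{(t)}\psi_{Y,r}^{(t)}\,\boldsymbol\theta^\top\boldsymbol\theta\), which is exactly the prefactor up to the factor \(\psi_{Y,r}^{(t)}\). Then the root-mean-square over \(k\) of the \(r\)-term is at most \(M_{Y,r}^{(t^*-1,t)}/\psi_{Y,r}^{(t)}\), using the definition of \(M\). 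Minkowski's inequality combines the three pieces, and the two memory pieces together are at most \(2\max_r M_{Y,r}^{(t^*-1,t)}/\psi_{Y,r}^{(t)}\).

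The \(Z\) channel is handled identically, with \(\beta\) in place of \(\alpha\). The sign is reversed because \(\mathbf S_Z\) uses \(\Pi^{(t-1)}\) rather than \(1-\Pi^{(t-1)}\), which is harmless under norms. Substituting into Lemma~\ref{lem:eta-channel-reduction}, the \(c_{h,S}^{-1}\) weights attach to the squared memory terms. To reach the stated form, combine \(\sqrt{a+b}\le\sqrt a+\sqrt b\) on the \(\omega\) and \(p^{-1/2}\) parts, and the inequality \((x+y+z)^2\le 3(x^2+y^2+z^2)\) inside the square root; this keeps the two channels' memory contributions together as \(\{\cdot/c_{Y,S}+\cdot/c_{Z,S}\}^{1/2}\), while the extra constants are absorbed into \(C\).

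The main obstacle is matching the exact constant structure, that is, the factor \(2\) outside and \(c_{h,S}\) inside the maximum, without double-counting constants. I would first try to keep the two memory pieces as a single sum bounded by \(2\max_r\). I would also check that the maximum over \(r\) of the sum over channels dominates the channelwise maxima summed. If it does not, I would bound the result by the slightly larger sum of the two channel maxima, absorbed into \(C\). A minor point is that \(M\) sums over \(k\ne r\) while \(\omega\) excludes both \(i\) and \(j\); the one extra coordinate only helps, since it can only enlarge \(M\).
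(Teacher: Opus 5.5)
Your proposal is correct and follows essentially the same route as the paper. You reduce via Lemma~\ref{lem:eta-channel-reduction}, split $\Pi^{(t-1)}$ through the baseline endpoint $\Pi^{(t^*-1)}$, bound the baseline part by $\omega^{(t)}_{\Pi,S,t^*-1}$, identify each memory part with $M_{h,r}^{(t^*-1,t)}/\psi_{h,r}^{(t)}$ using $\alpha_{kr}^{(t)}=\rho_Y\psi_{Y,k}^{(t)}\psi_{Y,r}^{(t)}(\boldsymbol\theta^{(t)}_{Y,\nu^{(t)}(k)})^\top\boldsymbol\theta^{(t)}_{Y,a}$, and combine the channels by the triangle inequality in $\mathbb R^2$. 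The constant-matching obstacle you flag is not real: Minkowski applied to the vectors $\bigl(M_{Y,r}^{(t^*-1,t)}/(\psi_{Y,r}^{(t)}\sqrt{c_{Y,S}}),\,M_{Z,r}^{(t^*-1,t)}/(\psi_{Z,r}^{(t)}\sqrt{c_{Z,S}})\bigr)$ for $r=i,j$ gives the factor $2\max_r$ directly, and any leftover constant is absorbed into $C$, which multiplies the whole bracket.
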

 The first term is the variation already
present at the baseline endpoint; the \(M\)-terms are the additional
within-community spread inherited from the previous network state.
\begin{proof}
		By Lemma~\ref{lem:eta-channel-reduction}, it remains only
		to refine the two channelwise numerator bounds.		
		Continue with fixed \(a\in[q_t]\) and \(i,j\in C_a^{(t)}\). Since
		\(\nu^{(t)}(i)=\nu^{(t)}(j)=a\), for the \(Y\) channel and
		\(k\notin\{i,j\}\),
		\[
		\begin{aligned}
		\left[
		\mathbf S_Y^{(t)}
		\left(
		\frac{\mathbf e_i}{\psi_{Y,i}^{(t)}}
		-
		\frac{\mathbf e_j}{\psi_{Y,j}^{(t)}}
		\right)
		\right]_k
		&=
		\frac{\alpha_{ki}^{(t)}}{\psi_{Y,i}^{(t)}}
		\bigl(1-\Pi_{ki}^{(t-1)}\bigr)
		-
		\frac{\alpha_{kj}^{(t)}}{\psi_{Y,j}^{(t)}}
		\bigl(1-\Pi_{kj}^{(t-1)}\bigr)\\
		&=
		\left\{
		\frac{\alpha_{ki}^{(t)}}{\psi_{Y,i}^{(t)}}
		\bigl(1-\Pi_{ki}^{(t^\ast-1)}\bigr)
		-
		\frac{\alpha_{kj}^{(t)}}{\psi_{Y,j}^{(t)}}
		\bigl(1-\Pi_{kj}^{(t^\ast-1)}\bigr)
		\right\}\\
		&\qquad-
		\frac{\alpha_{ki}^{(t)}}{\psi_{Y,i}^{(t)}}
		\bigl(\Pi_{ki}^{(t-1)}-\Pi_{ki}^{(t^\ast-1)}\bigr)
		+
		\frac{\alpha_{kj}^{(t)}}{\psi_{Y,j}^{(t)}}
		\bigl(\Pi_{kj}^{(t-1)}-\Pi_{kj}^{(t^\ast-1)}\bigr).
		\end{aligned}
		\]
		The term in braces is the baseline part. Since
		\[
		\frac{\alpha_{ki}^{(t)}}{\psi_{Y,i}^{(t)}}
		=
		\frac{\alpha_{kj}^{(t)}}{\psi_{Y,j}^{(t)}}
		=
		\rho_Y\psi_{Y,k}^{(t)}
		\left(\boldsymbol\theta_{Y,\nu^{(t)}(k)}^{(t)}\right)^\top
		\boldsymbol\theta_{Y,a}^{(t)},
		\]
		its \(k\)th coordinate, after division by \(\sqrt{\rho_Y}\), is
		\(
		\sqrt{\rho_Y}\,\psi_{Y,k}^{(t)}
		\left(\boldsymbol\theta_{Y,\nu^{(t)}(k)}^{(t)}\right)^\top
		\boldsymbol\theta_{Y,a}^{(t)}
		\left(
		\Pi_{kj}^{(t^\ast-1)}-\Pi_{ki}^{(t^\ast-1)}
		\right).
		\)
		Therefore, using \(\rho_Y\le1\), \(\psi_{Y,k}^{(t)}\le\ell\),
		and \(\|\boldsymbol\theta_{Y,b}^{(t)}\|_2=1\), its off-diagonal
		coordinates, together with the two completed diagonal coordinates
		\(k=i,j\), are bounded by
		\[
		\begin{aligned}
		&\left[
		\frac{\rho_Y}{p c_{Y,S}}
		\sum_{k\notin\{i,j\}}
		\bigl(\psi_{Y,k}^{(t)}\bigr)^2
		\left\{
		\left(\boldsymbol\theta_{Y,\nu^{(t)}(k)}^{(t)}\right)^\top
		\boldsymbol\theta_{Y,a}^{(t)}
		\right\}^2
		\left(
		\Pi_{kj}^{(t^\ast-1)}-\Pi_{ki}^{(t^\ast-1)}
		\right)^2
		\right]^{1/2}
		+\frac{C}{\sqrt p}\\
		&\quad\le
		C\left(
		\omega_{\Pi,S,t^\ast-1}^{(t)}+p^{-1/2}
		\right).
		\end{aligned}
		\]
		
		For the first memory term, the definition of
		\(M_{Y,i}^{(t^*-1,t)}\) gives
		\[
		\begin{aligned}
		\frac{1}{\sqrt{p\rho_Yc_{Y,S}}}
		\left[
		\sum_{k\ne i}
		\left\{
		\frac{\alpha_{ki}^{(t)}}{\psi_{Y,i}^{(t)}}
		\bigl(\Pi_{ki}^{(t-1)}-\Pi_{ki}^{(t^\ast-1)}\bigr)
		\right\}^2
		\right]^{1/2}
		&=
		\frac{M_{Y,i}^{(t^*-1,t)}}
		{\psi_{Y,i}^{(t)}\sqrt{c_{Y,S}}}.
		\end{aligned}
		\]
		The second memory term is bounded in the same way with \(i\) replaced
		by \(j\); the exceptional diagonal coordinates are already included
		in the \(p^{-1/2}\) term. Hence
		\[
		\begin{aligned}
		\frac{
		\left\|
		\mathbf S_Y^{(t)}
		\left(
		\frac{\mathbf e_i}{\psi_{Y,i}^{(t)}}
		-
		\frac{\mathbf e_j}{\psi_{Y,j}^{(t)}}
		\right)
		\right\|_2
		}{\sqrt{p\rho_Yc_{Y,S}}}
		&\le
		C\left(
		\omega_{\Pi,S,t^\ast-1}^{(t)}+p^{-1/2}
		\right)
		+
		\frac{M_{Y,i}^{(t^*-1,t)}}
		{\psi_{Y,i}^{(t)}\sqrt{c_{Y,S}}}
		+
		\frac{M_{Y,j}^{(t^*-1,t)}}
		{\psi_{Y,j}^{(t)}\sqrt{c_{Y,S}}}.
		\end{aligned}
		\]
		The same argument for the \(Z\) channel, using
		\(S_{Z,ki}^{(t)}=\beta_{ki}^{(t)}\Pi_{ki}^{(t-1)}\), gives
		\[
		\begin{aligned}
		\frac{
		\left\|
		\mathbf S_Z^{(t)}
		\left(
		\frac{\mathbf e_i}{\psi_{Z,i}^{(t)}}
		-
		\frac{\mathbf e_j}{\psi_{Z,j}^{(t)}}
		\right)
		\right\|_2
		}{\sqrt{p\rho_Zc_{Z,S}}}
		&\le
		C\left(
		\omega_{\Pi,S,t^\ast-1}^{(t)}+p^{-1/2}
		\right)
		+
		\frac{M_{Z,i}^{(t^*-1,t)}}
		{\psi_{Z,i}^{(t)}\sqrt{c_{Z,S}}}
		+
		\frac{M_{Z,j}^{(t^*-1,t)}}
		{\psi_{Z,j}^{(t)}\sqrt{c_{Z,S}}}.
		\end{aligned}
		\]
		Combining these two bounds by the triangle inequality in
		\(\mathbb R^2\) yields
		\[
		\begin{aligned}
		&\left[
		\frac{
		\left\|
		\mathbf S_Y^{(t)}
		\left(
		\frac{\mathbf e_i}{\psi_{Y,i}^{(t)}}
		-
		\frac{\mathbf e_j}{\psi_{Y,j}^{(t)}}
		\right)
		\right\|_2^2
		}{p\rho_Yc_{Y,S}}
		+
		\frac{
		\left\|
		\mathbf S_Z^{(t)}
		\left(
		\frac{\mathbf e_i}{\psi_{Z,i}^{(t)}}
		-
		\frac{\mathbf e_j}{\psi_{Z,j}^{(t)}}
		\right)
		\right\|_2^2
		}{p\rho_Zc_{Z,S}}
		\right]^{1/2}\\
		&\quad\le
		C\left[
		\omega_{\Pi,S,t^\ast-1}^{(t)}+p^{-1/2}
		+
		2\max_{r\in[p]}
		\left\{
		\frac{\{M_{Y,r}^{(t^*-1,t)}\}^2}
		{c_{Y,S}(\psi_{Y,r}^{(t)})^2}
		+
		\frac{\{M_{Z,r}^{(t^*-1,t)}\}^2}
		{c_{Z,S}(\psi_{Z,r}^{(t)})^2}
		\right\}^{1/2}
		\right].
		\end{aligned}
		\]
		Substitution into \eqref{eq:eta_memory_channel_reduction} proves
		the claim.
	\end{proof}
\section*{S.4 Algorithms}

The theoretical threshold \(\tau_p=\gamma\epsilon_p\) contains population
constants. The following procedures retain the main-paper statistics but
construct empirical reference distributions from pairs of stationary baseline
transitions.

\begin{algorithm}[H]
\caption{Baseline stationarity check}
\label{alg:baseline-calib}
\begin{enumerate}
\item[] \textbf{Input:} Transition-event right embeddings over the candidate baseline
\(B\); false discovery rate \(\alpha\); and, for community analysis, \(q_B\)
and a prespecified community object.
\item \textbf{For each \(r\in B\), do:}
  \item Set \(B_{-r}=B\setminus\{r\}\).
  \item Compute
  \[
  \overline T_G^{(r;B_{-r})}
  =\frac{1}{|B|-1}\sum_{b\in B_{-r}}T_G^{(b,r)},\qquad
  \overline T_i^{(r;B_{-r})}
  =\frac{1}{|B|-1}\sum_{b\in B_{-r}}T_i^{(b,r)}.
  \]
  \item Set
  \[
  p_G^{\rm stationary}(r)
  =
  \frac{
  1+\#\{(a,b):a<b,\ a,b\in B_{-r},\
  T_G^{(a,b)}\geq\overline T_G^{(r;B_{-r})}\}}
  {1+\binom{|B|-1}{2}}.
  \]
  \item\textbf{For each \(i\in[p]\), do:}
    \item Set
    \[
    p_i^{\rm stationary}(r)
    =
    \frac{
    1+\#\{(a,b,j):a<b,\ a,b\in B_{-r},\ j\in[p],\
    T_j^{(a,b)}\geq\overline T_i^{(r;B_{-r})}\}}
    {1+p\binom{|B|-1}{2}}.
    \]
  \item\textbf{End for.}
  \item For community analysis, estimate the partition from \(B_{-r}\),
  compute
  \(\overline T_C^{(r;B_{-r})}
  =(|B|-1)^{-1}\sum_{b\in B_{-r}}T_C^{(b,r)}\), and set
  \[
  p_C^{\rm stationary}(r)
  =
  \frac{
  1+\#\{(u,v):u<v,\ u,v\in B_{-r},\
  T_C^{(u,v)}\geq\overline T_C^{(r;B_{-r})}\}}
  {1+\binom{|B|-1}{2}}.
  \]
\item \textbf{End for.}
\item Apply the Benjamini--Hochberg adjustment separately to the network-level,
vertex-level, and selected community-level families.
\item[] \textbf{Output:} Raw and adjusted \(p\)-values and a pass/fail stationarity decision.
\end{enumerate}
\end{algorithm}

\begin{algorithm}[H]
\caption{Empirical \(p\)-values for anomaly detection}
\label{alg:baseline-pair-calibration}
\begin{enumerate}
\item[] \textbf{Input:} Transition-event right embeddings over the analysis window \(W\);
stationary baseline \(B\); target \(t\); level \(\alpha\); and, for community
analysis, \(q_B\) and a prespecified community object.
\item Compute
\[
\overline T_G^{(t;B)}
=|B|^{-1}\sum_{b\in B}T_G^{(b,t)}
\]
and evaluate \(p_G(t)\) by \eqref{eq:network-empirical-p}.
\item \textbf{For each \(i\in[p]\), do:}
  \item Compute
  \(\overline T_i^{(t;B)}
  =|B|^{-1}\sum_{b\in B}T_i^{(b,t)}\)
  and evaluate \(p_i(t)\) by \eqref{eq:vertex-empirical-p}.
\item \textbf{End for.}
\item For community analysis, estimate the common partition from \(B\),
compute
\(\overline T_C^{(t;B)}
=|B|^{-1}\sum_{b\in B}T_C^{(b,t)}\),
and evaluate \(p_C(t)\) by \eqref{eq:community-empirical-p}.
\item Flag the target network, vertex \(i\), or selected community object
when its pointwise empirical \(p\)-value is below \(\alpha\).
\item[] \textbf{Output:} Empirical \(p\)-values and pointwise anomaly flags.
\end{enumerate}
\end{algorithm}

\section*{S.5 Additional simulation studies}

\subsection*{S.5.1 Dynamic community recovery}

We generate an AR(1)-DCSBM with \(p=300\), \(d=6\), embedding dimension
\(D=10\), \(n=25\), and \(\rho_Y=\rho_Z=0.7\). Degree parameters are drawn
independently from \(\operatorname{Unif}(0.20,0.80)\). The number of
communities follows
\(
2\rightarrow3\rightarrow4\rightarrow3\rightarrow2
\)
with changes at \(t=5,10,15,20\). The community directions evolve as
\[
\boldsymbol\theta_k^{(t)}
=
\frac{\mathbf e_k+a_t\mathbf 1_6}
{\|\mathbf e_k+a_t\mathbf 1_6\|_2},
\qquad
a_t=0.06\{1+\sin(2\pi t/25)\},
\]
and memberships evolve according to
\[
\nu^{(t)}(i)
=1+\left\lfloor
q_t\left\{\left(\frac{i-1}{p}+\frac{0.4t}{25}\right)\bmod1\right\}
\right\rfloor .
\]
Thus the vertices move around a unit circle while the number and separation of
communities change. Applying \(q_t\)-means to the normalised joint event
embedding gives an adjusted Rand index between \(0.89\) and \(1.00\) over the
representative trajectory. Recovery is weaker when \(a_t\) is large because
the community directions are less separated.

\begin{figure}[t]
\centering
\includegraphics[width=\textwidth,height=0.75\textheight,keepaspectratio]
{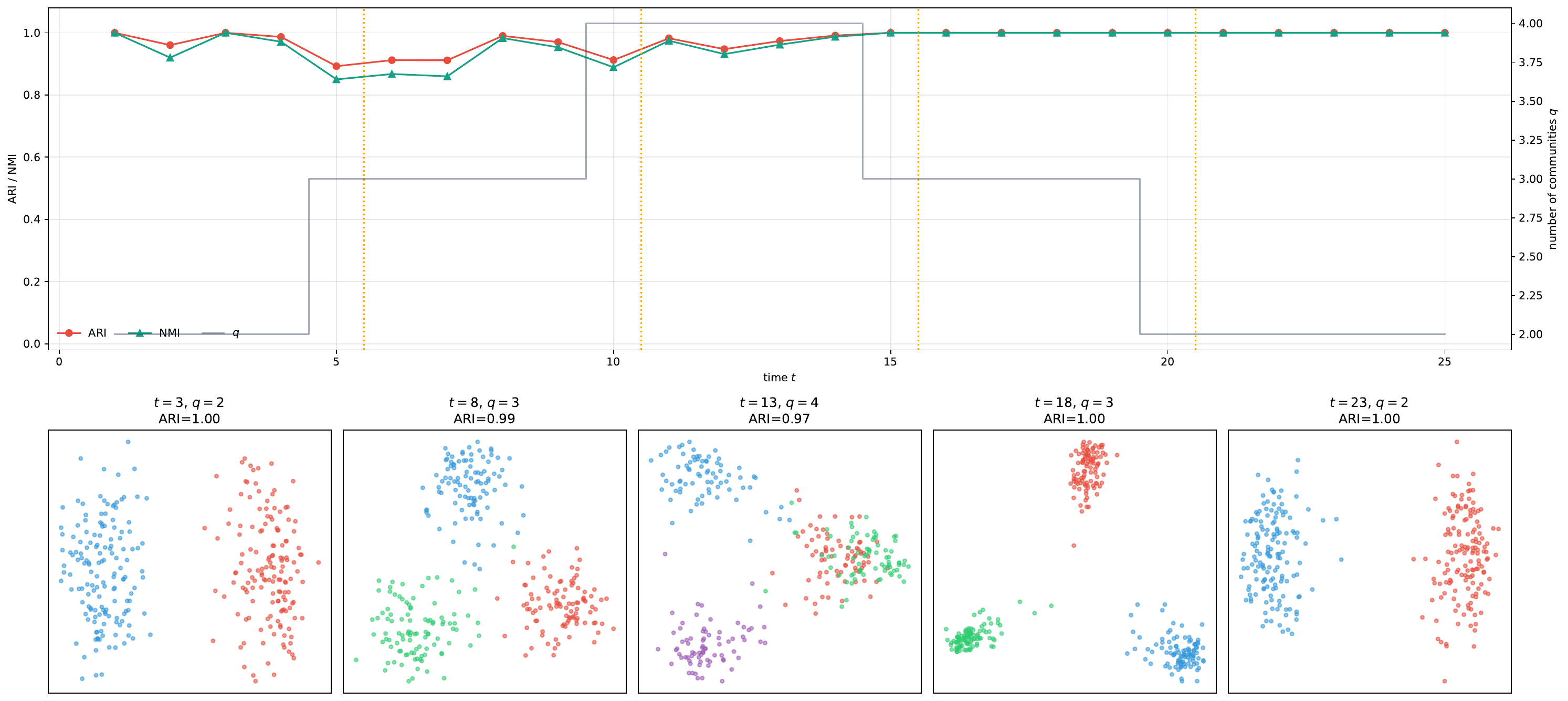}
\caption{Dynamic community recovery. Top: adjusted Rand index and normalised
mutual information over time as the community count follows
\(2\to3\to4\to3\to2\). Bottom: the channelwise-normalised joint embedding at
representative times.}
\label{fig:sim-community-recovery-main}
\end{figure}

\subsection*{S.5.2 Vertex anomalies with and without community reassignment}

We generate an AR(1)-DCSBM with \(p=300\), three balanced communities,
20 transition layers, \(\rho_Y=\rho_Z=0.20\), and \(D=6\). Baseline degree
parameters are drawn independently from \(\operatorname{Unif}(0.60,0.70)\); before row
normalisation, a community direction has an entry of one in its own coordinate and
0.27 in each off-community coordinate. At \(t=15\), a fraction
\(\gamma\in\{0.02,0.04,0.06,0.10\}\) of vertices is anomalous. Half move
cyclically to the next community and half retain their original community.
All anomalous vertices change their channel degree parameters to
\(\psi_{Y,i}^{(15)}=2.19\) and \(\psi_{Z,i}^{(15)}=0.40\).

We use the baseline \(B=\{1,\ldots,12\}\) and 100 Monte Carlo replications. As
\(\gamma\) increases from \(0.02\) to \(0.10\), the vertex detection rate
increases from \(0.917\) to \(0.995\) for reassigned anomalies and from
\(0.883\) to \(0.988\) for same-community degree anomalies. The true-negative
rate for normal vertices decreases from \(0.985\) to \(0.919\), reflecting the
increasing cross-vertex interference.

\begin{figure}[t]
\centering
\includegraphics[width=\textwidth]{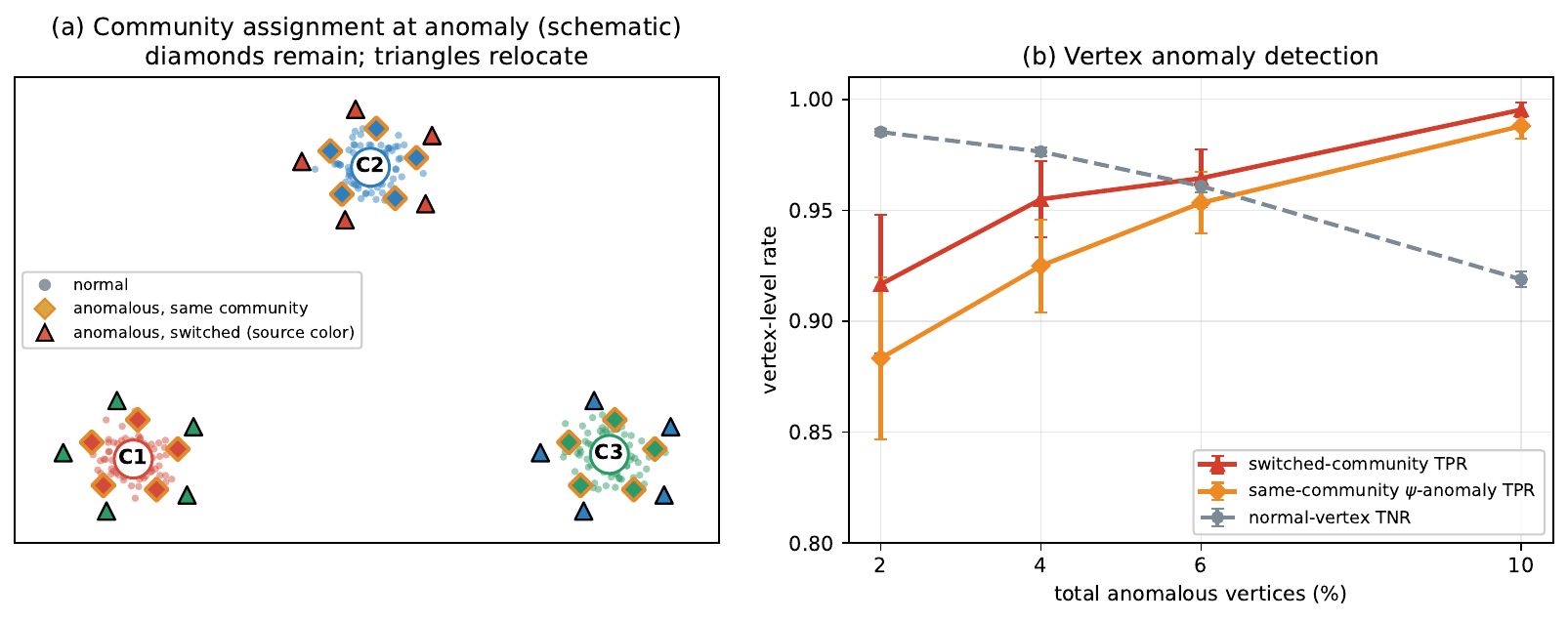}
\caption{Vertex anomalies with and without community reassignment. Left:
anomaly-time assignments when 10\% of the vertices are anomalous; diamonds
remain in their original communities and triangles move cyclically to the next
community. Right: detection rates for reassigned and same-community anomalies
and the true-negative rate for normal vertices.}
\label{fig:sim-vertex-mixed-community}
\end{figure}

\clearpage
\section*{S.6 Additional COVID-19 trade results}

	\subsection*{S.6.1 Country coverage}

	The January 2017--December 2020 window contains \(152\) reporters with
	at least one total-trade export or import record.  After removing the
	European Union and ASEAN aggregate reporters, \(150\) country or area
	reporters remain, of which \(104\) have at least one such record in every
	one of the \(48\) months.  Using ISO 3166-1 alpha-3 codes, the
	continuous \(104\)-reporter panel is AND, AGO, ATG, AZE, ARG, AUS, ARM,
	BRB, BEL, BOL, BIH, BWA, BRA, BLZ, BRN, BGR, MMR, BDI, BLR, KHM, CAN,
	CHL, CHN, COL, COD, HRV, CYP, BEN, DNK, DOM, ECU, SLV, FJI, FIN, FRA,
	PYF, GEO, GMB, PSE, DEU, GHA, GRC, GRD, GUY, HKG, HUN, ISL, IDN, IRL,
	ISR, ITA, JPN, KAZ, KOR, LAO, LSO, LVA, LTU, LUX, MAC, MDG, MWI, MYS,
	MUS, MEX, MNG, MDA, MSR, MOZ, NLD, NZL, NOR, PAK, PAN, PRY, PHL, POL,
	PRT, QAT, ROU, RUS, RWA, STP, SEN, SRB, SYC, IND, SGP, SVK, VNM, SVN,
	ESP, SWE, CHE, THA, TGO, TTO, TUR, MKD, EGY, GBR, USA, BFA, and ZMB.

	\subsection*{S.6.2 Sensitivity analysis}

	Table~\ref{tab:trade-robustness} examines robustness to the fixed dollar edge
	threshold, embedding dimension, and baseline window in the same \(104\)-economy
	panel.  The April-2020 dissolution is significant in every reported
	specification.  At US\(\$14\)--\(15\) million, the June-2020
	formation signal is also significant for each \(D\in\{6,8,10\}\).  Raising
	the threshold to US\(\$16\) million weakens that recovery signal, indicating
	that it depends more strongly than the April collapse on links near the
	selected cutoff.  Across baseline windows, April remains significant.  June
	is recovered by every window of \(18\) or more transitions but not
	by the shortest, 12-transition 2019-only baseline.  The primary 35-transition window uses all available pre-COVID transitions and also passes the BH-adjusted stationarity checks.

	\begin{table}[H]
		\centering
		\small
		\setlength{\tabcolsep}{3pt}
		\begin{tabular}{llcc}
			\hline
			\multicolumn{4}{l}{\emph{Panel A. Fixed edge threshold \(\times\) embedding dimension; baseline 2017.02--2019.12}}\\
			Threshold & \(D\) & Apr. 2020 \(\mathbf Z\) & Jun. 2020 \(\mathbf Y\)
			\\\hline
			US\(\$14\)m & 6  & \,0.002$^\ast$ & \,0.018$^\ast$ \\
			US\(\$14\)m & 8  & \,0.002$^\ast$ & \,0.005$^\ast$ \\
			US\(\$14\)m & 10 & \,0.002$^\ast$ & \,0.010$^\ast$ \\
			US\(\$15\)m & 6  & \,0.002$^\ast$ & \,0.020$^\ast$ \\
			US\(\$15\)m & 8  & \,0.002$^\ast$ & \,0.030$^\ast$ \\
			US\(\$15\)m & 10 & \,0.002$^\ast$ & \,0.034$^\ast$ \\
			US\(\$16\)m & 6  & \,0.002$^\ast$ & \,0.049$^\ast$ \\
			US\(\$16\)m & 8  & \,0.002$^\ast$ & \,0.062 \\
			US\(\$16\)m & 10 & \,0.002$^\ast$ & \,0.060 \\
			\hline
		\end{tabular}

		\vspace{1.2ex}
		\begin{tabular}{llcc}
			\hline
			\multicolumn{4}{l}{\emph{Panel B. Baseline window; US\(\$15\) million, \(D=8\)}}\\
			Baseline window & \# transitions & Apr. 2020 \(\mathbf Z\) & Jun. 2020 \(\mathbf Y\)
			\\
			\hline
			2019.01--2019.12 & 12 & \,0.015$^\ast$ & \,0.119 \\
			2018.01--2019.06 & 18 & \,0.006$^\ast$ & \,0.045$^\ast$ \\
			2017.04--2018.12 & 21 & \,0.005$^\ast$ & \,0.005$^\ast$ \\
			2017.04--2019.06 & 27 & \,0.003$^\ast$ & \,0.017$^\ast$ \\
			2017.02--2019.12 & 35 & \,0.002$^\ast$ & \,0.030$^\ast$ \\
			\hline
		\end{tabular}
		\caption{Specification sensitivity of the network-level
			COVID-trade detection.  Network-level \(p\)-values are shown for the
			April-2020 dissolution (\(\mathbf Z\)) and the June-2020
			formation (\(\mathbf Y\)) episode; \(^\ast\) marks values
			below \(0.05\).  These target-month \(p\)-values are pointwise and are not
			BH-adjusted.}
		\label{tab:trade-robustness}
	\end{table}
\end{document}